\def\ShowAuthNotes{1}
\newtheorem*{rep@theorem}{\rep@title}
\newcommand{\newreptheorem}[2]{%
\newenvironment{rep#1}[1]{%
 \def\rep@title{#2 \ref{##1}}%
 \begin{rep@theorem}}%
 {\end{rep@theorem}}}
\newcounter{hours}
\newcounter{minutes}
\newcommand{\printtime}{ %
        \setcounter{hours}{\time/60} %
        \setcounter{minutes}{\time-\value{hours}*60} %
        \ifthenelse{\value{hours}<10}{0}{}\thehours:%
        \ifthenelse{\value{minutes}<10}{0}{}\theminutes%
        } 
\theoremstyle{plain}
        \newtheorem{theorem}{Theorem}[section]
        \newtheorem{lemma}[theorem]{Lemma}
\theoremstyle{definition}
        \newtheorem{definition}[theorem]{Definition}
\theoremstyle{remark}
\newcommand{\URF}{\normalfont{\textbf{F}}}
\newcommand{\VarG}{G}
\newcommand{\URP}{\normalfont{\textbf{P}}}
\newcommand{\VarP}{P} 
\newcommand{\TSRF}{\normalfont{\textbf{R}}}
\newcommand{\di}{\textnormal{\textbf{D}}}
\newcommand{\si}{\textnormal{\textbf{S}}}
\newcommand{\sitwo}{\ensuremath{\mathbf{S}}}
\newcommand{\Chain}{\ensuremath{\mathcal{C}}}
\newcommand{\hist}{\ensuremath{\mathcal{H}}}
\DeclareMathOperator{\val}{val}
\DeclareMathOperator{\Next}{next}
\DeclareMathOperator{\Prev}{prev}
\newcommand{\enqNewChains}{\normalfont\textsc{enqueueNewChains}}
\newcommand{\CompletedChains}{\normalfont\textrm{CompletedChains}}
\newcommand{\Finner}{\ensuremath{\textsc{F}^{\textsc{inner}}}}
\newcommand{\F}{\normalfont\textsc{F}}
\renewcommand{\P}{\normalfont\textsc{P}}
\newcommand{\adapt}{\normalfont\textsc{Adapt}}
\newcommand{\forceVal}{\normalfont\textsc{ForceVal}}
\newcommand{\evalFwd}{\normalfont\textsc{EvaluateForward}}
\newcommand{\evalBwd}{\normalfont\textsc{EvaluateBackward}}
\renewcommand{\Check}{\normalfont\textsc{Check}}
\newcommand{\BadlyHit}{\mathsf{BadlyHit}}
\newcommand{\BadP}{\mathsf{BadP}}
\newcommand{\BadlyCollide}{\mathsf{BadlyCollide}}
\newcommand{\Feistel}{\Psi}
\newcommand{\Query}{\normalfont\textsc{Query}}
\newcommand{\ChainQuery}{\normalfont\textsc{ChainQuery}}
\newcommand{\CompleteChain}{\normalfont\textsc{CompleteChain}}
\newcommand{\XorQuery}{\normalfont\textsc{XorQuery}}
\newcommand{\inu}{\ensuremath{\leftarrow_\textsf{R}}}
\newcommand{\hmax}{\ensuremath{\textsf{h}_{\max}}}
\newcommand{\leftp}{\ensuremath{|_{\text{1}}}}
\newcommand{\rightp}{\ensuremath{|_{\text{2}}}}
\newcommand{\poly}{\ensuremath{\text{poly}}}
\definecolor{DSgray}{cmyk}{0,0,0,0.7}
\definecolor{DSred}{cmyk}{0,0.7,0,0.7}
\newcommand{\Authornote}[2]{\noindent{\small\textcolor{DSgray}{\sf{
\textcolor{red}{[#1: #2]\marginpar{\textcolor{red}{\fbox{\Large !}}}}}}}}
\newcommand{\Authormarginnote}[2]{\marginpar{\parbox{2.2cm}{\raggedright\tiny \textcolor{red}{#1: #2}}}}
\newcommand{\Authornote}[2]{}
\newcommand{\Authormarginnote}[2]{}
\begin{document}

\title{The Equivalence of the Random Oracle Model\\and the Ideal Cipher Model, Revisited}
\author{Thomas Holenstein\thanks{ETH Zurich, Department of Computer Science, 8092 Zurich, Switzerland. E-mail: {\tt thomas.holenstein@inf.ethz.ch}} \and Robin K\"unzler\thanks{ETH Zurich, Department of Computer Science, 8092 Zurich, Switzerland. E-mail: {\tt robink@inf.ethz.ch}} \and Stefano Tessaro\thanks{University of California, San Diego, Department of Computer Science \& Engineering, La Jolla, CA 92093-0404. E-mail: {\tt stessaro@cs.ucsd.edu}}}
\maketitle
\thispagestyle{empty}

\begin{abstract}
  We consider the cryptographic problem of constructing an invertible
  random permutation from a {\em public} random function (i.e., which
  can be accessed by the adversary). This goal is formalized by
  the notion of {\em indifferentiability} of Maurer et al.\ (TCC
  2004). This is the natural extension to the public setting of the
  well-studied problem of building random permutations from random
  functions, which was first solved by Luby and Rackoff (Siam J.~Comput., '88)
  using the so-called {\em Feistel} construction.

  The most important implication of such a construction is the {\em
    equivalence} of the {\em random oracle model} (Bellare and
  Rogaway, CCS '93) and the {\em ideal cipher model}, which is
  typically used in the analysis of several constructions in symmetric
  cryptography.

  Coron et al.~(CRYPTO 2008) gave a rather involved proof that the
  six-round Feistel construction with independent random round
  functions is indifferentiable from an invertible random
  permutation. Also, it is known that fewer than six rounds do not
  suffice for indifferentiability. The first contribution (and
  starting point) of our paper is a concrete distinguishing attack
  which shows that the indifferentiability proof of Coron et al.\ is
  \emph{not} correct. In addition, we provide supporting evidence that
  an indifferentiability proof for the six-round Feistel construction
  may be very hard to find.

  To overcome this gap, our main contribution is a proof that the
  Feistel construction with fourteen rounds is indifferentiable from
  an invertible random permutation. The approach of our proof relies
  on assigning to each of the rounds in the construction a unique and
  specific role needed in the proof. This avoids many of the problems
  that appear in the six-round case.

  \bigskip

  \noindent {\bf Keywords.} Cryptography, random oracle model, ideal cipher
  model, Feistel construction, indifferentiability.
\end{abstract}

\nocite{CPS08CryptoVersion, CPS08v1, CPS08v2} 

\newpage

\tableofcontents

\newpage
\setcounter{page}{1}

\section{Introduction}

\subsection{Random Functions and Permutations: The Feistel
  Construction}

Many cryptographic security proofs rely on the assumption that a
concrete cryptographic function (e.g.\ a block cipher or a hash
function) behaves as a {\em random primitive}, i.e., an {\em ideal}
object which answers queries ``randomly''. A typical example is a
\textit{random function} $\URF: \{0,1\}^m \rightarrow \{0,1\}^n$,
which associates with each $m$-bit input $x$ a uniformly distributed
$n$-bit value $\URF(x)$. We speak of a {\em random oracle} if the
domain consists of all strings of finite length, rather than all
$m$-bit ones.
A \textit{random permutation} $\URP: \{0,1\}^n \rightarrow
\{0,1\}^n$ is another example: It behaves as a uniformly-chosen
permutation from the set of all permutations on $\{0,1\}^n$, allowing
both {\em forward queries} $\URP(x)$ and {\em backward queries}
$\URP^{-1}(y)$.

Many results in cryptography can be recast as finding an explicit
construction of a random primitive from another one in a purely
information-theoretic setting.  For instance, the core of Luby and
Rackoff's seminal result~\cite{LR88} on building \emph{pseudorandom
  permutations} from \emph{pseudorandom functions} (a computational
statement) is a construction of a random permutation from random
functions via the \emph{$r$-round Feistel construction} $\Feistel_{r}$:
It implements a permutation taking a $2n$-bit input $(L_0, R_0)$ (where
$L_0, R_0$ are $n$-bit values), and the output $(L_r, R_r)$ is computed
via $r$ rounds mapping $L_i, R_i$ to $L_{i+1}, R_{i+1}$ as
\begin{displaymath}
  L_{i+1} := R_{i}, \,\,\, R_{i+1} := L_i \oplus \URF_{i+1}(R_i),
\end{displaymath}
where $\URF_1, \ldots, \URF_r: \{0,1\}^n \to \{0,1\}^n$ are so-called
\emph{round functions}. The main statement of \cite{LR88} is that if
the round functions are independent random functions, then
$\Feistel_{3}$ is \emph{information-theoretically} indistinguishable
from a random permutation which does not allow backward queries,
whereas $\Feistel_{4}$ is indistinguishable from a full-fledged random
permutation.

\subsection{The Random Oracle and Ideal Cipher Models:
  Indifferentiability}

Random primitives are frequently employed to model an idealized
cryptographic function accessible by {\em all} parties in the scenario
at hand, including the adversary. The most prominent example is the
\emph{Random Oracle Model} \cite{BR93}, where a random oracle models
an ideal hash function. Although it is known that no concrete hash
function can achieve the functionality of a random
oracle~\cite{CaGoHa04} (see also \cite{MaReHo04}), security proofs in
the random oracle model provide a common heuristic as to which schemes
are expected to remain secure when the random oracle is instantiated
with a concrete hash function. In fact, to date, many widely employed
practical schemes, such as OAEP~\cite{BelRog94}\footnote{However, we note that
  standard model instantiations of OAEP for certain classes of
  trapdoor functions exist~\cite{KiOnSm10}, even though they only achieve a weaker security notion than what provable in the random oracle model.}
and FDH~\cite{BelRog96}, only enjoy security proofs in the random
oracle model.

The {\em ideal cipher model} is another widespread model in which all
parties are granted access to an ideal cipher $\mathbf{E}:
\{0,1\}^{\kappa} \times \{0,1\}^n \to \{0,1\}^n$, a random primitive
such that the restrictions $\mathbf{E}(k, \cdot)$ for $k \in
\{0,1\}^{\kappa}$ are $2^{\kappa}$ independent random
permutations. Application examples of the ideal cipher model range
from the analysis of block-cipher based hash function constructions
(see, for example \cite{BlRoSh02}) to disproving the existence of
generic attacks against constructions such as cascade
encryption~\cite{BelRog06,GazMau09} and to studying generic
related-key attacks~\cite{BelKoh03}.

\paragraph{Equivalence of models and indifferentiability.} This paper
addresses the fundamental question of determining whether the random
oracle model and the ideal cipher model are \emph{equivalent}, where
equivalence is to be understood within a simulation-based security
framework such as~\cite{Canett01}: In other words, we aim at answering
the following two questions:
\begin{enumerate}[(1)]
\item Can we find a construction $\mathbf{C}_1$, which uses an ideal
  cipher $\mathbf{E}$, such that $\mathbf{C}_1^{\mathbf{E}}$ is ``as
  good as'' a random oracle $\mathbf{R}$, meaning that any secure
  cryptographic scheme using $\mathbf{R}$ remains secure when using
  $\mathbf{C}_1^{\mathbf{E}}$ instead?
\item Conversely, is there $\mathbf{C}_2$ such that
  $\mathbf{C}_2^{\mathbf{R}}$ is ``as good as'' an ideal cipher
  $\mathbf{E}$?
\end{enumerate}
Indistinguishability is not sufficient to satisfy the above
requirement of being ``as good as'', as the adversary can exploit access
to the underlying primitive.  Instead, the stronger notion of {\em
  indifferentiability} due to Maurer et al.\ \cite{MaReHo04} is
needed: the system $\mathbf{C}_1^{\mathbf{E}}$ is indifferentiable
from $\mathbf{R}$ if there exists a \emph{simulator}\footnote{Usually
  required to be efficient, i.e., with running time polynomial in the
  number of queries it processes} $\si$ accessing $\mathbf{R}$ such
that $(\mathbf{C}_1^{\mathbf{E}}, \mathbf{E})$ and $(\mathbf{R},
\mathbf{S}^{\mathbf{R}})$ are information-theoretically
indistinguishable. This is equivalent to stating that the adversary is
able to locally simulate the ideal cipher consistently with
$\mathbf{R}$, given only access to the random oracle and without
knowledge of the queries to $\mathbf{R}$ of the honest users. Of
course, indifferentiability generalizes to arbitrary primitives: The
definition of $\mathbf{C}_2^{\mathbf{R}}$ being indifferentiable from
$\mathbf{E}$ is analogous.\footnote{Interestingly, we cannot construct
  a {\em non-invertible} random permutation from a random oracle. This
  follows from a well-known result by Rudich~\cite{Rudich89} and Kahn et al.~\cite{KSS00}.}

\paragraph{Prior work and applications.} Question (1) above is, to
date, well understood: Coron et al.~\cite{CDMP05}, and long series of
subsequent work, have presented several constructions of random oracles
from ideal ciphers based on hash-function constructions such as the
{\em Merkle-Damg\aa rd} construction~\cite{Merkle89,Damgaa89} with
block-cipher based compression functions. In particular,
indifferentiability has become a de-facto standard security
requirement for hash function constructions, generally interpreted as
the absence of generic attacks against the construction treating the
block cipher as a black box.

In a similar vein, answering question (2) could provide new approaches
to designing block ciphers from non-invertible primitives. But in
contrast to question (1), the problem is far less understood.  Dodis
and Puniya~\cite{DodPun06} considered constructions in the so-called
{\em honest-but-curious} model, where the adversary only gets to {\em
  see} queries made by the construction to the public random function,
but is {\em not} allowed to issue queries of her choice: They showed
that $\omega(\log n)$ rounds of the Feistel construction are
sufficient to build an ideal cipher.\footnote{The notion of
  honest-but-curious indifferentiability is very subtle, as in general
  it is not even implied by full indifferentiability.} In the same
work, it was first noted that four rounds are insufficient to achieve
indifferentiability of the Feistel construction.

Finally, at CRYPTO 2008, Coron et al.\ \cite{CPS08CryptoVersion} presented a
first proof that the six-round Feistel construction $\Feistel_6$ with
independent random round functions is indifferentiable from a
random permutation,\footnote{Note that this implies a construction of an ideal cipher
  from a random oracle, as we can construct the independent round
  functions from a random oracle. Moreover, they can be keyed to
  obtain an independent cipher for each value of the key.} hence
seemingly settling the equivalence of the ideal cipher model and the random oracle model.
They also showed that five rounds are insufficient for this task.
Also, a somewhat simpler proof that the ten-round Feistel
 construction $\Feistel_{10}$ with 
independent round functions is indifferentiable from a random permutation
was later presented in \cite{Seurin09}, the PhD thesis of the last 
author of \cite{CPS08CryptoVersion}.

Following the publication of this result, the equivalence of the random oracle and ideal cipher models has been used to infer security in the random oracle model using an ideal cipher (or random permutation) as an intermediate step~\cite{DzPiWi10} and to prove impossibility of black-box constructions from block ciphers~\cite{LinZar09}.

\subsection{Our Contributions}

The surprising starting point of our work is a distinguishing attack,
outlined and analyzed in Section~\ref{sec:attack}, which shows that
the proof of \cite{CPS08v2} (the full version of \cite{CPS08CryptoVersion})
is {\em not} correct: For the simulator
given in the proof, our attack distinguishes with overwhelming
advantage. Despite hopes, at first, that we could fix the proof of
Coron et al.\ by minor modifications, we were unable to do so. In
fact, we provide a stronger attack which appears to succeed against a
large class of simulators following the natural approach of
\cite{CPS08v2}. 
We also found similar problems in the proof given in \cite{Seurin09},
and so the question of settling the equivalence of the
ideal cipher model and of the random oracle model remains open.

In order to overcome this situation, the main contribution of this paper is
a proof (given in Section~\ref{sec:proof}) that the fourteen-round
Feistel construction $\Feistel_{14}$ is indifferentiable from a
random permutation. The round number is motivated by the goal of providing a simple to
understand proof, rather than by the goal of minimizing the number of
rounds. Our proof relies on techniques which are
significantly different than the ones used in \cite{CPS08v2}.

We discuss our results in more detail in the following section.

%

\subsection{Sketch of the Previous Problems and the New Proof}
First, we discuss the basic idea of building a random permutation from a random oracle
via the $r$-round Feistel construction $\Feistel_r$. Then we discuss the problems in the 
previous proofs and finally sketch our new proof.  Some readers might 
find it helpful to consider the illustration of the Feistel construction
on page~\pageref{fig:Feistel} in the following.

\paragraph{Simulation via chain-completion.}
Since we already fixed our construction to be $\Feistel_r$, 
the core of the proof is the construction of a simulator $\si$ that
uses a given random permutation $\URP: \{0,1\}^{2n} \to \{0,1\}^{2n}$
to consistently simulate $r$ independent functions $\URF_{1},\ldots,\URF_{r}$ from
$\{0,1\}^n \to \{0,1\}^n$.  In particular, suppose that a distinguisher queries
the round functions to evaluate $\Feistel_r$ on input $x \in
\{0,1\}^{2n}$.  Then, it is required that the result matches the output of
$\URP$ on input~$x$.\footnote{Of course, much more is needed, as the
  distribution of the output needs to be indistinguishable, but
  surely, the above requirement is necessary.}
To this end, the simulator needs to somehow {\em recognize} queries belonging to such a sequence
$x_1,\ldots,x_{r}$, and to set the values
$\URF_{i}(x_i)$ to enforce consistency with $\URP$.
In the following, such sequences $x_1,\ldots,x_r$ will be called \emph{chains}.

The natural idea used by Coron et al.~is to isolate so-called \emph{partial} chains among queries made to the round functions.  An example of a partial chain is a triple
$(x_1,x_{2},x_{3})$ such that $x_{3} = x_{1} \oplus \URF_{2}(x_{2})$,
and each of $x_1$, $x_2$, and $x_3$ has previously been queried to the corresponding round
function $\URF_{i}$.  In particular, upon each query to $\URF_i$, the simulator checks whether one (or more) partial chains are created. 
When such a partial chain is detected (and some additional conditions
are met), the simulator \emph{completes} it to a (full) chain $x_1, x_2, \ldots, x_r$ such that $x_{i+1} = \URF_i(x_i) \oplus x_{i-1}$ for all $i = 2, \ldots, r - 1$, and $\URP(x_0,x_1) = x_r,x_{r+1}$, where $x_0 := \URF_1(x_1) \oplus x_2$ and $x_{r+1} = \URF_r(x_r) \oplus x_{r-1}$.  In particular, the simulator defines two consecutive values $\URF_{\ell}(x_{\ell})$ and
$\URF_{\ell+1}(x_{\ell+1})$ adaptively to satisfy all constraints.   In our example, the simulator
could complete the partial chain by first finding $x_0$, computing $x_{r}$ and $x_{r+1}$ from $\URP(x_0,x_1)$, and finally evaluate the Feistel construction
backwards, by setting each undefined $\URF_i(x_i)$ to a fresh uniform random string,  until only $\URF_{4}(x_{4})$ and $\URF_{5}(x_{5})$ are
undefined.  These two values are then defined as $\URF_4(x_4) := x_3 \oplus x_5$, and $\URF_5(x_5) := x_4 \oplus x_6$. We refer to this step as {\em adapting} the output values of $x_4$ and $x_5$.

At this point, one faces (at least) two possible problems: 
\begin{enumerate}[(i)]
\item The simulator defines new values at chain completion, and
  may keep producing new partial chains while it completes chains,
  hence potentially running forever.  Coron et al.~solve this problem
  very elegantly by a smart decision of which partial chains are completed.
  Then, they are able to show that the recursion stops after
  at most $\poly(q)$ steps, where $q$ is the number of queries the
  distinguisher makes to the permutation.  We use their strategy in
  our proof, even though in the simplified version of \cite{Seurin09}, as we detect
  fewer chains.
\item The\label{page:probleminsimulator} simulator may try to adapt $\URF_{\ell}(x_\ell)$ to some
  value, even though
  $\URF_{\ell}(x_{\ell})$ has been fixed to a different value before.
  In this case, the simulator by Coron et al.~aborts, and it hence becomes
  necessary to show that no distinguisher can make the simulator abort
  except with negligible probability.
\end{enumerate}

\paragraph{Breaking previous simulators.}

Unfortunately, the proof given in \cite{CPS08v2} does not solve (ii)
above.  In fact, it is possible to find a
sequence of queries such that the simulator, with high probability,
attempts to change a value of a previously fixed $\URF_{i}(x)$,
and aborts.  We
provide an intuition of the attack in Section~\ref{sec:attack_cps08v2}.  
A full proof that that our attack breaks the simulator is
 contained in Appendix~\ref{app:analysisattack}.

We formally prove that our attack distinguishes with overwhelming advantage. However, in view of the complexity of the considered random experiments, we have also decided to gain extra confidence in the correctness of our proof by simulating the setting of the attack.  We therefore
implemented the simulator from \cite{CPS08v2} in Python, and then used
our distinguisher on it.  The results confirm our theoretical analysis.  The code
is included as an ancillary file in the full version of this paper~\cite{HKT10full}, and is available for download.

We also point out that the proceedings version of  \cite{CPS08v2}, as well as an
earlier version available on the eprint archive \cite{CPS08v1},
presented a significantly simpler simulator.  However, it suffered from the same problem, and a simpler attack was possible.  We assume that the authors were aware of this problem
when they modified the simulator, as some modifications appear to
specifically rule out some of the attacks we found against the simpler
simulator.  However, this is speculation: no explanation is given in
\cite{CPS08v2}.

In the 10-round case, Seurin gives a much simpler simulator in his PhD thesis~\cite{Seurin09}.  At present, we do not know whether this simulator can be attacked.  

\paragraph{Problems with the previous proofs.}

Given our attack, it is natural to ask where the proof given in~\cite{CPS08v2} fails.  We explore
this question in Section~\ref{sec:ProblemsInExistingProof}, but we can
give a short explanation here.  Consider the example above.  When the
simulator attempts to define the value of $\URF_{5}(x_5)$, the proof
assumes that it can do so, because earlier on, $\URF_{6}(x_6)$ was
chosen uniformly at random, and $x_5$ was set to be $x_5 := x_7 \oplus
\URF_{6}(x_6)$.  The hope is that this implies that in the meanwhile $\URF_5(x_5)$ has not been defined, except with very small probability.  Unfortunately, between the
moment where $\URF_{6}(x_6)$ was chosen uniformly, and the moment
where $\URF_{5}(x_5)$ needs to be defined, the simulator may have completed
a large number of other partial chains.  This can destroy our expectation
completely, and indeed, our attack does exploit this fact.  We cannot hope to
complete each detected chain immediately when it is detected: For example, the definition
of a single function value may cause that we detect many new chains at the same time. These chains have to be
completed in some order, and thus there exist chains such that between their detection
and their completion many other function values are defined. This means that it is not obvious how 
to solve this problem.

Furthermore, while we do not know whether the simulator in
\cite{Seurin09} can be attacked, the problems with the proof we
describe here are present in \cite{Seurin09} as well. 

\paragraph{Further problems with previous proofs.}
There are, in fact, further problems with the previous proofs
\cite{CPS08v2,Seurin09}.  All previous proofs reduced the task of
proving indifferentiability to the task of upper bounding the abort
probability of the given simulator.  Yet, it turns out that this
reduction is quite delicate as well. In fact, both proofs of
\cite{CPS08v2} and \cite{Seurin09} have several gaps in this part,
which we were not able to fill directly.\footnote{In very broad terms,
  both proofs present a step where an ideal permutation is replaced by
  the Feistel construction, and values of the round functions are set
  by the evaluation of the construction: While each of the proofs
  presents a different approach how this is done, neither of them
  presents a convincing argument of why this modification does not
  affect the input-output behavior.}  Thus, we give a completely new
proof for this part as well.

\paragraph{Ideas we use from the previous proofs.}
Since evidence points towards the fact that
simulating a 6-round Feistel construction is difficult, we consider
the simulator for the 10-round construction used in \cite{Seurin09}, which is
significantly simpler and much more elegant.
Even though our simulator is for $14$ instead of $10$ rounds,
it is similar to the one in \cite{Seurin09}: the zones where we detect and adapt chains are analogous.  
This allows us to reuse the elegant idea of \cite{Seurin09} for bounding the simulator's running time. 

\paragraph{Intuition of our proof.}
In order to explain the main new ideas we use, we first give a more complete 
sketch of our simulator and our proof.  Of course, many details are
omitted in this sketch.

As in previous ideas, our simulator detects chains, and completes them.
In order to detect chains, we follow \cite{Seurin09} and 
use special \emph{detect zones}, where the simulator detects new chains.
Also, as in \cite{Seurin09}, we have \emph{adapt zones}, in which
the simulator fixes the values of $\URF_i(x_i)$ such that the produced
chain matches the given permutation $\URP$.
Unlike before, we use \emph{buffer rounds} (namely rounds 3,6,9, and 12)
between the zones where chains are detected, and
the zones where values are adapted.  
The function values in the buffer rounds are always defined 
by setting them to uniform random values. 
The figure on page~\pageref{fig:Feistel} has these zones marked.

We now discuss what happens
when the simulator detects a new chain with values $(x_1,x_2,x_{13},x_{14})$,
and suppose that the simulator decides to adapt the resulting chain
at positions $4$ and $5$.
Because of the way the simulator chooses the adapt zone to use, 
it is not extremely hard to show that at the moment this chain is detected,
the values $\URF_3(x_3)$ and $\URF_{6}(x_6)$ in the buffer rounds
around this adapt zone have not been defined yet, where $x_3$ and
$x_6$ are the values corresponding to round $3$ and $6$ of the detected
chain (and similar statements are proven in \cite{Seurin09}).

The hope at this point is that $\URF_3(x_3)$ (and also $\URF_6(x_6)$, but
let us concentrate on $\URF_3(x_3)$) is still unset when the simulator
is ready to complete this chain.
Intuitively, this should hold at least in case the function
values $\URF_2$ and $\URF_4$ are set at random, because
then, the simulator should only run into trouble if
some kind of unlikely collision happens (it turns out later
that actually it is not necessary to always set $\URF_4$ at random, 
but the intuition why this holds is somewhat advanced).

In order to prove that indeed this hope holds, we first use a queue
to order the chains which the simulator detects and completes.
This ensures that when it detects the chain $C=(x_1,x_2,x_{13},x_{14})$
above, any chain during completion of 
which the simulator could possibly define
$\URF_3(x_3)$ is defined before the simulator detects $C$.

Next, we define a bad event $\BadlyCollide$, which we show to occur if 
our hope fails.  
To understand the main idea of event $\BadlyCollide$, consider the chain
$C$ again.  
Even though $\URF_3(x_3)$ is not yet defined when the chain is detected,
the value $x_3$ is already fixed at this point.
The event $\BadlyCollide$ occurs only if such a value appears
in some other chain due to some unlikely collision (the
``unlikely collision'' part is crucial: in
general a distinguisher can set up new chains which contain such values,
in which case $\BadlyCollide$ should not occur).

In total, the above shows that our simulator never aborts (in contrast to 
the one given in \cite{CPS08v2}).  
Of coures, it is still necessary to show that the result is indifferentiable
from a Feistel construction.

To see why this can be difficult, consider a distinguisher which 
first queries the given permutation~$\URP(x_0,x_1)$, giving
values $(x_{14},x_{15})$.  The distinguisher then checks (say) the
first bit of $x_{14}$, and depending on it, starts evaluating
the simulated Feistel construction from the top 
with the input values $(x_0,x_1)$, or from the bottom with 
values $(x_{14},x_{15})$.  Inspection of our simulator reveals that
the choice of the adapt zone of the simulator then depends on the
first bit of $x_{14}$.

The problem which now comes in is that the randomness inherent 
in $(x_{14},x_{15})$ is needed in order to show that the values of
$\URF$ in the adapt zones look random.
However, conditioned on using the upper adapt zone, one bit of 
$x_{14}$ is already fixed.

In order to solve this problem, we take the following, 
very explicit approach: we consider the
two experiments which we want to show to behave almost the same, and
define a map associating randomness in one experiment to randomness in
the other experiment.  We then study this map.  This leads to a more
fine-grained understanding and a much more formal treatment of the
indistinguishability proof.\footnote{For reference: this step can be
  found in Section~\ref{sec:theMappingTau}.}

\subsection{Model and Notational Conventions}

The results throughout this paper are information-theoretic, and
consider random experiments where a {\em distinguisher} $\mathbf{D}$
interacts with some given system $\mathbf{S}$, outputing a value
$\mathbf{D}(\mathbf{S})$. In the context of this paper, such systems
consist of the composition $\mathbf{S} = (\mathbf{S}_1, \mathbf{S}_2)$
of two (generally correlated) systems accessible in parallel, where
$\mathbf{S}_i$ is either a random primitive (such as a random function
$\URF$, a random permutation $\URP$ defined above), or a construction
$\mathbf{C}^{\mathbf{S}}$ accessing the random primitive
$\mathbf{S}$. The advantage
$\Delta^{\mathbf{D}}(\mathbf{S}, \mathbf{S}')$ of a distinguisher
$\mathbf{D}$ in distinguishing two systems $\mathbf{S}$ and
$\mathbf{S}'$ is defined as the absolute difference $\left|
  \Pr[\mathbf{D}(\mathbf{S}) = 1] - \Pr[\mathbf{D}(\mathbf{S}') =
  1]\right|$.

We dispense to the largest extent with a formal definition of such
systems (cf.\ e.g.\ the framework of Maurer~\cite{Maurer02} for a
formal treatmenet). Most systems we consider will be defined formally
using pseudocode in a RAM model of computation, following the approach
of~\cite{BelRog06,Shoup04}.  The time complexity of a
system/distinguisher is also measured with respect to such a model.

Defining indifferentiability is somewhat subtle, as different
definitions \cite{MaReHo04,CDMP05} are used in the literature.
Furthermore, it turns out that our simulator runs in polynomial time
only with overwhelming probability (which is a bit weaker than giving
a worst-case polynomial bound on its running time). In particular, we meet the following
definition, which implies the original definition in \cite{MaReHo04}:
\begin{definition}
  For a construction $\mathbf{C}$ accessing independent random
  functions $\URF= (\URF_1, \ldots, \URF_r)$,\footnote{Such a tuple
    can also be seen as a random primitive.}  we say that
  $\mathbf{C}^{\URF}$ is {\em indifferentiable} from a random
  permutation $\URP$ if there exists a simulator $\si$
  such that for all polynomially bounded $q$, the advantage
  $\Delta^{\mathbf{D}}((\mathbf{C}^{\URF}, \URF), (\URP,
  \mathbf{S}^{\URP}))$ is negligible for all distinguishers
  $\mathbf{D}$ issuing a total of at most $q$ queries to the
  two given systems, and furthermore there exists a fixed polynomial $p(q)$, 
  such that $\si$ runs in time $p(q)$ except with negligible probability.
\end{definition}

Finally, we warn the reader that the notation used in Section~\ref{sec:attack}
differs strongly from the one used in Section~\ref{sec:proof}.  The
reason is that first in Section~\ref{sec:attack} we aim to stay close
to the notation used in \cite{CPS08v2} (with some minor modifications
which we believe enhance readability).  Unfortunately this
notation has some obvious problems when as many as $14$ rounds are
used, which is why we cannot use it in Section~\ref{sec:proof}.

\section{The Six-Round Feistel Construction: An Attack}
\label{sec:attack}

This section presents several problems in the existing proof of Coron
et al.~\cite{CPS08v2}. While we cannot rule out the fact that the
six-round Feistel construction is indeed indifferentiable from a
random permutation, the contents of this section show that any such
proof would be significantly more involved than the one given
in~\cite{CPS08v2}.

\subsection{The Simulator of Coron et al.}
\newcommand{\threechain}[4]{((#1, #2, #3), #4)}

We first provide a high-level description of the simulator $\sitwo$
used in the indifferentiability proof of \cite{CPS08v2} for the
six-round Feistel construction. This description is sufficient to
convey the main ideas underlying our attack and the problems with the
existing proof, and a complete description is given in
Appendix~\ref{appendixSimulatorDefinition}. For ease of reference, we
use a similar notation to the one of \cite{CPS08v2} throughout this
section.

Recall that the simulator $\sitwo$ queries $\URP$ and $\URP^{-1}$ to
simulate the round functions $\URF_1, \ldots, \URF_6$ consistently,
where the given $\URP$ needs to have the same behaviour as the 
constructed six-round Feistel construction. 
For each $i \in \{1, \ldots, 6\}$, the simulator stores
the values $\URF_i(x)$ it has defined up to the current point of the
execution as a set $\hist(\URF_i)$ of pairs $(x, \URF_i(x))$, called
the {\em history} of $\URF_i$.  We will write $x \in \URF_i$ if
$\URF_i(x)$ is defined. At any point in time, the simulator
considers so-called $3$-chains, which are triples of values appearing
in the histories of three consecutive round functions and which are
consistent with the evaluation of the Feistel construction. In the
following, when we refer to round $i - 1$ or $i + 1$, addition and
subtraction are modulo $6$, and the elements $\{1,\ldots,6\}$
represent the equivalence classes.

\begin{definition}[$3$-chain]
  A {\em $3$-chain} is a triple $(x, y, z)$ (where the values are
  implicitly associated with three consecutive rounds $i-1$, $i$, and
  $i + 1$) which satisfies one of the following conditions with
  respect to the given histories:
\begin{enumerate}[(i)]
\item If $i \in \{2, 3, 4, 5\}$, $x \in \URF_{i-1}$, $y \in
  \URF_{i}$ and $z \in \URF_{i+1}$, and $\URF_{i}(y)
  = x \oplus z$;
\item If $i = 6$, $x \in \URF_5$, $y \in \URF_6$, $z
  \in \URF_1$, and $\exists x_0 \in \{0,1\}^n: \URP^{-1}(y\|\URF_6(y) \oplus x)= x_0\|z$;
  \item If $i = 1$, $x \in \URF_6$, $y \in \URF_1$, $z \in \URF_2$,
    and $\exists x_7 \in \{0,1\}^n:\URP(z \oplus
    \URF_1(y)\|y) = x\|x_7$.
\end{enumerate}
\end{definition}

We next describe the main points how the simulator attempts to
simulate the round functions. 
On a query $x \in
\URF_i$ for the $i$-th round function $\URF_i$, the simulator $\sitwo$ 
replies with $\URF_i(x)$.  If $x \notin \URF_i$, 
the simulator assigns to $\URF_i(x)$ a
uniform random value, and invokes a procedure called \ChainQuery\ with input
$(x, i)$. 

The procedure \ChainQuery\ operates as follows. Let $\Chain(+, x, i)$
and $\Chain(-, x, i)$ be the sets of all $3$-chains with $x$ in round
$i$ as their first value (these are so-called {\em positive chains})
and as their last value (so-called {\em negative chains}),
respectively. The procedure iterates over all 
$3$-chains $ \Chain(+, x, i) \cup \Chain(-, x, i)$, and for some subset 
of these chains it calls the procedure \CompleteChain. 
How the simulator chooses this subset is not important for this discussion.

The procedure \CompleteChain\  ensures consistency of the defined values with
respect to the six-round Feistel construction, and operates as follows
on input a positive $3$-chain $(x, y, z)$ (a negative chain is
processed analogously): It extends it to a $6$-tuple $(x_1, \ldots,
x_6)$ with $x_{i} = x$, $x_{i+1} = y$ and $x_{i+2} = z$ and the additional
property that $\URF_i(x_{i}) = x_{i-1} \oplus x_{i+1}$ for all $i \in \{2,3,4,5\}$ and also that $\URP(x_2 \oplus \URF_{1}(x_1)\|x_1) =
x_6\|\URF_6(x_6) \oplus x_5$. This is achieved by first computing
$x_j$ for some $j \in \{i-2, i+2\}$ and setting the value
$\URF_j(x_j)$ uniformly at random (if undefined), and then computing
the two remaining values $x_{\ell}$ and $x_{\ell + 1}$, and {\em
  adapting} the respective output values $\URF_{\ell}(x_{\ell})$ and
$\URF_{\ell+1}(x_{\ell+1})$ to satisfy the constraint imposed by the
permutation $\URP$. Note that it may be that setting these values is
not possible (since some $x_{\ell}' \in \URF_{\ell}$ or some $x'_{\ell
  + 1} \in \URF_{\ell + 1}$), in which case we say the
simulator {\em aborts}. We point out that in this situation, the simulator
is unable to define the remaining chain values consistently with $\URP$.
 The precise choice of how $j$
is chosen depends on the index $i$, and we describe it in detail in
Appendix~\ref{appendixSimulatorDefinition}.

As the completion of these $3$-chains defines new entries for the
function tables, new $3$-chains may appear.  In this case, \ChainQuery\ is
recursively called on input $(x', i')$ for each value $\URF_{i'}(x')$
defined within one of the \CompleteChain\ calls invoked by \ChainQuery.

In the above description, we omitted one more complication:
at the beginning of each invocation of \ChainQuery, some
special special procedures (called
$\XorQuery_1$, $\XorQuery_2$, and $\XorQuery_3$) are invoked.
Their purpose is to
avoid some distinguishing attacks which are possible against the
simulator as detailed above, but in our distinguishing
attack these procedures are not helpful.
 We refer the reader to the Appendix for a
complete treatment.

\subsection{A problem in the Proof of \cite{CPS08v2}}\label{sec:ProblemsInExistingProof}

The core of the proof of \cite{CPS08v2} considers a distinguisher
which interacts with $\URP$ and $\sitwo$.
The goal is to show that the 
probability that $\sitwo$ aborts is negligible.

The natural approach taken in \cite{CPS08v2} is, for any execution
of \ChainQuery\ and under the condition that no abort has occurred so
far, to upper bound the probability that $\sitwo$ aborts in one of
the recursive calls to $\ChainQuery$. To achieve this, the following
condition is introduced:
\begin{quote} 
  The distribution of $\URF_{i}(x)$ when $\ChainQuery(x,i)$ is
  called is uniform in $\{0,1\}^n$ given the histories of all round
  functions, but where the value $\URF_i(x)$ is removed from the
  history, and in the case where the \ChainQuery\ invocation results
  from the completion of a chain, also all other values defined in
  that completion are removed from the history.
\end{quote}
The proof idea is as follows: Suppose that the condition holds for 
a \ChainQuery\ invocation, and no abort has occurred so far. Then,
except with negligible probability, 
no \CompleteChain\ execution within that \ChainQuery\ leads to an abort, 
and the condition holds for every recursive call to \ChainQuery.\footnote{We 
note that in some cases it turns out that the
condition may not hold, but then a separate proof is given that no
abort occurs for such a \ChainQuery\ execution, and that the condition
holds for all subsequent calls to \ChainQuery, i.e., such bad
invocations do not propagate.} 

The proof of this statement is very subtle: Between the point in time 
where the value $\URF_i(x)$ is set and the
invocation of $\ChainQuery(x, i)$, potentially several other 
$\ChainQuery$ are executed and extend the history. These
additional values in the history could depend on $\URF_i(x)$ and could 
fully determine $\URF_i(x)$. 

In fact, for specific cases where \cite{CPS08v2} claims that the condition
described above can be established (such as Lemma 10), it is possible to 
show that the value $\URF_{i}(x)$ is not distributed uniformly, but is in fact 
fully determined by the given history values. Thus, the condition does clearly not hold. For details, we refer to~\cite{Kuenzler10}.
We do not see how the proof can be extended to fix this problem. 
These issues led to a concrete attack, which we describe below. Since this
is a much stronger statement, we dispense with a more detailed description
of the problems in \cite{CPS08v2}.

\paragraph{A Problem in the Proof of~\cite{Seurin09}.} An alternative,
and more elegant approach for the $10$-round Feistel construction is
given by Seurin~\cite{Seurin09} in his PhD thesis. The core of the
proof also consists in proving an upper bound on the abort probability
of the simulator.

However, the proof suffers from similar problems as in
\cite{CPS08v2}. As an example, the proof of Lemma 2.11 in
\cite{Seurin09} claims that the simulator aborts only with negligible
probability in $\CompleteChain_2(W,R,S,D)$ when adapting $\URF_3(X)$,
because $X = R \oplus \URF_2(W)$, and $\URF_2(W)$ is distributed
uniformly in $\{0,1\}^n$. Yet, the statement about $\URF_2(W)$ being
uniform is questionable, since, similarly to the above case, there are
function values defined {\em after} $\URF_2(W) \inu \{0,1\}^n$
occurred, but before $\URF_3(X)$ gets adapted. These values might well
depend on $\URF_2(W)$, and therefore it is not at all clear if
$\URF_2(W)$ is still distributed uniformly given the values in the
history, and how one could prove such a statement.

Orthogonally, conditioning on something different than the complete
history at the moment where $\URF_2(W)$ is used does also not appear a
viable option. This leads us to the conclusion that it is still open
if this simulator can be used to prove indifferentiability. However,
in contrast to the $6$-round case, we have been unable to find a
concrete distinguishing attack when this simulator is used.

\subsection{The Attack against the Simulator of \cite{CPS08v2}}
\label{sec:attack_cps08v2}

\newcommand{\set}[1]{\mathcal{#1}}
As formalized by the following theorem, we show that there exists a
strategy for $\di$ such that $\sitwo$ aborts with overwhelming
probability. This immediately implies that $\sitwo$ cannot be used to
prove indifferentiability, since using the given strategy, one can
distinguish the real setting from the ideal setting (where $\sitwo$
aborts).

\begin{theorem}
  There is a distinguisher $\di$ such that $\sitwo$ aborts with
  overwhelming probability when $\di$ interacts with $\URP$ and
  $\sitwo$.
\end{theorem}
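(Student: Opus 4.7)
The plan is to exhibit an explicit sequence of queries to $\URP$, $\URP^{-1}$, and the simulated round functions that, with high probability over $\URP$ and $\sitwo$'s internal randomness, forces $\sitwo$ into a \CompleteChain call in which it tries to assign an already-defined $\URF_{\ell}(x_{\ell})$ to an inconsistent value. As explained in the discussion preceding the theorem, the essential vulnerability is that between the moment a value in the middle rounds is first sampled uniformly and the moment it is re-examined during the adaptation of a later chain, many other chains may have been completed and written values into the histories. My strategy is to amplify this effect by triggering a cascade of chain completions.

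First, $\di$ issues a batch of ``warm-up'' queries to $\URF_2, \URF_3, \URF_4, \URF_5$ at inputs chosen in advance so that no $3$-chain is yet present; at this stage the simulator merely samples independent uniform responses and no \CompleteChain is ever called. The inputs are structured so that many \emph{potential} $3$-chains are already ``one XOR away'' from being detected: any newly defined middle-round value completes, in expectation, $\Omega(q^2)$ new $3$-chains relative to the warm-up pool. Then $\di$ issues a single triggering query — either to $\URP$ or to $\URF_1$ / $\URF_6$ — chosen so that the first \ChainQuery invocation detects at least one $3$-chain. This detection starts a cascade in which each \CompleteChain writes new values into the middle rounds, which in turn create additional $3$-chains, forcing \ChainQuery to recurse and complete many more chains before returning control to $\di$.

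The core of the proof is to show that, within this cascade, with overwhelming probability there exist two completed chains $C$ and $C'$ that share the same $x_{\ell}$ in one of the adapt rounds, yet the two values that $\sitwo$ would assign to $\URF_{\ell}(x_{\ell})$ are independently determined by fresh uniform bits (either from responses to $\URP$ / $\URP^{-1}$ or from uniform samples in a non-adapt round). Conditioned on such a pair existing, the two adaptations disagree with probability $1 - 2^{-n}$, so $\sitwo$ aborts. A counting argument then shows that the expected number of such ``dangerous'' pairs in the cascade is polynomially large in $q$, and a concentration bound (together with the fact that the relevant randomness sources are almost independent once we condition on the combinatorial structure of the cascade) gives an abort probability of $1 - \negl(n)$.

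The main obstacle I expect is arguing that the specific order in which $\sitwo$ dequeues and processes the detected chains, together with the rescuing effect of $\XorQuery_1, \XorQuery_2, \XorQuery_3$, does not accidentally prevent the collision. Handling this requires a careful, chain-by-chain case analysis of what \ChainQuery does during the cascade for the specific warm-up pattern chosen: one must verify that each of the problematic $3$-chains actually reaches \CompleteChain rather than being filtered out, and that the conflicting completion of $\URF_{\ell}(x_{\ell})$ happens at the right moment relative to its first assignment. This bookkeeping, rather than any deep probabilistic estimate, is what makes the full proof (deferred to Appendix~\ref{app:analysisattack}) delicate; the high-level probabilistic bounds, once the ordering is pinned down, are standard birthday-style estimates.
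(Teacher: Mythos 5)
There is a genuine gap in your probabilistic amplification strategy, and it is fatal to the claimed ``overwhelming probability''. Your plan rests on two quantitative claims: (a) that a non-adaptive warm-up pool can be structured so that a newly defined middle-round value completes $\Omega(q^2)$ new $3$-chains in expectation, and (b) that a cascade of $\poly(q)$ completions contains, with overwhelming probability, two chains sharing an adapt-round value $x_{\ell}$ whose target assignments are independent fresh uniform strings. Both fail for a polynomially bounded distinguisher. A $3$-chain condition always involves a round-function output (e.g.\ $\URF_i(y) = x \oplus z$) or a fresh $\URP$ output that the simulator samples uniformly; no matter how you pre-structure the warm-up inputs, the set of ``one XOR away'' targets has size at most $q^2$, so a freshly sampled $n$-bit value hits it with probability at most $q^2/2^n$, i.e.\ the expected number of chains detected this way is negligible, not $\Omega(q^2)$. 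The same birthday-style bound applies to your ``dangerous pairs'': accidental coincidences among $\poly(q)$ fresh uniform values occur with probability $\poly(q)/2^n$, so your attack as described makes $\sitwo$ abort with negligible, not overwhelming, probability. (A collision that \emph{is} likely must be forced by XOR relations the distinguisher sets up adaptively, but then the colliding values are no longer ``independently determined by fresh uniform bits'', and it is exactly such forced relations that the $\XorQuery$ procedures were added to defuse.)

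The paper's attack avoids this by being algebraic and adaptive rather than statistical: with only $7$ simulator queries and $3$ permutation queries, $\di$ chooses $R_1 := R_2 \oplus A_2 \oplus A_3$ and $\bar{A} := A_1 \oplus R_1 \oplus R_2$ from values it has already observed, so that the three forced $3$-chains $(S_i,R_i,X)$, once completed, \emph{provably} give rise to two further $3$-chains $(Y_1,Z_2,A_3)$ and $(Y_2,Z_1,\bar{A})$ sharing a common second value $X'$. Whichever of these two the simulator completes second then has four of its values fixed before the associated permutation query is issued, so no assignment of the two adapt positions can be consistent with $\URP$, and the simulator aborts --- with probability $1-\mathcal{O}(B\cdot 2^{-n})$ over a short list of bad events, and regardless of the completion order or adaptation strategy. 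Your instinct that the delicate part is verifying that the $\XorQuery$ procedures and the dequeuing order do not rescue the simulator is correct (that is where the bulk of Appendix~\ref{app:analysisattack} is spent), but that bookkeeping only makes sense once the collision is forced deterministically; the cascade-plus-concentration argument you propose cannot be repaired into an overwhelming-probability attack.
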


\begin{figure}[!t]
  \begin{center}
    \includegraphics[scale=0.5]{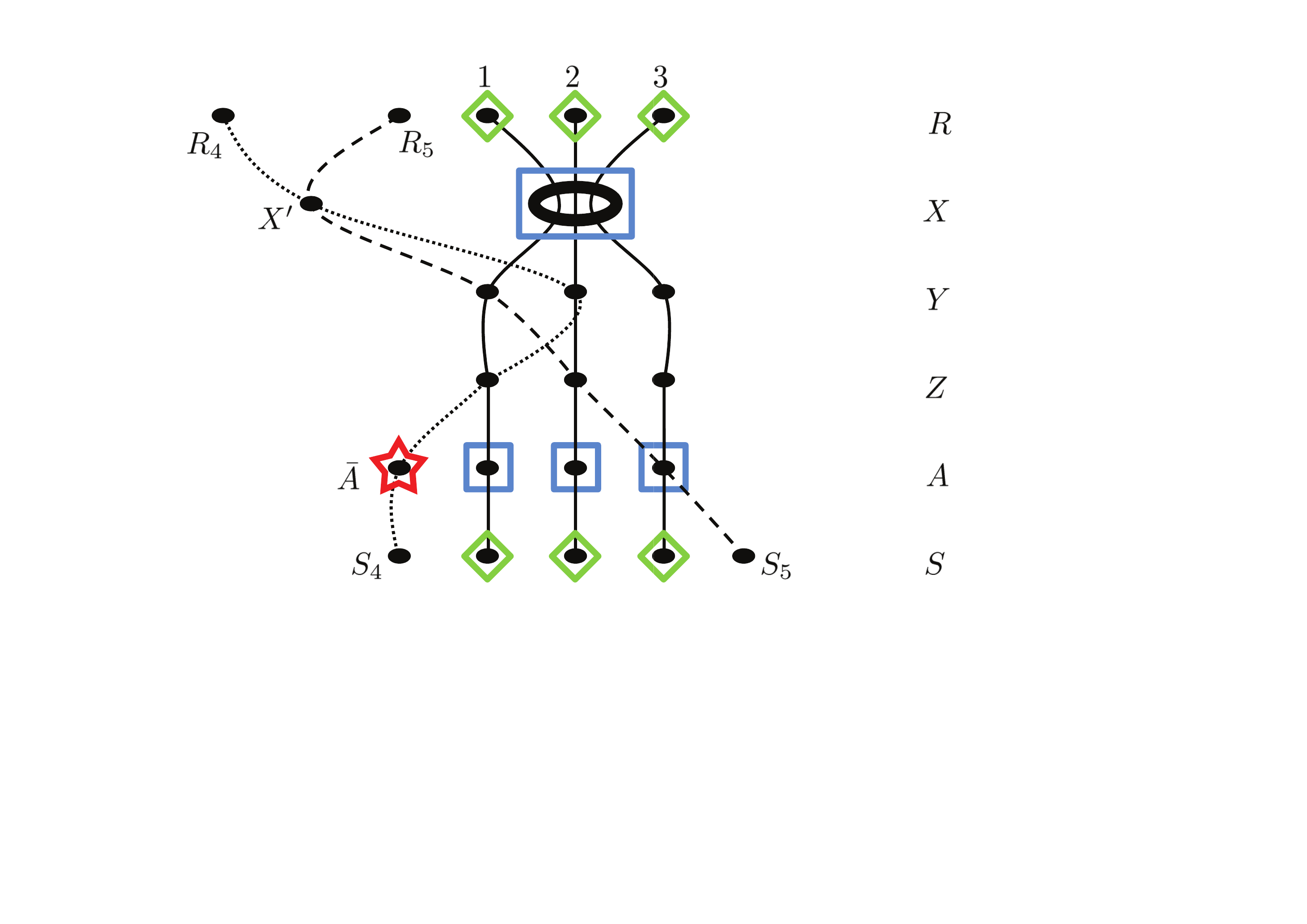}
  \end{center}
  \caption{Illustration of the attack provoking the simulator $\sitwo$ to abort.}
  \label{figAttackSitwo}
\end{figure}

The attack asks a very limited number of queries
(i.e., $7$ queries to the simulator, and three permutations queries).
When asking the last simulator query, the simulator is forced to
complete five different $3$-chains. The queries are chosen in a way that
after completing the first four chains, four values of the completion
of the remaining $3$-chain are defined {\em before} the associated
permutation query is issued by the simulator. At this point,
regardless of the strategy used, it is unlikely that the simulator can
set values so that this last chain is completed, and the simulator
aborts.  Figure~\ref{figAttackSitwo} illustrates the structure of the
(completed) $3$-chains. 

\paragraph{Outline of the attack and intuition.} 
The distinguisher $\di$ chooses $n$-bit values $X$,
$R_2, R_3$, and for $i \in \{2,3\}$, lets $L_i := \URF_1(R_i) \oplus
X$, $S_i\|T_i := \URP(L_i\|R_i)$, and $A_i := \URF_6(S_i) \oplus
T_i$. Then, it defines $R_1 := R_2 \oplus A_2 \oplus A_3$, $L_1 :=
\URF_2(X) \oplus R_1$, $S_1\|T_1 := \URP(L_1\|R_1)$, and $A_1 :=
\URF_6(S_1) \oplus T_1$. It is not hard to verify that $(S_i, R_i, X)$
are all $3$-chains for $i = 1, 2, 3$.  When completed to full
chains $(R_i, X, Y_i, Z_i, A_i, S_i)$, the values $(Y_1, Z_2, A_3)$ 
also constitute a $3$-chain, since
\begin{displaymath}
\URF_4(Z_2) = A_2 \oplus Y_2 = R_1 \oplus R_2 \oplus A_3 \oplus Y_2 = R_1 \oplus \URF_2(X) \oplus A_3 = Y_1 \oplus A_3
\end{displaymath}
under the assumption that the first three chains have been completed
correctly.  Finally, the distinguisher queries $\bar{A} := A_1 \oplus
R_1 \oplus R_2$ to $\URF_5$. Note that $(Y_2, Z_1, \bar{A})$ also
constitutes a $3$-chain under the assumptions that the first three
chains are completed correctly, since
\begin{displaymath}
    \URF_4(Z_1) = Y_1 \oplus A_1 = Y_1 \oplus \bar{A} \oplus R_1 \oplus R_2 = \URF_2(X) \oplus \bar{A} \oplus R_2 = \bar{A} \oplus Y_2.
\end{displaymath}
Finally, note that
\begin{displaymath}
  Z_1 \oplus \URF_3(Y_2) = Z_1 \oplus Z_2 \oplus X = Z_2 \oplus \URF_3(Y_1).
\end{displaymath}
This means in particular that when completed, the $3$-chains $(Y_1,
Z_2, A_3)$ and $(Y_2, Z_1, \bar{A})$ have a common second value, which
we denote as $X'$.

The core idea of the attack is the following: The simulator $\si$
first completes the $3$-chains $(S_i, R_i, X)$ for $i = 1, 2, 3$, and
only subsequently turns to completing the two remaining
$3$-chains. Say it completes the $3$-chain $(Y_2, Z_1, \bar{A})$ first:
Then, as this chain has a common value with the completion of the
$3$-chain $(Y_1, Z_2, A_3)$, in the end the simulator has only two
possible values (namely, those at both ends of the completed chain)
which are still free to be set to complete the $3$-chain $(Y_1, Z_2,
A_3)$, and this leads to an abort. (In fact, we prove the slightly
stronger statement that the simulator fails even if it adopts any
other strategy to complete this last chain, once the second-last chain
is completed.)

The main difficulty of the full analysis, given in
Appendix~\ref{app:analysisattack}, is that the actual simulator makes
calls to procedures (called $\XorQuery_1$, $\XorQuery_2$, and
$\XorQuery_3$) which are intended to prevent (other) attacks.  To 
show that no such call affects the intuition behind our attack is a
rather cumbersome task.

Note that our implementation of the simulator and our attack in Python
indeed shows that the simulator aborts (the code can be found
as ancilliary file in \cite{HKT10full}).

\paragraph{A stronger attack.} 
It is actually possible to come up with a simulator that defines all
function values consistently with $\URP$ under the attack present in
the previous section, and thus the attack falls short of proving that
the six-round Feistel construction cannot be indifferentiable from a
random permutation. Appendix~\ref{app:strongerattack} presents a
stronger distinguishing attack, for which, in fact, we were not able
to come up with a simulator which withstands it. We {\em conjecture}
that no simulator within a very large class of simulators is able to
withstand this distinguishing attack, but giving such a proof seems to
be quite difficult and to require a deeper and more general
understanding of the possible dependencies between
chains. Nonetheless, we consider this distinguisher to be a useful
testbed for any attempt to fix the indifferentiability proof for six
rounds.\footnote{ We have also implemented this more general attack in
  Python, and, not surprisingly, its execution also leads to an abort
  of the simulator $\sitwo$.}

\section{Indifferentiability of the Feistel Construction from a Random Permutation}\label{sec:proof}

\newcommand{\Exp}{\textsf{Exp}}
\newcommand{\Dist}{\mathbf{D}} 
\newcommand{\Syst}{\mathsf{S}}
\newcommand{\SystE}{\mathsf{E}}
\newcommand{\SIM}{\mathbf{S}} 
\newcommand{\SIMM}{\mathbf{T}}
\newcommand{\bin}[1]{\{0,1\}^{#1}} \newcommand{\totPqueries}{q_{\URP}}
\newcommand{\DistAdv}[3]{\Delta^{#1}\left(#2, #3\right)}
 \newcommand{\BadOC}{\textsf{BadOC}}
\newcommand{\Abort}{{\normalfont \textsf{Abort}}}
\newcommand{\evMismatch}{{\normalfont \textsf{Mismatch}}}
\newcommand{\evBadFill}{{\normalfont \textsf{BadFill}}}
\newcommand{\Hit}{{\normalfont \textsf{Hit}}}
\newcommand{\HitP}{{\normalfont \textsf{HitP}}}

We prove that the $14$-round Feistel construction is indifferentiable from 
a random permutation. 

\begin{theorem}\label{thm:mainTheorem}
  The $14$-round Feistel construction using $14$
  independent random functions is indifferentiable from a random
  permutation. 
\end{theorem}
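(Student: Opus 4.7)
The plan is to exhibit a simulator $\SIM$ following the zone-based structure sketched in the introduction and analyze it in three stages: running time, non-abort, and transcript indistinguishability. The $14$ rounds are partitioned into \emph{adapt zones} at positions $(4,5)$ and $(10,11)$, \emph{buffer rounds} $3, 6, 9, 12$, and \emph{detect zones} consisting of the boundary rounds $(1,2,13,14)$ (for chains closed through $\URP$) and the middle rounds $(7,8)$ (for chains closed through the already-defined history). Upon each query $\URF_i(x)$ with $x$ undefined, the simulator samples $\URF_i(x)$ uniformly and scans the histories for newly formed detect-zone chains, appending each to a FIFO queue. The main loop processes one enqueued chain $C$ at a time by first invoking $\URP$ or $\URP^{-1}$ to obtain the missing endpoints, then extending $C$ through the buffer rounds by sampling any undefined $\URF_i$ uniformly, and finally adapting the two inputs in the chosen adapt zone to the unique values forcing $\URP$-consistency; an abort is declared only if one of the two adapt inputs is already in the history with an incompatible output.

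I would first bound the running time and non-abort probability. Adapting the zone-accounting of \cite{Seurin09} to the $14$-round layout, every chain completed inside the main loop can be charged to a unique external query, so queue length and total work are $\poly(q)$ except with negligible probability. For non-abort, define a bad event $\BadlyCollide$ collecting the coincidences that when a chain $C$ is enqueued the buffer-round value $x_j$ adjacent to its eventual adapt zone (for some $j \in \{3,6,9,12\}$) already lies in $\URF_j$, or that when $C$ is finally processed one of its adapt-zone inputs $x_\ell, x_{\ell+1}$ already lies in $\URF_\ell$ or $\URF_{\ell+1}$. The queue ordering together with the zone choice ensure that when $C$ is enqueued the relevant buffer-round query has not been answered, and that the last freshly sampled $\URF$-value along the computation of $x_\ell$ (respectively $x_{\ell+1}$) injects an independent uniform $n$-bit mask into that position; each individual coincidence therefore has probability at most $1/2^n$ given the prior history, and a union bound over polynomially many candidates yields $\Pr[\BadlyCollide] \le \poly(q)/2^n$. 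On $\neg \BadlyCollide$ the simulator never aborts.

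Next I would establish transcript indistinguishability via the explicit randomness-mapping approach flagged in the introduction. Introduce an intermediate experiment $\Syst'$ in which the external $\URP$ is replaced by $\Feistel_{14}$ evaluated through the simulator's $\URF$-tables; then $(\Feistel_{14}^{\URF},\URF)$ and $\Syst'$ induce the same transcript on $\neg \BadlyCollide$, since each adapted entry matches what a freshly sampled $\URF$ would have output. To compare $(\URP,\SIM^{\URP})$ with $\Syst'$, define a map $\tau$ sending the randomness of the first experiment, namely the table of $\URP$ together with the freely sampled $\URF$-entries, to the randomness of the second, namely the full tables of $\URF$, by reinterpreting each adapted entry as a ``fresh'' sample and deriving the permutation table from Feistel evaluation on the resulting tables. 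On $\neg \BadlyCollide$ I would verify that $\tau$ is a measure-preserving bijection between consistent randomness tuples, giving identical transcript distributions up to the bad event.

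The main obstacle will be establishing this bijection. Because the simulator's detect- and adapt-zone choices depend adaptively on the transcript, and in particular on bits of $\URP(x_0,x_1)$ as the informal distinguisher in the introduction shows, the partition of the randomness into ``freely sampled'' versus ``adapted'' coordinates is itself a random variable determined by the execution. Making $\tau$ well-defined and invertible requires showing, on $\neg \BadlyCollide$, that this partition is recoverable from the non-adapted randomness alone, that no entry of the $\URP$-table is ever forced twice by distinct chain completions (so that the image is a bona fide permutation), and that domain and codomain cardinalities match along every execution path. Once this combinatorial bookkeeping is in place, measure-preservation is immediate and a triangle inequality
\[
\DistAdv{\Dist}{(\Feistel_{14}^{\URF},\URF)}{(\URP,\SIM^{\URP})} \;\le\; \Pr[\BadlyCollide] + O(\poly(q)/2^n)
\]
is negligible, yielding the theorem.
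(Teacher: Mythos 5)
Your architecture matches the paper's (same zone layout, queue, Seurin-style charging argument for the running time, and the randomness-mapping idea already flagged in the introduction), but the central step of your non-abort analysis contains precisely the gap this paper is about. You bound the event that an adapt-zone input $x_\ell$ is already in $\VarG_\ell$ when its chain is finally processed by arguing that the last freshly sampled $\URF$-value ``injects an independent uniform mask'' so each coincidence has probability $2^{-n}$ \emph{given the prior history}. That conditioning is exactly the reasoning shown here to be unsound in \cite{CPS08v2} and questionable in \cite{Seurin09}: between the moment that value was sampled (at or before enqueue time) and the moment the chain is dequeued, the simulator may complete many other chains, and the table entries those completions add can depend on --- and can even fully determine --- the sampled value, so it is no longer uniform given the history at dequeue time. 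The paper's proof avoids this by defining its bad events ($\BadP$, $\BadlyHit$, $\BadlyCollide$) \emph{per assignment}, where the freshly drawn value is genuinely uniform given everything defined at that instant, and then proving structural lemmas (the equivalence relation on partial chains, Lemmas~\ref{lem:theGreenRoundsAreNew1} and~\ref{lem:theGreenRoundsAreNew2}, and Lemma~\ref{lem:noSetFafterF} that $\forceVal$ never overwrites) showing that an occupied buffer or adapt position at dequeue time would force one of these per-assignment events. Without that machinery your union bound is unjustified; moreover your claim that the buffer-round value is unset at enqueue time needs the case analysis over previously enqueued equivalent chains and the $\CompletedChains$ bookkeeping, which your sketch omits.

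The indistinguishability half is also left essentially open. You propose mapping the randomness of $(\URP,\SIM^{\URP})$ directly to full $\URF$-tables via a measure-preserving bijection, and you correctly identify the bookkeeping as the main obstacle --- but you do not resolve it, and the direct route is harder than what the paper does: the permutation table is not a product of independent entries, so the counting does not go through as stated. The paper first replaces $\URP$ by a two-sided random function with independent table entries and a $\Check$ procedure (Lemma~\ref{lem:PermAndTSRF}), which is what makes Lemma~\ref{lem:goodHAreUniform} possible; the map $\tau$ produces a \emph{partial} table $h$ whose unused entries are filled uniformly, the statistical distance to uniform is shown to equal the probability of a bad pair (using Lemma~\ref{lem:nAdaptEqualsNofPqueries} to match adapted entries with permutation queries), and a separate step (Lemma~\ref{lem:s3EqualsS4}) removes the simulator. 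So your proposal is a reasonable outline, but its two load-bearing steps --- freeness of the adapt positions at completion time, and the randomness mapping --- are respectively supported by an invalid conditioning heuristic and an acknowledged but unfilled gap.
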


The remainder of this section is devoted to the proof of
Theorem~\ref{thm:mainTheorem}. Our task is to provide a
simulator $\SIM$ with access to a random permutation $\URP$ such that $(\SIM^{\URP}, \URP)$ is indistinguishable from $(\URF,\Feistel^{\URF})$, where
$\URF$ denotes the random functions used in the Feistel construction.

We first define the simulator $\SIM$ in
Section~\ref{sec:defineSimulator}.  Then we transform $(\SIM^{\URP},
\URP)$ stepwise to $(\URF, \Feistel^{\URF})$. 
The random functions we consider in this section are always from $n$ bits to $n$ bits, and the random permutation $\URP$ is over $2n$ bits.

\subsection{Simulator Definition}\label{sec:defineSimulator}

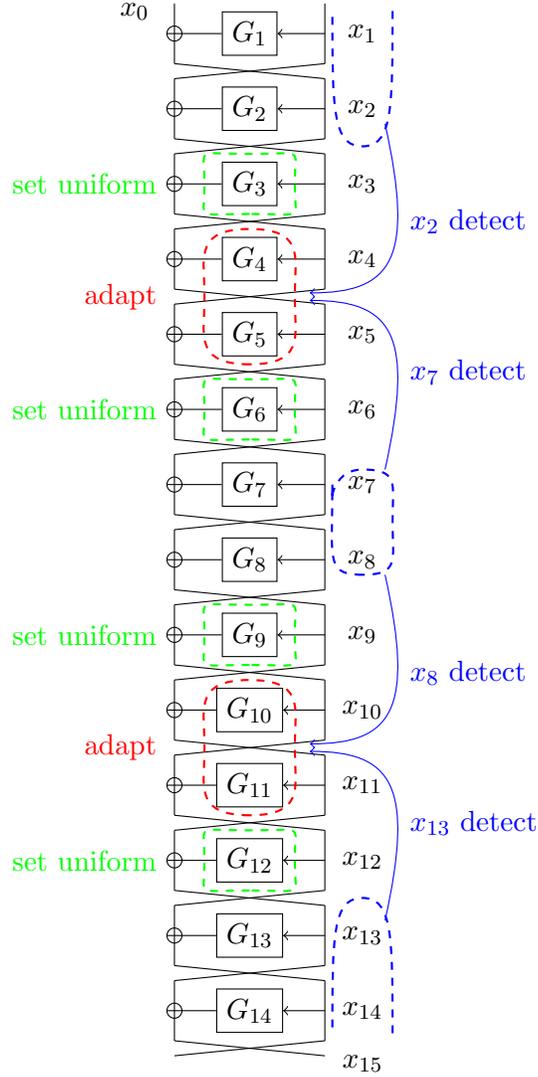
\begin{figure}
  \centering
  \tikzstyle{urf}=[rectangle, draw=black] 
  \pgfmathsetmacro{\xorsize}{0.1}

  \begin{tikzpicture}[xscale=1]
    \foreach \var in {1,2,...,14} {
      \draw (1.5,-\var) node {$x_{\var}$};
      \draw (0, -\var) node[urf](f\var) {$\VarG_{\var}$};
      \draw (1,-0.4-\var) -- (1,0.4-\var);
      \draw (-1,-0.4-\var) -- (-1,0.4-\var);
      \draw (-1,-0.4-\var) -- (1,-0.6-\var);
      \draw (1,-0.4-\var) -- (-1,-0.6-\var);
      \path[->] (1,-\var) edge (f\var);
      \draw (f\var) -- (-1 - \xorsize,-\var);
      \draw (-1,-\var) circle(\xorsize);
    }
    \draw (-1.2,-.7) node [left] {$x_0$};
    \draw (1.1,-14.7) node [right] {$x_{15}$};

    \draw[blue,dashed,thick] (1.9,-14.3) ..controls (1.9,-13) and (1.9,-12.5) .. (1.5,-12.5)
                  ..controls (1.1,-12.5) and (1.1,-13) .. (1.1, -14.3); 

    \draw[blue,dashed,thick] (1.9,-0.7) ..controls (1.9,-2) and (1.9,-2.5) .. (1.5,-2.5)
                  ..controls (1.1,-2.5) and (1.1,-2) .. (1.1, -0.7); 

    \draw[blue,dashed,thick] (1.5,-8.2) ..controls (1,-8.2) and (1.1,-8) .. (1.1,-7) 
                  ..controls (1.1,-07.5) and (1,-6.8) .. (1.5,-6.8) 
                  ..controls (2,-6.8) and (1.9,-7) .. (1.9,-7.5)
                  ..controls (1.9,-8) and (2,-8.2) .. (1.5,-8.2);


    \pgfmathsetmacro{\Y}{-5}
    \draw[dashed,red,thick] (0,\Y-.4) ..controls (-0.75,\Y-.4) and (-.6,\Y).. (-.6,\Y+0.5) 
                  ..controls (-.6,\Y+1) and (-0.75,\Y+1.4) .. (0,\Y+1.4) 
                  ..controls (0.75,\Y+1.4) and (.6,\Y+1) .. (.6,\Y+0.5)
                  ..controls (.6,\Y) and (0.75,\Y-.4) .. (0,\Y-.4);
    \draw[red] (-1.1,\Y+.5) node [left] {adapt};

    \pgfmathsetmacro{\Y}{-11}
    \draw[dashed,red,thick] (0,\Y-.4) ..controls (-0.75,\Y-.4) and (-.6,\Y).. (-.6,\Y+0.5) 
                  ..controls (-.6,\Y+1) and (-0.75,\Y+1.4) .. (0,\Y+1.4) 
                  ..controls (0.75,\Y+1.4) and (.6,\Y+1) .. (.6,\Y+0.5)
                  ..controls (.6,\Y) and (0.75,\Y-.4) .. (0,\Y-.4);
    \draw[red] (-1.1,\Y+.5) node [left] {adapt};

    \pgfmathsetmacro{\greenbox}{-3}
    \draw[dashed,green,thick] (0,\greenbox-0.4) 
    ..controls (-0.75,\greenbox-.4) and (-.6,\greenbox-.5).. (-.6,\greenbox) 
    ..controls (-.6,\greenbox+.5) and (-0.75,\greenbox+.4) .. (0,\greenbox+.4) 
    ..controls (0.75,\greenbox+.4) and (.6,\greenbox+.5) .. (.6,\greenbox)
    ..controls (.6,\greenbox-.5) and (0.75,\greenbox-.4) .. (0,\greenbox-.4);
    \draw[green] (-1.1,\greenbox) node [left] {set uniform};
    \pgfmathsetmacro{\greenbox}{-6}
    \draw[dashed,green,thick] (0,\greenbox-0.4) 
    ..controls (-0.75,\greenbox-.4) and (-.6,\greenbox-.5).. (-.6,\greenbox) 
    ..controls (-.6,\greenbox+.5) and (-0.75,\greenbox+.4) .. (0,\greenbox+.4) 
    ..controls (0.75,\greenbox+.4) and (.6,\greenbox+.5) .. (.6,\greenbox)
    ..controls (.6,\greenbox-.5) and (0.75,\greenbox-.4) .. (0,\greenbox-.4);
    \draw[green] (-1.1,\greenbox) node [left] {set uniform};
    \pgfmathsetmacro{\greenbox}{-9}
    \draw[dashed,green,thick] (0,\greenbox-0.4) 
    ..controls (-0.75,\greenbox-.4) and (-.6,\greenbox-.5).. (-.6,\greenbox) 
    ..controls (-.6,\greenbox+.5) and (-0.75,\greenbox+.4) .. (0,\greenbox+.4) 
    ..controls (0.75,\greenbox+.4) and (.6,\greenbox+.5) .. (.6,\greenbox)
    ..controls (.6,\greenbox-.5) and (0.75,\greenbox-.4) .. (0,\greenbox-.4);
    \draw[green] (-1.1,\greenbox) node [left] {set uniform};
    \pgfmathsetmacro{\greenbox}{-12}
    \draw[dashed,green,thick] (0,\greenbox-0.4) 
    ..controls (-0.75,\greenbox-.4) and (-.6,\greenbox-.5).. (-.6,\greenbox) 
    ..controls (-.6,\greenbox+.5) and (-0.75,\greenbox+.4) .. (0,\greenbox+.4) 
    ..controls (0.75,\greenbox+.4) and (.6,\greenbox+.5) .. (.6,\greenbox)
    ..controls (.6,\greenbox-.5) and (0.75,\greenbox-.4) .. (0,\greenbox-.4);
    \draw[green] (-1.1,\greenbox) node [left] {set uniform};
    
    \pgfmathsetmacro{\adaptpos}{-8}
    \draw [->,blue] (1.8,\adaptpos-.2) .. 
           controls (2.2,\adaptpos-2) and (1.9,\adaptpos-2.45) .. (0.8, \adaptpos-2.45);
    \draw [blue](2,\adaptpos-1.5) node [right] {$x_{8}$ detect};

    \pgfmathsetmacro{\adaptpos}{-2}
    \draw [->,blue] (1.8,\adaptpos-.2) .. 
           controls (2.2,\adaptpos-2) and (1.9,\adaptpos-2.45) .. (0.8, \adaptpos-2.45);
    \draw [blue](2,\adaptpos-1.5) node [right] {$x_2$ detect};

\begin{scope}[yscale=-1]
    \pgfmathsetmacro{\adaptpos}{7}
    \draw [->,blue] (1.8,\adaptpos-.2) .. 
           controls (2.2,\adaptpos-2) and (1.9,\adaptpos-2.45) .. (0.8, \adaptpos-2.45);
    \draw [blue](2,\adaptpos-1.5) node [right] {$x_7$ detect};

    \pgfmathsetmacro{\adaptpos}{13}
    \draw [->,blue] (1.8,\adaptpos-.2) .. 
           controls (2.2,\adaptpos-2) and (1.9,\adaptpos-2.45) .. (0.8, \adaptpos-2.45);
    \draw [blue](2,\adaptpos-1.5) node [right] {$x_{13}$ detect};
  \end{scope}

  \end{tikzpicture}
  \caption{The 14-round Feistel with the zones where our simulator
    detects chains and adapts them.  Whenever a function value
    $\VarG_2(x_2), \VarG_{7}(x_7), \VarG_{8}(x_{8})$, or $\VarG_{13}(x_{13})$
    is defined, the simulator checks whether the values in the blue
    dashed zones $x_7,x_{8}$ and $x_{1}, x_{2},x_{13}, x_{14}$ form a
    partial chain. In case a chain is detected, it is completed; the
    function values in the red dashed zones are adapted in order to
    ensure consistency of the chain.  
    }
  \label{fig:Feistel}
\end{figure}
We first give a somewhat informal, but detailed description of the
simulator.  We then use pseudocode to specify the simulator in a more
formal manner.

\subsubsection{Informal description}
The simulator provides an interface $\SIM.\F(k,x)$ to query the
simulated random function~$\URF_k$ on input~$x$.  For each~$k$, the
simulator internally maintains a table that has entries which are
pairs $(x,y)$.  They denote pairs of inputs and outputs of
$\SIM.\F(k,x)$.  We denote these tables by $\SIM.\VarG_k$ or just
$\VarG_k$ when the context is clear.  We write $x \in \VarG_k$ to
denote that~$x$ is a preimage in this table, often identifying
$\VarG_k$ with the set of \emph{preimages} stored.  When $x \in
\VarG_k$, $\VarG_k(x)$ denotes the corresponding image.

On a query $\SIM.\F(k,x)$, the simulator first checks whether $x \in
\VarG_k$.  If so, it answers with $\VarG_k(x)$.  Otherwise the
simulator picks a random value $y$ and inserts $(x, y)$ into
$\VarG_k(x)$.  After this, the simulator takes steps to ensure that in
the future it answers consistently with the random permutation $\URP$.

There are two cases in which the simulator performs a specific action
for this.  First, if $k \in \{2,13\}$, the simulator considers all
newly generated tuples $(x_{1},x_{2}, x_{13},x_{14}) \in
\VarG_{1}\times\VarG_{2}\times\VarG_{13}\times \VarG_{14}$, and
computes $x_0 := x_{2} \oplus \VarG_1(x_1)$ and $x_{15} := x_{13}
\oplus \VarG_{14}(x_{14})$.  It then checks whether $\URP(x_0,x_1) =
(x_{14},x_{15})$.  Whenever the answer to such a check is positive,
the simulator enqueues the detected values in a queue.  More
precisely, it enqueues a four-tuple $(x_1, x_2,1,\ell)$.  The value
$1$ ensures that later the simulator knows that the first value $x_1$
corresponds to $\VarG_1$. The value $\ell$ describes where to adapt
values of $\VarG_\ell$ to ensure consistency with the given
permutation.  If $k = 2$, then $\ell = 4$ and if $k = 13$ then $\ell =
10$.  

The second case is when $k \in \{7,8\}$.  Then, the simulator
enqueues all newly generated pairs $(x_7,x_8) \in
\VarG_{7}\times\VarG_{8}$.  It enqueues all these pairs into the
queue as $(x_7,x_{8},7,\ell)$, where $\ell = 4$ if $k=7$ and $\ell =
10$ if $k = 8$ (this is illustrated in Figure~\ref{fig:Feistel}).

After enqueuing this information, the simulator immediately takes the
partial chain out of the queue again, and starts completing it.  For
this, it evaluates the Feistel chain forward and backward (invoking
$\URP$ or $\URP^{-1}$ at one point in order to wrap around), until
$x_\ell$ and $x_{\ell+1}$ are computed, and only the two values
$\VarG_{\ell}(x_\ell)$ and $\VarG_{\ell+1}(x_{\ell+1})$ are possibly undefined. 
The simulator defines the remaining two values in such a way that consistency with $\URP$ is ensured, i.e.,
$\VarG_{\ell}(x_\ell) := x_{\ell-1} \oplus x_{\ell+1}$ and $\VarG_{\ell+1}(x_{\ell+1}) := x_{\ell} \oplus x_{\ell+2}$.
If a value for either of these is defined from a previous action
of the simulator, the simulator overwrites the value (possibly making 
earlier chains inconsistent). 

During the evaluation of the Feistel chain, the simulator usually defines new values
for the tables $\VarG$.  Whenever a value $\VarG_{k}(x_k)$ for $k \in
\{2,13\}$ is defined, the exact same checks as above are performed on the newly
generated tuples $(x_1,x_2,x_{13},x_{14})$.  Whenever a value
$\VarG_k(x_k)$ for $k \in \{7,8\}$ is defined, the simulator similarly enqueues all new
pairs $(x_7,x_{8})$.

When the simulator has finished completing a chain, it checks whether
the queue is now empty.  While it is not empty, it keeps dequeuing
entries and completing chains, otherwise, it returns the answer to the
initial query to the caller.

In order to make sure the simulator does not complete the same chains
twice, the simulator additionally keeps a set
$\CompletedChains$ that contains all triples $(x_k,x_{k+1},k)$ which
have been completed previously.  Whenever the simulator dequeues a chain,
it only completes the chain if it is not in the set $\CompletedChains$.

\subsubsection{The simulator in pseudocode}
We now provide pseudocode to describe the simulator as explained
above in full detail.  Later, during the analysis, we will consider a slightly
different simulator $\SIMM$.  For this, we replace whole lines; the
replacements are put into boxes next to these lines.  The reader can
ignore these replacements at the moment.

First, the simulator internally uses a queue and some
hashtables to store the function values, and a set $\CompletedChains$ 
to remember the chains that have been completed already.

\lstdefinelanguage{pseudocode}{numbers=left,numberstyle=\tiny,mathescape,escapechar=*,morekeywords={System,abort,procedure,if,else,do,done,for,to,step,endif,forall,new,assert,end,return,while,then,private,public, Simulator, Variables,true,false},flexiblecolumns}
\begin{lstlisting}[language=pseudocode,name=simulator]
System $\SIM$:*\hspace{5cm}\framebox{\textbf{System} $\SIMM(f)$:}*
Variables:
    Queue Q
    Hashtable $\VarG_1,\ldots,\VarG_{14}$
    Set $\CompletedChains := \emptyset$
\end{lstlisting}

The procedure $\F(i,x)$ provides the interface to a distinguisher.  It
first calls the corresponding internal procedure $\Finner$, which defines the
value and fills the queue if necessary. Then, the procedure $\F(i,x)$ completes the chains 
in the queue that were not completed previously, until the queue is empty.

\begin{lstlisting}[language=pseudocode,name=simulator]  
public procedure $\F(i,x)$
     $\Finner(i,x)$
     while $\lnot Q.\Qempty()$ do
          $(x_k,x_{k+1},k,\ell) := Q.\dequeue()$
          if $(x_k,x_{k+1},k) \notin \CompletedChains$ then*\quad/\!\!/ ignore previously completed chains*
               */\!\!/ complete the chain*
               $(x_{\ell-2},x_{\ell-1}) := \evalFwd(x_k,x_{k+1},k,\ell-2)$*\label{line:evalFwdChangeParam}*
               $(x_{\ell+2},x_{\ell+3}) := \evalBwd(x_k,x_{k+1},k,\ell+2)$*\label{line:evalBwdChangeParam}*
               $\adapt(x_{\ell-2},x_{\ell-1},x_{\ell+2},x_{\ell+3},\ell)$
               $(x_{1},x_2) := \evalBwd(x_k,x_{k+1},k,1)$
               $(x_{7},x_{8}) := \evalFwd(x_1,x_{2},1,7)$
               $\CompletedChains := \CompletedChains \cup \{ (x_{1},x_{2},1), (x_{7},x_8,7) \}$*\label{line:addToCompletedChains}*
     return $\VarG_i(x)$
\end{lstlisting}

The procedure $\adapt$ adapts the values.  It first sets the values
marked green in Figure~\ref{fig:Feistel} uniformly at random, and also
the next ones.  It then adapts the values of $\VarG_{\ell}(x_{\ell})$
and $\VarG_{\ell+1}(x_{\ell+1})$ such that the chain matches the
permutation.

It would be possible to simplify the code by removing
lines~\ref{line:adaptRemovableStart} to \ref{line:adaptRemovableEnd}
below, and changing the parameters in
lines~\ref{line:evalFwdChangeParam} and \ref{line:evalBwdChangeParam}
above.  The current notation simplifies notation in the proof.

\begin{lstlisting}[language=pseudocode,name=simulator]
private procedure $\adapt(x_{\ell-2},x_{\ell-1},x_{\ell+2},x_{\ell+3},\ell)$
     if $x_{\ell-1} \notin \VarG_{\ell-1}$ then *\label{line:adaptRemovableStart}*
          $\VarG_{\ell-1}(x_{\ell-1}) \leftarrow_R \{0,1\}^n$*\hspace{1.4cm}\framebox{$\VarG_{\ell-1}(x_{\ell-1}) := f(\ell-1,x_{\ell-1})$}*
     $x_{\ell} := x_{\ell-2} \oplus \VarG_{\ell-1}(x_{\ell-1})$
     if $x_{\ell+2} \notin \VarG_{\ell+2}$ then 
          $\VarG_{\ell+2}(x_{\ell+2}) \leftarrow_R \{0,1\}^n$*\hspace{1.4cm}\framebox{$\VarG_{\ell+2}(x_{\ell+2}) := f(\ell+2,x_{\ell+2})$}*
     $x_{\ell+1} := x_{\ell+3} \oplus \VarG_{\ell+2}(x_{\ell+2})$*\label{line:adaptRemovableEnd}*
     $\forceVal(x_\ell, x_{\ell+1} \oplus x_{\ell-1}, \ell)$
     $\forceVal(x_{\ell+1},x_{\ell} \oplus x_{\ell+2}, \ell+1)$

private procedure $\forceVal(x, y, \ell)$
     $\VarG_{\ell}(x) := y$
\end{lstlisting}

The procedure $\Finner$ provides the internal interface for
evaluations of the simulated function.  It only fills the queue, but
does not empty it.
\begin{lstlisting}[language=pseudocode,name=simulator]
private procedure $\Finner(i,x)$:
     if $x \notin \VarG_i$ then
          $\VarG_i(x) \leftarrow_R \{0,1\}^n$*\hspace{2.3cm}\framebox{$\VarG_{i}(x) := f(i,x)$}*
          if $i \in \{2,7,8,13\}$ then
               $\enqNewChains(i,x)$
     return $\VarG_i(x)$
\end{lstlisting}

The procedure $\enqNewChains$ detects newly created chains and
enqueues them.  Sometimes, chains may be detected which have been
completed before, but they are ignored when they are dequeued.
\begin{lstlisting}[language=pseudocode,name=simulator]
private procedure $\enqNewChains(i,x)$:
     if $i=2$ then
          forall $(x_{1},x_{2},x_{13},x_{14}) \in \VarG_1 \times \{x\} \times \VarG_{13} \times \VarG_{14}$ do
               if $\Check(x_2 \oplus \VarG_1(x_1),x_1,x_{14}, x_{13} \oplus \VarG_{14}(x_{14}))$ then *\label{line:URPQuery1}*
                    $Q.\enqueue(x_1,x_2,1,4)$ 
     else if $i = 13$ then
          forall $(x_{1},x_{2},x_{13},x_{14}) \in \VarG_1 \times \VarG_2 \times \{x\} \times \VarG_{14}$ do *\label{line:URPQuery2}*
               if $\Check(x_2 \oplus \VarG_1(x_1),x_1, x_{14}, x_{13} \oplus \VarG_{14}(x_{14}))$ then 
                    $Q.\enqueue(x_{1},x_{2},1,10)$ 
     else if $i = 7$ then
          forall $(x_7, x_{8}) \in \{x\} \times \VarG_{8}$ do 
               $Q.\enqueue(x_7,x_{8},7,4)$ 
     else if $i = 8$ then
          forall $(x_7, x_{8}) \in \VarG_{7} \times \{x\}  $ do
               $Q.\enqueue(x_7,x_{8},7,10)$

private procedure $\Check(x_0,x_1,x_{14},x_{15})$ *\label{line:check}*
     return $\URP(x_0,x_1) = (x_{14},x_{15})$*\hspace{1cm}\framebox{\textbf{return} $\TSRF.\Check(x_0,x_1,x_{14},x_{15})$}*
\end{lstlisting}

The helper procedures $\evalFwd$ and $\evalBwd$ take indices $k$ and $\ell$ and a pair
$(x_k,x_{k+1})$ of input values for $\VarG_k$ and $\VarG_{k+1}$, and either evaluate forward or backward in the Feistel to
obtain the pair $(x_{\ell},x_{\ell+1})$ of input values for $\VarG_\ell$ and $\VarG_{\ell+1}$.

\begin{lstlisting}[language=pseudocode,name=simulator]
private procedure $\evalFwd(x_{k},x_{k+1},k,\ell)$:
     while $k \neq \ell$ do
         if $k = 14$ then
             $(x_0,x_1) := \URP^{-1}(x_{14},x_{15})$*\hspace{1cm}\framebox{$(x_0,x_1) := \TSRF.\P^{-1}(x_{14},x_{15})$}*
             $k := 0$
         else
             $x_{k+2} := x_{k} \oplus \Finner(k+1,x_{k+1})$
             $k := k + 1$
     return $(x_\ell,x_{\ell+1})$

private procedure $\evalBwd(x_k,x_{k+1},k,\ell)$:
     while $k \neq \ell$ do
         if $k = 0$ then
             $(x_{14},x_{15}) := \URP(x_0,x_1)$*\hspace{1cm}\framebox{$(x_{14},x_{15}) := \TSRF.\P(x_0,x_1)$}*
             $k := 14$
         else
             $x_{k-1} := x_{k+1} \oplus \Finner(k,x_{k})$
             $k := k - 1$
     return $(x_{\ell},x_{\ell+1})$
\end{lstlisting}

\subsection{Proof of Indifferentiability}

In this section, we provide the indifferentiability analysis.  The
overall plan is that we first fix a deterministic distinguisher
$\Dist$, and suppose that it makes at most $q$ queries.\footnote{We may
assume that $\Dist$ is deterministic, since we are only interested in the
advantage of the optimal distinguisher, and for any probabilitstic distinguisher,
the advantage can be at most the advantage of the optimal deterministic 
distinguisher.}
We then show
that the probability that $\Dist$ outputs $1$ when interacting with
$(\URP, \SIM^{\URP})$ differs by at most $\frac{\poly(q)}{2^n}$ from
the probability it outputs $1$ when interacting with
$(\Feistel^{\URF}, \URF)$, where $\Feistel$ is a $14$-round Feistel
construction, and $\URF$ is a collection of $14$ uniform random
functions.

We denote the scenario where $\Dist$ interacts with
$(\URP,\SIM^{\URP})$ by $\Syst_1$, and the scenario where $\Dist$
interacts with $(\Feistel^{\URF},\URF)$ by $\Syst_4$.  The scenarios
$\Syst_2$ and $\Syst_3$ will be intermediate scenarios.

\subsubsection{Replacing the permutation with a random function}
Scenario $\Syst_2(f,p)$ is similar to $\Syst_1$.  However, instead of
the simulator $\SIM$ we use the simulator $\SIMM(f)$, and instead of a
random permutation $\URP$ we use a two-sided random function
$\TSRF(p)$. 
 The differences between these systems are as follows:
\begin{description}
\item[Explicit randomness:] We make the randomness used by the the
  simulator explicit.  Whenever $\SIM$ sets $\VarG_{i}(x_i)$ to a
  random value, $\SIMM(f)$ takes it from $f(i,x_i)$ instead, where $f$
  is a table which contains an independent uniform random bitstring of
  length $n$ for each $i\in \{1,2, \ldots, 14\}$ and $x_i \in
  \{0,1\}^n$.  This modification does not change the distribution of
  the simulation, because it is clear that the simulator considers
  each entry of $f$ at most once.

  As depicted in the pseudocode below, the randomness of the two-sided
  random function $\TSRF(p)$ is also explicit: it is taken from
  $p(\downarrow,x_0,x_1)$ or $p(\uparrow,x_{14},x_{15})$, a table in
  which each entry is an independent uniform random bitstring of
  length $2n$.
\item[Two-sided random function:] We replace the random permutation
  $\URP$ by a two-sided random function $\TSRF(p)$ (see below for
  pseudocode).  This function keeps a hashtable~$\VarP$ which contains
  elements $(\downarrow,x_0,x_1)$ and $(\uparrow,x_{14},x_{15})$.
  Whenever the procedure $\TSRF.\P(x_0,x_1)$ is queried, $\TSRF$
  checks whether $(\downarrow,x_0,x_1) \in \VarP$, and if so, answers
  accordingly.  Otherwise, an independent uniform random output
  $(x_{14},x_{15})$ is picked (by considering $p$), and
  $(\downarrow,x_{0},x_{1})$ as well as $(\uparrow,x_{14},x_{15})$ are
  added to $\VarP$, mapping to each other.
\item[$\Check$ procedure:] The two-sided random function
  $\TSRF$ has a procedure $\Check(x_0,x_1,x_{14},x_{15})$.  It checks
  whether $\VarP$ maps $(\downarrow,x_0,x_1)$ to $(x_{14},x_{15})$,
  and if so, returns true.  If not, it checks whether $\VarP$ maps
  $(\uparrow,x_{14},x_{15})$ to $(x_{0},x_{1})$, and if so, returns
  true.  Otherwise, it returns false. The simulator $\SIMM(f)$ also
  differs from~$\SIM$ in that $\SIMM(f).\Check$ simply calls
  $\TSRF.\Check$.
\end{description}
Pseudocode for $\SIMM(f)$ can be obtained by using the boxed contents
on the right hand side in the pseudocode of $\SIM$ instead of the
corresponding line.  For the two-sided random function $\TSRF$,
the pseudocode looks as follows:
\begin{lstlisting}[language=pseudocode]
System Two-sided random function $\TSRF(p)$:
Variables: 
     Hashtable $\VarP$

public procedure $\P(x_0,x_1)$
     if $(\downarrow,x_0, x_1) \notin \VarP$ then
          $(x_{14},x_{15}) := p(\downarrow,x_{0},x_{1})$*\label{line:tsrf_queryP1}*
          $\VarP(\downarrow,x_0,x_1) := (x_{14},x_{15})$
          $\VarP(\uparrow,x_{14},x_{15}) := (x_{0},x_{1})$ *\qquad/\!\!/ (May overwrite an entry)*
     return $\VarP(\downarrow,x_0,x_1)$
     
public procedure $\P^{-1}(x_{14}, x_{15})$
     if $(\uparrow,x_{14}, x_{15}) \notin \VarP$ then
          $(x_{0},x_{1}) := p(\uparrow,x_{14},x_{15})$*\label{line:tsrf_queryP2}*
          $\VarP(\downarrow,x_0,x_1) := (x_{14},x_{15})$ *\qquad/\!\!/ (May overwrite an entry)*
          $\VarP(\uparrow,x_{14},x_{15}) := (x_{0},x_{1})$ 
     return $\VarP(\uparrow,x_{14},x_{15})$

public procedure $\Check(x_0,x_1,x_{14},x_{15})$
     if $(\downarrow,x_0,x_1) \in \VarP$ then return $\VarP(\downarrow,x_0,x_1) = (x_{14},x_{15})$*\label{line:tsrf_check1}*
     if $(\uparrow,x_{14},x_{15}) \in \VarP$ then return $\VarP(\uparrow,x_{14},x_{15}) = (x_{0},x_{1})$*\label{line:tsrf_check2}*
     return false
\end{lstlisting}

We claim that for uniformly chosen $(f,p)$, the probability that
$\Dist$ outputs $1$ in scenario $\Syst_2(f,p)$ differs only by
$\frac{\poly(q)}{2^n}$ from the probability it outputs $1$ in
$\Syst_1$.  For this, we first note that clearly the simulator can
take the randomness from $f$ without any change.  Secondly, instead of
the procedure $\Check$ in the simulator $\SIM$, we can imagine that the
random permutation $\URP$ has a procedure $\URP.\Check$ which is
implemented exactly as in line \ref{line:check} of $\SIM$, and
$\SIM.\Check$ simply calls $\URP.\Check$.  The following lemma states
that such a system $\URP$ is indistinguishable from $\TSRF$ as above,
which then implies the claim.  We prove the lemma in
Section~\ref{sec:eqS1andS2}; the proof is neither surprising nor
particularly difficult.

\begin{lemma}\label{lem:PermAndTSRF}
  Consider a random permutation over $2n$ bits, to which we add the
  procedure $\Check$ as in line~\ref{line:check} of the simulator $\SIM$.
  Then, a distinguisher which issues at most $q'$ queries to either the
  random permutation or to the two-sided random function $\TSRF$ has
  advantage at most $\frac{6(q')^2}{2^{2n}}$ in distinguishing the two
  systems.
\end{lemma}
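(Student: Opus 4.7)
The plan is a standard coupling / bad-event argument. I couple the two systems so that they share a common source of randomness for sampling outputs, and in each system maintain the lazily-grown partial table $\VarP$. By induction on the number of queries, I will show that the two tables agree and the two systems answer identically unless a certain bad event occurs, and then bound the probability of that bad event by a direct union bound.

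Define the bad event $E$ as the disjunction of three sub-events. (i)~Some fresh call to $\P(x_0,x_1)$ samples an output $(x_{14},x_{15})$ that has already been committed on the forward or backward side of $\VarP$, i.e., the ``May overwrite an entry'' case in the pseudocode actually fires. (ii)~Symmetrically, some fresh call to $\P^{-1}(x_{14},x_{15})$ samples a colliding $(x_0,x_1)$. (iii)~A call to $\Check(x_0,x_1,x_{14},x_{15})$ is made with neither $(\downarrow,x_0,x_1)$ nor $(\uparrow,x_{14},x_{15})$ present in $\VarP$, yet the permutation world's implicit sample for $(x_0,x_1)$ happens to equal $(x_{14},x_{15})$. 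A case analysis on the type of query (repeated vs.\ fresh $\P$, repeated vs.\ fresh $\P^{-1}$, and $\Check$ in the four situations determined by which side of $\VarP$ is populated) shows that outside $E$ the two systems produce identical updates to $\VarP$ and identical answers to the distinguisher. The only nontrivial case is $\Check$ with neither side populated: both answer \textsf{false} unless sub-event (iii) fires.

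To bound $\Pr[E]$, observe that after at most $q'$ queries the table $\VarP$ has at most $2q'$ entries. For each fresh $\P$ or $\P^{-1}$ query, the sampled output is uniform in $\bin{2n}$, and hence collides with the at most $2(i-1)$ existing entries with probability at most $2(i-1)/2^{2n}$. Summing over the at most $q'$ fresh forward and at most $q'$ fresh backward queries bounds the probabilities of (i) and (ii) together by $4(q')^2/2^{2n}$. For each $\Check$ query in sub-event (iii), the probability that the implicit permutation sample equals the specified $(x_{14},x_{15})$ is at most $1/(2^{2n}-|\VarP|)\le 2/2^{2n}$, and summing over the at most $q'$ such queries gives at most $2(q')^2/2^{2n}$. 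A union bound yields $\Pr[E]\le 6(q')^2/2^{2n}$, matching the claimed advantage.

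The main obstacle is the asymmetric treatment of $\Check$ across the two systems: the permutation's $\Check$ invokes $\URP(x_0,x_1)$ and thereby instantiates a previously unsampled value, whereas the $\TSRF$'s $\Check$ merely inspects $\VarP$. I would handle this by explicitly defining the coupling so that the permutation world consumes the same random tape to answer these implicit evaluations, and showing that sub-event (iii) precisely captures the unique way in which such an implicit evaluation can produce an answer differing from the $\TSRF$'s \textsf{false}; otherwise the two $\VarP$'s remain in lockstep throughout the interaction.
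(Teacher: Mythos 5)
There is a genuine gap, and it sits exactly where you wave your hands at the end: the claim that, outside your three sub-events, ``the two $\VarP$'s remain in lockstep throughout the interaction'' is false. In the permutation system, a call $\Check(x_0,x_1,x_{14},x_{15})$ with neither side of $\VarP$ populated internally evaluates $\URP(x_0,x_1)$ and thereby \emph{commits} a value $(x^*_{14},x^*_{15})$ (and its inverse image), even when the check returns \textsf{false}; in the $\TSRF$ system the same call merely inspects $\VarP$ and commits nothing. So after a failed fresh $\Check$ the two tables already differ, and this divergence can surface later without any of your sub-events (i)--(iii) firing: if the distinguisher subsequently queries $\P^{-1}(x^*_{14},x^*_{15})$, the permutation world answers $(x_0,x_1)$ by consistency with the hidden commitment while the $\TSRF$ world draws an independent fresh preimage; similarly a later repeated $\Check(x_0,x_1,x^*_{14},x^*_{15})$ returns \textsf{true} in the permutation world but \textsf{false} in the $\TSRF$ world. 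No coupling through a shared random tape can make these agree, because the $\TSRF$'s backward answer at a point never set in the forward direction is genuinely independent of the forward table, whereas the permutation's is determined by it. Your sub-event (iii) only covers a discrepancy \emph{at the moment of} the $\Check$ call, not this delayed one.

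The repair is to add a further bad event --- the paper's $\mathsf{BadBackwardQuery}$ --- saying that after such a hidden commitment the distinguisher ever queries $\P^{-1}(x^*_{14},x^*_{15})$ or $\Check(x_0,x_1,x^*_{14},x^*_{15})$ before itself querying $\P(x_0,x_1)$ (the forward query is harmless, since there both worlds can read the same tape entry, as you intend). Bounding this event needs its own argument: conditioned on the distinguisher's view, the committed $(x^*_{14},x^*_{15})$ is uniform over all values not excluded by previously failed checks, so each later query hits it with probability at most about $2/2^{2n}$, giving an additional $2(q')^2/2^{2n}$. Note also that with this extra term your accounting no longer sums to $6(q')^2/2^{2n}$ as written; the paper stays within the stated bound because its other events are counted more tightly (e.g.\ its analogue of your (iii) costs only $q'/2^{2n}$), so you would need to redo the union bound accordingly.
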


Additionally, we will need that in $\Syst_2$ the number of queries
made by the simulator is $\poly(q)$.  This is given in the following
lemma, which we prove in Section~\ref{sec:SimulatorIsEfficient}.
\begin{lemma}\label{lem:SimulatorIsEfficient}
  In $\Syst_2$, at any point in the execution we have $|\VarG_i| \leq
  6q^2$ for all $i$.  Furthermore, there are at most $6q^2$ queries to
  both $\TSRF.\P$, and $\TSRF.\P^{-1}$, and at most $1296q^8$ queries
  to $\TSRF.\Check$.
\end{lemma}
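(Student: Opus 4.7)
The plan is to bound the total number $C$ of chain completions the simulator performs during the execution, and then derive all the stated bounds from $C$.

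The starting point is a pair of per-completion counting facts, which I would verify by inspection of the pseudocode. Each completion (i) touches each of the $14$ positions of the Feistel construction at most once and therefore contributes at most one new entry to each $\VarG_i$; (ii) makes exactly one wrap-around call to $\TSRF.\P$ or $\TSRF.\P^{-1}$ inside $\evalFwd/\evalBwd$; and (iii) inserts both triples $(x_1,x_2,1)$ and $(x_7,x_8,7)$ into $\CompletedChains$. Combined with the at most $q$ direct distinguisher queries, (i) already yields $|\VarG_i|\le q+C$, and (ii) gives at most $q+C$ queries to $\TSRF.\P$ and $\TSRF.\P^{-1}$ combined.

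The heart of the argument is a charging step that identifies each completion with a distinct $\VarP$-entry. The procedure $\adapt$ overwrites values only at positions $4,5,10,11$; in particular $\VarG_1(x_1)$ and $\VarG_{14}(x_{14})$ are set once and never modified. Hence, once a chain has been completed, its pair $(x_1,x_2)$ together with $\VarG_1(x_1)$ uniquely determines $x_0=x_2\oplus\VarG_1(x_1)$, and analogously $(x_{13},x_{14})$ with $\VarG_{14}(x_{14})$ determines $x_{15}$. The $\CompletedChains$ check forces distinct completions to have distinct $(x_1,x_2)$, so distinct completions correspond to distinct $\VarP$-entries $(x_0,x_1)\mapsto(x_{14},x_{15})$, giving $C\le|\VarP|$. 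Moreover, a completion whose wrap-around only reads an existing entry in $\VarP$ cannot be reading an entry created by some previous wrap (else two completions would share the same entry, contradicting the uniqueness just shown), so that entry must come from a direct distinguisher query to $\TSRF.\P$ or $\TSRF.\P^{-1}$; hence at most $q$ completions are of this ``reading'' type, and the remaining completions must each create a fresh $\VarP$-entry.

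The main obstacle is then to bound the fresh-entry completions: these are type-$2$ completions (enqueued from positions $7$ or $8$, where no check filters them), and the naive bound $|\VarG_7|\cdot|\VarG_8|\le(q+C)^2$ is circular. Following the approach of Seurin~\cite{Seurin09}, I would trace each such completion back to a direct distinguisher query through the graph of which completion added which $\VarG_i$ entry, using the uniqueness established above to conclude that a freshly created $\VarP$-entry is ``owned'' by its creator and therefore cannot trigger a further completion for the same chain. A careful telescoping of this bookkeeping yields $C\le 6q^2$. From this, $|\VarG_i|\le q+C\le 6q^2$ and the bound on $\TSRF.\P/\TSRF.\P^{-1}$ queries follow immediately. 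Finally, the bound on $\TSRF.\Check$ queries reduces to counting the quadruples inspected by the loops inside $\enqNewChains$: each call with $i\in\{2,13\}$ iterates over a product of three of the tables $\VarG_1,\VarG_2,\VarG_{13},\VarG_{14}$, and there are at most $|\VarG_2|+|\VarG_{13}|$ such calls, giving at most $1296\,q^8=(6q^2)^4$ checks in total.
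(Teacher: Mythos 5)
Your proposal assembles several correct ingredients (one wrap-around query per completion, the Check-count $\le|\VarG_1\times\VarG_2\times\VarG_{13}\times\VarG_{14}|$, and the recognition that bounding completions detected at rounds $7/8$ by $|\VarG_7|\cdot|\VarG_8|$ is circular if $\VarG_7,\VarG_8$ are only bounded via $q+C$), but the step that actually produces the bound $6q^2$ is missing: ``trace each such completion back to a direct distinguisher query through the graph of which completion added which $\VarG_i$ entry'' and ``a careful telescoping of this bookkeeping'' is a placeholder, not an argument. Moreover, the charging step it is meant to rest on is itself problematic. Charging completions to distinct $\VarP$-entries does not break the circularity, because those entries are largely created by the simulator itself; and the uniqueness premise ``distinct completions have distinct $(x_1,x_2)$'' is not enforced by the $\CompletedChains$ check, which filters a dequeued chain by its own triple $(x_7,x_8,7)$, not by the pair $(x_1,x_2)$ obtained later by backward evaluation. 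Worse, the lemma must hold for \emph{every} $(f,p)$ (it is used to instantiate the bound $T$ in the bad-event lemmas), and for non-good pairs $\forceVal$ may overwrite entries at rounds $4,5,10,11$, so backward evaluation from $(x_7,x_8)$ to $(x_1,x_2)$ performed at different times need not be injective; your uniqueness claim can then fail outright.

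The paper closes exactly this gap with two structural observations that your sketch does not identify. First (Lemma~\ref{lem:nOfOuterChains}), at most $q$ dequeued chains of the form $(x_1,x_2,1,\ell)$ with $(x_1,x_2,1)\notin\CompletedChains$ are ever completed: each such chain is charged, via the Check that caused it to be enqueued and the fact that $\VarG_1,\VarG_{14}$ are never overwritten, to a distinct access to the table $p$, and that access must stem from a \emph{distinguisher} query to $\TSRF.\P$ or $\TSRF.\P^{-1}$, since the simulator's own such calls happen only inside completions after which the corresponding triple is placed in $\CompletedChains$. Second, completions of chains detected at rounds $7/8$ never add entries to $\VarG_7$ or $\VarG_8$ (their round-$7/8$ values are already in the tables), so $|\VarG_7|,|\VarG_8|\le 2q$ (direct queries plus the at most $q$ ``outer'' completions), and hence the number of round-$7/8$ completions is at most $|\VarG_7|\cdot|\VarG_8|\le 4q^2$ — a non-circular bound. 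These two facts, both valid for arbitrary $(f,p)$, immediately give $|\VarG_i|\le q+q+4q^2\le 6q^2$ and the query bounds; without them your plan does not go through.
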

In order to prove Lemma~\ref{lem:SimulatorIsEfficient}, we can follow
the elegant idea from \cite{CPS08v2}.  Thus, while we give the proof
for completeness, it should not be considered a contribution of this
paper.  It is in fact very similar to the corresponding proof in
\cite{Seurin09}.

\subsubsection{Introducing a Feistel-construction}

In $\Syst_3(h)$, we replace the above two-sided random function
$\TSRF(p)$ by a Feistel construction $\Psi(h)$.  For this, we use the
following system:

\begin{lstlisting}[language=pseudocode]
System $\Psi(h)$:

Variables: 
     Hashtable $\VarH_1,\ldots,\VarH_{14}$
     Hashtable $\VarP$

private procedure $\F(i,x_i)$:
     if $x_i \notin \VarH_i$ then
          $\VarH_i(x_i) := h(i,x_i)$
     return $\VarH_i(x_i)$

public procedure $\P(x_0,x_1)$
     for $i := 2$ to $15$ do
          $x_{i} := x_{i-2} \oplus \F(i-1,x_{i-1})$
     $\VarP(\downarrow,x_{0},x_{1}) := (x_{14},x_{15})$
     $\VarP(\uparrow,x_{14},x_{15}) := (x_0,x_1)$
     return $(x_{14},x_{15})$
     
public procedure $\P^{-1}(x_{14}, x_{15})$
     for $i := 13$ to $0$ step $-1$ do
          $x_{i} := x_{i+2} \oplus \F(i+1,x_{i+1})$
     $\VarP(\downarrow,x_{0},x_{1}) := (x_{14},x_{15})$
     $\VarP(\uparrow,x_{14},x_{15}) := (x_0,x_1)$
     return $(x_{0},x_{1})$

public procedure $\Check(x_0,x_1,x_{14},x_{15})$
     if $(\downarrow,x_0,x_1) \in \VarP$ then return $\VarP(\downarrow,x_0,x_1) = (x_{14},x_{15})$
     if $(\uparrow,x_{14},x_{15}) \in \VarP$ then return $\VarP(\uparrow,x_{14},x_{15}) = (x_{0},x_{1})$
     return false
\end{lstlisting}

We define $\Syst_3(h)$ to be the system where the distinguisher
interacts with $(\Psi(h), \SIMM(h)^{\Psi(h)})$.  Note that the
randomness used by $\Psi$ and $\SIMM$ is the same, and we call it $h$.
We show the following lemma in Section~\ref{sec:eqS2andS3}.  

\begin{lemma}\label{lem:eqS2andS3}
  The probability that a fixed distinguisher answers $1$ in
  $\Syst_2(f,p)$ for uniform random $(f,p)$ differs at most by
  $\frac{8\cdot 10^{19} \cdot q^{10}}{2^n}$ from the probability that it answers $1$ in
  $\Syst_3(h)$ for uniform random~$h$.
\end{lemma}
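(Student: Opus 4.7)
The plan is to couple the randomness of the two experiments and argue that, outside a small ``bad'' event, they produce identical transcripts. I would first sample $h$ uniformly (as in $\Syst_3$), and define the coupled randomness for $\Syst_2$ by setting $f(i,x) := h(i,x)$ for every $(i,x)$ and $p(\downarrow, x_0, x_1) := \Psi(h).\P(x_0,x_1)$ (with $p(\uparrow, \cdot)$ defined symmetrically). Under this coupling, every $\F$ query returns the same value in both systems, and every $\P$/$\P^{-1}$ query returns the same value by construction of $p$. Since $\Psi(h)$ is a permutation, the forward and backward sides of $p$ are automatically consistent with each other.

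The delicate step is $\adapt$. When the simulator assigns $\VarG_\ell(x_\ell) := x_{\ell-1} \oplus x_{\ell+1}$, the chain $(x_0,\ldots,x_{15})$ has been computed by a mix of $\Finner$ calls and one call to $\Psi(h).\P^{\pm 1}$. Tracing through $\evalFwd$ and $\evalBwd$ shows that every intermediate $x_i$ coincides with the value $\Psi(h)$ would produce, provided that all cached table entries $\VarG_i(x_i)$ already agree with $h(i,x_i)$. Inductively, if this invariant holds at the start of a chain completion, then the adapted value equals $h(\ell,x_\ell)$ by the Feistel equation $x_{\ell+1} = x_{\ell-1} \oplus h(\ell, x_\ell)$, and the invariant is preserved. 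Hence, as long as the invariant is maintained, the transcripts of the two experiments agree step by step under the coupling.

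The bad event $\evBadFill$ is therefore designed to capture the possible mechanisms that can break this invariant or that can make the joint distribution of $p$-values under the coupling deviate from uniform: (a) two distinct Feistel chains (from distinguisher queries or from internal chain completions) sharing an intermediate value $x_i$ in some round $i$; (b) an intermediate $x_i$ of a newly-evaluated chain coinciding with a preimage already present in $\VarG_i$ but originating from an unrelated prior action; and (c) a similar collision involving the endpoints $(x_0,x_1)$ or $(x_{14},x_{15})$ of a Feistel chain. When none of these events occur, the $h$-values consumed in $\Syst_3$ are effectively fresh and independent from the distinguisher's viewpoint, so the coupled $p$ is distributed identically to uniform $p$ in $\Syst_2$.

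The main obstacle is in precisely enumerating the collisions that comprise $\evBadFill$, and verifying that outside of it the two experiments truly evolve identically --- in particular, one must track the subtle interaction between the queue-driven order in which chains are detected and completed, the chains internally triggered via $\Psi.\P^{\pm 1}$, and the shared pool of $h$-values. Once $\evBadFill$ is correctly defined, each individual collision event has probability $O(1/2^n)$, and by Lemma~\ref{lem:SimulatorIsEfficient} the number of chain-values and table entries arising in the execution is $\poly(q)$. A union bound over all types of pairings (at most $O(q^{10})$ in a coarse accounting that combines the $O(q^8)$ check-bounded triggers with the $O(q^2)$ table entries per round) yields the stated bound of $\frac{8\cdot 10^{19}\cdot q^{10}}{2^n}$.
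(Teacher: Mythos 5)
Your high-level plan (couple the randomness of the two experiments, argue identical transcripts outside a bad event, union-bound the bad event) is the same skeleton as the paper's, with the map run in the opposite direction: the paper maps $(f,p)\mapsto h=\tau(f,p)$ and fills the unused $\bot$ entries uniformly, while you map $h\mapsto(f,p)$. But the proposal leaves unproven exactly the two load-bearing steps, and the first of them cannot be handled the way you suggest. Your invariant ``every cached $\VarG_i(x_i)$ agrees with $h(i,x_i)$'' is equivalent to the statement that no call to $\forceVal$ ever overwrites an existing entry (the paper's Lemma~\ref{lem:noSetFafterF}), and this is \emph{not} provable by enumerating generic collision events each of probability $O(1/2^n)$: the values written by $\forceVal$ are adapted, not fresh uniform, so collisions involving them admit no direct union bound; moreover, between the moment a chain is detected (enqueued) and the moment it is completed (dequeued), many other chains are completed and may define values in the adapt and buffer rounds. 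This timing problem is precisely what breaks the Coron et al.\ proof and is the reason the paper needs the whole machinery of partial chains, table-defined chains, the events $\BadlyHit$/$\BadlyCollide$/$\BadP$ (defined only relative to \emph{table-defined} chains, so that the offending value can be traced back to a fresh $f$- or $p$-read, cf.\ Lemmas~\ref{lem:resultOfNoBadHits} and~\ref{lem:noBadCollides}), the queue and $\CompletedChains$ bookkeeping, and the chain-equivalence arguments culminating in Lemmas~\ref{lem:theGreenRoundsAreNew1}--\ref{lem:theGreenRoundsAreNew2}. Your event $\evBadFill$ with items (a)--(c) names the symptom but not a provably unlikely cause; as stated, the ``main obstacle'' you flag is the theorem.

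The second gap is distributional. Under your coupling, $p$ is a deterministic function of $f=h$, so $(f,p)$ is far from the uniform pair required in $\Syst_2$; the claim that ``the coupled $p$ is distributed identically to uniform $p$'' can only refer to the accessed portion of the tables conditioned on goodness, and it needs an explicit counting argument: each $\P$/$\P^{-1}$ answer consumes $2n$ bits of $h$-randomness at rounds whose entries are never otherwise revealed (because the two adapted rounds are determined, not read), so that revealed $f$-entries plus $p$-entries in $\Syst_2$ carry exactly the same entropy as the $h$-entries consumed in $\Syst_3$. The paper makes this precise via $|h| = \#f\text{-queries} + 2\cdot\#p\text{-queries}$ (Lemma~\ref{lem:nAdaptEqualsNofPqueries}), the exact probability $2^{-n|h|}$ of each good fiber (Lemma~\ref{lem:goodHAreUniform}), injectivity of the induced map on good $h$'s, and a statistical-distance computation that accounts for the factor $2$ in the final bound $\frac{8\cdot 10^{19}q^{10}}{2^n}$. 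Without these two components --- the no-overwrite argument and the randomness-accounting --- the proposal is a restatement of the goal rather than a proof.
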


The proof of this lemma is the main contribution of this paper.
A large part of the proof consists in showing that the simulator
does not overwrite a value in calls to $\forceVal$; this part
is very different than the corresponding step in \cite{CPS08v2}.
An interesting feature of the proof is that in a second part
it directly maps pairs
$(f,p)$ to elements $h = \tau(f,p)$ such that $\Syst_2(f,p)$ and
$\Syst_3(h)$ behave the same for most pairs $(f,p)$, and the
distribution induced by $\tau$ is close to uniform.  This part
is also very different than \cite{CPS08v2}.

\subsubsection{Removing the simulator} 
In $\Syst_3$, the distinguisher accesses the random functions through the
simulator.  We want to show that the distinguisher can instead access the random
functions directly.

\begin{lemma}\label{lem:s3EqualsS4}
  Suppose that in $\Syst_3(h)$ the simulator $\SIMM(h)$ eventually
  answers a query $\F(i,x)$. Then, it is answered with $h(i,x)$.
\end{lemma}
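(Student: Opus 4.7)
The plan is to prove, by induction over the temporal sequence of writes to the simulator's tables $\VarG_1,\ldots,\VarG_{14}$, the invariant that every assignment $\VarG_i(x)\leftarrow v$ performed by $\SIMM(h)$ in $\Syst_3(h)$ satisfies $v=h(i,x)$. Granting this invariant the lemma is immediate: whatever $\F(i,x)$ returns is simply the final value stored at $\VarG_i(x)$, which then equals $h(i,x)$. The first thing I would do is enumerate the places in the pseudocode where $\VarG_i$ is written: (a) inside $\Finner$ when $x\notin\VarG_i$; (b) the two guarded assignments inside $\adapt$ on rounds $\ell-1$ and $\ell+2$; and (c) the two $\forceVal$ writes at the end of $\adapt$ on rounds $\ell$ and $\ell+1$. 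In the boxed version of the simulator, writes of type (a) and (b) literally read $\VarG_i(x):=h(i,x)$, so they preserve the invariant trivially.

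The meat of the argument is case (c). When $\adapt(x_{\ell-2},x_{\ell-1},x_{\ell+2},x_{\ell+3},\ell)$ is reached, the preceding calls $\evalFwd(x_k,x_{k+1},k,\ell-2)$ and $\evalBwd(x_k,x_{k+1},k,\ell+2)$ together traverse a full revolution of the $14$-round Feistel. Exactly one of the two sweeps performs the wrap-around via $\Psi.\P$ or $\Psi.\P^{-1}$; by definition of $\Psi(h)$ this call closes the loop by evaluating the entire $14$-round Feistel using the round functions $h(1,\cdot),\ldots,h(14,\cdot)$, thereby producing a chain $x_0,\ldots,x_{15}$ satisfying
\[
x_{i+1}\;=\;x_{i-1}\oplus h(i,x_i)\quad\text{for every }i\in\{1,\ldots,14\}.
\]
The intermediate non-wrap steps of the two sweeps invoke $\Finner$, which either freshly defines $\VarG_k(x_k)=h(k,x_k)$ (case (a)) or returns a stored value that by the induction hypothesis equals $h(k,x_k)$. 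Since $\Psi$ and the simulator share the same randomness $h$, the $x_i$ values maintained by the simulator coincide with those produced by $\Psi$. After the (b)-writes inside $\adapt$ fix $\VarG_{\ell-1}(x_{\ell-1})=h(\ell-1,x_{\ell-1})$ and $\VarG_{\ell+2}(x_{\ell+2})=h(\ell+2,x_{\ell+2})$, the values $x_\ell$ and $x_{\ell+1}$ subsequently computed by $\adapt$ agree with the corresponding entries of the closed Feistel chain. Specializing the chain relation to index $\ell$ gives $h(\ell,x_\ell)=x_{\ell-1}\oplus x_{\ell+1}$, which is precisely what the first $\forceVal$ writes; the $\ell+1$ case is analogous.

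The main subtlety I expect to contend with is that $\forceVal$ is allowed to overwrite a previously defined entry, so the invariant has to survive such overwrites. The argument above handles this automatically, since both the old and the new values are equal to the same $h(\ell,x_\ell)$, making the overwrite a no-op from the point of view of the invariant. A secondary bookkeeping point is that the two further $\evalBwd/\evalFwd$ calls performed inside $\F$ after $\adapt$ (used to populate $\CompletedChains$) may trigger additional $\Finner$ calls; these writes fall under case (a) and are handled identically. Once the invariant is established across every chain completion inside the enclosing \texttt{while} loop of $\F$, the returned value $\VarG_i(x)$ equals $h(i,x)$, as required.
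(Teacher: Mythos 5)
Your proof is correct and follows essentially the same route as the paper's: split the writes into direct reads of $h$ and the $\forceVal$ writes inside $\adapt$, and justify the latter by the fact that the wrap-around through $\Psi.\P$ or $\Psi.\P^{-1}$ evaluates the full Feistel with $h$, so the adapted values $x_{\ell-1}\oplus x_{\ell+1}$ and $x_\ell\oplus x_{\ell+2}$ coincide with $h(\ell,x_\ell)$ and $h(\ell+1,x_{\ell+1})$. Your explicit induction over the sequence of table writes merely spells out what the paper's terser two-case argument leaves implicit.
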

\begin{proof}
  The simulator $\SIMM(h)$ either sets $\VarG_{i}(x) := h(i,x)$ or
  $\VarG_{i}(x_i) := x_{i-1} \oplus x_{i+1}$ in a call to $\adapt$.  For
  pairs $(i,x)$ which are set by the first call the lemma is clear.
  Otherwise, consider the $\adapt$ call: just before the call, the
  Feistel construction was evaluated either forward or backward in a
  call to $\Psi.\P(x_0,x_1)$ or $\Psi.\P^{-1}(x_{14},x_{15})$.  Since
  $\Psi$ evaluates $\P$ and $\P^{-1}$ with calls to $h$, the value 
  $\VarG_{i}(x_i)$ must be $h(i,x)$ as well.
\end{proof}

\subsubsection{Indifferentiability}
We can now prove Theorem~\ref{thm:mainTheorem}, which we restate here
for convenience.
\begin{reptheorem}{thm:mainTheorem}
  The $14$-round Feistel construction using $14$
  independent random functions is indifferentiable from a random
  permutation. 
\end{reptheorem}
\begin{proof}
  Fix a distinguisher $\Dist$ which makes at most $q$ queries.  We
  want to show that $\Dist$ fails to distinguish $\Syst_1 = (\URP,
  \SIM^{\URP})$ from $\Syst_4 = (\Psi^{\URF},\URF)$, and furthermore,
  that the simulator runs in polynomial time, except with negligible
  probability.

  Consider the system $\Syst_1$, where the distinguisher
  interacts with $(\URP,\SIM^{\URP})$.  If we replace $\URP$ with the
  two-sided random function $\TSRF$ as described above and the simulator $\SIM$ by $\SIMM(f)$, 
  then we obtain $\Syst_2$.  According to
  Lemma~\ref{lem:SimulatorIsEfficient} the number of queries the
  simulator makes in $\Syst_2$ to $\TSRF$ is at most  
  $ 6q^2 + 6q^2 + 1296q^8 \leq 1400 q^8.$
  Since Lemma~\ref{lem:PermAndTSRF} gives that the permutation is
  indistinguishable from a two-sided random function, we get for $q'=1400 q^8$ that the
  probability that $\Dist$ outputs $1$ differs by at most
  $\frac{10{q'}^2}{2^{2n}} \leq \frac{2\cdot 10^7\cdot q^{16}}{2^{2n}}$ 
  in $\Syst_1$ and $\Syst_2$.  
  Furthermore, also by 
  Lemma~\ref{lem:PermAndTSRF}, with probability at least $1-\frac{2\cdot 10^7\cdot q^{16}}{2^{2n}}$, 
  the first $1400 q^8$ queries and answers to $\URP$ in $\Syst_1$ and $\TSRF$ in $\Syst_2$ are equivalent, so that the simulator is
  efficient (that is, it makes at most $1400 q^8$ queries) in $\Syst_1$ with probability at least $1-\frac{2\cdot 10^7\cdot q^{16}}{2^{2n}}$. 

  Now, the probability that $\Dist$ outputs $1$ does not differ by
  more than $\frac{8\cdot 10^{19} \cdot q^{10}}{2^n}$ in $\Syst_2$ and $\Syst_3$, by
  Lemma~\ref{lem:eqS2andS3}.  Finally, since this implies that with
  probability $1-\frac{8\cdot 10^{19} \cdot q^{10}}{2^n}$ the distinguisher must give an answer
  in $\Syst_3$, we can also use Lemma~\ref{lem:s3EqualsS4} and get
  that the probability that the distinguisher answers $1$ differs in
  $\Syst_3$ and $\Syst_4$ by at most $\frac{8\cdot 10^{19} \cdot q^{10}}{2^n}$.
  
  Finally, from the above results, the probability that $\Dist$ outputs $1$ in $\Syst_1$ and $\Syst_4$ differs
  by at most
  \begin{align*}
 \frac{2\cdot 10^7\cdot q^{16}}{2^{2n}} + 2 \cdot \frac{8\cdot 10^{19} \cdot q^{10}}{2^n} < \frac{10^{22}\cdot q^{16}}{2^n},
 \end{align*}
 which implies the lemma.
\end{proof}

\subsection{Equivalence of the First and the Second Experiment}\label{sec:eqS1andS2}

We now show that $\Syst_1$ and $\Syst_2$ behave the same way, more
concretely, that our two-sided random function $\TSRF$ behaves as a
uniform random permutation $\URP$.
\begin{replemma}{lem:PermAndTSRF}
  Consider a random permutation over $2n$ bits, to which we add the
  procedure $\Check$ as in line~\ref{line:check} of the simulator $\SIM$.
  Then, a distinguisher which issues at most $q'$ queries to either the
  random permutation or to the two-sided random function $\TSRF$ has
  advantage at most $\frac{6(q')^2}{2^{2n}}$ in distinguishing the two
  systems.
\end{replemma}
	
	To prove the lemma, fix any deterministic distinguisher $\Dist$ that issues at most 
	$q'$ queries. 
	Consider the following random experiment $\SystE_0$:
	\begin{quote}
		\textbf{Experiment $\SystE_0$:} 
		$\Dist$ interacts with $\URP'(p)$, which is defined as follows: The procedures 
		$\URP'.\P$ and $\URP'.\P^{-1}$ are the same as $\TSRF.\P$ and $\TSRF.\P^{-1}$. 
		The $\Check$ procedure is defined as
\begin{lstlisting}[language=pseudocode]
public procedure $\URP'(p).\Check(x_0,x_1,x_{14},x_{15})$
     if $(\downarrow,x_0,x_1) \in \VarP$ then return $\VarP(\downarrow,x_0,x_1) = (x_{14},x_{15})$
     if $(\uparrow,x_{14},x_{15}) \in \VarP$ then return $\VarP(\uparrow,x_{14},x_{15}) = (x_{0},x_{1})$
     return $\P(x_{0},x_{1}) = (x_{14},x_{15})$ *\qquad/\!\!/ Note that the procedure $\URP'.\P$ is called!*
\end{lstlisting}	
		Finally, $p$ is the table of a uniform random permutation 
               (i.e., $p(\downarrow,x_{0},x_{1}) = (x_{14},x_{15})$ if and only if
               $p(\uparrow,x_{14},x_{15}) = (x_0,x_1)$).
	\end{quote}
	If we let $\Dist$ interact with $\URP$ (adding a $\Check$-procedure to $\URP$ in the most standard way,
       i.e., as in the simulator $\SIM$), then we get an experiment which behaves exactly as $\SystE_0$.

 	We next replace the table $p$ of the random permutation by a table that has uniform random
	entries: 
	\begin{quote}
		\textbf{Experiment $\SystE_2$:} $\Dist$ interacts with $\URP'(p)$, where the entries 
		of $p$ are chosen uniformly at random from $\{0,1\}^{2n}$.
	\end{quote}
	We will show that
	\begin{lemma}\label{lem:rfrpswitch}
	  The probability that $\Dist$ outputs $1$ in $\SystE_0$ differs by at most
	  $\frac{(q')^2}{2^{2n}}$ from the probability that $\Dist$ outputs $1$ in $\SystE_2$.
	\end{lemma}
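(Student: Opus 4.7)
The plan is to apply the standard PRP/PRF switching argument in its ``identical-until-bad'' form, adapted to also cover the $\Check$ interface added to $\URP'$. I would couple the two experiments by feeding them from a single source of independent uniform $2n$-bit strings: each query to $\P$ or $\P^{-1}$ that is not already answered from $\VarP$, as well as each invocation of $\P$ from inside $\Check$, consumes the next fresh string. The two experiments then differ only in how the source is interpreted. Define the bad event $\mathsf{Coll}$ (to occur in $\SystE_2$) as the disjunction of the following over all pairs of fresh samples drawn during the execution: two outputs of $p(\downarrow,\cdot)$ coincide, two outputs of $p(\uparrow,\cdot)$ coincide, an output of $p(\downarrow,x_0,x_1)$ equals an input $(x_{14},x_{15})$ that was previously fed to $p(\uparrow,\cdot)$, or the symmetric condition with $\downarrow$ and $\uparrow$ swapped.

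Next I would argue that, conditioned on $\lnot \mathsf{Coll}$, the sampled values of $p$ coincide with the restriction of some $2n$-bit permutation to the queried domain, and hence $\Dist$'s transcript in $\SystE_2$ is identical to its transcript in $\SystE_0$ under the coupling. This is a straightforward induction on the query number: a $\P$-query returning a fresh value produces the same answer in both experiments; a $\P^{-1}$-query likewise; and a $\Check$-query returns the same answer because the first two branches only inspect $\VarP$, which agrees by induction, while the third branch itself triggers a fresh $\P$-call whose behavior we have just established. Consequently the standard identical-until-bad lemma gives
\[
  \bigl|\Pr[\Dist(\SystE_0)=1]-\Pr[\Dist(\SystE_2)=1]\bigr| \leq \Pr[\mathsf{Coll}].
\]

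It remains to bound $\Pr[\mathsf{Coll}]$. A direct inspection of the pseudocode of $\URP'$ shows that each query issued by $\Dist$ causes at most one fresh entry of $p$ to be sampled: for $\P$ and $\P^{-1}$ this is immediate, and for $\Check$ it happens only when the two membership tests fail, in which case the internal call $\P(x_0,x_1)$ draws at most one new string. Hence at most $q'$ fresh $2n$-bit strings are drawn over the whole execution. Each of the $\binom{q'}{2}$ ordered pairs contributes to $\mathsf{Coll}$ with probability at most a small constant times $2^{-2n}$ (the constant absorbing the handful of ways in which a pair of samples can create an inconsistency with the permutation condition), and a union bound yields $\Pr[\mathsf{Coll}]\leq (q')^2/2^{2n}$. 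The main point to be careful about is the counting of fresh samples per query, and in particular confirming via the pseudocode that $\Check$ never draws more than one; no part of the argument is deep.
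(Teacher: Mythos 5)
Your proposal is correct and follows essentially the same route as the paper: the paper makes your coupling explicit via an intermediate system $\SystE_1$ (a permutation lazily sampled by drawing from the uniform table $p$ and resampling only on collision with $\VarP$), shows $\SystE_0$ and $\SystE_1$ are identically distributed, and then bounds the probability of the resample/collision event $\mathsf{BadQuery}$ exactly as you bound $\mathsf{Coll}$, using the same observation that each of the at most $q'$ distinguisher queries (including $\Check$, which issues at most one internal $\P$ call) draws at most one fresh entry of $p$. Your per-pair ``small constant'' needs to be at most $2$ for the stated bound, which it is (each ordered pair contributes at most one collision sub-event of probability $2^{-2n}$), so the argument goes through.
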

	Finally we consider the experiment where $\Dist$ interacts with our two-sided random function.
	\begin{quote}
		\textbf{Experiment $\SystE_3$:} $\Dist$ interacts with $\TSRF(p)$, where the entries 
			of $p$ are chosen uniformly at random from $\{0,1\}^{2n}$.
	\end{quote}
         The only difference between $\SystE_2$ and $\SystE_3$
        is the change in the last line in the procedure $\Check$.
	We show that $\SystE_2$ behaves almost as $\SystE_3$:
	\begin{lemma}\label{lem:badChecks}
		The probability that $\Dist$ outputs $1$ in $\SystE_2$ differs by at most
	  $\frac{5(q')^2}{2^{2n}}$ from the probability that $\Dist$ outputs 
	  $1$ in $\SystE_3$.
	\end{lemma}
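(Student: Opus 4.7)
The approach is a standard identical-until-bad coupling. I would couple $\SystE_2$ and $\SystE_3$ by sampling one uniform table $p$ and letting both experiments use it, running the deterministic distinguisher $\Dist$ in parallel against both. Since the two systems differ only in the final line of $\Check$---where $\SystE_2$ invokes $\P(x_0,x_1)$ while $\SystE_3$ simply returns \emph{false}---the two execution traces agree until one of a small collection of bad events fires. As a preliminary bookkeeping step I would verify the loop invariant $\VarP_2\supseteq\VarP_3$, and observe that the ``extra'' $\downarrow$-entries of $\VarP_2$ always equal $p(\downarrow,\cdot)$, so that $\P$-queries by $\Dist$ can never by themselves cause the traces to diverge.

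The two critical bad events are the following. Event $B_1$ is that in some $\Check(x_0,x_1,x_{14},x_{15})$ call where neither entry is currently in $\VarP$, the equality $p(\downarrow,x_0,x_1)=(x_{14},x_{15})$ holds, so that $\SystE_2.\Check$ returns \emph{true} but $\SystE_3.\Check$ returns \emph{false}. Event $B_2$ is that a later $\P^{-1}(y_{14},y_{15})$ query (or a $\Check$ query whose check falls into its ``$\uparrow$'' branch) hits an entry of $\VarP_2\setminus\VarP_3$, which happens precisely when some earlier $\Check(x_0,x_1,\cdot)$ call read $p(\downarrow,x_0,x_1)=(y_{14},y_{15})$ and the resulting planted entry disagrees with $p(\uparrow,y_{14},y_{15})$. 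A short case analysis on the type of the current query then confirms that $B_1\cup B_2$, together with a few minor collision events among the various $p$-reads (analogous to standard birthday bounds), covers every possible point of divergence between the two coupled executions.

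For the probability bounds I would exploit the freshness of the $p$-entries that are read. Each $\Check$ call fixes $(x_0,x_1,x_{14},x_{15})$ \emph{before} $p(\downarrow,x_0,x_1)$ is ever touched by $\SystE_2$, so $\Pr[B_1]\le q'/2^{2n}$. For $B_2$, the number of relevant pairs (prior $\Check$ call, later $\P^{-1}$ or $\Check$ query) is at most $(q')^2$, and for each fixed pair the pertinent $p$-value is uniform and independent of the distinguisher's view at the moment it is sampled, giving probability $\le 2^{-2n}$ per pair. A handful of auxiliary collision events (e.g.\ two $\P$-reads hitting the same $\uparrow$-cell, or a $\Check$-read colliding with a later $\P^{-1}$-read) each contribute a further $O((q')^2/2^{2n})$ term. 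Summing these contributions yields the claimed bound $5(q')^2/2^{2n}$. The main obstacle will be the bookkeeping: ensuring the case analysis is exhaustive, so that every genuine source of divergence really is tagged by one of the enumerated bad events and no subtle cascading effect of the ``silent'' $\VarP_2$ updates is missed.
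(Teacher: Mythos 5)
Your proposal follows essentially the same route as the paper's own proof: the paper likewise runs both experiments on a common table $p$ and argues ``identical until bad'' with the events $\mathsf{BadCheck}$ (your $B_1$), $\mathsf{BadBackwardQuery}$ (your $B_2$: the distinguisher later hits a silently planted pair via $\P^{-1}$ or $\Check$ without first re-asking $\P(x_0,x_1)$), and $\mathsf{BadOverwrite}$ (your auxiliary collisions), together with the invariant that $\VarP$ in $\SystE_2$ contains $\VarP$ in $\SystE_3$ with consistent, never-overwritten images, summing to $\frac{q'}{2^{2n}}+\frac{2(q')^2}{2^{2n}}+\frac{2(q')^2}{2^{2n}}\le\frac{5(q')^2}{2^{2n}}$. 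The only point to tighten in your bookkeeping is that a divergent $\Check(x_0,x_1,x^*_{14},x^*_{15})$ on a silently planted pair is answered by $\SystE_2$ already in its $\downarrow$ branch (returning true while $\SystE_3$ returns false), so this case must be swept into $B_2$ even though it never reaches the $\uparrow$ branch --- exactly as the paper's $\mathsf{BadBackwardQuery}$ does by listing the $\Check$ query alongside the $\P^{-1}$ query.
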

Lemma~\ref{lem:PermAndTSRF} then follows immediately, as 
	\begin{align}
		\Bigl| \Pr[\Dist \text{ outputs $1$ in }\SystE_0] - \Pr[\Dist \text{ outputs $1$ in }\SystE_3]\Bigr| \leq \frac{(q')^2}{2^{2n}} + \frac{5(q')^2}{2^{2n}} \leq \frac{6(q')^2}{2^{2n}}. \nonumber
	\end{align}	
	We proceed to prove Lemmas~\ref{lem:rfrpswitch} and \ref{lem:badChecks}.

\begin{proof}[Proof of Lemma~\ref{lem:rfrpswitch}]
	This proof is very similar to the proof that a (one-sided) random permutation can be
	replaced by a (one-sided) random function. 
	
	We introduce the following intermediate experiment: 
	\begin{quote}
		\textbf{Experiment $\SystE_1$:} 
		$\Dist$ interacts with $\URP''(p)$.  In $\URP''$, the procedure $\URP''.\P$ is defined as follows:
\begin{lstlisting}[language=pseudocode]
public procedure $\URP''.\P(x_0,x_1)$
     if $(\downarrow,x_0, x_1) \notin \VarP$ then
          $(x_{14},x_{15}) := p(\downarrow,x_{0},x_{1})$
          if $(\uparrow, x_{14},x_{15}) \in \VarP$ then
               $(x_{14},x_{15}) \inu \{0,1\}^{2n} \setminus \{(x'_{14},x'_{15}) | (\uparrow, x'_{14},x'_{15}) \in \VarP \}$
          $\VarP(\downarrow,x_0,x_1) := (x_{14},x_{15})$
          $\VarP(\uparrow,x_{14},x_{15}) := (x_{0},x_{1})$ 
     return $\VarP(\downarrow,x_0,x_1)$
\end{lstlisting}
		The procedure $\URP''.\P^{-1}$ is defined analogously, i.e., picks $(x_0,x_1)$ from $p$, and
               replaces it in case $(\downarrow,x_0,x_1) \in \VarP$.  The procedure $\Check$ is defined as in $\URP'.\Check$
               above.
		Finally, the entries of $p$ are chosen uniformly at random from $\{0,1\}^{2n}$.
	\end{quote}		
	
First consider the transition from $\SystE_0$ to $\SystE_1$.  
The procedure $\Check$ is the same in both experiments.  
Furthermore, a distinguisher can keep track of the table $\VarP$ and
it is also the same in both experiments, and so we only need to
consider the procedures $\P$ and $\P^{-1}$:  the 
procedure $\Check$ could be a part of the distinguisher.

Now, in both experiments, the values chosen in the procedures $\P$
and $\P^{-1}$ are chosen uniformly at random from the set of values
that do not correspond to an earlier query. Thus, $\SystE_0$ and
$\SystE_1$ behave identically.

	Now consider the transition from $\SystE_1$ to $\SystE_2$.
	Consider $\SystE_{1}$ and let $\mathsf{BadQuery}$ be the event that in 
	$\P$ we have $(\uparrow, x_{14},x_{15}) \in \VarP$, or in $\P^{-1}$ we have
	$(\downarrow, x_{0},x_{1}) \in \VarP$. 
	We show that this event is unlikely, and that the two experiments behave
	identically if $\mathsf{BadQuery}$ does not occur in $\SystE_1$.
	
	There are at most $q'$ queries to $\P$ or $\P^{-1}$ in an execution of 
	$\SystE_{1}$, since each $\Check$ query issues at most one query to 
	$\P$. 
	Observe that each table entry in $p$ is accessed at most once and thus
	each time $p$ is accessed it returns a fresh uniform random value.
	Since for each query there are at most $q'$ values in $\VarP$, 
	and $p$ contains uniform random entries, we have $
		\Pr[\mathsf{BadQuery} \text{ occurs in }\SystE_{1}] 
		\leq \frac{(q')^2}{2^{2n}}$.
	The systems $\SystE_{1}$ and $\SystE_2$ behave identically if
	$\mathsf{BadQuery}$ does not occur.  Thus,$
		\Bigl| \Pr[\Dist \text{ outputs $1$ in }\SystE_{1}] - 
		\Pr[\Dist \text{ outputs $1$ in }\SystE_2]\Bigr| \leq 
		\Pr_p[\mathsf{BadQuery} \text{ occurs in }\SystE_{1}] 
		\leq \frac{(q')^2}{2^{2n}}$.
                \nocite{Maurer02} 
\end{proof}

\begin{proof}[Proof of Lemma~\ref{lem:badChecks}]
	The event $\mathsf{BadCheck}$ 
	occurs for some $p$ if $\URP'.\Check$ returns true in the last line
        in $\SystE_2$ in an execution using $p$.
	The event $\mathsf{BadOverwrite}$ occurs for some $p$ if
        either in $\SystE_2$ or in $\SystE_3$, in any call to $\P$ or
	$\P^{-1}$, an entry of $\VarP$ is overwritten.\footnote{It
        would actually be sufficient to consider the system $\SystE_2$
        here, but we can save a little bit of work by considering 
        both $\SystE_2$ and $\SystE_3$.}
	The event $\mathsf{BadBackwardQuery}$ occurs if in $\SystE_2$ 
        there exist
	$(x_0, x_1), (x^*_{14}, x^*_{15})$ such that all of 
	the following	hold:
	\begin{enumerate}[(i)]
		\item	The query $\P(x_0,x_1)$ is issued in the last line of a $\Check$
				 	query, and $\VarP(\downarrow,x_0,x_1)$ is set to
				 	$(x^*_{14},x^*_{15}) = p(\downarrow, x_0, x_1)$. 
		\item After (i), the query $\P^{-1}(x^*_{14}, x^*_{15})$, or the query
					$\Check(x_0, x_1, x^*_{14}, x^*_{15})$ is issued. 
		\item The query $\P(x_0, x_1)$ is not issued by the distinguisher between point (i) 
					and point (ii).  
	\end{enumerate}
	We show that these events are unlikely, 
	and that $\SystE_2$ and $\SystE_3$ behave
	identically if the events do not occur for a given $p$.
	
	For $\mathsf{BadCheck}$ to occur in a fixed call 
	$\URP'.\Check(x_0, x_1, x_{14}, x_{15})$, it must be that 	
	$(\downarrow,x_0,x_1) \notin \VarP$ and $(\uparrow,x_{14},x_{15}) 
	\notin \VarP$. Thus, in the call 
	$\P(x_{0},x_{1})$ in the last line of $\Check$, 
	$\VarP(\downarrow,x_0,x_1)$ will be set to a fresh uniform random value
	$p(\downarrow, x_{0}, x_{1})$, and	this value is returned by $\P$. 
	Therefore, the probability over the choice of $p$ that
	$\P(x_{0},x_{1}) = (x_{14},x_{15})$ is at most $\frac{1}{2^{2n}}$.
	Since $\Check$ is called at most $q'$ times, we see that
        $
		\Pr_p[\mathsf{BadCheck}] 
		\leq \frac{q'}{2^{2n}}$.

	We now bound the probability that $\mathsf{BadOverwrite}$ occurs 
        in $\SystE_2$.  
        This only happens if a fresh uniform random entry 
	read from $p$ collides with an entry in $\VarP$. Since there are 
	at most $q'$ queries to $\P$ and $\P^{-1}$ and at most
	$q'$ entries in $\VarP$, we get $
		\Pr_p[\mathsf{BadOverwrite} \text{ occurs in }\SystE_2] 
		\leq \frac{(q')^2}{2^{2n}}$.
        The same argument gives a bound on $\mathsf{BadOverwrite}$ in
        $\SystE_3$, and so $\Pr_p[\mathsf{BadOverwrite}] \leq 
        \frac{2(q')^2}{2^{2n}}$.
	
We next estimate the 
probability of $(\mathsf{BadBackwardQuery} \land \lnot 
\mathsf{BadCheck})$.
Consider any pairs $(x_0,x_1) , (x^*_{14}, x^*_{15})$ such that (i) holds. Clearly, 
since $\mathsf{BadCheck}$ does not occur, the $\Check$ query returns false.
Now, as long as none of the queries $\P(x_0,x_1)$, $\P^{-1}(x^*_{14},x^*_{15})$
or $\Check(x_0,x_1,x^*_{14},x^*_{15})$ is done by the distinguisher, the 
value of $(x^*_{14},x^*_{15})$ is independently chosen at random from
all the pairs $(x'_{14},x'_{15})$ for which $\Check(x_0,x_{1},x'_{14},x'_{15})$ was
not queried.  Thus, the probability that in a single query, the distinguisher queries one of 
$\P^{-1}(x^*_{14},x^*_{15})$ or $\Check(x_0,x_1,x^*_{14},x^*_{15})$ is
at most $\frac{q'}{2^{2n}-q'} \leq \frac{2q'}{2^{2n}}$ 
(assuming $q' < \frac{2^{2n}}{2}$).  
	Since there are at most $q'$ $\Check$ queries, we find
        $
		\Pr_p[(\mathsf{BadBackwardQuery} \land \lnot \mathsf{BadCheck})] 
		\leq \frac{2(q')^2}{2^{2n}}$.

We proceed to argue that if the bad events do not occur, 
the two experiments behave identically.   
Thus, let $p$ be a table such that none of
$\mathsf{BadCheck}$, $\mathsf{BadBackwardQuery}$, and $\mathsf{BadOverwrite}$ 
occurs.

We first observe that the following invariant
holds in both $\SystE_2$ and $\SystE_3$: after any call to $\P$,
$\P^{-1}$ or $\Check$, if $\VarP(\downarrow,x_0,x_{1}) =
(x_{14},x_{15})$ for some values $(x_0,x_1,x_{14},x_{15})$,
 then $\VarP(\uparrow,x_{14},x_{15}) = 
(x_0,x_1)$, and vice versa.
The reason is simply that no value is ever overwritten in the tables, 
and whenever $\VarP(\uparrow,\cdot,\cdot)$ is set, then $\VarP(\downarrow,
\cdot,\cdot)$ is also set.

Next, we argue inductively that for a $p$ for which none of our bad events
occur, all queries and answers in $\SystE_2$ and $\SystE_3$ are the same. 

For this, we start by showing that (under the induction hypothesis), 
if a triple $(\downarrow,x_0,x_1)$ is in $\VarP$ in
the experiment $\SystE_3$, 
then the triple is in $\VarP$ in $\SystE_2$ as well, and both have the
same image $(x_{14},x_{15})$.  
This holds because of two reasons: first, in $\SystE_3$, each such entry
corresponds to an answer to a previously issued query to $\P$ or $\P^{-1}$.
This query was also issued in $\SystE_2$, and at that point the answer 
was identical, so that the entry $\VarP(\downarrow,x_0,x_{1})$ was identical 
(this also holds if the response in $\SystE_2$ 
is due to the entry $\VarP(\uparrow,x_{14},x_{15})$, 
because we saw above that this implies 
$\VarP(\downarrow,x_0,x_1) = (x_{14},x_{15})$).
Since the event $\mathsf{BadOverwrite}$ does not occur, the 
property will still hold later.  
(We remark that entries in the table $\VarP$ in $\SystE_2$ may
exist which are not in $\SystE_3$.)

We now establish our claim that all queries and answers of the distinguisher
in $\SystE_2$ and $\SystE_3$ are the same. 

Consider first a $\P$-query $\P(x_0,x_1)$.  
If $(\downarrow,x_0,x_1) \in \VarP$ in $\SystE_3$, the previous paragraph
gives the result for this query.
If $(\downarrow,x_0,x_1) \notin \VarP$ in both $\SystE_2$ and $\SystE_3$,
the same code is executed.
The only remaining case is $(\downarrow,x_0,x_1) \in \VarP$ in $\SystE_2$
and $(\downarrow,x_0,x_1) \notin \VarP$ in $\SystE_3$.
The only way this can happen is if the query $\P(x_0,x_1)$ was invoked
previously from a query to $\Check$ in $\SystE_2$, 
in which case the same entry 
$p(\downarrow, x_0,x_1)$ was used to set $\VarP$, and we get the result.

Next, we consider a $\P^{-1}$-query $\P^{-1}(x^*_{14},x^*_{15})$.
Again, the only non-trivial case is if $(\uparrow,x^*_{14},x^*_{15}) 
\in \VarP$ in
$\SystE_2$ and $(\uparrow,x^*_{14},x^*_{15}) \notin \VarP$ in $\SystE_3$.
This is only possible if during some query to $\Check(x_0,x_1,\cdot,\cdot)$
in $\SystE_2$, the last line invoked $\P(x_{0},x_1)$, and 
$(x^*_{14},x^*_{15}) = p(\downarrow,x_0,x_1)$.
Since it also must be that until now the distinguisher never invoked 
$\P(x_0,x_1)$ (otherwise, $\VarP(\uparrow,x^*_{14},x^*_{15})=(x_0,x_1)$ 
in $\SystE_3$), this implies that the event $\mathsf{BadBackwardQuery}$ 
must have happened.

Finally, consider a call $\Check(x_0,x_1,x_{14},x_{15})$ to $\Check$.
In case $(\downarrow,x_0,x_1) \in \VarP$ in $\SystE_3$ and in case
$(\downarrow,x_0,x_1) \notin \VarP$ in $\SystE_2$, line \ref{line:tsrf_check1}
behaves the same in both $\SystE_2$ and $\SystE_3$.
If $(\downarrow,x_0,x_1) \in \VarP$ in $\SystE_2$ and 
$(\downarrow,x_0,x_1) \notin \VarP$ in $\SystE_3$, then first in $\SystE_3$,
$\Check$ returns $\mathsf{false}$.  In $\SystE_2$, $\Check$ can only
return $\mathsf{true}$ if the event $\mathsf{BadBackwardQuery}$ occurs.

The second if statement in $\Check$ (in $\SystE_3$ this is line \ref{line:tsrf_check2} of $\TSRF$) 
can only return $\mathsf{false}$ in both $\SystE_2$ and $\SystE_3$: otherwise, the 
first if statement in $\Check$ (in $\SystE_3$ this is line \ref{line:tsrf_check1} of $\TSRF$) 
would already have returned $\mathsf{true}$.
This is sufficient, because the event $\mathsf{BadCheck}$ does not occur,
and so the last line of $\Check$ in both systems also returns $\mathsf{false}$.
	
	Thus,
	\begin{align*}
		\Bigl| \Pr_p[\Dist \text{ outputs $1$ in }&\SystE_2] - \Pr_p[\Dist \text{ outputs $1$ in }\SystE_3]\Bigr| \\
		& \leq \Pr_p[(\mathsf{BadCheck} \lor \mathsf{BadOverwrite} \lor 
			\mathsf{BadBackwardQuery})] \\
		& \leq \frac{q'}{2^{2n}} +\frac{2(q')^2}{2^{2n}}+  
			\frac{2(q')^2}{2^{2n}} \leq 	
		  \frac{5(q')^2}{2^{2n}}. \qedhere
	\end{align*}	
\end{proof}

\subsection{Complexity of the Simulator}\label{sec:SimulatorIsEfficient}
In this section we show that the simulator is efficient in scenario
$\Syst_2$.  
\begin{lemma}\label{lem:nOfOuterChains}
  Consider $\Syst_2$, and suppose that the distinguisher
  makes at most $q$ queries.  Then, the simulator dequeues at most $q$
  times a partial chain of the form $(x_1,x_2,1,\ell)$ for which
  $(x_1,x_2,1) \notin \CompletedChains$.
\end{lemma}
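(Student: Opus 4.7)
The plan is to exhibit an injection from the set of \emph{fresh outer-chain dequeues}, that is, dequeues of a chain $(x_1,x_2,1,\ell)$ with $(x_1,x_2,1)\notin\CompletedChains$ at the moment of dequeue, into the set of queries made by the distinguisher to $\TSRF.\P$ or $\TSRF.\P^{-1}$. Since the distinguisher issues at most $q$ queries overall, this will yield the bound $\leq q$.

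First, by inspection of $\enqNewChains$, a chain of the form $(x_1,x_2,1,\ell)$ is enqueued only from the branches $i\in\{2,13\}$, both of which require $\TSRF.\Check(x_2\oplus \VarG_1(x_1),\,x_1,\,x_{14},\,x_{13}\oplus \VarG_{14}(x_{14}))$ to return $\mathsf{true}$. Combined with the definition of $\TSRF.\Check$ and the fact that $\TSRF.\P$ and $\TSRF.\P^{-1}$ always set the $\downarrow$ and $\uparrow$ entries consistently, this means that at the moment of enqueuing there is an entry $\VarP(\downarrow,x_0,x_1)=(x_{14},x_{15})$ with $x_0 = x_2\oplus\VarG_1(x_1)$. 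To each fresh dequeue I associate the most recent call to $\TSRF.\P$ or $\TSRF.\P^{-1}$ that wrote this entry. Since $\adapt$ only modifies $\VarG_\ell$ and $\VarG_{\ell+1}$ for $\ell\in\{4,10\}$, the tables $\VarG_1$ and $\VarG_{14}$ are never overwritten; hence once $\VarG_1(x_1)$ is defined it remains fixed, so two fresh dequeues with distinct $(x_1,x_2)$ give rise to distinct keys $(x_0,x_1)$, and two fresh dequeues with equal $(x_1,x_2)$ cannot both be fresh, because the first one inserts $(x_1,x_2,1)$ into $\CompletedChains$ via line~\ref{line:addToCompletedChains} and this triple remains there forever.

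The heart of the argument is then the claim that the $\P/\P^{-1}$ call associated to a fresh dequeue must have been issued by the distinguisher, and not by the simulator itself. Suppose for contradiction that it was made inside $\evalFwd$ or $\evalBwd$ during the completion of some other chain $C$, whose position-$1$ and position-$2$ values are $(y_1,y_2)$. By inspection of those two routines, at the wrap-around point the arguments to $\P$ or the answer of $\P^{-1}$ are exactly the values $(y_0,y_1)$ of $C$ at positions $0$ and $1$, and dually $(y_{14},y_{15})$ at positions $14,15$; in particular $x_1 = y_1$ and $x_0 = y_2\oplus \VarG_1(y_1)$. Since $\VarG_1$ is never overwritten, this forces $x_2 = y_2$. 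But the completion of $C$ concludes with line~\ref{line:addToCompletedChains}, which inserts $(y_1,y_2,1)$ into $\CompletedChains$; as the simulator processes the queue strictly sequentially, this insertion occurs before our fresh dequeue is processed, contradicting the assumption $(x_1,x_2,1)\notin \CompletedChains$ at dequeue time.

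Combining the three observations above yields an injection from fresh outer-chain dequeues into the set of distinguisher $\P/\P^{-1}$ queries, proving the lemma. The step I expect to be the most delicate is the handling of overwrites of $\downarrow$ entries by $\TSRF.\P^{-1}$: one has to verify that selecting the most recent writing call keeps the association well-defined and that the contradiction argument above still applies, since the value visible at the time of the triggering $\Check$ need not be the one originally written when the key was created.
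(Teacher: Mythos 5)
Your proposal is correct and follows essentially the same route as the paper's proof: associate each fresh dequeue of an outer chain with the unique $\P/\P^{-1}$ query behind the successful $\Check$ (using that $\VarG_1$ and $\VarG_{14}$ are never overwritten), rule out that this query was made by the simulator because the completion containing the wrap-around call ends by inserting exactly $(x_1,x_2,1)$ into $\CompletedChains$ before any later dequeue, and then count the at most $q$ distinguisher queries. The only cosmetic difference is that the paper anchors the association to the underlying access to the randomness table $p$ rather than to the $\VarP$-writing call, which sidesteps the overwrite subtlety you flag; your handling of it (taking the most recent writing call) works just as well.
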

\begin{proof}
  Consider such a dequeue call and let $(x_1,x_2,1,\ell)$ be the
  partial chain dequeued for which $(x_1,x_2,1) \notin
  \CompletedChains$.  A chain $(x_1,x_2,1,\ell)$ is only enqueued when
  $(x_1,x_2,x_{13},x_{14})$ is detected, and since neither
  $\VarG_{1}(x_1)$ nor $\VarG_{14}(x_{14})$ are ever overwritten, this
  means that we can find a unique $4$-tuple $(x_0,x_1,x_{14},x_{15})$
  associated with $(x_1,x_2,1)$ for which
  $\Check(x_0,x_1,x_{14},x_{15})$ was true at the moment
  $(x_1,x_2,1,\ell)$ was enqueued.  We can now find a unique query to
  $p$ which corresponds to $(x_0,x_{1},x_{14},x_{15})$: pick
  $p(\uparrow,x_{14},x_{15}) = (x_0,x_{1})$ if this was a query and
  its answer, or otherwise $p(\downarrow,x_0,x_1) = (x_{14},x_{15})$.
  This table entry of $p$ was accessed during a call to $\P$ or
  $\P^{-1}$, and this call was made either by the distinguisher or the
  simulator.  We argue that this call cannot have been by the
  simulator.  The simulator issues such calls only
  when it completes a chain, and after this completion, it adds
  $(x_1,x_0 \oplus \VarG_{1}(x_1),1)$ to $\CompletedChains$, and so it cannot have been that
  $(x_1,x_2,1) \notin \CompletedChains$ when it was dequeued.  Thus, we
  found a unique query of the distinguisher associated with this
  dequeue call. Finally, note that after $(x_1, x_2, 1)$ is completed by the
  simulator, $(x_1, x_2, 1)$ is added to $\CompletedChains$. 
  Thus, there are at most $q$ such dequeue calls.
\end{proof}

\begin{replemma}{lem:SimulatorIsEfficient}
  Consider $\Syst_2$, and suppose that the distinguisher makes at most
  $q$ queries.  Then, at any point in the execution we have $|\VarG_i|
  \leq 6q^2$ for all $i$.  Furthermore, there are at most $6q^2$
  queries to both $\TSRF.\P$, and $\TSRF.\P^{-1}$, and at most $1296q^8$
  queries to $\TSRF.\Check$.
\end{replemma}
\begin{proof}
  We first show that $|\VarG_7| \leq 2q$ and $|\VarG_8| \leq 2q$.
  Assignments $\VarG_{7}(x_7) := f(7,x_7)$ and $\VarG_8(x_8) :=
  f(8,x_8)$ only happen in two cases: either when the
  distinguisher directly queries the corresponding value using $\F$, or
  when the simulator completes a chain $(x_1,x_2,1,\ell)$ which it
  dequeued.  There can be at most $q$ queries to $\F$, and according to
  Lemma~\ref{lem:nOfOuterChains} there are at most $q$ such chains
  which are completed, which implies the bound.

  The set $\VarG_{i}$ can only be enlarged by $1$ in the following
  cases: if the distinguisher queries $\F(i,\cdot)$, if a chain of the
  form $(x_1,x_2,1,\ell)$ is dequeued and not in $\CompletedChains$,
  or if a chain $(x_7,x_8,7,\ell)$ is dequeued and not in $\CompletedChains$.  
  There are at most
  $q$ events of the first kind, at most $q$ events of the second kind
  (using Lemma~\ref{lem:nOfOuterChains}), and at most $|\VarG_7|\cdot
  |\VarG_8| \leq 4q^2$ events of the last kind, giving a total of
  $4q^2 + 2q \leq 6q^2$.  
 
 	A query to $\TSRF.\P$ or $\TSRF.\P^{-1}$ can be made either by the distinguisher,
 	or by the simulator when it completes a chain. At most $q$ events of the
 	first kind, and at most $q + 4q^2$ events of the second kind are possible. 
 	Thus, at most $6q^2$ of these queries occur.
 	The number of $\Check$ queries by the simulator is bounded by 
 	$|\VarG_1 \times \VarG_2 \times \VarG_{13}\times \VarG_{14}| \leq (6q^2)^4$.
\end{proof}

\subsection{Equivalence of the Second and the Third Experiment}\label{sec:eqS2andS3}
This section contains the core of our argument: We prove
Lemma~\ref{lem:eqS2andS3}, which states that $\Syst_2(f,p)$ and
$\Syst_3(h)$ have the same behaviour for uniformly chosen $(f,p)$ and
$h$.  For most part of the analysis, we consider the scenario
$\Syst_2(f,p)$.  We let $\VarG = (\VarG_1,\ldots,\VarG_{14})$ be the
tuple of tables of the simulator $\SIMM(f)$ in the execution.

\subsubsection{Partial chains}

\paragraph{Evaluating partial chains.}
A \emph{partial chain} is a triple $(x_{k},x_{k+1},k) \in \{0,1\}^{n}
\times \{0,1\}^n \times \{0,\ldots,14\}$.  Given such a partial chain
$C$, and a set of tables $\SIMM.\VarG$ and $\TSRF.\VarP$, it can be
that we can move ``forward'' or ``backward'' one step in the Feistel
construction.  This is captured by the functions $\Next$ and $\Prev$.
Additionally, the functions $\val^+$ and $\val^{-}$ allow us to access
additional values of the chain indexed by $C$, $\val^+$ by invoking
$\Next$, and $\val^{-}$ by invoking $\Prev$.  The function $\val$
finally gives us the same information in case we do not want to bother
about the direction.

\begin{definition}
  Fix a set of tables $\VarG=\SIMM.\VarG$ and $\VarP = \TSRF.\VarP$ in
  an execution of $\Syst_2(f,p)$.  Let $C = (x_k,x_{k+1},k)$ be a
  partial chain.  We define the functions
  $\Next$, $\Prev$, $\val^+$, $\val^-$, and $\val$ with the following
  procedures (for a chain $C = (x_k,x_{k+1},k)$, we let $C[1] = x_k$,
  $C[2] = x_{k+1}$ and $C[3] =k$):
  
\begin{lstlisting}[language=pseudocode]
procedure $\Next(x_k,x_{k+1},k)$:
     if $k < 14$ then
          if $x_{k+1} \notin \VarG_{k+1}$ then return $\bot$
          $x_{k+2} := x_{k} \oplus \VarG_{k+1}(x_{k+1})$
          return $(x_{k+1},x_{k+2},k+1)$
     else if $k = 14$ then
          if $(\uparrow,x_{14},x_{15}) \notin \VarP$ then return $\bot$
          $(x_{0},x_{1}) := \VarP(\uparrow,x_{14},x_{15})$
          return $(x_0,x_1,0)$

procedure $\Prev(x_k,x_{k+1},k)$:
     if $k > 0$ then
          if $x_{k} \notin \VarG_{k}$ then return $\bot$
          $x_{k-1} := x_{k+1} \oplus \VarG_{k}(x_k)$
          return $(x_{k-1},x_{k},k-1)$
     else if $k = 0$ then
          if $(\downarrow,x_{0},x_{1}) \notin \VarP$ then return $\bot$
          $(x_{14},x_{15}) := \VarP(\downarrow,x_{0},x_{1})$
          return $(x_{14},x_{15},14)$

procedure $\val^+_i(C)$
     while $(C \neq \bot )\land (C[3]\notin \{ i-1, i\})$ do
          $C := \Next(C)$
     if $C = \bot$ then return $\bot$
     if $C[3] = i$ then return $C[1]$ else return $C[2]$

procedure $\val^-_i(C)$
     while $(C \neq \bot )\land (C[3]\notin \{ i-1, i\})$ do
          $C := \Prev(C)$
     if $C = \bot$ then return $\bot$
     if $C[3] = i$ then return $C[1]$ else return $C[2]$

procedure $\val_i(C)$
     if $\val_{i}^{+}(C) \neq \bot$ return $\val_{i}^{+}(C)$ else return $\val_{i}^{-}(C)$
\end{lstlisting}
\end{definition}

We use the convention that $\bot \notin \VarG_i$ for any $i \in
\{1,\ldots,14\}$.  Thus, the expression $\val_{i}(C) \notin \VarG_i$
means that $\val_{i}(C) = \bot$ or that $\val_{i}(C) \neq
\bot$ and $\val_{i}(C) \notin \VarG_i$.
Furthermore, even though $\Next$ and $\Prev$ may return $\bot$, according to our 
definition of partial chains, $\bot$ is not a partial chain.

\paragraph{Equivalent partial chains}

We use the concept of equivalent partial chains:
\begin{definition}\label{def:equivalence}
  For a given set of tables $G$ and $P$, two partial chains $C$ and $D$ are
  \emph{equivalent} (denoted $C \equiv D$) if they are in the reflexive transitive closure of
  the relations given by $\Next$ and $\Prev$.
\end{definition}
In other words, two chains $C$ and $D$ are equivalent if $C=D$, or
if $D$ can be obtained by applying $\Next$ and $\Prev$ finitely many times on
$C$.

Note that this relation is not an equivalence relation, since it is not necessarily 
symmetric.\footnote{The symmetry can be violated if in the two-sided random function 
$\TSRF$ an entry of the table $\VarP$ is overwritten.} 
However, we will prove that for most executions of $\Syst_2(f,p)$ it actually is
symmetric and thus an equivalence relation. Furthermore, it is possible that two different chains $(x_0,x_1,0)$ and
$(y_0,y_1,0)$ are equivalent (e.g., by applying $\Next$ 15 times).  
While we eventually show that for most
executions of $\Syst_2(f,p)$ this does not happen, this is not easy to
show, and we cannot assume it for most of the following proof. 

\subsubsection{Bad events and good executions}

As usual in indistinguishability proofs, for some pairs $(f,p)$ the
system $\Syst_2(f,p)$ does not behave as ``it should''.  In this
section we collect events which we show later to occur with low
probability.  We later study $\Syst_2(f,p)$ for pairs $(f,p)$ for
which these events do not occur.

All events occur if some unexpected collision happens to one of the
partial chains which can be defined with elements of
$\VarG_1,\ldots,\VarG_{14}$ and $\VarP$.
\begin{definition}
  The set of \textit{table-defined partial chains} contains all chains
  $C$ for which $\Next(C) \neq \bot$ and $\Prev(C) \neq \bot$.
\end{definition}
If $C = (x_k,x_{k+1},k)$ for $k \in \{1,\ldots,13\}$, then $C$ is
table-defined if and only if $x_{k} \in \VarG_{k}$ and $x_{k+1} \in
\VarG_{k+1}$.  For $k \in \{0,14\}$, $C$ is table-defined if the
``inner'' value is in $\VarG_{1}$ or $\VarG_{14}$, respectively, and
a corresponding triple is in $\VarP$.

\paragraph{Hitting permutations.}
Whenever we call the two-sided random function, a query to the table
$p$ may occur.  If such a query has unexpected effects, the event
$\BadP$ occurs.

\begin{definition}
  The event $\BadP$ occurs in an execution of $\Syst_2(f,p)$ if 
  immediately after
  a call $(x_{14},x_{15}) := p({\downarrow, x_0,x_1})$ in line \ref{line:tsrf_queryP1}
  of $\TSRF$ we have one of
  \begin{itemize}
  \item $(\uparrow,x_{14},x_{15}) \in\VarP$,
  \item $x_{14} \in \VarG_{14}$.
  \end{itemize}
  Also, it occurs if immediately after a call $(x_0,x_1) :=
  p(\uparrow,x_{14},x_{15})$ in line \ref{line:tsrf_queryP2} of $\TSRF$ we have one of
  \begin{itemize}
  \item $(\downarrow,x_{0},x_{1}) \in\VarP$,
  \item $x_{1} \in \VarG_{1}$.
  \end{itemize}
\end{definition}
If $\BadP$ does not occur, then we will be able to show that evaluating
$\P$ and $\P^{-1}$ is a bijection, since no value is overwritten. 

\paragraph{Chains hitting tables.}
Consider an assignment $\VarG_{i}(x_i) := f(i,x_i)$.  Unless
something unexpected happens, such an assignment allows evaluating
$\Next(C)$ at most \emph{once} more.

\begin{definition}\label{def:badlyhit}
  The event $\BadlyHit$ occurs if one of the following happens
  in an execution of $\Syst_2(f,p)$:
  \begin{itemize}
  \item After an assignment $\VarG_k(x_k) := f(k,x_k)$ there is a table-defined chain
  $(x_{k},x_{k+1},k)$ such that $\Prev(\Prev(x_{k},x_{k+1},k)) \neq \bot$.
  \item After an assignment $\VarG_k(x_k) := f(k,x_k)$ there is a table-defined chain
  $(x_{k-1},x_k,k-1)$ such that $\Next(\Next(x_{k-1},x_{k},k-1))\neq \bot$.
  \end{itemize}
\end{definition}
Furthermore, if the above happens for some chain $C$, and $C'$ is a
chain equivalent to $C$ before the assignment, we say that $C'$ badly
hits the tables.

We will later argue that the event $\BadlyHit$ is unlikely, because a chain
only badly hits the tables if $f(k,x_k)$ takes a very particular value.
For this (and similar statements), 
it is useful to note that the set of table-defined chains
after an assignment $\VarG_k(x_k) := f(k,x_k)$ 
\emph{does not depend} on the value of $f(k,x_k)$, as the reader can verify.

\paragraph{Colliding chains.}
Two chains $C$ and $D$ collide if after an assignment suddenly
$\val_i(C) = \val_i(D)$, even though this was not expected.  More
exactly:
\begin{definition}\label{def:badlycollide}
  Let $\VarG$ and $\VarP$ be a set of tables, let $x_k \notin
  \VarG_k$, and consider two partial chains $C$ and $D$.  An
  assignment $\VarG_{k}(x_k) := y$ \emph{badly collides} $C$ and $D$
  if for some $\ell \in \{0,\ldots,15\}$ and $\sigma, \rho \in
  \{+,-\}$ all of the following happen:
  \begin{itemize}
  \item Before the assignment, $C$ and $D$ are not equivalent.
  \item Before the assignment, $\val^{\sigma}_{\ell}(C) = \bot$ or
    $\val^{\rho}_{\ell}(D) = \bot$.
  \item After the assignment, $\val^{\sigma}_{\ell}(C) =
    \val^{\rho}_{\ell}(D) \neq \bot$.
  \end{itemize}

  We say that the event $\BadlyCollide$ occurs in an execution
  $\Syst_2(f,p)$, if an assignment of the form $\VarG_{i}(x_{i}) :=
  f(i,x_i)$ makes two partial chains badly collide, and the 
  two chains are table-defined after the assignment.
\end{definition}

Finally, we say that a pair $(f,p)$ is \textit{good} if none of the above three
events happen in an execution of $\Syst_2(f,p)$.

\subsubsection{Bad events are unlikely}
In this subsection we show that all the bad events we have introduced
are unlikely.  

\paragraph{Hitting permutations} 
\begin{lemma}\label{lem:noBadP}
  Suppose that $\Syst_2(f,p)$ is such that for any $(f,p)$ the tables
  satisfy $|\VarG_{i}| \leq T$ for all $i$ and $|\VarP|\leq T$ at any
  point in the execution.  Then, the probability over the choice of
  $(f,p)$ of the event $\BadP$ is at most $\frac{2T^2}{2^n}$.
\end{lemma}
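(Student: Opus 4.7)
The plan is a straightforward union-bound argument that exploits the fact that each entry of $p$ is an independent uniformly distributed $2n$-bit string accessed at most once throughout the execution.

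First, I would bound the number of queries made to the table $p$. Every call of the form $p(\downarrow,x_0,x_1)$ or $p(\uparrow,x_{14},x_{15})$ in lines \ref{line:tsrf_queryP1} and \ref{line:tsrf_queryP2} of $\TSRF$ is immediately followed by inserting the corresponding pair into $\VarP$. Since by hypothesis $|\VarP|\leq T$ throughout the execution, and distinct queries to $p$ produce distinct keys (the simulator first checks whether the relevant entry is in $\VarP$ before invoking $p$), there can be at most $T$ queries to the table $p$ in total.

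Next, I would analyze a single such query. Consider a call $(x_{14},x_{15}) := p(\downarrow,x_0,x_1)$. This entry of $p$ has not been accessed before, so $(x_{14},x_{15})$ is an independent uniformly random element of $\{0,1\}^{2n}$, and in particular it is independent of the current contents of $\VarG$ and $\VarP$. There are at most $T$ pairs $(x'_{14},x'_{15})$ for which $(\uparrow,x'_{14},x'_{15}) \in \VarP$, contributing probability at most $T/2^{2n}$ that the first bad condition triggers. Moreover, $|\VarG_{14}| \leq T$, so since $x_{14}$ is the first $n$ uniformly random bits of $(x_{14},x_{15})$, the second bad condition occurs with probability at most $T/2^n$. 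The symmetric case of a query $(x_0,x_1) := p(\uparrow,x_{14},x_{15})$ is analyzed identically. Thus each call to $p$ triggers $\BadP$ with probability at most $T/2^{2n} + T/2^n \leq 2T/2^n$.

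Finally, applying a union bound over all (at most $T$) queries to $p$ gives $\Pr[\BadP] \leq T \cdot 2T/2^n = 2T^2/2^n$, as claimed. The only subtlety is ensuring that at the moment $p$ is queried the value returned is genuinely fresh and independent of the current $\VarG$ and $\VarP$; this follows because $\VarG$ and $\VarP$ are determined by prior answers (from $f$ and from earlier, distinct entries of $p$), and the simulator's control flow up to the query does not inspect the still-unused entry of $p$ being accessed. Beyond this observation the proof is routine counting, and there is no genuine obstacle.
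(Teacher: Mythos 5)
Your proposal is correct and follows essentially the same argument as the paper: a per-query bound of at most $2T/2^n$ (the paper bounds each of the two bad conditions by $T/2^n$, while you sharpen the $\VarP$-collision case to $T/2^{2n}$) combined with a union bound over the at most $T$ accesses to $p$ permitted by $|\VarP|\leq T$. Your additional remarks on freshness and distinctness of the accessed entries only make explicit what the paper's terser proof leaves implicit.
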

\begin{proof}
  For any query to $p$, only $2$ events are possible.  In both
  cases, these events have probability at most $\frac{T}{2^n}$.  Since
  at most $T$ positions of $p$ can be accessed without violating
  $|\VarP|\leq T$ we get the claim.
\end{proof}

\paragraph{Chains hitting tables.}
We now show that the event $\BadlyHit$ is unlikely.
\begin{lemma}\label{lem:noBadHits}
  Suppose that $\Syst_2(f,p)$ is such that for any $(f,p)$ the tables
  satisfy $|G_i| \leq T$ for all $i$ and $|\VarP|\leq T$ at any point
  in the execution.  Then, the probability over the choice of $(f,p)$
  of the event $\BadlyHit$ is at most $30 \frac{T^3}{2^n}$.
\end{lemma}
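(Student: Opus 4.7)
The plan is a union bound over all $f$-assignments of the form $\VarG_k(x_k) := f(k, x_k)$ that occur during the execution of $\Syst_2(f,p)$. Such an assignment happens only when $x_k \notin \VarG_k$ before the assignment, so each entry of $\VarG_k$ is added at most once via $f$, and the hypothesis $|\VarG_i| \leq T$ implies that the total number of such assignments is at most $14T$. For each such assignment, $f(k, x_k)$ is being accessed for the first time, so it is a fresh uniform random $n$-bit string, independent of the current state of the tables and of the coins used so far.

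Fix one such assignment. As noted after Definition~\ref{def:badlyhit}, which chains are table-defined after the assignment is determined by membership in the tables, not by the value $f(k, x_k)$; in particular, the candidate set of chains $(x_k, x_{k+1}, k)$ (resp.\ $(x_{k-1}, x_k, k-1)$) for the first (resp.\ second) bullet of Definition~\ref{def:badlyhit} is exactly parametrized by $x_{k+1} \in \VarG_{k+1}$ (resp.\ $x_{k-1} \in \VarG_{k-1}$) and so has size at most $T$. For a fixed candidate in the first bullet, $\Prev(\Prev(x_k, x_{k+1}, k)) \neq \bot$ requires that $x_{k-1} := x_{k+1} \oplus f(k, x_k)$ either lies in $\VarG_{k-1}$ (when $k \geq 2$) or satisfies $(\downarrow, x_{k-1}, x_k) \in \VarP$ (when $k = 1$); since $x_{k-1}$ is uniformly distributed and the target set has size at most $T$, this occurs with probability at most $T/2^n$. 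A union bound over the $\leq T$ candidate chains therefore gives probability $\leq T^2/2^n$ for the first bullet to trigger at this assignment. The second bullet is handled symmetrically, exchanging the roles of $\VarG_{k-1}$ and $\VarG_{k+1}$ (and using the case $k = 14$ with $\VarP$ instead of $k = 1$), contributing another $T^2/2^n$.

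Summing, each $f$-assignment contributes probability at most $2T^2/2^n$ to the event $\BadlyHit$, and a final union bound over all $14T$ assignments yields $\Pr[\BadlyHit] \leq 14T \cdot 2T^2 / 2^n = 28 T^3 / 2^n \leq 30 T^3 / 2^n$. The only subtlety worth flagging is the independence argument: one must use that $\VarG_k(x_k)$ is only ever assigned from $f(k, x_k)$ in the $f$-branch (not in $\forceVal$, which sets specific values but is excluded from the definition of $\BadlyHit$), and that the bookkeeping $|\VarG_i| \leq T$ bounds the number of genuinely fresh draws from $f$. Everything else is a direct application of the uniform distribution of $f(k, x_k)$ conditioned on prior state.
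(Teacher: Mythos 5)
Your proposal is correct and follows essentially the same argument as the paper: a union bound over the at most $14T$ fresh assignments $\VarG_k(x_k) := f(k,x_k)$, with at most $T$ candidate chains per bullet of Definition~\ref{def:badlyhit} and probability at most $T/2^n$ per candidate (using that the required value must land in $\VarG_{k-1}$, $\VarG_{k+1}$, or $\VarP$), giving $28T^3/2^n \leq 30T^3/2^n$. The only negligible difference is bookkeeping of the edge cases $k\in\{1,14\}$, which you and the paper both handle via $|\VarP|\leq T$.
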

\begin{proof}
  We first bound the probability of the first event, i.e., 
  that after the assignment $\VarG_{k}(x_k) := f(k,x_k)$ there is 
  a table-defined chain $C=(x_{k},x_{k+1},k)$ such that $\Prev(\Prev(C)) \neq \bot$. This can only happen 
  if $x_{k+1} \oplus \VarG_{k}(x_k)$ has one of at most $T$ different values (namely, it has to be in $\VarG_{k-1}$ in case
  $14 \geq k \geq 2$ or in $\VarP$ together with $x_{1}$ in case $k = 1$). Thus, for fixed $x_{k+1} \in \VarG_{k+1}$ the probability
  that $\Prev(\Prev(C)) \neq \bot$ is at most $T/2^n$. Since there are at most $T$ possible choices for $x_{k+1}$ 
  (this also holds if $k=14$) the total probability is at most $T^2/2^n$.  
  
  The analogous probability for $\Next$ is exactly the same and thus the probability of $\BadlyHit$ for one assignment is at most
  $2\cdot T^2/2^n$. In total, there are at most $14\cdot T$ assignments of the form 
  $\VarG_{k}(x_k) := f(k,x_k)$, and thus the probability of $\BadlyHit$ is at most $28 T^3/2^n$.
\end{proof}

\paragraph{Colliding chains}
We next show that it is unlikely that chains badly collide.  First, we give a useful lemma which explains how the
chains behave when they do not badly hit $\VarG$: only one value
$\val_i(C)$ can change from $\bot$ to a different value.
\begin{lemma}\label{lem:resultOfNoBadHits}
  Consider a set of tables $\VarG$ and $\VarP$, $x_{k}
  \notin \VarG_k$, fix a partial chain $C$, and suppose that $C$ does not
  badly hit the tables due to the assignment $\VarG_{k}(x_k) := f(k,x_k)$.  
  Then, for each chain $C$ and each $\sigma \in \{+,-\}$ there
  is at most one value~$i$ such that~$\val^{\sigma}_{i}(C)$ 
  differs after the assignment from before the assignment.
  Futhermore, if some value changes, then it changes from $\bot$ to
  a different value, and
  \begin{equation*}
	i = \left\{
	\begin{array}{rl}
	k+1 & \text{if } \sigma = +\\
	k-1 & \text{if } \sigma = -,
	\end{array} \right.
	\end{equation*}
	and $\val^\sigma_k(C) = x_k$ before the assignment.
\end{lemma}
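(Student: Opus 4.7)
My plan is to trace how each walk defining $\val^\sigma_i(C)$ interacts with the single newly inserted entry $\VarG_k(x_k)$, and to argue that only one position can ever change. I treat $\sigma = +$ in detail; the case $\sigma = -$ is entirely symmetric, interchanging $\Next$ with $\Prev$ and $k+1$ with $k-1$ throughout.

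The walk computing $\val^+_i(C)$ iterates $\Next$ on the current chain until the index component lands in $\{i-1, i\}$. A call $\Next(u,v,j)$ with $j < 14$ reads only $\VarG_{j+1}(v)$ from the simulator tables, so the new entry $\VarG_k(x_k)$ is consulted if and only if the walk reaches a chain of the form $(\cdot, x_k, k-1)$. Consequently the sequences of chains visited before and after the assignment agree up to the first such chain $D = (x_{k-1}, x_k, k-1)$; if no such $D$ is ever reached, the walk is identical in both executions and $\val^+_i(C)$ does not change for any $i$.

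Now suppose $D$ is reached. If $i \in \{k-1, k\}$, the loop terminates at $D$ before ever invoking $\Next(D)$, so the value is unaffected by the assignment; in particular, $i = k$ yields $\val^+_k(C) = D[2] = x_k$. For $i \notin \{k-1, k\}$, before the assignment $\Next(D) = \bot$ (since $x_k \notin \VarG_k$), so $\val^+_i(C) = \bot$. After the assignment, $\Next(D) = (x_k, x_{k+1}, k)$ with $x_{k+1} := x_{k-1} \oplus f(k, x_k)$. If $i = k+1$ the loop terminates at this new chain and returns $x_{k+1}$, so $\val^+_{k+1}(C)$ changes from $\bot$ to a non-$\bot$ value. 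If $i \notin \{k-1, k, k+1\}$, the loop tries to apply $\Next$ once more; here I invoke the hypothesis that $C$ does not badly hit the tables. Since $D$ is equivalent to $C$ before the assignment (the $\Next$-path from $C$ to $D$ uses no entry of $\VarG_k$ and is therefore valid with respect to the pre-assignment tables), the second bullet of Definition~\ref{def:badlyhit} applied to $D$ states that $\Next(\Next(D)) = \bot$ after the assignment, and the walk still returns $\bot$.

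Combining the cases, $\val^+_{k+1}(C)$ is the unique position whose value can change, the change is always from $\bot$ to a non-$\bot$ value, and whenever a change occurs the walk passes through $D$, which yields $\val^+_k(C) = x_k$ before the assignment. The one delicate point in this argument is the invocation of the equivalence $C \equiv D$ \emph{before} the assignment in order to apply the non-badly-hit hypothesis to $D$; this is legitimate precisely because the $\Next$-path witnessing the equivalence consults no entry of $\VarG_k$, so it is valid with respect to either version of the tables. The remainder is a direct case analysis.
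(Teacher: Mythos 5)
Your argument is correct and follows essentially the same route as the paper's proof: values that were non-$\bot$ before the assignment are unchanged, a change can only occur when the forward (resp.\ backward) walk stalls at $\val^{\sigma}_k(C)=x_k$, the new value then appears at position $k+1$ (resp.\ $k-1$), and the no-badly-hit hypothesis blocks any further propagation. The only difference is presentational: you make explicit the chain $D=(x_{k-1},x_k,k-1)$, its pre-assignment equivalence with $C$, and which bullet of Definition~\ref{def:badlyhit} is invoked (both you and the paper tacitly use that $D$ is table-defined after the assignment, which is witnessed by the walk stepping into $D$), whereas the paper simply states the consequence $\val^{+}_{k+1}(C)\notin\VarG_{k+1}$ directly.
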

\begin{proof}
We give the proof for $\sigma = +$, the other case is symmetric.
First, we see that if $\val_{i}^+(C) \neq \bot$ before the assignment,
then it does not change due to the assignment.  This follows by 
induction on the number of calls to $\Next$ in the evaluation of $\val^+$,
and by noting that $\VarG_k(x_k) := f(k,x_k)$ is not called
if $x_k \in \VarG_k$ in the simulator.

  Thus, suppose that $\val_{i}^+(C) = \bot$. 
   This means that during the evaluation of
  $\val_{i}^{+}(C)$ at some point the evaluation stopped.  
  This was either because a queried triple was not in
  $\VarP$, or because a value $x_j$ was not in $\VarG_{j}$ during the
  evaluation.  In the first case, the evaluation of $\val_{i}^{+}(C)$ 
  will not change due to an assignment to $\VarG_{k}(x_k)$.
  In the second case, the evaluation can only change if it stopped
  because $\val^{+}_{k}(C) = x_{k}$. Then after the assignment,
  $\val_{k+1}^{+}(C)$ will change from $\bot$ to a different value.
  Since $C$ does not badly hit the tables under the assignment,
  $\val_{k+1}^{+}(C) \notin \VarG_{k+1}$ after this assignment (in
  case $k+1 < 15$), and $(\uparrow,\val_{14}^+(C),\val_{15}^+(C))
  \notin \VarP$ (in case $k+1 = 15$).  Thus, there is only one change in the
  evaluation. 
\end{proof}

Instead of showing that $\BadlyCollide$ is unlikely, it is
slightly simpler to consider the event $(\BadlyCollide \land
\lnot \BadlyHit \land \lnot\BadP)$.
\begin{lemma}\label{lem:noBadCollides}
  Suppose that $\Syst_2(f,p)$ is such that for any $(f,p)$ the tables
  satisfy $|\VarG_{i}| \leq T$ for all $i$ and $|\VarP|\leq T$ at any
  point in the execution.  Then, the probability of the event
  $(\BadlyCollide \land \lnot \BadlyHit \land
  \lnot\BadP)$ is at most $15\,000 \frac{T^5}{2^n}$.  
\end{lemma}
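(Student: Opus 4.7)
The plan is to bound $\Pr[\BadlyCollide \land \lnot \BadlyHit \land \lnot \BadP]$ by a union bound over single assignments of the form $\VarG_k(x_k) := f(k,x_k)$, leveraging Lemma~\ref{lem:resultOfNoBadHits} to tightly control how each such assignment modifies $\val^\sigma_\ell$ on all partial chains. Fix an execution where the good events hold, so Lemma~\ref{lem:resultOfNoBadHits} applies: for each chain $C$ and each $\sigma \in \{+,-\}$, the assignment can change $\val^\sigma_\ell(C)$ for at most one index $\ell$; if a change occurs, then $\ell = k+1$ (if $\sigma = +$) or $\ell = k-1$ (if $\sigma = -$), the previous value was $\bot$, and the new value has the form $w \oplus f(k,x_k)$ for some $w$ that is fully determined by the state of the tables before the assignment (with the precondition $\val^\sigma_k(C) = x_k$ holding beforehand).

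The key structural step is to rule out the case in which both $\val^\sigma_\ell(C)$ and $\val^\rho_\ell(D)$ change. Since the index $\ell$ at which a side may change is forced by its direction, both changes can coincide at the same $\ell$ only if $\sigma = \rho$. Taking $\sigma = \rho = +$ for concreteness, the collision at $\ell = k+1$ reads $\val^+_{k-1}(C) \oplus f(k,x_k) = \val^+_{k-1}(D) \oplus f(k,x_k)$, which cancels to $\val^+_{k-1}(C) = \val^+_{k-1}(D) =: x_{k-1}$. Together with $\val^+_k(C) = \val^+_k(D) = x_k$, this means the forward traversals of $C$ and of $D$ both reach the common partial chain $E := (x_{k-1}, x_k, k-1)$, so $C \equiv E$ and $D \equiv E$. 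Under $\lnot \BadP$ the table $\VarP$ is never overwritten, so $\Next$ and $\Prev$ are mutual inverses wherever defined, which makes $\equiv$ symmetric. Hence $C \equiv D$, contradicting the definition of a bad collision. The case $\sigma = \rho = -$ is treated symmetrically.

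It therefore suffices to bound the one-sided case where WLOG $\val^\sigma_\ell(C) = \bot$ before the assignment while $\val^\rho_\ell(D) = v$ is already defined. The new value $\val^\sigma_\ell(C) = w \oplus f(k,x_k)$ equals $v$ iff $f(k,x_k) = v \oplus w$; because $f(k,x_k)$ is a fresh, independent uniform $n$-bit string at the moment of this assignment, this occurs with probability exactly $2^{-n}$. Now apply the union bound: the execution performs at most $14T$ assignments of this form; for each assignment, both $C$ and $D$ must be table-defined right after it, and the number of table-defined partial chains is at most $15T^2 + 2T$ (summing over all 15 positions); finally $(\sigma,\rho)$ and the choice of which side was $\bot$ contribute only a small constant, since $\ell$ is then determined by $\sigma$. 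The product is of order $T \cdot (T^2)^2 / 2^n = T^5/2^n$, from which the claimed $15\,000\,T^5/2^n$ follows by tracking the numeric constants.

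The main obstacle is the ``both change'' reduction, which requires simultaneous use of $\lnot \BadlyHit$---to invoke Lemma~\ref{lem:resultOfNoBadHits} and pin down the linear form of the new $\val$-values---and $\lnot \BadP$---to guarantee that $\Next$ and $\Prev$ are true inverses so that the transitive closure $\equiv$ is an equivalence relation. Once the analysis is reduced to the one-sided case, the rest is the elementary observation that $f(k,x_k)$ appears as a uniform additive offset plus a straightforward counting argument.
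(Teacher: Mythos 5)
Your proof is correct and takes essentially the same route as the paper's: Lemma~\ref{lem:resultOfNoBadHits} pins down the unique index and the additive form $w \oplus f(k,x_k)$ of any changed value, the ``both sides change'' cases are excluded exactly as in the paper (mismatched forced indices for $\sigma\neq\rho$, and an equivalence contradiction for $\sigma=\rho$, where $\lnot\BadP$ yields symmetry of $\equiv$), and the one-sided case gives $2^{-n}$ per pair, followed by the same union bound over at most $14T$ assignments and $(15T^2+2T)^2$ ordered pairs of table-defined chains. The only step left implicit is the final arithmetic for the constant $15\,000$, which works out as in the paper ($4\cdot 14\cdot 16^2\le 15\,000$; note the choice of which side is $\bot$ is already absorbed by summing over ordered pairs, so no extra factor of $2$ is needed).
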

\begin{proof}
If the event $(\BadlyCollide \land \lnot \BadlyHit \land \lnot\BadP)$
happens for a pair $(f,p)$, then there is some point in the
execution where some assignment $\VarG_k(x_k) := f(k,x_k)$
makes a  pair $(C,D)$ of partial
chains collide as in Definition~\ref{def:badlycollide}.
After this assignment, both $(C,D)$ are table defined, and 
$\val_{\ell}^{\sigma}(C) = \val_{\ell}^{\rho}(D)$.

We distinguish some cases: first suppose that
$\val^{-}_{\ell}(C) = \val^{-}_{\ell}(D) = \bot$ before the assignment,
and $\val^{-}_{\ell}(C) = \val^{-}_{\ell}(D) \neq \bot$ after the assignment.
Since $\BadlyHit$ does not happen, Lemma~\ref{lem:resultOfNoBadHits}
implies that before the assignment, $\val^{-}_{\ell+1}(C) = 
\val^{-}_{\ell+1}(D)$, and furthermore $\ell+1 \in \{1,\ldots,14\}$.  
Also, since $C \not\equiv D$ before the 
assignment, it must be that before the assignment $\val^{-}_{\ell+2}(C) \neq 
\val^{-}_{\ell+2}(D)$.  However, this implies that $\val^{-}_{\ell}(C) \neq
\val^{-}_{\ell}(D)$ after the assignment.  Therefore, this case is 
impossible and has probability~$0$.

Next, we consider the case $\val^{-}_{\ell}(C) = \bot$,
$\val^{-}_{\ell}(D) \neq \bot$ before the assignment, and 
$\val^{-}_\ell(C) = \val^{-}_\ell(D)$ after the assignment.
Since $D$ is table defined after the assignment, and 
we assume $\BadlyHit$ does not occur, 
by Lemma~\ref{lem:resultOfNoBadHits} the value $\val^{-}_{\ell}(D)$ does not
change due to the assignment.
Since $\val^{-}_\ell(C) = \val^{-}_{\ell+2}(C) \oplus \VarG_{\ell+1}(x_{\ell+1})$,
and $\VarG_{\ell+1}(x_{\ell+1})$ is chosen uniformly at random, the 
probability that it exactly matches $\val^{-}_\ell(D)$ is $2^{-n}$.

The next two cases are similar to the previous ones, we give them for
completeness.  The first of these two is that
$\val^{+}_{\ell}(C) = \val^{-}_{\ell}(D) = \bot$ before the assignment,
and $\val^{+}_{\ell}(C) = \val^{-}_{\ell}(D) \neq \bot$ after the assignment.
However, due to Lemma~\ref{lem:resultOfNoBadHits} this is impossible:
we would need both $k = \ell+1$ and $k = \ell-1$ for both values to change 
as needed.

Then, we have the case that $\bot = \val^{+}_{\ell}(C) \neq \val^{-}_{\ell}(D)$
before the assignment, and $\val^{+}_{\ell}(C) = \val^{-}_{\ell}(D)$ after the
assignment.  
Again, $\val^{-}_\ell(D)$ does not change by the assignment due to 
Lemma~\ref{lem:resultOfNoBadHits}, and also similarly to before,
the probability that $\val^+_{\ell-2}(C) \oplus f(\ell-1,\val_{\ell-1}^{-}(C))
= \val^{+}_\ell(D)$ is $2^{-n}$.

Bounds on the probability of the the
 4 remaining cases follow by symmetry of the construction.
   
  There are $4$ possibilities for the values of $\sigma$ and $\rho$. 
  As previously, there can be at most $14 \cdot T$ assignments of the form $\VarG_k(x_k):=f(k,x_k)$.
  For each assignment, there are at most $15 \cdot T^2$ possibilities for a chain
  to be table-defined before the assignment. Since the chains that are table-defined after the assignment, but 
  not before must involve $x_k$, there are at most $2 \cdot T$ possibilities for a fixed assignment.
  Thus the probability of the event $(\BadlyCollide \land \lnot \BadlyHit \land
  \lnot\BadP)$ is at most $\frac{4\cdot 14 \cdot T \cdot (15 \cdot T^2 + 2 \cdot T)^2}{2^n} \leq \frac{4\cdot 14 \cdot 16^2 \cdot T^5}{2^n}$.
\end{proof}

\paragraph{Most executions are good}

We collect our findings in the following lemma:
\begin{lemma}\label{lem:goodPairsAreLikely}
  Suppose that $\Syst_2(f,p)$ is such that for any $(f,p)$ the tables
  satisfy $|\VarG_{i}| \leq T$ for all $i$ and $|\VarP|\leq T$ at any
  point in the execution.  Then, the probability that a uniform
  randomly chosen $(f,p)$ is not good is at most
  $16\,000\cdot\frac{T^5}{2^n}$.
\end{lemma}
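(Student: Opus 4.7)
The plan is simply to apply a union bound over the three bad events that together define a ``bad'' pair $(f,p)$, namely $\BadP$, $\BadlyHit$, and $\BadlyCollide$. Recall that by definition a pair is good exactly if none of these three events occurs in $\Syst_2(f,p)$, so
\[
\Pr_{(f,p)}[(f,p) \text{ is not good}] \;\le\; \Pr[\BadP] + \Pr[\BadlyHit] + \Pr[\BadlyCollide].
\]

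The first two summands are already bounded directly by Lemmas~\ref{lem:noBadP} and~\ref{lem:noBadHits}, giving at most $2T^2/2^n$ and $30T^3/2^n$ respectively. For the third summand, I would split it as
\[
\Pr[\BadlyCollide] \;\le\; \Pr[\BadlyCollide \land \lnot \BadlyHit \land \lnot \BadP] + \Pr[\BadlyHit] + \Pr[\BadP],
\]
and then use Lemma~\ref{lem:noBadCollides} for the first term (yielding $15\,000\,T^5/2^n$) together with Lemmas~\ref{lem:noBadP} and~\ref{lem:noBadHits} again for the other two. Thus the overall bound becomes
\[
\Pr[(f,p) \text{ is not good}] \;\le\; \frac{4T^2 + 60T^3 + 15\,000\,T^5}{2^n}.
\]

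Finally, since we may assume $T \ge 1$ (otherwise no queries occur and the statement is trivial), the first two terms are dominated by the $T^5$ term: $4T^2 \le 4T^5$ and $60T^3 \le 60T^5$, so the numerator is at most $15\,064\, T^5 \le 16\,000\, T^5$, which gives the claimed bound. The only ``obstacle'' here is bookkeeping of constants, since all probabilistic content has already been handled in the three preceding lemmas; this proof is purely a union bound and a coarse numerical comparison.
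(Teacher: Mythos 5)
Your proposal is correct and takes essentially the same route as the paper, whose proof is exactly the union bound combining Lemmas~\ref{lem:noBadP}, \ref{lem:noBadHits}, and \ref{lem:noBadCollides}. The only cosmetic difference is that the paper implicitly decomposes the bad event as $\BadP \lor \BadlyHit \lor (\BadlyCollide \land \lnot\BadlyHit \land \lnot\BadP)$, which avoids your double-counting of the first two probabilities and gives $\frac{2T^2+30T^3+15\,000T^5}{2^n}$, but your slightly looser $\frac{15\,064\,T^5}{2^n}$ (for $T\ge 1$) still fits under the claimed $16\,000\cdot\frac{T^5}{2^n}$.
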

\begin{proof}
  This follows immediately from Lemmata~\ref{lem:noBadP},
  \ref{lem:noBadHits}, and \ref{lem:noBadCollides}.
\end{proof}
\subsubsection{Properties of good executions}
\label{sec:propertiesOfGoodExecutions}

We now study executions of $\Syst_2(f,p)$ with good pairs $(f,p)$.
One of the main goals of this section is to prove
Lemma~\ref{lem:noSetFafterF}, which states that no call to $\forceVal$
overwrites a previous entry.  However, we later also use
Lemma~\ref{lem:atEndFeistelGivesTheRightThing} (in good executions,
evaluating the Feistel construction for a pair $(x_0,x_1)$
leads to $P(x_0,x_1)$ --- if not, it would be silly to hope that our simulator 
emulates a Feistel construction), and Lemma~\ref{lem:nAdaptEqualsNofPqueries} (the
number of times $\adapt$ is called in $\SIMM(f)$ is exactly the same as
the number of times the table~$p$ is queried in $\TSRF(p)$).

We first state two basic lemmas about good executions: 

\begin{lemma} \label{lem:basicsAboutGoodExecutions}
	Consider an execution of $\Syst_2(f,p)$ with a good pair $(f,p)$. Then, we have
	\begin{enumerate}[(a)]
		\item	For any partial chain $C$, if $\Next(C) = \bot$ before an assignment $\VarG_{i}(x_{i}) :=
  				f(i,x_i)$ or a pair of assignments to $\VarP$ in~$\TSRF$, then if $C$ is table-defined after
  				the assignment(s), $\Next(\Next(C)) = \bot$. 
  				
  				For any partial chain $C$, if $\Prev(C) = \bot$ before an assignment $\VarG_{i}(x_{i}) :=
  				f(i,x_i)$ or a pair of assignments to $\VarP$ in~$\TSRF$, then if $C$ is table-defined after
  				the assignment(s), $\Prev(\Prev(C)) = \bot$. 
  	\item For all partial chains $C$ and $D$, we have $\Next(C) = D \iff \Prev(D) = C$. 
  	\item The relation $\equiv$ between partial chains is an equivalence relation.
	\end{enumerate}
\end{lemma}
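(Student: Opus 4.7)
The plan is to establish the three parts in order, noting that each part is essentially a formalization of what the corresponding "bad" events were defined to exclude. The main work is careful case analysis; no deep argument is needed beyond reading off consequences of $\lnot\BadP$ and $\lnot\BadlyHit$.

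For part (a), I would split on the type of step. If the step is an assignment $\VarG_i(x_i) := f(i,x_i)$, and $\Next(C) = \bot$ before while $C$ is table-defined after, then the assignment must be what changed $\Next(C)$, which forces $C = (x_{i-1}, x_i, i-1)$ for some $x_{i-1}$ (so that the evaluation of $\Next(C)$ consults $\VarG_i(x_i)$). Then the statement ``$\Next(\Next(C)) \neq \bot$ after'' is exactly the second bullet of Definition~\ref{def:badlyhit} with $k=i$, so by $\lnot\BadlyHit$ it fails, giving $\Next(\Next(C)) = \bot$. If instead the step is the pair of assignments to $\VarP$ made by $\P$ or $\P^{-1}$, then for $\Next(C)$ to change we must have $C = (x_{14}, x_{15}, 14)$ with $(x_{14},x_{15})$ the pair just inserted; but then $C$ being table-defined after requires $x_{14} \in \VarG_{14}$, which is one of the conditions of $\BadP$. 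Under $\lnot\BadP$ this case is vacuous. The $\Prev/\Next$-swapped half of (a) is handled symmetrically.

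For part (b), if $C = (x_k,x_{k+1},k)$ and $\Next(C) = D$, consider the two branches of the $\Next$ procedure. For $k < 14$, we have $D = (x_{k+1}, x_k \oplus \VarG_{k+1}(x_{k+1}), k+1)$, and applying $\Prev$ to $D$ immediately recovers $C$ by self-inverseness of $\oplus$ (and $x_{k+1} \in \VarG_{k+1}$ was used to form $D$). For $k = 14$, $D = (x_0,x_1,0)$ with $(x_0,x_1) = \VarP(\uparrow,x_{14},x_{15})$; applying $\Prev$ to $D$ asks for $\VarP(\downarrow,x_0,x_1)$. The structural invariant to establish is that $\VarP(\downarrow,x_0,x_1) = (x_{14},x_{15})$ iff $\VarP(\uparrow,x_{14},x_{15}) = (x_0,x_1)$. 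This holds throughout the execution: the $\TSRF$ code always writes both entries simultaneously, and the only way the invariant could be broken is by overwriting a previously-written cell, which is exactly what $\lnot\BadP$ rules out. A short induction on the number of $\VarP$-writes makes this precise. The converse direction ($\Prev(D) = C \Rightarrow \Next(C) = D$) is symmetric.

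For part (c), reflexivity and transitivity are immediate from the definition of $\equiv$ as the reflexive transitive closure of $\Next \cup \Prev$. The only nontrivial point is symmetry: if $C \equiv D$ via a chain $C = C_0, C_1, \ldots, C_m = D$ where each step is a $\Next$ or a $\Prev$, then by part (b) each individual step can be inverted (a $\Next$-step becomes a $\Prev$-step and vice versa), yielding a sequence $D = C_m, \ldots, C_0 = C$, so $D \equiv C$.

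The main obstacle, as far as there is one, is keeping straight the distinction between ``table-defined before'' and ``table-defined after'' the step under consideration, and identifying for each step exactly which chains $C$ could have their $\Next$ or $\Prev$ value change. Once those case distinctions are made correctly, each subcase collapses directly to one of the bullets defining $\BadP$ or $\BadlyHit$, so the proofs are short.
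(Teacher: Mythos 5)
Your treatment of part (a) for assignments $\VarG_i(x_i) := f(i,x_i)$, your part (b) via the no-overwrite invariant on $\VarP$ (which is exactly what $\lnot\BadP$ guarantees), and your part (c) via inverting each step using (b) all match the paper's (much terser) proof. However, there is a genuine flaw in your handling of the $\VarP$-assignment case of part (a). You claim that if the step is a pair of assignments to $\VarP$ and $\Next(C)$ changes, then $C = (x_{14},x_{15},14)$ with the pair just inserted, and that ``$C$ table-defined after requires $x_{14}\in\VarG_{14}$, which is one of the conditions of $\BadP$,'' so the case is vacuous. This is only correct when the $\VarP$ entries were created by a \emph{forward} query $\P(x_0,x_1)$, where $(x_{14},x_{15})$ is freshly read from $p(\downarrow,x_0,x_1)$ and $\BadP$ indeed forbids $x_{14}\in\VarG_{14}$. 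If instead the entries were created by a \emph{backward} query $\P^{-1}(x_{14},x_{15})$, the arguments $x_{14},x_{15}$ are chosen by the caller, $x_{14}\in\VarG_{14}$ is not a $\BadP$ condition at all, and the situation is far from vacuous --- it is in fact the typical situation, since the simulator itself calls $\P^{-1}(x_{14},x_{15})$ with $x_{14}\in\VarG_{14}$ whenever $\evalFwd$ wraps around during chain completion. In that case $C=(x_{14},x_{15},14)$ can perfectly well be table-defined after the assignments, and the conclusion $\Next(\Next(C))=\bot$ must instead be derived from the freshness of $(x_0,x_1)=p(\uparrow,x_{14},x_{15})$: the relevant $\BadP$ bullet is $x_1\in\VarG_1$, whose non-occurrence gives $\Next(x_0,x_1,0)=\bot$.

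The same issue mirrors in the $\Prev$ half: for a forward query $\P(x_0,x_1)$, the chain $C=(x_0,x_1,0)$ may be table-defined after (here $x_1\in\VarG_1$ is allowed, since the $\BadP$ bullet $x_1\in\VarG_1$ only applies to backward queries), and one must use $x_{14}\notin\VarG_{14}$ (freshness of $p(\downarrow,x_0,x_1)$) to conclude $\Prev(\Prev(C))=\bot$. The repair is local and entirely within the spirit of your argument --- split the $\VarP$ case according to whether the call was $\P$ or $\P^{-1}$ and invoke the matching $\BadP$ bullet --- but as written the vacuity claim is false and the argument for part (a) does not go through for the backward (respectively forward) query sub-case.
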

\begin{proof}
	For assignments of the form $\VarG_{i}(x_{i}) :=f(i,x_i)$, (a) follows directly since $\BadlyHit$ does not occur. For the assignments to $\P$, 
        it follows because $\BadP$ does not occur. 

	The statement (b) is trivial for chains $C=(x_k, x_{k+1},k)$ with $k\in \{0, \ldots, 13\}$, since evaluating the Feistel construction one step forward 
	or backward is bijective. For $k=14$ we get (b) because $\BadP$ does not occur: no value is ever overwritten in a call to $\P$ or $\P^{-1}$, and thus
	evaluating $\P$ and $\P^{-1}$ is always bijective. 

	To see (c), observe that the relation $\equiv$ is symmetric because of (b), and it is reflexive and transitive by definition.
\end{proof}

\begin{lemma}\label{lem:structureOfEquivalentChains}
  Consider an execution of $\Syst_2(f,p)$ with a good pair $(f,p)$. 
  Suppose that at any point in the execution, two table-defined 
  chains $C$ and $D$ are equivalent.
  Then, there exists a sequence of partial chains $C_1,\ldots,C_r$, $r \geq 1$,
  such that
  \begin{itemize}
   \item $C = C_1$ and $D=C_r$, or else $D=C_1$ and $C=C_r$, 
   \item $C_i = \Next(C_{i-1})$ and $C_{i-1} = \Prev(C_{i})$,
   \item and each $C_i$ is table-defined.
  \end{itemize}
\end{lemma}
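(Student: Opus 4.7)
\medskip

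\noindent\textbf{Proof plan.} The plan is to exhibit a shortest sequence witnessing $C \equiv D$ and show it already has all the required properties. Since $\equiv$ is defined as the reflexive transitive closure of the relations induced by $\Next$ and $\Prev$, there exists a finite sequence of chains $E_1, \ldots, E_r$ with $E_1 = C$, $E_r = D$, such that for each $1 < i \leq r$, either $E_i = \Next(E_{i-1})$ or $E_i = \Prev(E_{i-1})$. I fix any such sequence of minimum length $r$. Note that the case $r = 1$ is trivial, so I assume $r \geq 2$ below.

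The key claim is that in such a minimum-length sequence the step types are uniform: either $E_i = \Next(E_{i-1})$ for every $i$, or $E_i = \Prev(E_{i-1})$ for every $i$. Indeed, suppose for contradiction that the type changes at some interior index $i$. In the case $E_i = \Next(E_{i-1})$ and $E_{i+1} = \Prev(E_i)$, Lemma~\ref{lem:basicsAboutGoodExecutions}(b) yields $\Prev(E_i) = E_{i-1}$, hence $E_{i+1} = E_{i-1}$, and deleting $E_i, E_{i+1}$ from the sequence produces a strictly shorter witness, contradicting minimality. The mirror case $E_i = \Prev(E_{i-1})$ and $E_{i+1} = \Next(E_i)$ is handled symmetrically.

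If all steps in the minimal sequence are $\Next$-steps, take $C_j := E_j$ for $j = 1, \ldots, r$. Otherwise, all steps are $\Prev$-steps; in that case I reverse the sequence, i.e.\ set $C_j := E_{r - j + 1}$, and invoke Lemma~\ref{lem:basicsAboutGoodExecutions}(b) again to conclude $C_j = \Next(C_{j-1})$ for all $j$. This is precisely why the statement allows the orientation $D = C_1$, $C = C_r$. By Lemma~\ref{lem:basicsAboutGoodExecutions}(b), $C_j = \Next(C_{j-1})$ is equivalent to $C_{j-1} = \Prev(C_j)$, so the second bullet of the conclusion holds. Finally, for the third bullet: the endpoints $C_1$ and $C_r$ are table-defined by assumption, while each interior $C_j$ satisfies $\Prev(C_j) = C_{j-1} \neq \bot$ and $\Next(C_j) = C_{j+1} \neq \bot$, and is therefore table-defined by definition.

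The entire argument hinges on Lemma~\ref{lem:basicsAboutGoodExecutions}(b), which gives the symmetry $\Next(C) = D \iff \Prev(D) = C$ in good executions; this is exactly the reason the hypothesis of a good pair $(f, p)$ is needed (it prevents $\VarP$-entries from being overwritten via $\BadP$). I do not foresee any genuinely hard step here — the work is entirely combinatorial bookkeeping organised by the minimality-based shortcut argument.
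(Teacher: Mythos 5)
Your proof is correct and follows essentially the same route as the paper: take a shortest sequence witnessing $C \equiv D$, use Lemma~\ref{lem:basicsAboutGoodExecutions}(b) to argue all steps are of one type (possibly after reversing), and deduce table-definedness of the intermediate chains from (b) together with the table-definedness of the endpoints. Your write-up just spells out the shortcutting/minimality argument that the paper states in one line.
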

\begin{proof}
  Since $C \equiv D$, $D$ can be obtained from $C$ by applying $\Next$ and
  $\Prev$ finitely many times.  A shortest such sequence can only apply
  either $\Next$ or $\Prev$, due to 
  Lemma~\ref{lem:basicsAboutGoodExecutions} (b).
  The resulting sequence of chains is the sequence we are looking for 
  (possibly backwards) --
  note that the last bullet point also follows by  
  Lemma~\ref{lem:basicsAboutGoodExecutions} (b).
\end{proof}

We first show that assignments $\VarG_i(x_i) := f(i,x_i)$ and also
assignments to $\VarP$ in $\TSRF$ do not change the
equivalence relation for chains which were defined before.

\begin{lemma}\label{lem:equivalentChainsStayEquivalent}
  Consider an execution of $\Syst_2(f,p)$ with a good pair $(f,p)$.
  Let $C$ and $D$ be two table-defined partial chains at some point
  in the execution.  Suppose that after this point, there is an
  assignment $\VarG_i(x_i) := f(i,x_i)$ or a pair of assignments to
  $\VarP$ in~$\TSRF$.  Then $C \equiv D$ before the assignment(s) if and only if $C\equiv D$
  after the assignment(s).
\end{lemma}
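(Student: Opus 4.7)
\textbf{Plan for the proof of Lemma~\ref{lem:equivalentChainsStayEquivalent}.}
The plan is to treat the two directions separately, with the backward direction being the content-bearing one. For both, the crucial structural fact is that in a good execution no table entry is ever overwritten by the assignments in question: the $\VarG$-assignment is gated by $x_i \notin \VarG_i$, and the pair of assignments to $\VarP$ in $\TSRF$ do not overwrite because $\lnot \BadP$. Consequently, once a value of $\Next(\cdot)$ or $\Prev(\cdot)$ is defined (i.e., non-$\bot$), it remains defined and equal to the same chain after the assignment.

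\textbf{Forward direction.} Suppose $C \equiv D$ before the assignment(s). By definition there is a finite sequence of partial chains connecting $C$ to $D$ through $\Next$- and $\Prev$-steps, each of which is defined (non-$\bot$) before. By the non-overwrite observation above, every $\Next$/$\Prev$ value used in this sequence still returns the same chain after the assignment(s). Hence the same sequence witnesses $C \equiv D$ after.

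\textbf{Backward direction.} Suppose $C \equiv D$ after. Since $C$ and $D$ are table-defined before and no entries are overwritten, they remain table-defined after. Apply Lemma~\ref{lem:structureOfEquivalentChains} (after the assignment) to obtain a sequence $C = C_1, \ldots, C_r = D$ (or the reverse) of table-defined-after chains with $C_j = \Next(C_{j-1})$ after. I will show by induction on $j$ that each $C_j$ is table-defined before and that $C_j = \Next(C_{j-1})$ also before. The base case $j = 1$ is the hypothesis on $C$. For the inductive step, $C_{j-1}$ is table-defined before, so $\Next(C_{j-1})$ is non-$\bot$ before; as $\VarG$ and $\VarP$ entries are not overwritten, $\Next(C_{j-1})$ returns the same value before and after, namely $C_j$. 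It remains to see that $C_j$ itself is table-defined before. By Lemma~\ref{lem:basicsAboutGoodExecutions}(b) we have $\Prev(C_j) = C_{j-1} \neq \bot$ before, so only $\Next(C_j) \neq \bot$ before needs checking. If $j < r$, suppose for contradiction that $\Next(C_j) = \bot$ before; then Lemma~\ref{lem:basicsAboutGoodExecutions}(a), applied to $C_j$ (which is table-defined after), yields $\Next(\Next(C_j)) = \bot$ after, contradicting the table-definedness after of $C_{j+1} = \Next(C_j)$. For $j = r$ the claim holds since $C_r = D$ is table-defined before by hypothesis. Thus the entire sequence witnesses $C \equiv D$ before.

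The main (and in fact only) subtle point is the inductive step of the backward direction, whose correctness rests on combining Lemma~\ref{lem:basicsAboutGoodExecutions}(a) (to rule out that an intermediate chain of the witnessing sequence is ``newly'' table-defined) with the non-overwriting property guaranteed by $\lnot \BadP$ (to ensure $\Next$ and $\Prev$ take the same values before and after). Notably, $\BadlyCollide$ is not needed here; it will instead enter when arguing, elsewhere, that the equivalence classes of chains have controlled structure.
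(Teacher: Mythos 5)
Your proof is correct and follows essentially the same route as the paper: both directions rest on the fact that in good executions none of the considered assignments overwrites a table entry (for $\VarP$ this is exactly $\lnot\BadP$), and the backward direction combines Lemma~\ref{lem:structureOfEquivalentChains} applied after the assignment with Lemma~\ref{lem:basicsAboutGoodExecutions}(a) to rule out that the newly defined entry is used by the witnessing sequence. The only difference is presentational: you phrase this as an induction along the sequence, while the paper directly locates the consecutive pair of chains that would straddle the new entry and derives the same contradiction.
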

\begin{proof}
  Suppose that $C \equiv D$ before the assignment.  
  We apply Lemma~\ref{lem:structureOfEquivalentChains} to get a sequence 
  $C_1,\ldots,C_r$ of table-defined chains.  This sequence still implies
  equivalence after the assignment, since no value in $\VarP$ or
  $\VarG$ can be overwritten by one of the assignments considered
  (recall that $\BadP$ does not occur), i.e.~the conditions of
  Definition~\ref{def:equivalence} still hold if they held previously, thus $C \equiv D$ 
  after the assignment(s).
  
  Now suppose that $C$ and $D$ are equivalent after the assignment.  
  Again consider the sequence $C_1,\ldots,C_r$ as given by
  Lemma~\ref{lem:structureOfEquivalentChains}.  Suppose first that
  the assignment was $\VarG_{i}(x_i) := f(i,x_i)$.  If $x_i$
  was not part of any chain, then $C_1,\ldots,C_r$ are a sequence
  which show the equivalence of $C$ and $D$ before the assignment.
  Otherwise, there is $j$ such that the chains $C_{j-1}$ and $C_j$ 
  have the form
  $C_{j-1} = (x_{i-1},x_{i},i-1)$ and $C_j = (x_{i},x_{i+1},i)$.
  It is not possible that $C_{j} = C_{r}$, as $C_{j}$ is not table-defined
  before the assignment.  
  After the assignment $\Next(\Next(C_{j-1})) \neq \bot$ which
  is impossible by Lemma~\ref{lem:basicsAboutGoodExecutions} (a).
  Suppose now we have a pair of assignments to $\VarP$, mapping $(x_0,x_1)$ to 
  $(x_{14},x_{15})$.  
  If $(x_{14},x_{15},14)$ is not part of the sequence connecting $C$ and $D$
  after the assignment, the same sequence shows equivalence before
  the assignment.  Otherwise, $\Next(\Next(x_{14},x_{15},14)) = \bot$
  by Lemma~\ref{lem:basicsAboutGoodExecutions} (a), as before.
\end{proof}

Next, we show that calls to $\forceVal$ \emph{also} do not change 
the equivalence relation for previously defined chains.  Also, they
never overwrite a previously defined value. However, 
we only show this under the assumption $x_{\ell-1} \notin \VarG_{\ell-1}$
and $x_{\ell+2} \notin \VarG_{\ell+2}$.  Later, we will see that this
assumption is safe.
\begin{lemma}\label{lem:noBadAdapt}
  Consider an execution of $\Syst_2(f,p)$ with a good pair $(f,p)$.
  Let $\ell \in \{4,10\}$ and suppose that for a call
  $\adapt(x_{\ell-2},x_{\ell-1},x_{\ell+2},x_{\ell+3},\ell)$ it holds
  that $x_{\ell-1} \notin \VarG_{\ell-1}$ and $x_{\ell+2}\notin
  \VarG_{\ell+2}$ before the call.

  Then, the following properties hold:
  \begin{enumerate}[(a)]
  \item For both calls $\forceVal(x,\cdot,j)$ we have $x \notin
    \VarG_j$ before the call.
  \item Let $C$ be a table-defined chain before the call to $\adapt$,
    $i \in \{1,\ldots,14\}$.  Then, $\val_i(C)$ stays constant during
    both calls to $\forceVal$.
  \item If the chains $C$ and $D$ are table-defined before the
    call to $\adapt$, then $C \equiv D$ before the calls to $\forceVal$
    if and only if $C \equiv D$ after the calls to $\forceVal$.
  \end{enumerate}
\end{lemma}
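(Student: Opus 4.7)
The plan is to prove parts (a), (b), and (c) in sequence, using goodness of $(f,p)$ together with Lemmas~\ref{lem:basicsAboutGoodExecutions}, \ref{lem:resultOfNoBadHits}, and \ref{lem:equivalentChainsStayEquivalent}. For part (a), I will apply $\lnot\BadlyHit$ twice. Within $\adapt$, the first $f$-assignment $\VarG_{\ell-1}(x_{\ell-1}) := f(\ell-1, x_{\ell-1})$ executes since $x_{\ell-1} \notin \VarG_{\ell-1}$ by hypothesis; afterward, the chain $C_1 := (x_{\ell-2}, x_{\ell-1}, \ell-2)$ is table-defined, because the preceding $\evalFwd$ terminates by invoking $\Finner(\ell-2, x_{\ell-2})$, which places $x_{\ell-2}$ into $\VarG_{\ell-2}$. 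If $x_\ell \in \VarG_\ell$ at this moment, then $\Next(\Next(C_1)) \neq \bot$, triggering the second bullet of Definition~\ref{def:badlyhit} and contradicting goodness. The subsequent assignment $\VarG_{\ell+2}(x_{\ell+2}) := f(\ell+2, x_{\ell+2})$ does not touch $\VarG_\ell$, so $x_\ell \notin \VarG_\ell$ persists until the call $\forceVal(x_\ell, \cdot, \ell)$. A symmetric argument applied to $C_2 := (x_{\ell+2}, x_{\ell+3}, \ell+2)$ after the second $f$-assignment, using the first bullet of Definition~\ref{def:badlyhit}, gives $x_{\ell+1} \notin \VarG_{\ell+1}$; this is preserved by the first $\forceVal$, which writes only $\VarG_\ell$.

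For (b), fix a chain $C$ table-defined before the $\adapt$ call and an index $i$. By (a), the two $\forceVal$ calls merely add new preimages $x_\ell$ and $x_{\ell+1}$ to $\VarG_\ell$ and $\VarG_{\ell+1}$, never overwriting. If $\val_i^\sigma(C) \neq \bot$ immediately before the first $\forceVal$, the forward/backward walk that computed this value succeeded without consulting $\VarG_\ell(x_\ell)$ or $\VarG_{\ell+1}(x_{\ell+1})$ (which were $\bot$), and the $\forceVal$ calls leave the result unchanged. The delicate case is when $\val_i^\sigma(C) = \bot$ just before the $\forceVal$ call and becomes non-$\bot$ after; for this to happen, the walk must be unblocked precisely at round $\ell$ or $\ell+1$, which forces $\val_\ell^\sigma(C) = x_\ell$ or $\val_{\ell+1}^\sigma(C) = x_{\ell+1}$. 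The plan is to rule this out via $\lnot\BadlyCollide$: such an equality, combined with the corresponding value of the adapted chain $D$ (whose walk reaches $x_\ell$ and $x_{\ell+1}$ by construction of $\adapt$), would constitute a bad collision at the $f$-assignment in $\adapt$ that first produced that value, provided $C$ and $D$ were not already equivalent. The pre-equivalence case is ruled out by noting that $D$ is not table-defined before the $f$-assignments and using Lemma~\ref{lem:structureOfEquivalentChains} to do a case analysis on backward/forward reachability from $C$.

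For (c), the ``only if'' direction is immediate: the sequence of table-defined chains from $C$ to $D$ provided by Lemma~\ref{lem:structureOfEquivalentChains} remains valid after the $\forceVal$ calls, since by (a) no entry is overwritten. For the ``if'' direction, if $C \equiv D$ after the calls but not before, then a shortest realizing $\Next/\Prev$ sequence must traverse a new entry $\VarG_\ell(x_\ell)$ or $\VarG_{\ell+1}(x_{\ell+1})$; but then some $\val_j^\sigma$ evaluated from $C$ or $D$ would coincide with $x_\ell$ or $x_{\ell+1}$, contradicting the constancy of $\val$-values established in (b). The main obstacle is the sub-argument inside (b) that rules out $\val_\ell^\sigma(C) = x_\ell$ (and its symmetric variant): the freshness of $x_\ell = x_{\ell-2} \oplus f(\ell-1, x_{\ell-1})$ is precisely what $\lnot\BadlyCollide$ captures probabilistically, but converting this into a deterministic statement for every good pair demands carefully excluding the edge case of pre-existing equivalence between $C$ and the not-yet-table-defined adapted chain $D$, which in turn requires inspecting how $\Next$ and $\Prev$ could connect $C$ to $D$ before the $f$-assignment has defined $\VarG_{\ell-1}(x_{\ell-1})$.
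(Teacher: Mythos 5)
Part (a) of your proposal is correct and is essentially the paper's argument: apply $\lnot\BadlyHit$ to the two $f$-assignments inside $\adapt$, using that $(x_{\ell-2},x_{\ell-1},\ell-2)$ and $(x_{\ell+2},x_{\ell+3},\ell+2)$ become table-defined by those assignments. The gap is in (b), at exactly the point you flag as the main obstacle. You claim the ``pre-equivalence case'' ($C$ equivalent to the adapted chain $D=(x_{\ell-2},x_{\ell-1},\ell-2)$ before the $f$-assignments) can be ruled out because $D$ is not table-defined at that time. This is false: equivalence is reachability under $\Next$/$\Prev$ and does not require $D$ to be table-defined. Indeed this case occurs in \emph{every} completion. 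For instance, with $\ell=4$ and a dequeued chain $(x_1,x_2,1)$, already before $\adapt$ is called we have $\Next(x_1,x_2,1)=(x_2,x_3,2)=D$ (only $x_2\in\VarG_2$ is needed), and $(x_1,x_2,1)$ is table-defined; more generally every chain $(x_j,x_{j+1},j)$ along the partially evaluated cycle with $j\notin\{\ell-1,\ell,\ell+1\}$ is table-defined and equivalent to $D$. For such chains the per-direction values really do change during the $\forceVal$ calls (e.g.\ $\val^{+}_{\ell+2}(x_1,x_2,1)$ is $\bot$ before $\forceVal(x_\ell,\cdot,\ell)$ and becomes $x_{\ell+2}$ afterwards), so the contradiction you aim for does not exist, and $\BadlyCollide$ cannot be invoked since its precondition $C\not\equiv D$ fails. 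Your argument thus attempts to prove a statement (per-direction constancy for \emph{all} table-defined chains) that is simply not true.

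The repair is the case split the paper makes: for table-defined $C$ with $C\equiv D$, one does not rule anything out but instead argues directly that the \emph{combined} value $\val_i(C)$ equals $x_i$ both before and after the $\forceVal$ calls --- after the two $f$-assignments the whole cycle $x_0,\ldots,x_{15}$ is laid out in $\VarG$ and $\VarP$ except at positions $\ell,\ell+1$, so each $\val_i(C)$ is already reachable in the direction avoiding that gap, and by (a) nothing is overwritten. Only for $C\not\equiv D$ does one show that even $\val^{+}_i(C)$ and $\val^{-}_i(C)$ are unchanged, via $\BadlyCollide$ applied to the $f$-assignment in $\adapt$ (your bookkeeping of which assignment to blame is fine, and the table-definedness conditions you check match the paper's). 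This also affects your (c): the ``if'' direction as you wrote it appeals to per-direction constancy from (b), which is precisely what fails for chains equivalent to $D$; as in the paper, the contradiction must be drawn from the (b)-analysis of the non-equivalent case at the index where the connecting sequence first leaves the chains that were table-defined before the $\forceVal$ calls. Part (a) and the ``only if'' direction of (c) stand as you wrote them.
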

\begin{proof}
  Before $\adapt$ is called, $\evalFwd$ and $\evalBwd$ make sure that
  all the values $x_{\ell-1}, x_{\ell-2}, \ldots, x_0, \allowbreak
  x_{15},\ldots, x_{\ell+3}, x_{\ell+2}$ corresponding to 
  $(x_{\ell-2},x_{\ell-1},\ell-2)$ are
  defined in $\VarP$ and $\VarG$. By Lemma~\ref{lem:basicsAboutGoodExecutions} (b) and (d), 
  all partial chains defined by these values are equivalent
  to $(x_{\ell-2},x_{\ell-1},\ell-2)$.
  
  By our assumption, $x_{\ell-1} \notin \VarG_{\ell-1}$ and
  $x_{\ell+2}\notin \VarG_{\ell+2}$, and thus the procedure $\adapt$
  defines $\VarG_{\ell-1}(x_{\ell-1}):=f(\ell-1, x_{\ell-1})$ and
  $\VarG_{\ell+2}(x_{\ell+2}):=f(\ell+2, x_{\ell+2})$. These
  assignments lead to $x_{\ell} \notin \VarG_{\ell}$ and
  $x_{\ell+1} \notin \VarG_{\ell+1}$, as otherwise the event
  $\BadlyHit$ would occur.  This shows (a).
  
  We next show (b), i.e., for any $C$ the values $\val_i(C)$ stay
  constant.  For this, note first that this is true for table-defined chains $C$ that are
  equivalent to $(x_{\ell-2},x_{\ell-1},\ell-2)$ before the call to $\adapt$: 
  $\val_i$ gives exactly $x_i$ both before and after the calls to $\forceVal$.
  
  Now consider the table-defined chains that are not equivalent to 
  $(x_{\ell-2},x_{\ell-1},\ell-2)$ before the call to $\adapt$. 
  We show that for such a chain $C$, even $\val_i^+(C)$ and
  $\val_i^-(C)$ stay constant, as otherwise $\BadlyCollide$ would
  occur.  A value $\val^\sigma_{i}(C)$ can only change during the
  execution of $\forceVal(x_\ell,\cdot,\ell)$ if
  $\val^{\sigma}_\ell(C) = x_{\ell}$. But this implies that the
  assignment $\VarG(x_{\ell-1}) := f(\ell-1, x_{\ell-1})$ in $\adapt$
  made the two partial chains $C$ and $(x_{\ell-2},x_{\ell-1},\ell-2)$
  badly collide. For this, note that $C$ is table-defined even before the assignment, since
  it was table-defined before the call to $\adapt$. 
  Moreover, $(x_{\ell-2},x_{\ell-1},\ell-2)$ 
  is table-defined after the assignment. 
  The argument for $\forceVal(x_{\ell+1},\cdot,{\ell+1})$ is the same.  Thus, this
  establishes (b).

  We now show (c). First suppose that $C \equiv D$ before the calls
  to $\forceVal$. The sequence of chains given by Lemma~\ref{lem:structureOfEquivalentChains}
  is not changed during the calls to $\forceVal$, since by (a), no value is overwritten. Thus,
  the chains are still equivalent after the calls.  
  
  Now suppose that $C \equiv D$ after the calls to $\forceVal$. 
  Let $C_1,\ldots,C_r$ be the sequence given by
  Lemma~\ref{lem:structureOfEquivalentChains}.  If $C$ and $D$ were
  not equivalent before the calls to $\forceVal$, there is $i$
  such that before the call, $C_i$ was table defined, 
  but $C_{i+1}$ was not.
  Then, $\val^+(C_{i})$ changes during a
  call to $\forceVal$, contradicting the proof of (b). 
  Thus, the chains must have been
  equivalent before the calls.
\end{proof}

Equivalent chains are put into $\CompletedChains$ simultaneously:
\begin{lemma}\label{lem:equivalentChainsInCompletedChains}
  Suppose that $(f,p)$ is good.  Fix a point in the execution of
  $\Syst_2(f,p)$, and suppose that until this point, for no call to
  $\forceVal$ of the form $\forceVal(x,\cdot,\ell)$ we had $x \in
  \VarG_\ell$ before the call.  Suppose that at this point $C =
  (x_{k},x_{k+1},k)$ with $k \in \{1,7\}$ and $D = (y_m,y_{m+1},m)$
  with $m \in \{1,7\}$ are equivalent.  Then, $C \in \CompletedChains$
  if and only if $D \in \CompletedChains$.
\end{lemma}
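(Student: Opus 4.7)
The plan is to prove the lemma by induction on the point in the execution of $\Syst_2(f,p)$. At time $0$, $\CompletedChains = \emptyset$, so the claim holds vacuously. For the inductive step, I enumerate the operations that can modify the state between two consecutive points: (i) random assignments $\VarG_i(x_i) := f(i,x_i)$ in $\Finner$ or $\adapt$; (ii) pair-assignments to $\VarP$ in $\TSRF$ on a call to $\P$ or $\P^{-1}$; (iii) $\forceVal$ calls within $\adapt$; and (iv) the additions to $\CompletedChains$ at line~\ref{line:addToCompletedChains}. Under the lemma's hypothesis that $\forceVal$ never overwrites a previously set value, entries in $\VarG$ and $\VarP$ are never removed, so any chain once added to $\CompletedChains$ remains table-defined at all later times.

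The easy cases are handled directly by the preservation lemmas already established. For operations of type (iv), the two added chains $(x_1,x_2,1)$ and $(x_7,x_8,7)$ are the outputs of $\evalBwd$ and $\evalFwd$, which consist of successive applications of $\Prev$ and $\Next$, so $(x_1,x_2,1)\equiv(x_7,x_8,7)$ at the moment of addition and the invariant for this new pair is immediate. For (i) and (ii), Lemma~\ref{lem:equivalentChainsStayEquivalent} implies that equivalences among chains table-defined both before and after the operation are unchanged; combined with the observation that all chains in $\CompletedChains$ stay table-defined, this settles the case when $C$ and $D$ are both already in the tables. For (iii), Lemma~\ref{lem:noBadAdapt}(c) (whose preconditions are provided by part~(a) of the same lemma using the no-overwriting hypothesis) gives the analogous preservation across the two $\forceVal$ calls in $\adapt$.

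What remains, and what is the main obstacle, is to rule out \emph{new} equivalences between a chain in $\CompletedChains$ and a chain outside $\CompletedChains$ at positions $\{1,7\}$. Two kinds of new equivalences can a priori appear: (A) an addition of type (iv) puts two chains into $\CompletedChains$ that happen to be equivalent to some third chain $E$ at position $1$ or $7$ which is not being added; and (B) an operation of type (i) makes a previously non-table-defined chain $C$ newly table-defined and thereby newly equivalent to some $D\in\CompletedChains$. For (A), I will argue via $\VarP$-consistency (which follows from the absence of $\BadP$) together with the local Feistel relations enforced by the just-completed chain: the equivalence class of $(x_1,x_2,1)$ restricted to positions $\{1,7\}$ is exactly $\{(x_1,x_2,1),(x_7,x_8,7)\}$, because the $14$-step forward wrap-around through $\P$ from $(x_1,x_2,1)$ returns to $(x_1,x_2,1)$ itself, precluding a third class-member. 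For (B), I will use the absence of $\BadlyCollide$: a new equivalence between the newly table-defined $C$ and an old $D\in\CompletedChains$ would force $\val^{\sigma}_{\ell}(C)=\val^{\rho}_{\ell}(D)$ at some shared position $\ell$ where previously $\val^{\sigma}_{\ell}(C)=\bot$, witnessing a bad collision; Lemma~\ref{lem:basicsAboutGoodExecutions}(a) further limits the extension of the newly table-defined chain to at most one step in the newly-enabled direction, making the case analysis manageable.

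I expect the most delicate part of the proof to be the explicit verification of case~(A): one must trace carefully how $\VarG$ and $\VarP$ interact at the completion step to rule out non-identity wrap-arounds through $\P$, which requires combining the cycle-consistency provided by the just-performed $\adapt$ with $\VarP$-consistency and the no-overwriting hypothesis to show that a value on the completed chain can never have been corrupted between its first appearance and the wrap-around check.
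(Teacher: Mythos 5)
Your skeleton (establish the claim at the moment a chain is added to $\CompletedChains$, then argue persistence step by step) is close to the paper's, and your case~(A) is essentially the paper's first paragraph: right after completion the whole cycle $x_0,\ldots,x_{15}$ is table-defined, so by Lemma~\ref{lem:basicsAboutGoodExecutions}(b) the equivalence class of the completed chain is exactly the chains $(\val_i,\val_{i+1},i)$ along that cycle, and the two members at positions $1$ and $7$ are precisely the two triples added together at line~\ref{line:addToCompletedChains}. The problem is in your persistence step. First, your handling of $\forceVal$ (your case~(iii)) invokes Lemma~\ref{lem:noBadAdapt}(b),(c), but the precondition of that lemma is $x_{\ell-1}\notin\VarG_{\ell-1}$ and $x_{\ell+2}\notin\VarG_{\ell+2}$ before the $\adapt$ call, which is \emph{not} the same as, and is not implied by, the hypothesis of the present lemma (that no $\forceVal$ call has overwritten a value); part~(a) of Lemma~\ref{lem:noBadAdapt} is a \emph{consequence} of that precondition, not a substitute for it. In the paper that precondition is only established later (Lemma~\ref{lem:theGreenRoundsAreNew2}), and its proof uses the present lemma, so importing Lemma~\ref{lem:noBadAdapt}(c) here is unjustified and risks circularity. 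Second, your case~(B) rules out newly created equivalences only for assignments of the form $\VarG_i(x_i):=f(i,x_i)$ via $\BadlyCollide$; but $\BadlyCollide$ is defined only for $f$-assignments, so new equivalences potentially created by the two $\forceVal$ assignments themselves (whose values are not drawn from $f$), or by $\VarP$-assignments for chains that were not table-defined beforehand (Lemma~\ref{lem:equivalentChainsStayEquivalent} only covers chains table-defined before the assignment), are not covered by your case analysis.

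The paper closes all of these cases at once with the closure/no-overwrite argument, which you state in case~(A) but do not use where it is actually needed: once a chain is completed, every member of its equivalence class has both $\Next$ and $\Prev$ defined and pointing inside the class, and since none of the table entries on that cycle is ever overwritten (no $\BadP$ because $(f,p)$ is good, and no $\forceVal$ overwrite by hypothesis), by Lemma~\ref{lem:basicsAboutGoodExecutions}(b) any chain equivalent to a completed chain at any later time is reached from it by $\Next$/$\Prev$ and hence is itself a cycle member. So the set of position-$1$/$7$ chains equivalent to a completed chain is forever exactly the pair that entered $\CompletedChains$ together, with no appeal to Lemma~\ref{lem:noBadAdapt} or $\BadlyCollide$ at all. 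You should replace your cases (ii), (iii) and (B) by this single argument; as written, those steps constitute a genuine gap.
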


\begin{proof}
  We may assume $k = 1$. We first show that the lemma holds right 
  after~$C$ was added to
  $\CompletedChains$. Since the chain was just adapted, and using
  Lemma~\ref{lem:basicsAboutGoodExecutions} (b) and (d), the only 
  chains which are equivalent to $C$ are those of the form
  $(\val_{i}(C),\val_{i+1}(C),i)$.
  Thus both $C$ and $D$ are added to $\CompletedChains$, and $D$ is the
  only chain with index $m=7$ that is equivalent to $C$.

  Now, the above property can only be lost if the event
  $\BadP$ occurs or else if a value is overwritten by
  $\forceVal$. Thus, we get the lemma.
\end{proof}

If the simulator detects a chain $(x_9,x_{10},9)$ for which
$\val^+$ is defined for sufficiently many values, a chain equivalent to it was
previously enqueued:
\begin{lemma}\label{lem:nonFreshChainsHaveEarlierOnes}
  Consider an execution of $\Syst_2(f,p)$ with a good pair $(f,p)$.
  Suppose that at some point, a chain $C = (x_7,x_8,7)$ 
  is enqueued for which $\val^+_2(C) \in \VarG_2$ or $\val^-_{13}(C) \in \VarG_{13}(C)$.
  Then, there is a chain equivalent to $C$ which was previously enqueued.
\end{lemma}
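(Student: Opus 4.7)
The plan is to locate, for the chain $C=(x_7,x_8,7)$ being enqueued at time $T$, the ``outer'' values $x_0,x_1,x_2,x_{13},x_{14},x_{15}$ on both sides of the permutation, and then to examine which of their five defining entries (namely $x_j \in \VarG_j$ for $j \in \{1,2,13,14\}$, plus the relevant $\VarP$-entry) was installed last; the claim is that a chain equivalent to $C$ was enqueued at exactly that moment. The enqueue of $C$ is triggered by an assignment $\VarG_i(x):=f(i,x)$ with $i\in\{7,8\}$, and by unrolling the $\Next$- or $\Prev$-sequence underlying either $\val^+_2(C)\in\VarG_2$ or $\val^-_{13}(C)\in\VarG_{13}$, one sees that at time $T$ all of $x_1 \in \VarG_1$, $x_2 \in \VarG_2$, $x_{13} \in \VarG_{13}$, $x_{14} \in \VarG_{14}$, and an entry in $\VarP$ mapping $(\downarrow,x_0,x_1)$ to $(x_{14},x_{15})$ are present, and that the Feistel relations $x_2 = x_0 \oplus \VarG_1(x_1)$ and $x_{15} = x_{13} \oplus \VarG_{14}(x_{14})$ are satisfied.

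Let $E^*$ be the last of these five defining events to occur before $T$, and write $t^*$ for its time. I would rule out three possibilities using the goodness of $(f,p)$. If $E^*$ installs $\VarG_1(x_1)$: just after the assignment, $(x_1,x_2,1)$ is table-defined (as $x_2 \in \VarG_2$ was already present), and by the Feistel relation, $\Prev((x_1,x_2,1))=(x_0,x_1,0)$; then $\Prev((x_0,x_1,0))=(x_{14},x_{15},14)\neq\bot$ using the already-present $\VarP$-entry, which witnesses $\BadlyHit$ and contradicts goodness. The case where $E^*$ installs $\VarG_{14}(x_{14})$ is symmetric (giving $\Next(\Next((x_{13},x_{14},13)))\neq\bot$). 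If instead $E^*$ installs the $\VarP$-entry, the underlying $p$-query inside $\P$ (resp.\ $\P^{-1}$) finds $x_{14} \in \VarG_{14}$ (resp.\ $x_1 \in \VarG_1$) immediately after it returns, because the corresponding installation happened earlier, triggering $\BadP$.

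Therefore $E^*$ must install either $\VarG_2(x_2)$ or $\VarG_{13}(x_{13})$. In the first case, $\enqNewChains(2,x_2)$ iterates over all tuples of $\VarG_1\times\{x_2\}\times\VarG_{13}\times\VarG_{14}$; all three other values are present at $t^*$, so $(x_1,x_2,x_{13},x_{14})$ is considered, and the Feistel relations reduce the ensuing $\Check$ call to verifying $\VarP(\downarrow,x_0,x_1)=(x_{14},x_{15})$, which holds. Hence $(x_1,x_2,1,4)$ is enqueued at $t^*$; the $E^* $ equal to the installation of $\VarG_{13}(x_{13})$ case symmetrically enqueues $(x_1,x_2,1,10)$. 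In both cases, $(x_1,x_2,1)$ is equivalent to $C$ at time $T$, since one can travel from one to the other via $\Prev$ (when $\val^+_2(C)\in\VarG_2$) or $\Next$ (when $\val^-_{13}(C)\in\VarG_{13}$), all intermediate steps being supported by the tables at time $T$. The main subtlety is the Feistel-relation appeal at time $t^*$ in the branch where $E^*$ installs $\VarG_1(x_1)$: it relies on $\VarG_1(x_1)$ holding the same value at $t^*$ as at $T$, which uses the standing inductive hypothesis that no prior call to $\forceVal$ has overwritten an entry---an invariant that the overall induction maintains and that this lemma itself helps to establish.
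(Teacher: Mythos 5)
Your proof is correct and follows essentially the same route as the paper: both identify the five outer entries $\VarG_1(x_1)$, $\VarG_2(x_2)$, $\VarG_{13}(x_{13})$, $\VarG_{14}(x_{14})$ and the $\VarP$-entry, rule out $\VarG_1$, $\VarG_{14}$ and $\VarP$ as the last one set via $\BadlyHit$ and $\BadP$, and conclude that the $\enqNewChains$ call at the moment $\VarG_2(x_2)$ or $\VarG_{13}(x_{13})$ was set enqueued the equivalent chain $(x_1,x_2,1,\cdot)$. Your closing caveat about needing an inductive no-overwrite hypothesis is unnecessary (and slightly misleading): $\forceVal$ only ever writes to positions $4,5,10,11$, so the entries at positions $1,2,13,14$ can never change once set, and $\VarP$-overwrites are already excluded by $\BadP$ not occurring.
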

\begin{proof}
  We only consider the case
  $\val^+_2(C) \in \VarG_{2}$, the other case is symmetric.  Define
  $(x_0,x_1,x_2,x_{13},\allowbreak x_{14},x_{15}) :=
  (\val^+_{0}(C),\val^+_{1}(C),\val^+_2(C),\val^+_{13}(C),\val^+_{14}(C),
  \val^+_{15}(C))$.  All these must be different from $\bot$,
  since otherwise $\val^+_2(C) = \bot$.
  
  At some point in the execution, all the following entries are set in
  their respective hashtables: $\VarG_{1}(x_1), \VarG_{2}(x_{2}),
  \VarG_{13}(x_{13}), \VarG_{14}(x_{14})$, and
  $\VarP(\uparrow,x_{14},x_{15})$.  The last one of these must have
  been $\VarG_{2}(x_{2})$ or $\VarG_{13}(x_{13})$: if it was
  $\VarP(\uparrow,x_{14},x_{15})$, then the event $\BadP$ must
  have happened.  If it was $\VarG_{1}(x_{1})$, then the event
  $\BadlyHit$ must have happened (as $(x_0,x_1,0)$ is table-defined
  after the assignment).  Analogously, $\VarG_{14}(x_{14})$ cannot have
  been the last one.  Thus, since $\VarG_{2}(x_2)$ or
  $\VarG_{13}(x_{13})$ was defined last among those, the simulator
  will detect the chain and enqueue it.  
\end{proof}

If a chain $C$ is enqueued for which previously no equivalent 
chain has been enqueued, then the assumptions of Lemma~\ref{lem:noBadAdapt}
actually \emph{do} hold in good executions.  We first show that
they hold at the moment when the chains are enqueued (Lemma~\ref{lem:theGreenRoundsAreNew1}), and then
that they still hold when the chains are dequeued (Lemma~\ref{lem:theGreenRoundsAreNew2}).
\begin{lemma}\label{lem:theGreenRoundsAreNew1}
  Consider an execution of $\Syst_2(f,p)$ with a good pair $(f,p)$.
  Let $C$ be a partial chain which is enqueued in the execution at
  some time and to be adapted at position $\ell$.  Suppose that at the moment the chain is enqueued, no
  equivalent chain has been previously enqueued.

  Then, before the assignment $\VarG_{k}(x_k) := f(k,x_k)$ happens which 
  just preceds $C$ being 
  enqueued, $\val_{\ell-1}(C) = \bot$ and $\val_{\ell+2}(C) = \bot$.
\end{lemma}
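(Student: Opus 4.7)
The proof will go by case analysis on the index $i \in \{2, 7, 8, 13\}$ that triggered the enqueue of $C$. In all four cases, the fact that $x_k \notin \VarG_k$ just before the assignment $\VarG_k(x_k) := f(k,x_k)$ makes one of the two walks $\val^+$ or $\val^-$ immediately $\bot$: for $i = 2$ (so $C = (x_1,x_2,1)$ and $\ell = 4$), $\Next(x_1,x_2,1)$ already fails since $x_2 \notin \VarG_2$, so $\val^+_3(C) = \val^+_6(C) = \bot$; for $i = 13$ ($C = (x_1,x_2,1)$, $\ell = 10$), the backward walk reaches $(x_{13},x_{14},13)$ but cannot proceed, so $\val^-_9(C) = \val^-_{12}(C) = \bot$; for $i = 7$ ($C = (x_7,x_8,7)$, $\ell = 4$), $\Prev(C)$ fails since $x_7 \notin \VarG_7$, so $\val^-_3(C) = \val^-_6(C) = \bot$; and the case $i = 8$ is symmetric with $\val^+_9(C) = \val^+_{12}(C) = \bot$.

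For the nontrivial direction, I suppose for contradiction that $\val^\sigma_{\ell-1}(C) \neq \bot$ (or $\val^\sigma_{\ell+2}(C) \neq \bot$) just before the triggering assignment. Because $\ell \in \{4,10\}$ and the buffer rounds $\{3,6,9,12\}$ are on the opposite side of the Feistel from where the walk starts, this walk must traverse the entire construction, wrapping once through a $\VarP$ lookup. Consequently, just before the triggering assignment, all entries $\VarG_1(x_1)$, $\VarG_2(x_2)$, $\VarG_7(x_7)$, $\VarG_8(x_8)$, $\VarG_{13}(x_{13})$, $\VarG_{14}(x_{14})$ and the matching $\VarP$ entry $(\downarrow,x_0,x_1)\leftrightarrow(x_{14},x_{15})$, with $x_0 := x_2 \oplus \VarG_1(x_1)$ and $x_{15} := x_{13} \oplus \VarG_{14}(x_{14})$, are set. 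From this data I extract an auxiliary partial chain $C^*$: take $C^* := (x_7,x_8,7)$ when $i \in \{2,13\}$, and $C^* := (x_1,x_2,1)$ when $i \in \{7,8\}$.

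The remaining task is to show that $C^*$ was enqueued strictly earlier and is equivalent to $C$ at the current moment. For $C^* = (x_7,x_8,7)$, the later of the two assignments $\VarG_7(x_7)$, $\VarG_8(x_8)$ (both of which precede the triggering assignment) invoked $\enqNewChains$, which unconditionally enqueues a tuple $(x_7,x_8,7,\cdot)$. For $C^* = (x_1,x_2,1)$, the delicate step is to determine which of the five events $\{\VarG_1(x_1), \VarG_2(x_2), \VarG_{13}(x_{13}), \VarG_{14}(x_{14}), \VarP\text{ set}\}$ occurred last (all five precede the triggering assignment). I rule out $\VarP$ using $\lnot \BadP$: if $\VarP$ had been set last via a $p$-query, then $x_1 \in \VarG_1$ or $x_{14} \in \VarG_{14}$ at the moment of that query, which is exactly $\BadP$. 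I rule out $\VarG_1(x_1)$ using $\lnot \BadlyHit$: right after such an assignment the chain $(x_1,x_2,1)$ is table-defined (since $x_2 \in \VarG_2$) and $\Prev(\Prev(x_1,x_2,1)) = (x_{14},x_{15},14) \neq \bot$ because the $\VarP$ entry already exists. Symmetrically $\VarG_{14}(x_{14})$ is ruled out. Therefore the last event is $\VarG_2(x_2)$ or $\VarG_{13}(x_{13})$, and at that instant the corresponding call to $\enqNewChains$ enqueues $(x_1,x_2,1,\cdot)$. Equivalence $C^* \equiv C$ at the moment of the current enqueue follows by walking $C$ to $C^*$ through the already-defined intermediate values using $\Next$ and $\Prev$, invoking Lemma~\ref{lem:basicsAboutGoodExecutions}. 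This contradicts the hypothesis that no equivalent chain was previously enqueued.

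The main obstacle is in the case $C^* = (x_1,x_2,1)$: one must simultaneously rule out three of the five ``last event'' candidates using the good-pair assumptions. The argument hinges on the ordering forced by $\lnot \BadP$ and $\lnot \BadlyHit$---namely that $\VarP$ necessarily precedes both $\VarG_1(x_1)$ and $\VarG_{14}(x_{14})$, and these in turn precede $\VarG_2(x_2)$ and $\VarG_{13}(x_{13})$---so that the detection loop in $\enqNewChains$ is guaranteed to fire at the right moment. The remaining cases reduce to this one (or its symmetric mirror) once the trivial direction of $\val^\pm$ has been identified.
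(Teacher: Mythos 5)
Your proof is correct and follows essentially the same route as the paper: one direction of $\val$ is $\bot$ because the triggering assignment is to a fresh table position, and the other direction being defined would force a previously enqueued equivalent chain detected in the opposite detect zone --- for $i\in\{7,8\}$ you simply re-derive inline the ``last of the five table events must be $\VarG_2(x_2)$ or $\VarG_{13}(x_{13})$'' argument (via $\lnot\BadP$ and $\lnot\BadlyHit$) that the paper delegates to Lemma~\ref{lem:nonFreshChainsHaveEarlierOnes}. The only caveat is cosmetic: your blanket claim that the walk always wraps through a $\VarP$ lookup and that all of $\VarG_1(x_1),\VarG_2(x_2),\VarG_7(x_7),\VarG_8(x_8),\VarG_{13}(x_{13}),\VarG_{14}(x_{14})$ are set is not literally true (for $i=13$ the forward walk reaches rounds $7,8$ without wrapping, and the entry at the triggering index is never set beforehand), but in each case the entries you actually use for $C^*$ are indeed present, so the argument goes through.
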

\begin{proof}
  We have $\ell \in \{4,10\}$.  We will assume $\ell = 4$, 
  and due to symmetry of the
  construction, this also implies the lemma in case $\ell = 10$ for the
  corresponding rounds.

  The assignment sets either the value of $\VarG_7(x_7)$ or
  $\VarG_2(x_2)$ uniformly at random (otherwise, $\enqNewChains$ is
  not called in the simulator).  Consider first the case that 
  $\VarG_2(x_2)$ was just set.  Then, before this happened, 
  $\val_3^{+}(C)=\bot$, since $x_2 \notin \VarG_2$.  
  Furthermore, $\val_6^{-}(C) = \bot$, since
  otherwise, $\val^{-}_{7}(C) \in \VarG_{7}$, and then
  $(\val^{-}_{7}(C),\val^{-}_{8}(C),7)$ would be an equivalent,
  previously enqueued chain.  This implies the statement in case $\VarG_2(x_2)$ 
  is just set.
  The second case is if $\VarG_7(x_7)$ was
  just set.  Then, before the assignment, $\val_6^{-}(C) = \bot$, as
  $x_7 \notin \VarG_7$, and $\val^{+}_3(C) = \bot$, since otherwise
  $\val^+_2(C) \in \VarG_2$ and so an equivalent chain would have been
  previously enqueued, according to Lemma~\ref{lem:nonFreshChainsHaveEarlierOnes}.  
\end{proof}

\begin{lemma}\label{lem:theGreenRoundsAreNew2}
  Consider an execution of $\Syst_2(f,p)$ with a good pair $(f,p)$.
  Let $C$ be a partial chain which is enqueued in the execution at
  some time and to be adapted at position $\ell$.

  Then, at the moment $C$ is dequeued, it holds that $C \in
  \CompletedChains$, or that $(\val_{\ell-1}(C) \notin \VarG_{\ell-1})
  \land (\val_{\ell+2}(C) \notin \VarG_{\ell+2})$.
\end{lemma}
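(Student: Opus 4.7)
The plan is to prove the contrapositive: assuming $C \notin \CompletedChains$ at the moment of dequeue, we establish $\val_{\ell-1}(C) \notin \VarG_{\ell-1}$ and $\val_{\ell+2}(C) \notin \VarG_{\ell+2}$ at that moment. The argument is by induction over the sequence of operations (assignments to $\VarG$ or $\VarP$) performed between the enqueue of $C$ and its dequeue, maintaining these two conditions as a loop invariant. The standing assumption, together with Lemma~\ref{lem:equivalentChainsInCompletedChains}, guarantees that no chain $D \equiv C$ with first index in $\{1,7\}$ finishes being completed during this interval.

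For the base case, immediately after the triggering assignment $\VarG_k(x_k) := f(k,x_k)$ with $k \in \{2,7,8,13\}$, Lemma~\ref{lem:theGreenRoundsAreNew1} yields $\val_{\ell-1}(C) = \val_{\ell+2}(C) = \bot$ just before the assignment. Since the triggering position $k$ is distinct from the buffer positions $\ell-1, \ell+2 \in \{3,6,9,12\}$, neither $\VarG_{\ell-1}$ nor $\VarG_{\ell+2}$ is directly modified; if $\val_{\ell-1}(C)$ or $\val_{\ell+2}(C)$ becomes non-$\bot$ as a side effect of the extension, then $\lnot \BadlyHit$ applied to the newly table-defined chain caps the extension at a single $\Next$/$\Prev$ step and places the new value outside the relevant table.

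The inductive step dispatches on the type of operation. Every $\forceVal$ call inside $\adapt$ writes only at an adapt position in $\{4,5,10,11\}$, never at a buffer position, and by Lemma~\ref{lem:noBadAdapt} --- whose hypothesis follows from applying the present lemma inductively to the chain currently being completed (dequeued earlier than $C$) --- such a call preserves $\val$-values of all table-defined chains. Assignments to $\VarP$ inside $\TSRF.\P$ or $\TSRF.\P^{-1}$ can only extend chains by one step, and by $\lnot\BadP$ together with the placement of buffer positions far from positions $0$ and $14$, they cannot transition $\val_{\ell-1}(C)$ or $\val_{\ell+2}(C)$. The delicate case is an assignment $\VarG_j(x_j) := f(j,x_j)$ made inside a $\Finner$-call during the completion of some chain $D$. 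For $j \notin \{\ell-1,\ell+2\}$ the buffer tables are untouched and $\lnot\BadlyHit$ again caps the extension of $C$'s chain at a single step; for $j \in \{\ell-1,\ell+2\}$, inspection of $\evalFwd$, $\evalBwd$, and $\adapt$ shows that the argument always satisfies $x_j = \val_j(D)$, so a breach of the invariant would force $\val_j(C) = \val_j(D) \neq \bot$.

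The main obstacle is showing that this last equality forces $C \equiv D$, which, via Lemma~\ref{lem:equivalentChainsInCompletedChains}, contradicts the standing assumption since $D$'s completion would then add $C$ to $\CompletedChains$ before dequeue. We trace back to the earliest point at which $\val_j(C)$ and $\val_j(D)$ coincided. If they were established at distinct times, the later of the two assignments turns some $\val_j^\rho$ from $\bot$ into a value already equal to the other chain's $\val_j^\sigma$; since $C$ and $D$ remain table-defined throughout (their defining entries are never removed), $\lnot\BadlyCollide$ applied at this configuration immediately forces $C \equiv D$. In the remaining ``simultaneous'' case, where a single assignment $\VarG_k(x_k) := f(k,x_k)$ establishes both values, $\lnot\BadlyHit$ restricts each chain's extension to a single step, so both chains must be stuck at a partial chain of the form $(\cdot, x_k, k-1)$ or $(x_k, \cdot, k)$ just before the assignment; the equality of $\val_j$ afterwards then pins these partial chains to be identical, so the two chains share a table-defined partial chain and are therefore equivalent by Lemma~\ref{lem:basicsAboutGoodExecutions}.
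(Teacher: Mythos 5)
Your proposal is correct and follows essentially the same route as the paper's proof: strong induction over the dequeue order (the paper's ``first chain for which the lemma fails''), Lemma~\ref{lem:theGreenRoundsAreNew1} for freshness at the enqueue point, elimination of the three kinds of table updates (fresh $f$-assignments via $\lnot\BadlyHit$ and $\lnot\BadlyCollide$, $\VarP$-assignments via $\lnot\BadP$ and the one-step cap, $\forceVal$ via Lemma~\ref{lem:noBadAdapt}(b)), and Lemma~\ref{lem:equivalentChainsInCompletedChains} to dispose of the case of an equivalent chain that was enqueued or completed earlier. The differences are only presentational --- you maintain an invariant by contraposition where the paper analyzes the last assignment preceding the coincidence $\val_j(C)=\val_j(D)\neq\bot$, and your separate ``simultaneous'' sub-case is already subsumed by the definition of $\BadlyCollide$ --- though you should state explicitly that the contrapositive assumption yields the hypothesis of Lemma~\ref{lem:theGreenRoundsAreNew1} (no equivalent chain previously enqueued), which your opening sentence only gives indirectly.
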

\begin{proof}
  Suppose that the lemma is wrong, and let $C$ be the first chain for
  which it fails. Because this is the first chain for which it fails,
  Lemma~\ref{lem:noBadAdapt}(a) implies that until the moment $C$ is
  dequeued, no call to $\forceVal$ overwrote a value. Now,
  consider the set $\mathfrak{C}$ of table-defined chains at some
  point in the execution that is not in an $\adapt$ call, 
	and before $C$ is dequeued.  Because of
  Lemmas~\ref{lem:equivalentChainsStayEquivalent}
  and~\ref{lem:noBadAdapt}(c), the equivalence relation among chains in
  $\mathfrak{C}$ stays constant from this point until the moment $C$
  is dequeued.

  We distinguish two cases to prove the lemma. Consider first the case
  that at the moment $C$ is enqueued, an equivalent chain $D$ was
  previously enqueued.  The point in the execution where $C$ is enqueued
  is clearly not
  in an $\adapt$ call, and both $C$ and $D$ are table-defined.
  Then, at the moment $C$ is dequeued, clearly
  $D \in \CompletedChains$. Thus, because of
  Lemma~\ref{lem:equivalentChainsInCompletedChains} and the remark
  about equivalence classes of $\mathfrak{C}$ above, this implies that
  $C \in \CompletedChains$ when it is dequeued.

  The second case is when $C$ has no equivalent chain which was
  previously enqueued.    To simplify notation we assume $\ell = 4$ and 
  show $\val_3(C) \notin \VarG_3$, but the argument is completely generic.
  From Lemma~\ref{lem:theGreenRoundsAreNew1} we get that before
  the assignment which led to $C$ being enqueued, $\val_{3}(C) = \bot$.
  If $\val_{3}(C) \in \VarG_3$ at the time $C$ is dequeued,
  it must be that $\VarG_3(\val_3(C))$ was set during
  completion of a chain $D$.  This chain $D$ was enqueued before $C$ was
  enqueued, and dequeued after $C$ was enqueued.  Also, at the 
  moment $C$ is dequeued, $\val_3(C) = \val_3(D)$. 
  From the point $C$ is enqueued, at any point until $C$ is dequeued, it is not possible that $C \equiv D$: 
  We assumed that there is no chain in the 
  queue that is equivalent to $C$ when $C$ is enqueued, and at the point $C$ is enqueued both 
  $C$ and $D$ are table-defined. Furthermore, this point 
  in the execution is not during an $\adapt$ call. Therefore,  
  by our initial remark, the equivalence relation between $C$ 
  and $D$ stays constant until the moment $C$ is dequeued.

  Consider the last assignment to a table before $\val_3(C) =
  \val_3(D) \neq \bot$ was true. We first argue that this assignment cannot have been of
  the form $\VarG_i(x_i) := f(i,x_i)$, as otherwise the event
  $\BadlyCollide$ would have happened. To see this, we check the conditions 
  for $\BadlyCollide$ for $C$ and $D$. The chain $D$ is table-defined even before the assignment, since
  it is in the queue. The assignment happens earliest
  right before $C$ is enqueued, in which case $C$ is table-defined after the assignment. 
  If the assignment happens later, $C$ is table-defined even before the assignment.     
  Furthermore, we have already seen that $C \equiv D$ is not possible.    
  Clearly, $\val_3(C) = \bot$ or $\val_3(D) = \bot$ before the assignment, and
  $\val_3(C) = \val_3(D) \neq \bot$ after the assignment. 
  
  The assignment cannot have been of the form $P(\downarrow,x_0,x_{1}) = (x_{14},x_{15})$ 
  or $P(\uparrow,x_{14},x_{15}) = (x_0,x_{1})$, since $\val$ can be evaluated at most one
  step further by Lemma~\ref{lem:basicsAboutGoodExecutions} (a).  
  Finally, the assignment cannot have been in a call to
  $\forceVal$, because of Lemma~\ref{lem:noBadAdapt}(b).

  Thus, $\val_3(C) \notin \VarG_3$ when $C$ is dequeued, and the same
  argument holds for the other cases as well.
\end{proof}

The following lemma is an important intermediate goal.  It states that
the simulator never overwrites a value in $\VarG$ in case $(f,p)$ is
good.

\begin{lemma}\label{lem:noSetFafterF}
  Consider an execution of $\Syst_2(f,p)$ with a good pair $(f,p)$.
  Then, for any call to $\forceVal$ of
  the form $\forceVal(x,\cdot,\ell)$ we have $x \notin \VarG_{\ell}$
  before the call.
\end{lemma}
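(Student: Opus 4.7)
My approach is a first-failure contradiction, linking Lemma~\ref{lem:theGreenRoundsAreNew2} (which pins down the state of $\VarG$ at dequeue time) with Lemma~\ref{lem:noBadAdapt}(a) (which says nothing bad happens once $\adapt$ has been entered with the right preconditions). Suppose some call $\forceVal(x,\cdot,\ell)$ violates the lemma, and choose the \emph{first} such call in the execution. It lies inside an $\adapt(\cdot,\cdot,\cdot,\cdot,\ell^\ast)$ invocation with $\ell^\ast\in\{4,10\}$, which was triggered by the simulator dequeuing a partial chain $C=(x_k,x_{k+1},k,\ell^\ast)$ with $k\in\{1,7\}$ satisfying $C\notin\CompletedChains$ (if $C\in\CompletedChains$ the simulator skips the completion block and no $\forceVal$ is called). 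Applying Lemma~\ref{lem:theGreenRoundsAreNew2} to $C$ then yields $\val_{\ell^\ast-1}(C)\notin\VarG_{\ell^\ast-1}$ and $\val_{\ell^\ast+2}(C)\notin\VarG_{\ell^\ast+2}$ at the moment of dequeuing.

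The heart of the argument is to transport this dequeue-time property across the intervening calls $\evalFwd(x_k,x_{k+1},k,\ell^\ast-2)$ and $\evalBwd(x_k,x_{k+1},k,\ell^\ast+2)$, so that the arguments $x_{\ell^\ast-1}$ and $x_{\ell^\ast+2}$ to $\adapt$ also avoid $\VarG_{\ell^\ast-1}$ and $\VarG_{\ell^\ast+2}$. A short case analysis over $(k,\ell^\ast)\in\{1,7\}\times\{4,10\}$ confirms that neither evaluator ever invokes $\Finner$ at index $\ell^\ast-1$ or $\ell^\ast+2$: this is exactly why the buffer rounds $3,6,9,12$ sit between the detect and adapt zones in Figure~\ref{fig:Feistel}. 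Hence $\VarG_{\ell^\ast-1}$ and $\VarG_{\ell^\ast+2}$ are frozen throughout the evaluations. If $\val_{\ell^\ast-1}(C)\neq\bot$ at dequeue, it equals the $x_{\ell^\ast-1}$ produced by $\evalFwd$ since the walk is deterministic given the tables, so the non-membership in $\VarG_{\ell^\ast-1}$ is inherited directly. If $\val_{\ell^\ast-1}(C)=\bot$ at dequeue, consider the last $\Finner$ call during $\evalFwd$ that writes a fresh value into $\VarG$ and enables the walk to reach position $\ell^\ast-1$; immediately after that assignment the chain traversed by $\evalFwd$ is table-defined, and if it turned out that $x_{\ell^\ast-1}\in\VarG_{\ell^\ast-1}$ then $\Next(\Next(\cdot))\neq\bot$ at that moment, which is exactly event $\BadlyHit$ and therefore excluded in a good execution. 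The symmetric argument handles $x_{\ell^\ast+2}$ via $\evalBwd$.

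With $x_{\ell^\ast-1}\notin\VarG_{\ell^\ast-1}$ and $x_{\ell^\ast+2}\notin\VarG_{\ell^\ast+2}$ secured at the call to $\adapt$, Lemma~\ref{lem:noBadAdapt}(a) immediately gives that neither $\forceVal$ call inside $\adapt$ overwrites an entry, contradicting the choice of the first failing call. The main obstacle is the gap-closing step in the middle paragraph: one must correctly identify which $\Finner$ call is the ``last'' one completing the walk and check that triggering $\BadlyHit$ would indeed follow, which involves a small amount of case-work on where along the Feistel the missing entries sat. Everything else is mechanical bookkeeping over the four $(k,\ell^\ast)$ configurations.
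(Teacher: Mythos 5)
Your proof is correct and follows essentially the same route as the paper's: pick the first failing $\forceVal$, observe the dequeued chain is not in $\CompletedChains$, invoke Lemma~\ref{lem:theGreenRoundsAreNew2}, and conclude via Lemma~\ref{lem:noBadAdapt}(a). The only difference is your middle paragraph, which explicitly transports the dequeue-time property across the $\evalFwd$/$\evalBwd$ calls to the moment $\adapt$ is invoked (using the buffer rounds and the exclusion of $\BadlyHit$, plus $\BadP$ for walks through the permutation); the paper treats this step as immediate and passes directly from Lemma~\ref{lem:theGreenRoundsAreNew2} to Lemma~\ref{lem:noBadAdapt}(a).
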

\begin{proof}
  Assume otherwise, and let $C$ be the first chain during completion
  of which the lemma fails. Since the lemma fails for $C$, $C \notin 
  \CompletedChains$ when it is dequeued. Thus, Lemma~\ref{lem:theGreenRoundsAreNew2} 
  implies that
  $\val_{\ell-1}(C) \notin \VarG_{\ell-1}$ and $\val_{\ell+2}(C)
  \notin \VarG_{\ell+2}$ when $C$ is dequeued, 
  and so by Lemma~\ref{lem:noBadAdapt}(a) we get the result.
\end{proof}

We say that a \emph{distinguisher completes all chains}, if, at the
end of the execution, it emulates a call to $\evalFwd(x_0,x_1,0,14)$ for
all queries to $\P(x_0,x_1)$ or to $(x_0,x_1) =
\P^{-1}(x_{14},x_{15})$ which it made during the execution.

\begin{lemma}\label{lem:atEndFeistelGivesTheRightThing}
  Consider an execution of $\Syst_2(f,p)$ with a good pair $(f,p)$ in
  which the distinguisher completes all chains.  Suppose that during
  the execution $P(\downarrow, x_0,x_1)$ is queried.  Then, at the end
  of the execution it holds that $P(\downarrow, x_0,x_1) =
  \bigl(\val^+_{14}(x_0,x_1,0),\val^+_{15}(x_0,x_1,0)\bigr)$,
  and $P(\uparrow,x_{14},x_{15}) = \bigl(\val^{-}_0(x_{14},x_{15},14),
  \val^{-}_{1}(x_{14},x_{15},14)\bigr)$.
\end{lemma}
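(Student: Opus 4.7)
The plan is to show that, for every queried pair $(x_0, x_1)$, the simulator completes at some time $T$ a chain equivalent to $(x_0, x_1, 0)$, and that the ensuing $\adapt$ call forces the full Feistel chain to match $\VarP(\downarrow, x_0, x_1)$. Since the distinguisher completes all chains at the end, I may write $x_i := \val^+_i(x_0, x_1, 0)$ for $i \in \{1, \ldots, 14\}$, with $x_i \in \VarG_i$, and let $(x_{14}, x_{15}) := \VarP(\downarrow, x_0, x_1)$.

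To locate $T$, note that $x_7 \in \VarG_7$ and $x_8 \in \VarG_8$ at the end, so there is a first moment at which both hold. Say $\VarG_8(x_8)$ is set last (the other case is symmetric); this can only happen inside a call $\Finner(8, x_8)$ with $x_8 \notin \VarG_8$ beforehand, which in turn invokes $\enqNewChains(8, x_8)$ and enqueues $(x_7, x_8, 7, 10)$. When this chain is later dequeued, either it is already in $\CompletedChains$, in which case by Lemma~\ref{lem:equivalentChainsInCompletedChains} a chain equivalent to $(x_7, x_8, 7)$ was completed earlier, or it is completed right there. In either case, some equivalent chain is completed at a time $T$.

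At $T$, $\evalBwd$ and $\evalFwd$ compute intermediate values $x_i^*$ via the Feistel relations together with a call to $\P(x_0, x_1)$ (or $\P^{-1}$ when $\ell = 4$). Since $\BadP$ does not occur, $\VarP$ is never overwritten, so the pair retrieved here equals the final $\VarP(\downarrow, x_0, x_1) = (x_{14}, x_{15})$; and since no $\VarG$ entry is ever overwritten in a good execution (using Lemma~\ref{lem:noSetFafterF} for $\forceVal$ and absence of $\BadP$ for $\VarP$), every $\VarG$-value read or written at $T$ coincides with its final value, giving $x_i^* = x_i$ throughout. The ensuing $\adapt$ then assigns $\VarG_{10}(x_{10}) := x_9 \oplus x_{11}$ and $\VarG_{11}(x_{11}) := x_{10} \oplus x_{12}$, closing the chain so that $(\val^+_{14}(x_0, x_1, 0), \val^+_{15}(x_0, x_1, 0)) = (x_{14}, x_{15}) = \VarP(\downarrow, x_0, x_1)$. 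For the second identity, the $\P$-call at $T$ also sets $\VarP(\uparrow, x_{14}, x_{15}) := (x_0, x_1)$, and applying $\Prev$ fourteen times from $(x_{14}, x_{15}, 14)$ on the now-consistent chain yields $\val^-_0(x_{14}, x_{15}, 14) = x_0$ and $\val^-_1(x_{14}, x_{15}, 14) = x_1$. The main obstacle is the consistency step $x_i^* = x_i$: it is exactly here that the good-execution guarantees, namely Lemma~\ref{lem:noSetFafterF} and the absence of $\BadP$, do the real work, by ruling out any late modification of either the simulator's tables or the permutation table between $T$ and the end of the execution.
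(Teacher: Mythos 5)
Your proof is correct and follows essentially the same route as the paper: locate the moment the later of $\VarG_7(x_7)$, $\VarG_8(x_8)$ is defined, note that $(x_7,x_8,7,\cdot)$ is then enqueued and hence later either completed or already in $\CompletedChains$, and propagate the consistency established by that completion to the end of the execution via Lemma~\ref{lem:noSetFafterF} and the absence of $\BadP$. One small caveat: your justification that the whole chain (in particular $x_7$ and $x_8$) ends up in the tables rests on ``the distinguisher completes all chains,'' which only applies to queries $\P(x_0,x_1)$ made by the distinguisher itself; the paper treats queries made by the simulator as a separate, immediate case (the simulator issues such a query only while completing a chain, so the chain is in the tables right after that completion), and your argument needs this one-line addition to cover that case. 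Also, the fact that $(x_7,x_8,7)\in\CompletedChains$ implies an earlier completion of a chain through these values follows directly from the line where $\CompletedChains$ is updated after an $\adapt$ call, rather than from Lemma~\ref{lem:equivalentChainsInCompletedChains}, which asserts something slightly different.
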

\begin{proof}
  If the query $P(\downarrow, x_0,x_1)$ was made by the simulator,
  then this was while it was completing a chain.  Then, right after it
  finished adapting we clearly have the result.  By
  Lemma~\ref{lem:noSetFafterF} no value is ever overwritten.  Since
  the event $\BadP$ does not occur, the conclusion of the lemma must
  also be true at the end of the execution.

  Consider the case that $P(\downarrow, x_0,x_1)$ was a query by the
  distinguisher.  Since it eventually issues the corresponding Feistel
  queries, it must query the corresponding values $x_7$ and $x_8$
  at some point.  Thus, $x_7 \in \VarG_{7}$ and $x_{8} \in
  \VarG_{8}$ at the end of the execution.  One of the two values was
  defined later, and in that moment, $(x_7,x_{8},7)$ was enqueued by
  the simulator.  Thus, it is dequeued at some point.  If it was not
  in $\CompletedChains$ at this point, it is now completed and the
  conclusion of the lemma holds right after this completion.
  Otherwise, it was completed before it was inserted in
  $\CompletedChains$, and the conclusion of the lemma holds after this
  completion.  Again, by Lemma~\ref{lem:noSetFafterF} no value is ever
  overwritten, and again $\BadP$ never occurs, hence the
  conclusion also holds at the end of the execution.
\end{proof}

\begin{lemma}\label{lem:nAdaptEqualsNofPqueries}
  Consider an execution of $\Syst_2(f,p)$ with a good pair $(f,p)$ in
  which the distinguisher completes all chains.  Then, the number of
  calls to $\adapt$ by the simulator equals the number of queries to
  $p(\cdot,\cdot,\cdot)$ made by the two-sided random function.
\end{lemma}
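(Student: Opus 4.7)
The plan is to exhibit an explicit bijection $\Phi$ between invocations of $\adapt$ by $\SIMM(f)$ and entries of the form $(\downarrow, x_0, x_1) \in \VarP$ present at the end of the execution. Since $(f, p)$ is good, $\BadP$ does not occur and $\VarP$ is never overwritten, so each query to $p(\cdot, \cdot, \cdot)$ inserts exactly one pair of matching entries $(\downarrow, x_0, x_1)$ and $(\uparrow, x_{14}, x_{15})$; in particular the number of $p$-queries equals $|\{(\downarrow, x_0, x_1) \in \VarP\}|$, so any such bijection will suffice.

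To define $\Phi$, I first observe that each invocation of $\adapt$ for a dequeued chain $C = (x_k, x_{k+1}, k, \ell)$ with $k \in \{1, 7\}$ and $\ell \in \{4, 10\}$ issues exactly one call to $\TSRF.\P$ or $\TSRF.\P^{-1}$: among the four calls in $\F$, namely $\evalFwd(\cdot, \cdot, k, \ell - 2)$, $\evalBwd(\cdot, \cdot, k, \ell + 2)$, $\evalBwd(\cdot, \cdot, k, 1)$, and $\evalFwd(\cdot, \cdot, 1, 7)$, a routine case check on the four combinations of $(k, \ell)$ shows that exactly one wraps through position $0/14$ and the other three remain inside the Feistel. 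I then set $\Phi(C) := (\downarrow, \val^-_0(C), \val^-_1(C))$; by Lemma~\ref{lem:atEndFeistelGivesTheRightThing} this entry lies in $\VarP$ immediately after the wrap, and neither $\BadP$ nor $\forceVal$ (Lemma~\ref{lem:noSetFafterF}) can later disturb it.

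For injectivity, suppose two distinct completions $C, C'$ satisfy $\Phi(C) = \Phi(C')$. By Lemma~\ref{lem:noSetFafterF}, no $\VarG$-value is ever overwritten in a good execution, so at any subsequent point both Feistel evaluations starting from the shared $(x_0, x_1)$ coincide position-wise, and by Lemma~\ref{lem:basicsAboutGoodExecutions}(b),(c) the chains $C$ and $C'$ are equivalent. Lemma~\ref{lem:equivalentChainsInCompletedChains} then forces that as soon as the first of them is completed the second already sits in $\CompletedChains$, so its dequeue is skipped and no second $\adapt$-call occurs, contradicting the assumption. For surjectivity, fix $(\downarrow, x_0, x_1) \in \VarP$; the insertion came from either a simulator or a distinguisher call. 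If from the simulator, it occurred during the wrap of some $\adapt$, which $\Phi$ maps to this entry. If from the distinguisher, the hypothesis that the distinguisher completes all chains forces $\VarG_i(x_i)$ to be set for all $14$ positions along the Feistel chain of $(x_0, x_1)$; the later of the two assignments to $\VarG_7(x_7), \VarG_8(x_8)$ enqueues $(x_7, x_8, 7, \ell)$, and when this chain is dequeued it cannot already lie in $\CompletedChains$ — otherwise, by the injectivity argument above the $\VarP$-entry would have been inserted earlier by a simulator wrap, contradicting that the distinguisher was first. Hence an $\adapt$ fires, and by Lemma~\ref{lem:atEndFeistelGivesTheRightThing} its $(\val^-_0, \val^-_1)$ equals $(x_0, x_1)$, so $\Phi$ returns the fixed entry.

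The main obstacle is the surjectivity step in the distinguisher case: I must combine the "completes all chains" hypothesis with Lemma~\ref{lem:atEndFeistelGivesTheRightThing} to guarantee that the $(x_7, x_8, 7)$ chain actually enqueued by the simulator lies in the correct equivalence class (the one whose backward evaluation reaches the prescribed $(x_0, x_1)$), and then use Lemma~\ref{lem:equivalentChainsInCompletedChains} to rule out that this chain has already been marked completed via some unrelated adapt. Injectivity, by contrast, is essentially immediate from the already-established machinery for good executions.
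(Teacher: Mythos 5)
Your overall strategy is the same as the paper's: count $p$-queries by the $(\downarrow,\cdot,\cdot)$-entries of $\VarP$ (using that $\BadP$ rules out overwrites), pair each $\adapt$ call with the unique wrap through $\P/\P^{-1}$ in its completion, and use Lemma~\ref{lem:noSetFafterF} plus the $\CompletedChains$ bookkeeping for injectivity. The injectivity argument and the simulator case of surjectivity are sound. The genuine problem is the distinguisher case of surjectivity: your claim that the enqueued chain $(x_7,x_8,7)$ ``cannot already lie in $\CompletedChains$'' when dequeued is false, and the justification (otherwise the $\VarP$-entry would have been inserted by an earlier simulator wrap) is a non sequitur, because an earlier completion may have merely \emph{read} the distinguisher-created entry rather than inserted it. Concretely: let the distinguisher first query $\P(x_0,x_1)$ (creating the entry via $p$), then query $\F(1,x_1)$, $\F(14,x_{14})$, $\F(13,x_{13})$, and finally $\F(2,x_2)$. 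At this last query $\Check(x_0,x_1,x_{14},x_{15})$ succeeds, $(x_1,x_2,1,4)$ is enqueued and completed; its wrap only reads the existing entry, and during the completion $\VarG_7(x_7)$ and $\VarG_8(x_8)$ are set (enqueuing $(x_7,x_8,7,\cdot)$) and $(x_7,x_8,7)$ is added to $\CompletedChains$. When $(x_7,x_8,7)$ is later dequeued it is skipped --- exactly the situation you declare impossible, even though the distinguisher ``was first.''

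The lemma is not endangered by this: in that scenario the $\adapt$ call of the $(x_1,x_2,1,4)$ completion wraps through the very same entry, so a witness for surjectivity exists --- but your argument does not produce it. The repair is the paper's handling of this case: when the $p$-query was made by the distinguisher, consider the chain equivalent to $(x_7,x_8,7)$ that gets completed --- either $(x_7,x_8,7)$ itself (if it is fresh when dequeued, which is your case), or the earlier completed equivalent chain that caused $(x_7,x_8,7)\in\CompletedChains$; since no $\VarG$- or $\VarP$-value is ever overwritten in a good execution, that completion's wrap passes through the same $(\downarrow,x_0,x_1)$ entry, so its $\adapt$ call is the required preimage. With that case distinction added, your proof goes through and coincides in substance with the paper's.
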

\begin{proof}
  Since the event $\BadP$ does not occur, the number of
  queries to $p(\cdot,\cdot,\cdot)$ equals half the number of entries
  in $\VarP$ at the end of the execution.

  For each call to $\adapt$, there is a corresponding pair of entries
  in $\VarP$: just before $\adapt$ was called, such an entry was read
  either in $\evalFwd$ or $\evalBwd$.  Furthermore, for no other call
  to $\adapt$ the same entry was read, as otherwise a value would have
  to be overwritten, contradicting Lemma~\ref{lem:noSetFafterF}.

  For each query to $p(\cdot,\cdot,\cdot)$, there was a corresponding
  call to $\adapt$:  if the query to $p$ occurred in a call to $P$
  by the simulator, then we consider the call to
  $\adapt$ just following this call (as the simulator only queries $P$
  right before it adapts).  If the query to $p$ occurred
  in a call by the distinguisher, the distinguisher eventually
  queries the corresponding Feistel chain.  At the moment it queries
  $\VarG_{8}(x_8)$, we find the first chain which is equivalent to
  $(x_7,x_8,7)$ at this point and was enqueued.  This chain must have
  been adapted accordingly.
\end{proof}

\subsubsection{Mapping randomness of $\Syst_2$ to randomness of $\Syst_3$}
\label{sec:theMappingTau}

We next define a map $\tau$ which maps a pair of tables $(f,p)$ to a
partial table $h$, where a partial table $h: \{1,\ldots,14\} \times
\{0,1\}^{n} \mapsto \{0,1\}^n \cup \{\bot\}$ either has an actual
entry for a pair $(i,x)$, or a symbol $\bot$ which signals that the
entry is unused.  This map will be such that $\Syst_2(f,p)$ and
$\Syst_3(\tau(f,p))$ have ``exactly the same behaviour''.

\begin{definition}
  The function $h = \tau(f,p)$ is defined as follows: Run a simulation
  of $\Syst_2(f,p)$ in which the distinguisher completes all chains.
  If $f(i,x)$ is read at some point, then $h(i,x) := f(i,x)$.  If
  $f(i,x)$ is never read, but for some $y$ a call $\forceVal(i,x,y)$
  occurs, then $h(i,x) := y$ for the first such call.  If $f(i,x)$ is
  never read and no such call to $\forceVal$ occurs, then $h(i,x) :=
  \bot$.
\end{definition}

\begin{lemma}\label{lem:sameBehaviour}
  Suppose $h$ has a good preimage. Consider any execution of $S_3(h)$ and suppose the distinguisher completes all chains. Then, $\Syst_3(h)$ never queries
  $h$ on an index $(i,x)$ for which $h(i,x) = \bot$.  Furthermore, the
  following two conditions on $(f,p)$ are equivalent:
  \begin{enumerate}[(1)]
  \item The pair $(f,p)$ is good and $\tau(f,p)=h$.
  \item The queries and answers to the two-sided random function
    in $\Syst_2(f,p)$ are exactly the same as the queries and answers
    to the Feistel construction in $\Syst_3(h)$; and $h(i,x) = f(i,x)$
    for any query $(i,x)$ issued to $f$ or $h$ by the simulator.
  \end{enumerate}
\end{lemma}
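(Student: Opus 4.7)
Let $(f^{*},p^{*})$ be a good pair with $\tau(f^{*},p^{*}) = h$, which exists by assumption. The main idea is to run $\Syst_{2}(f^{*},p^{*})$ and $\Syst_{3}(h)$ in parallel and show by induction on the number of queries processed that the two executions proceed in lockstep: the same queries are issued with the same answers, and the internal tables ($\VarG$ of the simulator and $\VarP$ of the permutation-like object) evolve identically in both experiments. Since the distinguisher is deterministic and its behavior depends only on the answers it has seen, it suffices to establish matching of responses. Once this matching is in hand, all three statements of the lemma follow.

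\textbf{The inductive step.} At each step a query is made either by the distinguisher or by the simulator. For an $\F$-query the simulator's code is identical in the two experiments, so matching responses follow once (a)~the values of $f^{*}$ and $h$ read agree, and (b)~any $\P$/$\P^{-1}$/$\Check$ calls issued internally return the same answer. Part (a) holds by the definition of $\tau$: whenever $f^{*}(i,x)$ is read, $h(i,x)$ is set to this same value. For (b), the crucial case is a call $\P(x_{0},x_{1})$ that triggers a fresh read $p^{*}(\downarrow,x_{0},x_{1}) = (x_{14},x_{15})$ in $\Syst_{2}(f^{*},p^{*})$; in $\Syst_{3}(h)$, the same call triggers a Feistel evaluation using $h$. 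Because the distinguisher completes all chains, the chain starting at $(x_{0},x_{1})$ is eventually completed in $\Syst_{2}(f^{*},p^{*})$, and Lemma~\ref{lem:atEndFeistelGivesTheRightThing} shows that the resulting values $\VarG_{i}(x_{i})$ along the chain satisfy the Feistel recurrence and terminate in $(x_{14},x_{15})$. By Lemma~\ref{lem:noSetFafterF} these values are never overwritten, so by the definition of $\tau$ they coincide with $h(i,x_{i})$; hence $\Psi(h).\P(x_{0},x_{1})$ evaluates to the same $(x_{14},x_{15})$. The argument for $\P^{-1}$ is symmetric, and $\Check$ queries match because both $\TSRF.\Check$ and $\Psi.\Check$ merely consult $\VarP$.

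\textbf{Deriving the three statements.} The claim that $\Syst_{3}(h)$ never queries $h$ on a $\bot$-index follows immediately: by the matching, every index queried in $\Syst_{3}(h)$ is also queried (by the simulator or inside an honest Feistel evaluation) in $\Syst_{2}(f^{*},p^{*})$, and on any such index $\tau$ sets $h$ to a non-$\bot$ value. The direction (1)$\Rightarrow$(2) is just the matching induction applied to the given good $(f,p)$ in place of $(f^{*},p^{*})$. For (2)$\Rightarrow$(1), observe that (2) together with the induction applied to $(f^{*},p^{*})$ forces the execution of $\Syst_{2}(f,p)$ to coincide (as a sequence of queries, answers, and table states) with that of $\Syst_{2}(f^{*},p^{*})$; since each bad event $\BadP$, $\BadlyHit$, $\BadlyCollide$ is a function only of the execution trace and not of the values of $(f,p)$ at indices never accessed, no bad event occurs in $\Syst_{2}(f,p)$, so $(f,p)$ is good; moreover the sets of indices at which $f$ is read and at which $\forceVal$ is called are the same in both executions, yielding $\tau(f,p) = \tau(f^{*},p^{*}) = h$.

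\textbf{Main obstacle.} The delicate part is handling the recursive chain-completion inside the simulator: a single outer query may trigger many nested calls to $\P$, $\P^{-1}$, $\Finner$, and $\adapt$, and the induction must propagate through this entire nested computation in the correct order. I would state matching at the finer granularity of individual internal queries rather than outer distinguisher queries, and use Lemmas~\ref{lem:noSetFafterF} and~\ref{lem:nAdaptEqualsNofPqueries} to ensure that every internal $\P$- or $\P^{-1}$-call in $\Syst_{2}(f^{*},p^{*})$ is mirrored by an $\adapt$-call that sets the same values, so that $\VarP$ grows identically in both experiments and no desynchronization can arise.
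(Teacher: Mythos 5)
Your proposal is correct and takes essentially the same route as the paper: a determinism/lockstep argument for (1)$\Rightarrow$(2) whose only nontrivial step --- matching $\P/\P^{-1}$ answers in $\Syst_2$ against Feistel evaluations in $\Syst_3$ --- is settled exactly as in the paper via Lemma~\ref{lem:atEndFeistelGivesTheRightThing}, Lemma~\ref{lem:noSetFafterF} and the definition of $\tau$, followed by reducing both the $\bot$-claim and (2)$\Rightarrow$(1) to a fixed good preimage of $h$. The paper merely leaves the query-by-query induction implicit (and does not need Lemma~\ref{lem:nAdaptEqualsNofPqueries} here), but the substance is identical.
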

\begin{proof}
  We first show that $(1)$ implies $(2)$. 
  Thus, because the distinguisher
  is deterministic, we need to show the following:
  \begin{itemize}
  \item When the simulator sets $\VarG_{i}(x_i) := f(i,x_i)$ in $\Syst_2(f,p)$,
    respectively $\VarG_{i}(x_i) := h(i,x_i)$ in $\Syst_3(h)$, the two values are
    the same.
  \item When the simulator queries $\P(x_0,x_1)$ or
    $\P^{-1}(x_{14},x_{15})$ it gets the same answer in
    $\Syst_2(f,p)$ and $\Syst_3(h)$.
  \end{itemize}
  
  The first bullet is obvious, because if the simulator ever sets
  $\VarG_{i}(x_i) := f(i,x_i)$ in $\Syst_2(f,p)$, then~$h$ will be set
  accordingly by definition of $\tau$.

  Thus, we consider a query to $\P(x_0,x_1)$ (queries to $\P^{-1}$
  are handled in the same way).  Recall that we assume that the
  distinguisher completes all chains.  Because of
  Lemma~\ref{lem:atEndFeistelGivesTheRightThing}, the answer of the
  query to $\P$ is exactly what we obtain by evaluating the Feistel
  construction at the end in experiment $\Syst_2$.  But each query in
  the evaluation of the Feistel construction was either set as
  $\VarG_{i}(x_i) := f(i,x_i)$ or in a $\forceVal$ call, and in both
  cases the values of $h$ must agree, since in good executions no
  value is ever overwritten (Lemma~\ref{lem:noSetFafterF}).  Thus, the
  query to $\P$ is answered by the Feistel in the same way.

  We now show that (2) implies (1). Assume now that $(2)$ holds.  
  Let $(f_h,p_h)$ be a good
  preimage of $h$, i.e., a pair satisfying $(1)$.  We know already
  that condition $(2)$ holds for $(f_h,p_h)$, and because we assume
  that it holds for $(f,p)$, we see that in the two executions
  $\Syst_2(f_h,p_h)$ and $\Syst_2(f,p)$ all queries to the two-sided
  random function are the same, and also the entries $f(i,x)$ and $f_h(i,x)$ 
  for values considered match.  This implies that $(f,p)$ must be good.
  Furthermore, this implies $\tau(f,p) = \tau(f_h,p_h)$.

  Finally, we argue that $\Syst_3(h)$ never queries $h$ on an index
  $(i,x)$ for which $h(i,x) = \bot$.  
  Let $(f_h,p_h)$ be a good preimage of $h$.
  Clearly (1) holds for $h$ and $(f_h,p_h)$, which implies (2) as shown above. 
  Thus, it cannot be that a query to $h$ in $\Syst_3(h)$ returns $\bot$, as otherwise 
  the answers in $\Syst_2(f_h, p_h)$ and $\Syst_3(h)$ would differ.
\end{proof}

\begin{lemma}\label{lem:goodHAreUniform}
  Suppose $h$ has a good preimage.  Pick $(f,p)$ uniformly at
  random.  Then,
  \begin{align}
    \Pr[\text{$(f,p)$ is good} \land \tau(f,p) = h]
    = 2^{-n |h|},
  \end{align}
  where $|h|$ is the number of pairs $(i,x)$ for which $h(i,x) \neq \bot$.
\end{lemma}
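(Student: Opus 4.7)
The plan is to use Lemma~\ref{lem:sameBehaviour} to rewrite the event $\{(f,p)\text{ good}\land \tau(f,p)=h\}$ as an agreement event of $(f,p)$ with a fixed table on a specific finite set of entries, and then to count those entries exactly using Lemmas~\ref{lem:noSetFafterF} and~\ref{lem:nAdaptEqualsNofPqueries}.

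First, I fix a good preimage $(f_h,p_h)$ of $h$, which exists by assumption, and let $Q_f$ be the set of pairs $(i,x)$ on which the simulator reads $f_h(i,x)$ during the execution of $\Syst_2(f_h,p_h)$ used to define $\tau$; analogously let $Q_p$ be the set of entries of $p_h$ accessed in that same execution. The key equivalence I want to establish is that $(f,p)$ is good with $\tau(f,p)=h$ if and only if $f$ agrees with $f_h$ on every entry in $Q_f$ and $p$ agrees with $p_h$ on every entry in $Q_p$. The ``only if'' direction follows directly from condition~(2) of Lemma~\ref{lem:sameBehaviour}: if $(f,p)$ satisfies that condition, then its execution trace coincides with that of $\Syst_3(h)$ and hence with that of $\Syst_2(f_h,p_h)$, so the simulator issues precisely the queries in $Q_f$ and $Q_p$ with the values dictated by $h$. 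The ``if'' direction is a straightforward induction on the simulator's steps: as long as both executions have agreed on all previously accessed entries, the next query issued by the simulator (or by the fixed chain-completing distinguisher) is the same and receives the same answer, so condition~(2) is reached at termination and $(f,p)$ is good with $\tau(f,p)=h$.

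Given the equivalence, the probability of the event is the probability that the independent uniform tables $(f,p)$ agree with $(f_h,p_h)$ on $Q_f \cup Q_p$, which equals $2^{-n|Q_f|}\cdot 2^{-2n|Q_p|} = 2^{-n(|Q_f|+2|Q_p|)}$. It therefore suffices to show $|h|=|Q_f|+2|Q_p|$. By the definition of $\tau$, each pair $(i,x)$ with $h(i,x)\neq\bot$ is introduced either by the simulator reading $f_h(i,x)$ (contributing exactly $|Q_f|$ pairs) or by a call $\forceVal(i,x,\cdot)$ during chain completion. Lemma~\ref{lem:noSetFafterF} guarantees that these two sources are disjoint and that $\forceVal$ never overwrites a previously set value, so each $\adapt$ invocation contributes exactly two fresh entries of the second kind. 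Lemma~\ref{lem:nAdaptEqualsNofPqueries} identifies the number of $\adapt$ calls with the number of distinct $p$-queries, namely $|Q_p|$, giving $|h|=|Q_f|+2|Q_p|$ and hence the claimed value $2^{-n|h|}$.

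The main obstacle is the inductive step in the ``if'' direction: because $Q_f$ and $Q_p$ are defined with respect to the $(f_h,p_h)$-trace, one must argue that the $(f,p)$-execution cannot diverge from it as long as agreement on the positions queried so far has been maintained. This reduces to the observation that $f$ and $p$ are consulted only through reads of individual fixed-length entries (never through global tests), so two pairs that agree on the read positions are operationally indistinguishable to both the simulator and $\TSRF$; the simulator and the chain-completing distinguisher being deterministic then propagates the agreement step by step.
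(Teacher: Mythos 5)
Your proposal is correct and follows essentially the same route as the paper's proof: fix a good preimage $(f_h,p_h)$, identify the event $\{(f,p)\text{ good}\wedge\tau(f,p)=h\}$ with agreement of $(f,p)$ and $(f_h,p_h)$ on the accessed table entries (via Lemma~\ref{lem:sameBehaviour}), and count $|h|$ as the number of $f$-queries plus twice the number of $p$-queries using Lemmas~\ref{lem:noSetFafterF} and~\ref{lem:nAdaptEqualsNofPqueries}. Your write-up merely makes explicit the agreement-iff argument and the disjointness of the two sources of non-$\bot$ entries, which the paper leaves implicit.
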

\begin{proof}
  Let $(f_h,p_h)$ be a good preimage of $h$.  With probability $2^{-n
    |h|}$ all queries in $\Syst_2(f,p)$ are answered exactly as those
  in $\Syst_2(f_h,p_h)$: every query to $f$ is answered the same with
  probability $2^{-n}$, and every query to $p$ with probability
  $2^{-2n}$.  Because of Lemma~\ref{lem:nAdaptEqualsNofPqueries} the
  number $|h|$ of non-nil entries in $h$ is exactly the number of
  queries to $f$ plus twice the number of queries to $p$.
\end{proof}

\begin{replemma}{lem:eqS2andS3}
  The probability that a fixed distinguisher answers $1$ in
  $\Syst_2(f,p)$ for uniform random $(f,p)$ differs at most by
  $\frac{8\cdot 10^{19} \cdot q^{10}}{2^n}$ from the probability that it answers $1$ in
  $\Syst_3(h)$ for uniform random $h$.
\end{replemma}
\begin{proof}
  First, modify the distinguisher such that for each query to
  $\P(x_0,x_1)$ or to $(x_0, x_1) = \P^{-1}(x_{14},x_{15})$ which it
  made during the execution (to either the two-sided random function 
  in $\Syst_2$ or the Feistel construction in $\Syst_3$), it 
  issues the corresponding Feistel queries to $\F$ in the end
  (i.e., it emulates a call to $\evalFwd(x_0,x_1,0,14)$).
  This increases the number of queries of the distinguisher by at most 
  a factor of $14$. Furthermore, any unmodified distinguisher that achieves
  some advantage will achieve the same advantage when it is 
  modified.
  
  Consider now the following distribution over values $h^*$, which
  are either tables for $\Syst_3(h^*)$ which contain no entry $\bot$,
  or special symbols $\bot$.  To pick an element $h^*$, we pick a pair
  $(f,p)$ uniformly at random.  If $(f,p)$ is good, we compute $h :=
  \tau(f,p)$ and set each entry of $h$ with $h(i,x) = \bot$ uniformly
  at random.  The result is $h^*$.  If $(f,p)$ is not good, we set
  $h^* = \bot$. Let $H$ be the random variable that takes values according
  to this distribution.
  
  We now claim that the probability that any fixed table 
  $h^* \neq \bot$ is output is at most $2^{-n|h^*|}$.
  To prove this, we first show that it cannot be that two different values
  $h$ which both have a good preimage can yield the same $h^*$. Towards a contradiction
  assume that $h$ and $h'$ are different and both have a good preimage, and they yield the
  same $h^*$. Let $(f_h, p_h)$ and $(f_{h'}, p_{h'})$ be good preimages of $h$ and $h'$,
  respectively. Then, Lemma~\ref{lem:sameBehaviour} item (2) implies that the queries
  and answers in $\Syst_2(f_h,p_h)$ and $\Syst_3(h)$ are the same. Furthermore, since
  $\Syst_3(h)$ never queries $h$ on an index $(i,x)$ where $h(i,x) = \bot$ 
  (Lemma~\ref{lem:sameBehaviour}), we get that the queries and answers in $\Syst_3(h)$ 
  and $\Syst_3(h^\ast)$ are the same.  Arguing symmetrically for 
  $(f_{h'}, p_{h'})$, we see that the queries and answers in $\Syst_3(h')$ and
  $\Syst_3(h^\ast)$ are the same, and so the queries and answers in $\Syst_2(f_h,p_h)$ and 
  $\Syst_2(f_{h'}, p_{h'})$ must be the same. But by definition of $\tau$, this implies that $h = h'$, 
  a contradiction.
      
  We now calculate the probability of getting a fixed table $h^* \neq \bot$. In the first case, suppose there exists $h$ with a
  good preimage that can lead to $h^*$. Let $\rho$ be the randomness that is used to replace the $\bot$ entries
  in $h$ by random entries. We have
  \begin{align}
  	\Pr_{(f,p),\rho}[H = h^*] &= \Pr_{(f,p),\rho}[(f,p) \text{ is good } \land h=\tau(f,p)  \text{ can lead to } h^* \land \text{ filling with $\rho$ leads to }h^*]. \nonumber	
  \end{align}
  Now, as we have seen, no two different values for $h$ can yield the same $h^*$. Thus, we can assume that $h^* = (h,\rho^*)$, where $h$ is the unique
  table that leads to $h^*$, and $\rho^*$ stands for the entries that occur in $h^*$, but are $\bot$ in $h$. Then, the above probability equals
  \begin{align}
  	  &\Pr_{(f,p),\rho}[(f,p) \text{ is good } \land \tau(f,p) = h \land \rho = \rho^*] \nonumber \\
  	= &\Pr_{(f,p),\rho}[(f,p) \text{ is good } \land \tau(f,p) = h ] \cdot \Pr_{(f,p),\rho}[\rho = \rho^*] \nonumber \\				
  	= &2^{-n|h|} \cdot 2^{-n(|h^*|-|h|)} = 2^{-n|h^*|}, \nonumber
  \end{align}
  where for the second equality we apply Lemma~\ref{lem:goodHAreUniform} and note that $\rho$ is chosen uniformly.
  
  In the second case, there exists no $h$ with a good preimage that can lead to $h^*$. Then we have $\Pr_{(f,p),\rho}[H = h^*] = 0$, and so in both cases
  \begin{align}
  	 \Pr_{(f,p),\rho}[H = h^*] \leq 2^{-n|h^*|} \label{eq:lem34.1}
  \end{align}
  
  This implies that the statistical distance of the distribution over
  $h^*$ which we described to the uniform distribution is exactly the
  probability that $(f,p)$ is not good. For completeness, we give a formal
  argument for this. Consider $H$ as above, and let $U$ be a random variable
  taking uniform random values from $\{0,1\}^{|h^*|}$. We have
  \begin{align}	
    d(U,H) &= \frac12 \sum_{h*}\bigl|\Pr[U=h^*]-\Pr_{(f,p),\rho}[H=h^*]\bigr|\nonumber \\
    			 &= \frac12 \bigl|\underbrace{\Pr[U=\bot]}_{=0} - \underbrace{\Pr_{(f,p),\rho}[H=\bot]}_{=\Pr_{(f,p)}[(f,p) \text{ is not good}] }\bigr| + 
    			 		\frac12 \sum_{h^* \neq \bot} \bigl|\Pr[U=h^*]-\Pr_{(f,p),\rho}[H=h^*]\bigr| \nonumber\\
    			 &= \frac12 \Pr_{(f,p)}((f,p) \text{ is not good}) 
    			 						+ \frac12 \underbrace{\sum_{h^* \neq \bot} \Pr[U=h^*]}_{=1}
    			 						- \frac12 \underbrace{\sum_{h^* \neq \bot}\Pr_{(f,p),\rho}[H=h^*]}_{=1-\Pr_{(f,p)}[(f,p) \text{ is not good}]} \nonumber\\
    			 &= \Pr_{(f,p)}[(f,p) \text{ is not good}]\;,\nonumber
  \end{align}
  where the third equality uses (\ref{eq:lem34.1}).
  
  We proceed to argue that $\Pr_{(f,p)}[(f,p) \text{ is not good}]$ is small. 
  In $\Syst_2(f,p)$, 
  by Lemma~\ref{lem:SimulatorIsEfficient} we have that $|\VarG_i| \leq 6\cdot(14\cdot q)^2$ and $|\VarP| \leq 6\cdot(14\cdot q)^2$, where
  the additional factor of $14$ comes in because the distinguisher completes all chains. 
  By Lemma~\ref{lem:goodPairsAreLikely}
  $\Pr_{(f,p)}[(f,p) \text{ is not good}] \leq 16\,000\cdot\frac{(6\cdot(14\cdot q)^2)^5}{2^n} < \frac{4\cdot 10^{19} \cdot q^{10}}{2^n}$.
  
  By Lemma~\ref{lem:sameBehaviour}, for good $(f,p)$, the behaviour of $\Syst_2(f,p)$ and $\Syst_3(H)$ is identical. Thus, $
  \bigl| \Pr_{(f,p)}[\text{$D$ outputs $1$ in $\Syst_2(f,p)$}] - \Pr_{(f,p)}[\text{$D$ outputs $1$ in $\Syst_3(H)$}] \bigr| \leq \Pr_{(f,p)}[(f,p) \text{ is not good}]$.
  Furthermore, 
  $$\bigl| \Pr_{(f,p)}[\text{$D$ outputs $1$ in $\Syst_2(H)$}] - \Pr[\text{$D$ outputs $1$ in $\Syst_3(U)$}] \bigr| \leq d(H,U) = \Pr_{(f,p)}[(f,p) \text{ is not good}],$$ and therefore
  \begin{align*}
  	  \bigl| \Pr_{(f,p)}[\text{$D$ outputs $1$ in $\Syst_2(f,p)$}] - \Pr[\text{$D$ outputs $1$ in $\Syst_3(U)$}] \bigr| &\leq 2\cdot \Pr_{(f,p)}[(f,p) \text{ is not good}]
\\& < \frac{8\cdot 10^{19} \cdot q^{10}}{2^n} ,
  \end{align*}
  using our bound on the probability that $(f,p)$ is good above.
\end{proof}

\section*{Acknowledgements}
It is a pleasure to thank Ueli Maurer and Yannick Seurin for their insightful feedback.

 Robin K\"unzler was partially
supported by the Swiss National Science Foundation (SNF), project  no.~200021-132508. Stefano Tessaro was partially supported by NSF grant CNS-0716790; part of this work was done while he was a graduate student at ETH Zurich.

\bibliographystyle{alpha}
\bibliography{MABibliography,prf}

\appendix


\section{Detailed Definition of the Simulator of Coron et al.}
\label{appendixSimulatorDefinition}

We proceed to provide the full definition of the simulator in
\cite{CPS08v2}. In particular, for ease of reference, we stick to the
same variable naming used within their work, even though this is
inconsistent with the notation used in the rest of this paper.

As sketched in the simulator overview, $\si$ keeps a {\em history} of
the function values it has defined up to the current point of
execution, which consist of sets $\hist(\URF_i)$ of pairs $(x,
\URF_i(x))$.  With some abuse of notation, and to improve readability,
we denote as $\URF_i := \{x | (x,\URF_i(x)) \in \hist(\URF_i) \}$ the
set of $\URF_i$-queries whose values have been defined. We define the
history $\hist := \{(x, \URF_i(x), i) | (x, \URF_i(x)) \in
\hist(\URF_i) \text{ for some } i \}$.

Also, if the history size gets too big, i.e.~$|\hist(\URF_i)| > \hmax$
for some $\URF_i$ and a value $\hmax$ depending only on the number of
distinguisher queries $q$, the simulator aborts.

\paragraph{Procedures $\Query$ and $\ChainQuery$.}
Upon a query $x$ for $\URF_k$ issued by $\di$, the simulator $\sitwo$
executes the following prodecure $\Query(x, k)$:
\begin{lstlisting}[language=pseudocode,name=cpssim]
procedure $\Query(x,k)$:
   if $x \in \URF_k$ then 
      return $\URF_k(x)$ from $\hist(\URF_k)$ 
   else
      $\URF_k(x) \inu \{0,1\}^n$
      $\ChainQuery(x,k)$
      return $\URF_k(x)$ from $\hist(\URF_k)$
\end{lstlisting}
Procedure $\ChainQuery$ checks if 3-chains as defined in
Fig.~\ref{figChainsV1} w.r.t.~query $x$ occur. Note that this is only
a selection of all possible chains one might consider. Observe that
the chain sets for $\URF_1, \URF_2$ and $\URF_3$ are defined
symmetrically to those of $\URF_4, \URF_5$ and $\URF_6$.
\begin{figure}[ht]
\centering
                \begin{tabular}{|c|l|}
                                \hline Query to & Chain Sets \\
                                \hline $\URF_1$ & $\Chain(-,R,1) = \bigl\{ (S,A) \in (\URF_6,\URF_5) \,|\, \URP^{-1}(S || A \oplus \URF_6(S))\rightp = R \bigr\}$  \\
                                \hline $\URF_2$ & $\Chain(+,X,2) = \bigl\{ (Y,Z) \in (\URF_3,\URF_4) \,|\, X = \URF_3(Y) \oplus Z \bigr\}$  \\
                                                                                          & $\Chain(-,X,2) = \bigl\{ (R,S) \in (\URF_1,\URF^\ast_6(X)) \,|\, \URP(X \oplus \URF_1(R) || R)\leftp = S \bigr\}$  \\
                                \hline $\URF_3$ & $\Chain(+,Y,3) = \bigl\{ (Z,A) \in (\URF_4,\URF_5) \,|\, Y = \URF_4(Z) \oplus A \bigr\}$  \\
                                \hline $\URF_4$ & $\Chain(-,Z,4) = \bigl\{ (Y,X) \in (\URF_3,\URF_2) \,|\, Z = \URF_3(Y) \oplus X \bigr\}$  \\
                                \hline $\URF_5$ & $\Chain(+,A,5) = \bigl\{ (S,R) \in (\URF_6,\URF^\ast_1(A)) \,|\, \URP^{-1}(S || A \oplus \URF_6(S))\rightp = R \bigr\}$ \\
                                                                                                & $\Chain(-,A,5) = \bigl\{ (Z,Y) \in (\URF_4,\URF_3) \,|\, A = \URF_4(Z) \oplus Y \bigr\}$  \\
                                \hline $\URF_6$ & $\Chain(+,S,6) = \bigl\{ (R,X) \in (\URF_1,\URF_2) \,|\, \URP(X \oplus \URF_1(R) || R)\leftp = S \bigr\}$  \\                                                          
                                \hline
                \end{tabular}
                \caption{Chains to be considered by Procedure
                  $\ChainQuery$.}
                \label{figChainsV1}
\end{figure}
In particular, the sets $\URF^\ast_6$ and $\URF^\ast_1$ are defined as
follows for the understood parameters $X$ and $A$:
\begin{align}
                \URF^\ast_6(X) &:= \URF_6 \cup \bigl\{S | \exists (R',X') \in (\URF_1, \URF_2 \setminus \{X\}), \URP(X' \oplus \URF_1(R')||R')\leftp = S\bigr\}  \nonumber \\
                \URF^\ast_1(A) &:= \URF_1 \cup \bigl\{R | \exists (S',A') \in (\URF_6, \URF_5 \setminus \{A\}), \URP^{-1}(S'||A'  \oplus \URF_6(S'))\rightp = R\bigr\}  \nonumber
\end{align}
The chains that are considered additionally when using the sets
$\URF_i^\ast$ instead of $\URF_i$ in the cases of $(\URF_2,-)$ and
$(\URF_5,+)$ chains are called \textit{virtual chains}. These chains
do not consist of history values exclusively. If such a chain occurs,
the simulator first defines the three values that constitute the
$3$-chain that are not yet in the history and then completes it. The
intuition\footnote{This intuition comes from studying the proof in
  \cite{CPS08v1}} why such virtual chains are considered in addition
is as follows: Every time the sets $\Chain(-,R,1)$ for some $R$ and
symmetrically $\Chain(+,S,6)$ for some $S$ are computed, it will not
be possible that more than one such chain is found. Note that it is
not clear at this point if this goal can indeed be
achieved. Intuitively, such a fact might simplify (or even make
possible) the analysis of the recursions of the simulator.

The procedure $\ChainQuery$ now handles the recursions. We still
consider the chains in Fig.~\ref{figChainsV1}. If such 3-chains occur,
$\ChainQuery$ calls a further procedure, called $\CompleteChain$, to
complete these chains, i.e.~it consistently (with $\URP)$ fills in the
remaining values for each 3-chain. Then, $\ChainQuery$ is called
recursively for the values defined during the completions of chains:
\begin{lstlisting}[language=pseudocode,name=cpssim]
procedure $\ChainQuery(x,k)$:
   if $k \in \{ 1,2,5,6\}$ then $\XorQuery_1(x,k)$  
   if $k \in \{ 1,3,4,6\}$ then $\XorQuery_2(x,k)$  
   if $k \in \{ 3,4\}$ then $\XorQuery_3(x,k)$  
   $\mathcal{U} := \emptyset$ 
   if $k \in \{2,3,5,6\}$ then 
      forall $(y,z) \in \Chain(+,x,k)$ do
         $\mathcal{U}:= \mathcal{U} \cup \CompleteChain(+, x, y, z, k)$ 
   if $k \in \{1,2,4,5\}$ then 
      forall $(y,z) \in \Chain(-,x,k)$ do
         $\mathcal{U}:= \mathcal{U} \cup \CompleteChain(-, x, y, z, k)$ 
   forall $(x',k') \in \mathcal{U}$ do
      $\ChainQuery(x',k')$ 
\end{lstlisting}
The first three lines of $\ChainQuery$ make calls to the so-called
$\XorQuery$ procedures: they perform additional $\ChainQuery(x',k')$
executions for values $x'$ other than $x$ that fullfil certain
properties. This ensures that for the values $x'$, we are always sure
that $\ChainQuery(x',k')$ occurs \textit{before} the chains for
$\ChainQuery(x,k)$ are completed. To understand $\sitwo$ one can
ignore, for the moment, the details about the $\XorQuery$ procedures:
We first detail $\CompleteChain$ and only subsequently address the
$\XorQuery$ procedures.

\paragraph{Procedure $\CompleteChain$.}
Procedure $\CompleteChain$ completes a chain $(x, y, z)$, given
direction $d$ and the index $k$ of $\URF_k$ according to the following
table.
\begin{figure}[h]
\centering
                \begin{tabular}{|c|c|c|c|c|c|}
                                \hline Query $x$ to $\URF_k$ & Dir $d$ & $(y, z)$ in History    & Additionally                                                  & Compute & Adapt $(\URF_j, \URF_{j+1})$\\
                                                                                                                                                 &                               &                                                                                      &                                       set $\URF_{i}$  &                         &     \\
                                \hline $\URF_1$                                                  & $-$           & $(\URF_6, \URF_5)$                   & $\URF_4$                                                                      & $S||T$        & $(\URF_2, \URF_3)$ \\
                                \hline $\URF_2$                                                  & $+$           & $(\URF_3, \URF_4)$                   & $\URF_1$                                                                      & $L||R$        & $(\URF_5, \URF_6)$ \\
                                                                                                                                                 & $-$           & $(\URF_1, \URF_6)$                   & $\URF_3$                                                                      & $L||R$        & $(\URF_4, \URF_5)$ \\
                                \hline $\URF_3$                                                  & $+$           & $(\URF_4, \URF_5)$                   & $\URF_6$                                                                      & $S||T$        & $(\URF_1, \URF_2)$ \\
                                \hline $\URF_4$                                                  & $-$           & $(\URF_3, \URF_2)$                   & $\URF_1$                                                                      & $L||R$        & $(\URF_5, \URF_6)$ \\
                                \hline $\URF_5$                                                  & $+$           & $(\URF_6, \URF_1)$                   & $\URF_4$                                                                      & $S||T$        & $(\URF_2, \URF_3)$ \\
                                                                                                                                         & $-$           & $(\URF_4, \URF_3)$                   & $\URF_6$                                                                      & $S||T$        & $(\URF_1, \URF_2)$ \\
                                \hline $\URF_6$                                                  & $+$           & $(\URF_1, \URF_2)$                   & $\URF_3$                                                                      & $L||R$        & $(\URF_4, \URF_5)$ \\
                                \hline
                \end{tabular}
\end{figure} \\ 
In detail this looks as follows:
\begin{lstlisting}[language=pseudocode,name=cpssim]
procedure $\CompleteChain(d,x,y,z,k)$:
   $\mathcal{U} := \emptyset$ 
   if there was a $\CompleteChain$-execution w.r.t.~values $x,y,z$ before then 
      return $\mathcal{U}$ 
   else 
      if $(d,k) = (-,2)$ and $z\notin \URF_6$ then 
         $\URF_6(z) \inu \{0,1\}^n$ 
         $\mathcal{U} := \mathcal{U} \cup \{(z, 6)\}$ 
      if $(d,k) = (+,5)$ and $z\notin \URF_1$ then 
         $\URF_1(z) \inu \{0,1\}^n$ 
         $\mathcal{U} := \mathcal{U} \cup \{(z, 1)\}$ 
      compute $x_i$ ($i$ according to the table, additional 'set')
      if $x_i \notin \URF_i$ then
         $\URF_i(x_i) \inu \{0,1\}^n$
         $\mathcal{U} := \mathcal{U} \cup \{(x_i, i)\}$
      if Compute $L||R$ (according to the table) then 
         compute $L||R$
         compute $S||T := \URP(L||R)$
      if Compute $S||T$ (according to the table) then
         compute $S||T$
         compute $L||R := \URP^{-1}(S||T)$
      now all inputs $(x_1, x_2, \ldots, x_6)$ to $(\URF_1, \URF_2, \ldots, \URF_6)$ of the completion of $(x,y,z)$ are known
      compute $x_0 := L$, $x_7 := T$ 
      if $x_j \in \URF_j$ or $x_{j+1} \in \URF_{j+1}$ then 
         abort 
      else (adapt according to the table)
         $\URF_j(x_j) := x_{j-1} \oplus x_{j+1}$ 
         $\URF_{j+1}(x_{j+1}) := x_{j} \oplus x_{j+2}$ 
         $\mathcal{U} := \mathcal{U} \cup \{(x_j, j),(x_{j+1}, j+1)\}$
      return $\mathcal{U}$ 
\end{lstlisting}

Lines $2$ and $3$ need further explanation. We assume that the
simulator keeps track of the $6$-tuples $(R,X,Y,Z,A,S)$ of values that
were defined in any $\CompleteChain$-execution up to the current point
of execution. In line $2$, $\CompleteChain$ checks whether the values
$x,y,z$ for given $k,d$ are part of such a $6$-tuple of values that
were defined in some earlier $\CompleteChain$-execution. If so, the
empty set is returned. For example, if in $\ChainQuery(R',1)$ we find
$(S',A') \in \Chain(-,R',1)$, then $\CompleteChain(-,R', S', A', 1)$
occurs and line $2$ checks if there is a $6$-tuple $(R,X,Y,Z,A,S)$
from an earlier $\CompleteChain$-execution where $R=R', S=S'$ and
$A=A'$. If so, the chain $(S',A')$ is not completed again and the
empty set is returned in line $3$ immediately.
Note that steps $2$ and $3$ were not included in the simulator
definition in \cite{CPS08v2}. But if these steps are not performed,
the simulator trivially aborts as soon as a recursive $\CompleteChain$
call occurs in $\ChainQuery$. (In the above example of $(S',A')$, if
the values $(R',X,Y,Z,A',S')$ were defined in an earlier
$\CompleteChain$-execution, the simulator would abort at step $29$ of
$\CompleteChain$, since both $Y \in \URF_3$ and $X \in
\URF_2$ already.)  Furthermore, note that lines $5$ to $11$ are used
to define missing function values for virtual chains.

Fig.~\ref{figCCSv2} illustrates how the simulator $\sitwo$ completes $3$-chains. 
\begin{figure}[!t]
                \begin{center}
                        \includegraphics[scale=0.37]{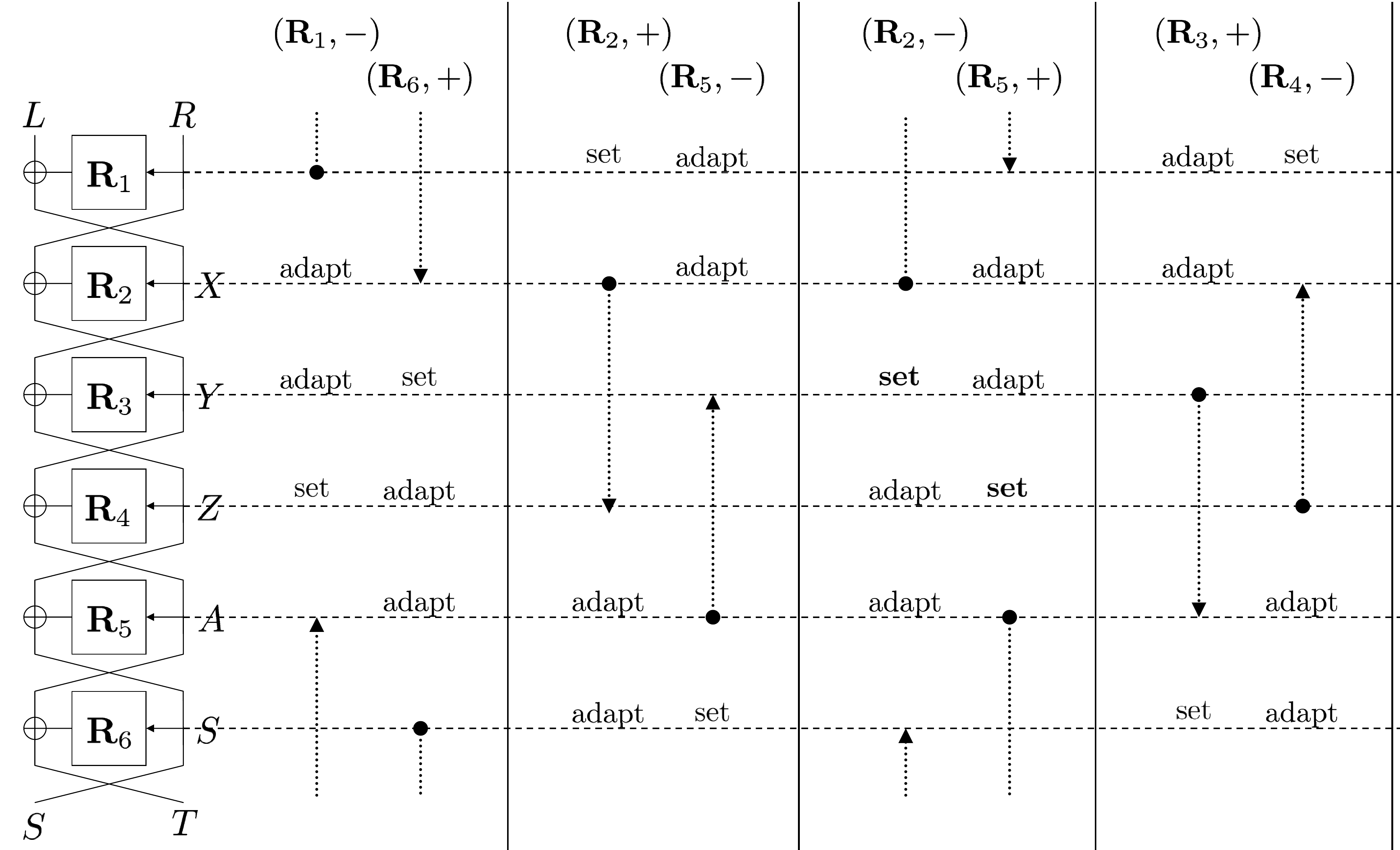}
                \end{center}
                \caption{An illustration of how $\sitwo$ completes $3$-chains}
                \label{figCCSv2}
\end{figure}

\paragraph{The $\XorQuery$ procedures.}
As mentioned earlier, the procedures $\XorQuery_1$, $\XorQuery_2$, and
$\XorQuery_3$ perform additional calls to $\ChainQuery$
\textit{before} the chains for $\ChainQuery(x,k)$ are completed.

The idea behind $\XorQuery_1$ is the following: We consider the
execution of $\ChainQuery(X,2)$ for some $X$ and in this execution a
recursive call to $\ChainQuery(Y,3)$ that occurs for some $Y$. For
$\ChainQuery(Y,3)$, $\XorQuery_1$ should ensure that for any $(\URF_3,
+)$ chain $(Y, Z, A)$, $X:= \URF_3(Y) \oplus Z$ is not in
$\URF_2$. The same property should be ensured symmetrically for
$\ChainQuery(A,5)$. 
\begin{lstlisting}[language=pseudocode,name=cpssim]
procedure $\XorQuery_1(x,k)$: 
   if $k=5$ then   
      $\mathcal{A}' := \{x \oplus R_1 \oplus R_2 \notin \URF_5 | R_1, R_2 \in \URF_1, R_1 \neq R_2 \}$ 
   else if $k=1$ then   
      $\mathcal{A}' := \{A \oplus x \oplus R_2 \notin \URF_5 | A \in \URF_5, R_2 \in \URF_1\}$ 
   if $k=5$ or $k=1$ then 
      forall $A' \in \mathcal{A}'$ do
        if $\exists R' \in \URF_1, \exists S' \in \URF_6 : \URP^{-1}(S'|| \URF_6(S') \oplus A')\rightp = R'$ then 
            $\URF_5(A') \inu \{0,1\}^n$ 
            $\ChainQuery(A',5)$ 
   if $k=2$ then   
      $\mathcal{X}' := \{x \oplus S_1 \oplus S_2 \notin \URF_2 | S_1, S_2 \in \URF_6, S_1 \neq S_2 \}$ 
   else if $k=6$ then   
      $\mathcal{X}' := \{X \oplus x \oplus S_2 \notin \URF_2 | X \in \URF_2, S_2 \in \URF_6\}$  
   if $k=2$ or $k=6$ then 
      forall $X' \in \mathcal{X}'$ do
        if $\exists S' \in \URF_6, \exists R' \in \URF_1 : \URP(\URF_1(R') \oplus X'||R')\leftp = S'$ then 
            $\URF_2(X') \inu \{0,1\}^n$ 
            $\ChainQuery(X',2)$    
\end{lstlisting}
 $\XorQuery_2$ and $\XorQuery_3$ are used as follows: Consider the
execution of $\ChainQuery(Y,3)$ for some $Y$ and in this execution a
recursive call to $\ChainQuery(X,2)$. These two procedures should
ensure that, under certain assumptions, the simulator does not abort
in the next two recursion levels, and after these two levels certain
properties hold. The same holds symmetrically for $\ChainQuery(A,5)$
for some $A$. Again, this is just an intuition and it is not clear at
this point if these goals are achievable with $\XorQuery_2$ and
$\XorQuery_3$.
\begin{lstlisting}[language=pseudocode,name=cpssim]
procedure $\XorQuery_2(x,k)$:
   $\mathcal{M} := \{(L,R,Z,A,S) |  \URP^{-1}(S||A\oplus \URF_6(S)) = L||R,$   
                                   $R \notin \URF_1,$ 
                                   $A \in \URF_5,$ 
                                   $S \in \URF_6,$ 
                                   $Z = \URF_5(A) \oplus S \}$ 
   forall $(L,R,Z,A,S) \in \mathcal{M}$ do
      if $k=6$ and $\exists Z' \in \URF_4 \setminus \{Z\}: \URP(L \oplus Z \oplus Z' ||R)\leftp = x$ or 
                 $k=3$ and $\exists S' \in \URF_6 : \URP(L \oplus x \oplus Z ||R)\leftp = S'$ then 
         $\URF_1(R) \inu \{0,1\}^n$ 
         $\ChainQuery(R,1)$ 
   $\mathcal{M} := \{(S,T,R,X,Y) |  \URP(X \oplus \URF_1(R)|| R) = S||T,$   
                                    $S \notin \URF_5,$ 
                                    $X \in \URF_2,$ 
                                    $R \in \URF_1,$ 
                                    $Y = \URF_2(X) \oplus R \}$ 
   forall $(S,T,R,X,Y) \in \mathcal{M}$ do
      if $k=1$ and $\exists Y' \in \URF_3 \setminus \{Y\}: \URP^{-1}(S||T \oplus Y \oplus Y')\rightp = x$ or 
                  $k=4$ and $\exists R' \in \URF_1, : \URP^{-1}(S||T \oplus x \oplus Y)\rightp = R'$ then 
         $\URF_6(S) \inu \{0,1\}^n$ 
         $\ChainQuery(S,6)$ 
 
 
procedure $\XorQuery_3(x,k)$:
   $\mathcal{R} := \{(Y,R_1,R_2) |  \URP^{-1}(S_1||A_1\oplus \URF_6(S_1)) = L_1||R_1,$   
                                          $\URP^{-1}(S_2||A_2\oplus \URF_6(S_2)) = L_2||R_2,$   
                                          $Y \notin \URF_3,$   
                                          $S_1 \in \URF_6$   
                                          $Z_1 = \URF_5(A_1) \oplus S_1, Z_1 \in \URF_4,$   
                                          $A_1 = \URF_4(Z_1) \oplus Y, A_1 \in \URF_5$   
                                          $S_2 \in \URF_6$   
                                          $Z_2 = \URF_5(A_2) \oplus S_2, Z_2 \in \URF_4,$   
                                          $A_2 = \URF_4(Z_2) \oplus Y, A_2 \in \URF_5 \} $ 
    if $k=3$ and $\exists (Y, R_1, R_2) \in \mathcal{R}: Y = x \oplus R_1 \oplus R_2$ then 
       $\URF_3(Y) \inu \{0,1\}^n$
       $\ChainQuery(Y,3)$    
    $\mathcal{S} := \{(Z,S_1,S_2) |  \URP(\URF_1(R_1) \oplus X_1||R_1) = S_1||T_1,$   
                                          $\URP(\URF_1(R_2) \oplus X_2||R_2) = S_2||T_2,$   
                                          $Z \notin \URF_4,$   
                                          $R_1 \in \URF_1$   
                                          $Y_1 = \URF_2(X_1) \oplus R_1, Y_1 \in \URF_3,$   
                                          $X_1 = \URF_3(Y_1) \oplus Z, X_1 \in \URF_2$   
                                          $R_2 \in \URF_1$   
                                          $Y_2 = \URF_2(X_2) \oplus R_2, Y_2 \in \URF_3,$   
                                          $X_2 = \URF_3(Y_2) \oplus Z, X_2 \in \URF_2 \}$   
   if $k=4$ and $\exists (Z, S_1, S_2) \in \mathcal{S}: Z = x \oplus S_1 \oplus S_2$ then 
       $\URF_4(Z) \inu \{0,1\}^n$
       $\ChainQuery(Z,4)$    
\end{lstlisting}

\paragraph{Illustrations describing the $\XorQuery$ procedures.} 
We provide illustrations to better describe procedures $\XorQuery_1$,
$\XorQuery_2$, and $\XorQuery_3$.

Fig.~\ref{figXorQuery1} illustrates how $\XorQuery_1$ works. In the
figure, the values that are required to be in the history are marked
with boxes, and the value $x$ that $\XorQuery_1$ is called upon is
marked with a circle. We abbreviate $\ChainQuery$ by \textsf{CQ}.

For example, the left upper quarter of Fig.~\ref{figXorQuery1}
describes calls to $\XorQuery_1(A,5)$ for some $A$. Upon such a call,
$\sitwo$ computes the values $A'$ for any pairs $R_1, R_2$ in $\URF_1$
where $R_1 \neq R_2$. Now if some $A'$ is not in $\URF_5$ (in the
figure, there is no box around $A'$), then $\sitwo$ checks if there is
a 3-chain $(A', S', R')$ for some $S' \in \URF_6$ and $R' \in \URF_1$
(in the figure, there are boxes around these values). If such a chain
is found, $\sitwo$ calls $\ChainQuery(A',5)$. In the figure, we write
$\XorQuery_1(A,5) \rightarrow \textsf{CQ}(A',5)$ to say this.
\begin{figure}[ht]
                \begin{center}
                        \includegraphics[scale=0.5]{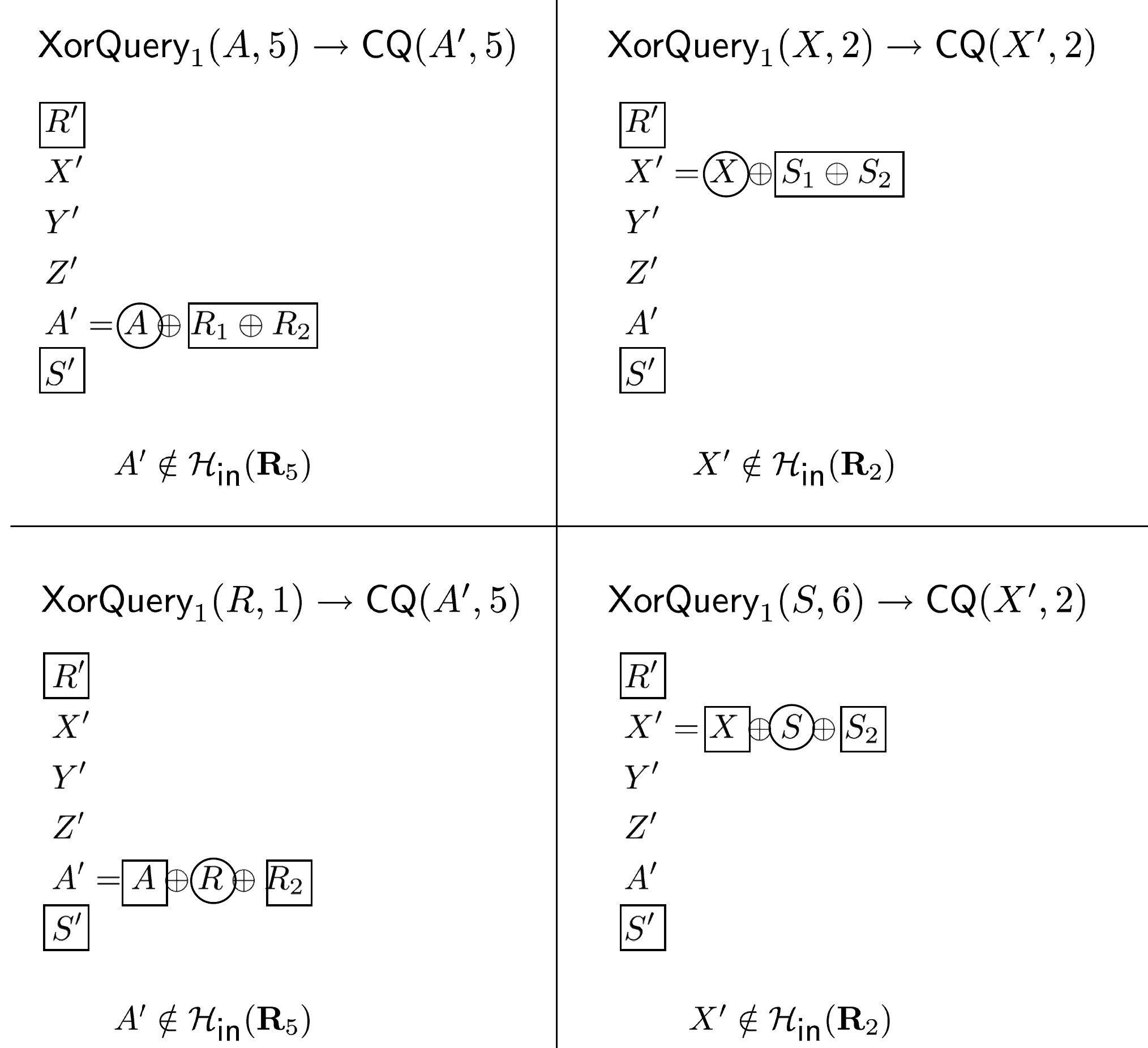}
                \end{center}
                \caption{An illustration for $\XorQuery_1$}
                \label{figXorQuery1}
\end{figure}

Fig.~\ref{figXorQuery2} provides an illustration of $\XorQuery_2$. The notation is the same as in the illustration for $\XorQuery_1$.
\begin{figure}[ht]
                \begin{center}
                        \includegraphics[scale=0.47]{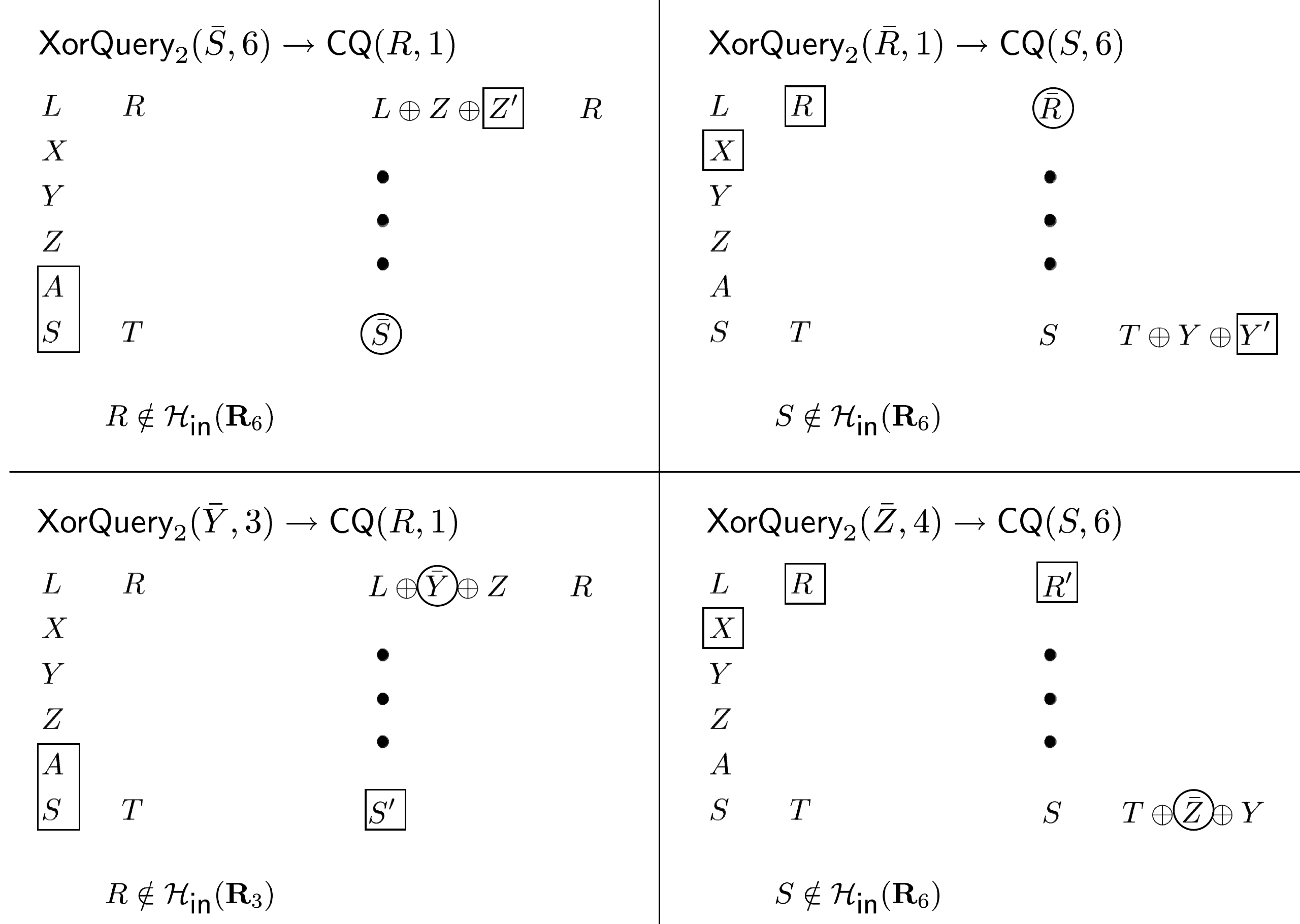}
                \end{center}
                \caption{An illustration for $\XorQuery_2$}
                \label{figXorQuery2}
\end{figure}

For $\XorQuery_3$ we provide Fig.~\ref{figXorQuery3} to faciliate the understanding.
\begin{figure}[ht]
                \begin{center}
                        \includegraphics[scale=0.47]{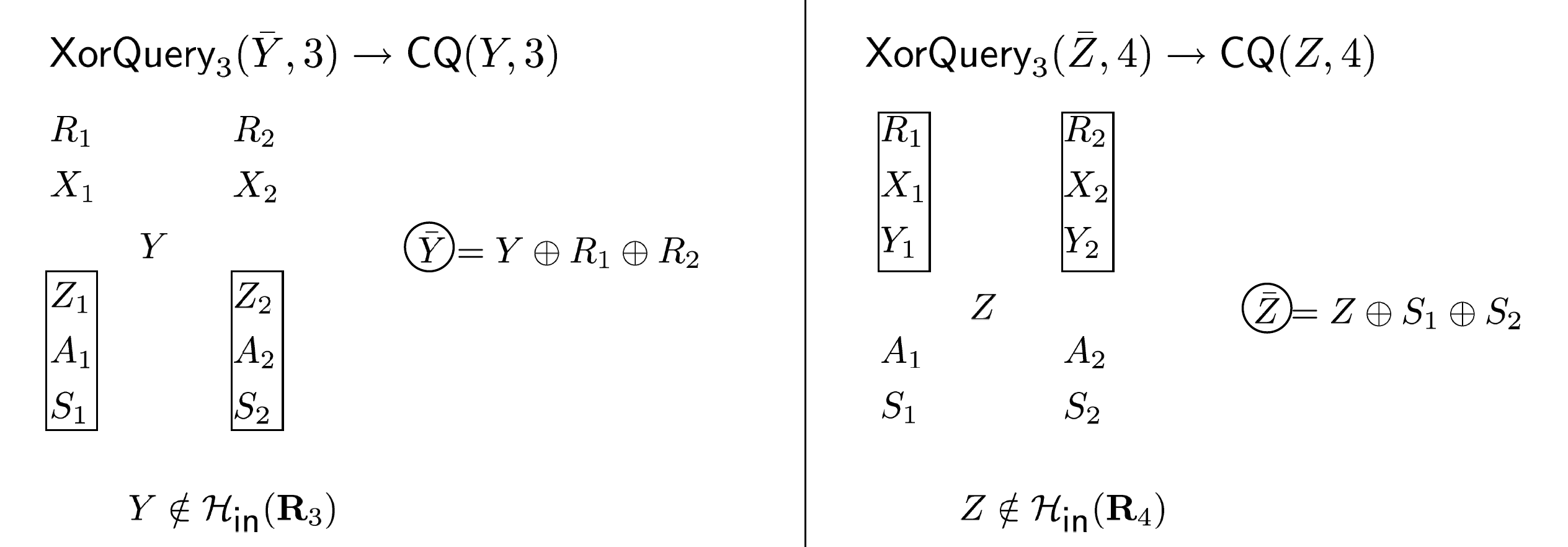}
                \end{center}
                \caption{An illustration for $\XorQuery_3$}
                \label{figXorQuery3}
              \end{figure} \\
              Here is a description of $\XorQuery_3$ for the case of
              $k=3$: Upon query $(x,3)$ where $x = \bar{Y}$ the
              procedure sets $\URF_3(Y) \inu \{0,1\}^n$ and calls
              $\ChainQuery(Y,3)$ for any $Y \notin \URF_3$ if there
              are $3$-chains $(Z_1, A_1, S_1)$ and $(Z_2, A_2, S_2)$
              (as in the figure) in the history and the corresponding
              $R_1$ and $R_2$ are such that $\bar{Y} = Y \oplus R_1
              \oplus R_2$.

\section{Detailed Analysis of the Attack against the Simulator of Coron et al.}

\label{app:analysisattack}

Formally, the distinguisher $\di$ is defined as follows. (The
distinguisher's output bit is irrelevant, since its goal is to let the
simulator abort.)

      \lstdefinelanguage{pseudocode}{numbers=left,numberstyle=\tiny,mathescape,escapechar=*,morekeywords={System,Distinguisher,
          abort,procedure,if,else,do,done,endif,forall,new,assert,end,return,while,then,private,public,
          Simulator, Variables,true,false},flexiblecolumns}

\begin{lstlisting}[language=pseudocode,name=distinguisher]
Distinguisher $\di$
    $X \leftarrow_R \{0,1\}^n$, $R_2 \leftarrow_R \{0,1\}^n$, $R_3 \leftarrow_R \{0,1\}^n$
    $L_2 := \URF_1(R_2) \oplus X$, $L_3 := \URF_1(R_3) \oplus X$
    $S_2\|T_2 := \URP(L_2\|R_2)$, $S_3\|T_3 := \URP(L_3\|R_3)$
    $A_2 := \URF_6(S_2) \oplus T_2$, $A_3 := \URF_6(S_3) \oplus T_3$
    $R_1 := R_2 \oplus A_2 \oplus A_3$
    $L_1 := \URF_1(R_1) \oplus X$
    $S_1\|T_1 := \URP(L_1\|R_1)$
    $A_1 := \URF_6(S_1) \oplus T_1$
    $\bar{A} := A_1 \oplus R_1 \oplus R_2$
    query  $\URF_5(\bar{A})$
\end{lstlisting}

\newcommand{\Bad}{\mathsf{Bad}} 
\newcommand{\Hist}{\widetilde{\mathcal{H}}}

\noindent {\em Implementing $\URP$.} The distinguisher makes $7 < 2^3$
queries to $\URF$ and three permutation queries. Assume without loss
of generality that at most $B := 2^{50}$ queries are made to the
permutation $\URP$ (or its inverse $\URP^{-1}$) by the simulator or by
the distinguisher. Once more than $B$ $\URP$ queries are made, we
assume that the simulator $\sitwo$ aborts. By Lemma 1
in~\cite{CPS08v2}, the probability that abort occurs in the original
experiment (where no limit on the number of $\URP$ queries made by the
simulator is imposed) is at most $2^{55}/2^n = \mathcal{O}(2^{-n})$
lower than in the version where the query number is bounded by
$B$. Note that, while large, $B$ is constant, and although it could be
made significantly smaller for the purposes of this section, we use
the larger value to rely on the analysis of \cite{CPS08v2}.

\newcommand{\aup}{\uparrow} \newcommand{\ado}{\downarrow}

It is convenient to think of $\URP$ as being implemented as follows:
Initially, two lists $\set{L}_{\ado}$ and $\set{L}_{\aup}$ of $B$
uniformly distributed, but distinct, $2n$-bit values are
generated. Then, each time a $\URP$ query is issued with input $x$, we
first check if $\URP(x)$ is defined (in which case we simply return
the previously defined value), or if $\URP^{-1}(y) = x$ (in which case
we return $y$.) Otherwise, we assign to $\URP(x)$ the first value $y$
in $\set{L}_{\ado}$ such that $\URP^{-1}(y)$ is undefined, and let
$\URP^{-1}(y) := x$. Also, $\URP^{-1}$ queries are answered
symmetrically.  It is not hard to verify that this gives rise to a uniform
random permutation as long as at most $B$ queries are made: Each
forward query $\URP(x)$ assigns a value from $\set{L}_{\ado}$ which is
uniformly distributed among all values for which $\URP^{-1}(y)$ is not
defined, and, if $x \in \set{L}_{\aup}$, ensures that $x \in
\set{L}_{\ado}$ cannot be used as an answer to a query to $\URP^{-1}$.

For simplicity, denote as $\set{L}_1$ and $\set{L}_2$ the lists
containing the first and second halves, respectively, of the elements of
$\set{L} := \set{L}_{\ado}\|\set{L}_{\aup}$. (Here, $\|$ denotes list
concatenation.)  \bigskip

\noindent {\em Initialization.} We consider the interaction of $\di$
and $\sitwo$, and show that the latter aborts with overwhelming
probability. Before executing Line 11, it is clear that no additional
$\URF_i(x)$ entry is defined by one of the \XorQuery\ calls, since
only $\XorQuery_1$ and $\XorQuery_2$ can be called when answering
queries to either of $\URF_1$ or $\URF_6$, but the histories of
$\URF_2, \URF_3, \URF_4, \URF_5$ are all empty. Thus when
$\URF_5(\bar{A})$ is called, only the values $R_1, R_2, R_3, S_1, S_2,
S_3$ are in the history of $\sitwo$, and consequently, no $3$-chain
exists so far. Also, the first three elements of $\set{L}$ are
$S_1\|T_1$, $S_2\|T_2$, and $S_3\|T_3$.

The following definitions introduce bad events: Their definition is
tailored at what is needed later, and may appear confusing at first. (We
invite the reader to skip their definition, and come back later.)

\begin{definition}
  The event $\Bad_{\URP}$ occurs if one of the following is true:
  \begin{enumerate}[(i)]
  \item There exists a collision among the first or second halves of
    the elements of $\set{L}$;
  \item $\set{L}_2 \cap \{R_1, R_2, R_3\} \ne \emptyset$, that is, the
    second half of some element of $\set{L}$ is in $\{R_1, R_2, R_3\}$;
  \item There exists $R, R' \in \set{L}_2$ such that $R_1 \oplus R_2 = R \oplus R'$.
  \end{enumerate}\end{definition}

\begin{definition}
  The event $\Bad_1$ occurs if one of the following is true:
\begin{enumerate}[(i)]
\item Two elements among $R_1, R_2, R_3$ collide;
\item $A \oplus \URF_6(S) \in \set{L}_2$ for $(A, S) \in (\URF_5 \cup \{\bar{A}\})
  \times \URF_6 \setminus \{(A_i, S_i)\,|\, i = 1,2,3\}$;
\item There exists $(i, j) \in \{(1,3), (2,3)\}$ and $k \in \{1,2,3\}$
  such that for $A' := \bar{A} \oplus R_i \oplus R_j \notin
  \URF_5$ we have $\URF_6(S_k) \oplus A' \in \set{L}_2$;
\item There exist $i,j,k,\ell \in \{1,2,3\}$, $i \ne j$, such that for $A'' := A_\ell \oplus R_i \oplus R_j
  \notin \URF_5$ we have $\URF_6(S_k) \oplus A'' \in
  \set{L}_2$;
\item $R_i \oplus R_j \oplus A_j \oplus A = 0^n$ for $i \ne j$, $A \in
  \{\bar{A}, A_1, A_2, A_3\}$, and $(i, j, A) \notin \{(i,i,A_i)\,|\,
  i = 1,2,3\} \cup \{(1,2, A_3), (2,1, \bar{A})\}$.
\end{enumerate}
\end{definition}

The following two lemmas upper bound the probability of these events
occurring.

\begin{lemma}
  $\Pr[\Bad_{\URP}] = \mathcal{O}(B^2 \cdot 2^{-n})$.
\end{lemma}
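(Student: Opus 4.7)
The plan is to decompose $\Bad_{\URP}$ via a union bound over its three defining conditions and show that each of (i), (ii), (iii) individually contributes at most $O(B^2/2^n)$. I expect the dominant terms to come from the birthday-type bounds in (i) and (iii); (ii) will in fact give a smaller $O(B/2^n)$ contribution that is absorbed. Throughout, I will exploit the fact that $\set{L}$ is sampled at the very start of the experiment, so its distribution is essentially independent of all later randomness introduced by the distinguisher and the simulator.

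For (i), the list $\set{L}$ consists of $2B$ nearly uniform $2n$-bit strings (the only restriction is distinctness inside each of $\set{L}_{\ado}$ and $\set{L}_{\aup}$, a constraint of negligible influence at this scale). Consequently each of $\set{L}_1$ and $\set{L}_2$ is a list of $2B$ $n$-bit values whose pairwise marginal distribution, for any fixed pair of positions, puts mass at most $2^{-n}(1+o(1))$ on equality. A union bound over the $2\binom{2B}{2} = O(B^2)$ candidate collisions (first halves, then second halves) yields $O(B^2/2^n)$.

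For (ii), the values $R_2, R_3$ are drawn uniformly by $\di$ independently of $\set{L}$, so each lies in the length-$2B$ list $\set{L}_2$ with probability at most $2B/2^n$. The more delicate case is $R_1 = R_2 \oplus \URF_6(S_2) \oplus T_2 \oplus \URF_6(S_3) \oplus T_3$, which syntactically depends on $T_2, T_3 \in \set{L}_2$. The rescue is that $\URF_6(S_2)$ and $\URF_6(S_3)$ are assigned as fresh, independent, uniform $n$-bit values by the simulator (independent of $\set{L}$), so conditioned on $\set{L}$ and $R_2$, the distribution of $R_1$ is uniform. Hence $\Pr[R_1 \in \set{L}_2] \le 2B/2^n$, and a union bound over the three indices gives $O(B/2^n)$.

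For (iii), the crucial identity is $R_1 \oplus R_2 = \URF_6(S_2) \oplus \URF_6(S_3) \oplus T_2 \oplus T_3$. Provided $S_2 \neq S_3$ (a collision event already controlled inside (i), since $S_2, S_3 \in \set{L}_1$), the XOR $\URF_6(S_2)\oplus \URF_6(S_3)$ is a uniform $n$-bit string independent of $\set{L}$; consequently, conditioned on $\set{L}$, the value $R_1 \oplus R_2$ is uniformly distributed in $\{0,1\}^n$. For each fixed pair $R, R' \in \set{L}_2$ (of which there are $\binom{2B}{2} = O(B^2)$), the probability that $R_1 \oplus R_2 = R \oplus R'$ is therefore exactly $2^{-n}$, and summing gives $O(B^2/2^n)$. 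Combining (i), (ii), (iii) by the union bound proves the lemma. The main obstacle is not quantitative but bookkeeping: one must carefully identify the order in which randomness is revealed, so that $\URF_6(S_2)$ and $\URF_6(S_3)$ can legitimately be treated as fresh uniform samples independent of $\set{L}$ at the moments needed in (ii) and (iii).
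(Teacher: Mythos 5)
Your proposal is correct and follows essentially the same route as the paper: a union bound over conditions (i)--(iii), with birthday-type bounds of $O(B^2/2^n)$ for the half-collisions in $\set{L}$, $O(B/2^n)$ for membership of $R_1,R_2,R_3$ in $\set{L}_2$, and $O(B^2/2^n)$ for the XOR condition over pairs of $\set{L}_2$. The only difference is one of exposition: you make explicit the freshness of $\URF_6(S_2)\oplus\URF_6(S_3)$ (modulo $S_2\neq S_3$, which is subsumed by (i) up to the negligible event $R_2=R_3$) to argue that $R_1$ and $R_1\oplus R_2$ are uniform given $\set{L}$, a point the paper's very terse proof leaves implicit.
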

\begin{proof}
  For (i), note that for any two $i, j \in \{1, \ldots, 2B\}$, $i \ne j$, and $h \in \{1,2\}$,
  \begin{displaymath}
    \Pr[V_i|_h = V_j|_h] = 2^n \cdot \frac{2^n \cdot (2^{n} - 1)}{2^{2n} \cdot (2^{2n} - 1)} = \mathcal{O}(2^{-n})
  \end{displaymath}
  and $\Pr[V_i|_h \in \{R_1,R_2,R_3\}] \le 3 \cdot 2^{-n}$. The bound
  follows by the union bound. Finally, an upper bound of the
  probability of (iii) is $B^2 \cdot 2^n \frac{2^n \cdot (2^n -
    1)}{2^{2n} (2^{2n}-1)} = \mathcal{O}(B^2 \cdot 2^{-n})$.
\end{proof}

\begin{lemma}
  $\Pr[\Bad_1] = \mathcal{O}(B \cdot 2^{-n})$.
\end{lemma}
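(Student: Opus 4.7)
My plan is to bound each of the five sub-events (i)--(v) separately and conclude by a union bound. The randomness consists of $X, R_2, R_3$ sampled by $\di$, the lists $\set{L}_{\ado}, \set{L}_{\aup}$ realizing $\URP$, and the uniform values assigned to $\URF_1(R_i)$ and $\URF_6(S_i)$ as they are first queried. Throughout, I will work conditional on $\lnot \Bad_{\URP}$; by the previous lemma this costs only $\mathcal{O}(B^2 \cdot 2^{-n})$ (absorbed in the final bound only if we are careful, but the bound in question is $\mathcal{O}(B\cdot 2^{-n})$, so we should in fact just note that outside of $\Bad_{\URP}$ all ``good'' independence statements hold, and the remaining bound only requires $\mathcal{O}(B\cdot 2^{-n})$). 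Under $\lnot\Bad_{\URP}$, the three permutation outputs $(S_i,T_i)$ for $i\in\{1,2,3\}$ are essentially uniform and $S_1,S_2,S_3$ are distinct; consequently the values $\URF_6(S_1),\URF_6(S_2),\URF_6(S_3)$ are independent uniform, and so are $\URF_1(R_1),\URF_1(R_2),\URF_1(R_3)$ provided $R_1,R_2,R_3$ are distinct.

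For case (i), the three pairwise collisions among $R_1,R_2,R_3$ each occur with probability $O(2^{-n})$: $R_2=R_3$ is a collision of two independent uniforms, while $R_1=R_2$ reduces to $\URF_6(S_2)\oplus T_2 = \URF_6(S_3)\oplus T_3$, which by freshness of one of $\URF_6(S_2),\URF_6(S_3)$ has probability $2^{-n}$; the case $R_1=R_3$ is symmetric. For cases (ii)--(iv), the condition has the shape ``$\URF_6(S)\oplus A \in \set{L}_2$'' for an $(A,S)$ chosen from a set of constant size (since at the relevant point $\URF_6 = \{S_1,S_2,S_3\}$ and $\URF_5\subseteq\{\bar A\}$). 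For each such pair, expanding $A$ in terms of the random values (note $\bar A$ and the $A_\ell$ all unfold to linear combinations of the independent uniform quantities $\URF_6(S_1),\URF_6(S_2),\URF_6(S_3)$, together with $T_1,T_2,T_3,R_2,R_3$), the resulting expression is a sum involving at least one of these uniform values that is independent of $\set{L}_2$; hence hitting $\set{L}_2$ has probability at most $|\set{L}_2|/2^n = B/2^n$. Union-bounding over the $O(1)$ admissible $(A,S)$ in each case yields $O(B\cdot 2^{-n})$.

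Case (v) is purely about linear dependencies: expand $R_i$ and $A_j$ in terms of $\URF_6(S_1),\URF_6(S_2),\URF_6(S_3), T_1,T_2,T_3, R_2, R_3$. The equation $R_i\oplus R_j\oplus A_j\oplus A = 0$ then becomes a fixed $\mathbb{F}_2$-linear combination of these independent uniforms. The excluded tuples are exactly those combinations which simplify to $0$ identically (the tautological identities $R_i\oplus R_i\oplus A_i\oplus A_i=0$, and the two non-trivial but always-true identities $(1,2,A_3)$ and $(2,1,\bar A)$ coming from $R_1 = R_2\oplus A_2\oplus A_3$ and the definition $\bar A = A_1\oplus R_1 \oplus R_2$). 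For every other tuple the linear combination has a non-zero coefficient on one of the uniform values, so it equals $0$ with probability exactly $2^{-n}$; there are $O(1)$ such tuples, contributing $O(2^{-n})$.

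The main obstacle is case (v): one must verify by direct inspection that the only tuples $(i,j,A)$ producing an identically-zero combination are precisely those listed in the definition, and that every other tuple leaves at least one of the fresh $\URF_6(S_k)$ values with a non-zero coefficient. This is a finite, tedious check but contains no conceptual difficulty once the expansions of $\bar A, A_1, R_1$ in terms of the independent uniform variables are written out. Summing the per-case bounds then yields $\Pr[\Bad_1]\le\mathcal{O}(B\cdot 2^{-n})$.
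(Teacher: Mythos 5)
Your proof follows essentially the same route as the paper: a union bound over the five sub-events, with (i)--(iv) handled by exhibiting a fresh uniform term (an $\URF_6(S_k)$ value) independent of $\set{L}$, and (v) handled by an $\mathbb{F}_2$-linear expansion showing the excluded tuples are exactly the identically-vanishing combinations. Two blemishes are worth fixing. First, the global conditioning on $\lnot\Bad_{\URP}$ cannot deliver the stated bound: converting back to an unconditional statement costs $\Pr[\Bad_{\URP}] = \mathcal{O}(B^2\cdot 2^{-n})$, which dominates $\mathcal{O}(B\cdot 2^{-n})$, and your parenthetical does not resolve this. The paper avoids conditioning altogether -- each sub-event is bounded by identifying, inside the relevant sum, a term ($\URF_6(S_k)$, or an independent $\di$-chosen value) that is uniform and independent of $\set{L}$ and of the other terms; the only distinctness you actually need is among $S_1,S_2,S_3$ (and among $R_1,R_2,R_3$ for the freshness of the $\URF_6$/$\URF_1$ assignments), which are themselves $\mathcal{O}(2^{-n})$-probability failures and can be folded into the bound directly. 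Second, in case (v) your final claim that every non-excluded tuple ``leaves at least one of the fresh $\URF_6(S_k)$ values with a non-zero coefficient'' is not literally true: e.g.\ $(i,j,A)=(2,3,A_3)$ collapses to $R_2 \oplus R_3 = 0$, with no $\URF_6$ term at all. This tuple is still fine -- $R_2,R_3$ are sampled independently and uniformly by $\di$, so the probability is $2^{-n}$ (the paper singles this residual case out explicitly) -- and your earlier, weaker formulation (``non-zero coefficient on one of the uniform values'', where the list includes $R_2,R_3$) already covers it, but the case analysis should state it that way.
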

\begin{proof}
  For (i), we observe that the random variable $R_1$ is uniform and
  independent of $R_2$ and $R_3$, since $A_2$ and $A_3$ are
  independent of $R_2$ and $R_3$ due to $\URF_6(S_2)$ and
  $\URF_6(S_3)$ being chosen uniformly and independently. Thus, the
  probability that $R_1 = R_2$, $R_2 = R_3$ or $R_1 = R_3$ is at most
  $3 \cdot 2^{-n}$.

  It is also easy to verify that the value $T' := A \oplus \URF_6(S)$
  is uniformly distributed and independent of $\set{L}$ for $(A, S)
  \in (\URF_5 \cup \{\bar{A}\}) \times \URF_6 \setminus \{(A_i,
  S_i)\,|\, i = 1,2,3\}$, and therefore $T' \in \set{L}_2$ holds with
  probability at most $2B \cdot 2^{-n}$; an upper bound on the
  probability of (ii) follows by the union bound.

  To bound (iii), we use the fact that for all $(i, j) \in \{(1,3),
  (2,3)\}$ and $k \in \{1,2,3\}$ the value $T' := \URF_6(S_k) \oplus
  A'$ equals
  \begin{displaymath}
    \URF_6(S_k) \oplus \URF_6(S_1) \oplus T_1  \oplus R_1 \oplus R_2 \oplus R_i \oplus R_j 
  \end{displaymath}
  and is hence uniformly distributed and independent of
  $\set{L}$. Therefore, (iii) occurs with probability at most $2B
  \cdot 2^{-n}$. Similarly, to bound (iv), we observe that for all $i
  \ne j$, and $k, \ell \in \{1,2,3\}$ the value $T'' := \URF_6(S_k) \oplus
  A''$ equals
  \begin{displaymath}
    \URF_6(S_k) \oplus \URF_6(S_\ell) \oplus T_\ell \oplus R_i \oplus R_j
  \end{displaymath}
  and is therefore uniformly distributed, and independent of
  $\set{L}$.

  In (v) we see that in all cases, substituting $\bar{A}$ with $A_1
  \oplus R_1 \oplus R_2$, $A_i$ with $\URF_6(S_i) \oplus T_i$, 
  and $R_1$ with $R_2 \oplus A_2 \oplus A_3$, we end up in one of the
  follwing two cases: Either the given sum still contains at least one term which is uniformly
  distributed and independent of the other terms, and thus the sum
  equals $0^n$ with probability $2^{-n}$. Or, the resulting equation 
  is $R_2 = R_3$, which does not hold by the choice of these values by $\di$.
	The actual bound follows by
  a union bound over all possible combinations. 
\end{proof}

From now on, the analysis assumes that neither of $\Bad_{\URP}$ and
$\Bad_{1}$ has occurred.  Before we proceed in analyzing the rest of
the execution, we prove the following lemmas, which will be useful to
simplify the analysis below, and rely on the assumptions that the above
events do not occur.

\begin{lemma}
  \label{lem:noxorquery1}
  As long as $\URF_1 = \{R_1, R_2, R_3\}$ and
  $\URF_6 = \{S_1, S_2, S_3\}$, no $\XorQuery_1(A_i, 5)$ (for
  $i = 1, 2, 3$) call results in a recursive $\ChainQuery$ call.
\end{lemma}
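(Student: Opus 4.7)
My plan is to proceed by direct inspection of the code of $\XorQuery_1$ under the hypotheses of the lemma, ruling out every way in which a recursive $\ChainQuery$ call could be triggered. Since $k = 5$ and $\URF_1 = \{R_1, R_2, R_3\}$, the set $\mathcal{A}'$ assembled at the start of $\XorQuery_1(A_i, 5)$ is contained in $\{A_i \oplus R_j \oplus R_\ell : j \neq \ell,\ j,\ell \in \{1,2,3\},\ A_i \oplus R_j \oplus R_\ell \notin \URF_5\}$, and the existential test iterates over pairs $(R', S') \in \{R_1, R_2, R_3\} \times \{S_1, S_2, S_3\}$. A recursive $\ChainQuery$ is triggered only if, for some $A' \in \mathcal{A}'$ and some such $(R', S')$, the check $\URP^{-1}(S' \| \URF_6(S') \oplus A')\rightp = R'$ succeeds; thus it suffices to show that no such check can succeed.

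I will split the analysis according to whether the inner $\URP^{-1}$ query is cached. In the uncached case, its answer is pulled from $\set{L}_{\ado}$ and its second coordinate lies in $\set{L}_2$; under $\lnot\Bad_{\URP}$ the set $\set{L}_2$ is disjoint from $\{R_1, R_2, R_3\}$, so the returned second coordinate cannot match any $R' \in \URF_1$ and the check fails. In the cached case, $\lnot\Bad_1$ forces the input $S' \| \URF_6(S') \oplus A'$ to coincide with one of the three cached pairs $S_k \| T_k$, pinning down $S' = S_k$ and $A' = A_k$, with the returned second coordinate equal to $R_k$. Substituting $A' = A_i \oplus R_j \oplus R_\ell = A_k$ yields $R_j \oplus R_\ell \oplus A_i \oplus A_k = 0$, which $\lnot\Bad_1$ (condition (v)) rules out except for the two relations wired into the attack by construction, namely $R_1 \oplus R_2 = A_2 \oplus A_3$ (from the choice of $R_1$) and $R_1 \oplus R_2 = A_1 \oplus \bar{A}$ (from the choice of $\bar{A}$).

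It will then remain to show that the exceptional $A'$ values arising from these two relations are already in $\URF_5$ at the moment $\XorQuery_1(A_i, 5)$ is called, so that the filter $A' \notin \URF_5$ removes them from $\mathcal{A}'$. The first relation produces $A' = A_3$ when $i = 2$ and $A' = A_2$ when $i = 3$; the second produces $A' = \bar{A}$ when $i = 1$. The case $A' = \bar{A}$ is easy, since any $\XorQuery_1(A_i, 5)$ invocation under our hypotheses sits inside the recursion launched by the distinguisher's outermost $\Query(\bar{A}, 5)$, which already placed $\bar{A}$ into $\URF_5$. For the remaining two cases I will trace the execution of $\sitwo$ starting from $\Query(\bar{A}, 5)$ and show that $A_2$ and $A_3$ must both be inserted into $\URF_5$ (via prior chain completions emanating from the $3$-chains $(S_i, R_i, X)$) before either of $\XorQuery_1(A_2, 5)$ or $\XorQuery_1(A_3, 5)$ is reached.

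The main obstacle is precisely this last piece of control-flow accounting: while the algebraic parts reduce to mechanical applications of $\lnot\Bad_{\URP}$ and $\lnot\Bad_1$, certifying that both $A_2$ and $A_3$ belong to $\URF_5$ whenever $\XorQuery_1(A_2, 5)$ or $\XorQuery_1(A_3, 5)$ executes requires a careful case analysis of the order in which $\sitwo$ performs its nested $\ChainQuery$ and $\CompleteChain$ calls during the processing of $\Query(\bar{A}, 5)$. The generic uncached and non-exceptional cached subcases are immediate; the delicate work lies entirely in tracking when the exceptional $A'$'s are first populated.
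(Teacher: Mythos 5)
Your fresh/uncached case matches the paper's, and your algebra is sound, but the route you take through the cached case is not the paper's and it leaves exactly the hole you flag at the end --- and that hole is a genuine gap, not a routine verification. The lemma's hypotheses constrain only $\URF_1$ and $\URF_6$; they say nothing about $\URF_5$. Under your decomposition (condition (v) of $\mathsf{Bad}_1$ plus a filter argument) the statement is therefore not provable from what you are allowed to assume: if, say, $A_3 \notin \URF_5$ at the moment $\XorQuery_1(A_2,5)$ runs, then $A' = A_2 \oplus R_1 \oplus R_2 = A_3$ survives the filter, the backward query hits the cached pair $S_3\|T_3$, the test returns $R_3 \in \URF_1$, and a recursive $\ChainQuery$ call \emph{is} made. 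So your plan stands or falls with the control-flow accounting you defer --- that $\URF_5(A_2)$ and $\URF_5(A_3)$ are populated (during the completions inside $\ChainQuery(X,2)$) before any invocation of $\XorQuery_1(A_2,5)$ or $\XorQuery_1(A_3,5)$, and that these calls cannot be reached by any other path, e.g.\ through $\XorQuery_1$ itself setting a fresh $\URF_5$ value and recursing. You have only gestured at this, and it sits badly with how the lemma is used: it is invoked inside Lemma~\ref{lem:nochainquery3} and in phases where the paper deliberately makes no assumption on the order of the pending $\ChainQuery$ calls, so an order-dependent proof of this lemma would entangle it with the very execution analysis it is supposed to support.

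The missing idea is item (iv) of $\mathsf{Bad}_1$, which was defined precisely to make this lemma order-free, and which the paper's two-line proof uses in place of your item (v). Item (iv) states that for \emph{every} candidate $A'' = A_\ell \oplus R_i \oplus R_j$ with $A'' \notin \URF_5$ --- i.e.\ every element that can appear in $\mathcal{A}'$ --- we have $\URF_6(S_k) \oplus A'' \notin \mathcal{L}_2$ for all $k$. Hence the input $S'\|\URF_6(S')\oplus A''$ of the backward query cannot coincide with any previously returned forward output (those lie in $\mathcal{L}$, so their right halves lie in $\mathcal{L}_2$); the query is therefore answered by an element of $\mathcal{L}$, whose right half is in $\mathcal{L}_2$ and thus, by item (ii) of $\lnot\mathsf{Bad}_{\URP}$, outside $\{R_1,R_2,R_3\}$, so the \textbf{if}-test fails. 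There are no exceptional values, no appeal to the wired-in relations, and no tracking of when $\URF_5$ entries are created; your entire cached case, including the $\bar{A}$ sub-case (which in fact never yields a cache hit, since $\bar{A} \neq A_k$ under $\lnot\mathsf{Bad}_1$), collapses into this single stroke. (A minor slip in your fresh case: backward answers are drawn from $\mathcal{L}_{\uparrow}$, not $\mathcal{L}_{\downarrow}$; this is immaterial since both halves of either list feed $\mathcal{L}_2$.)
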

\begin{proof}
  For the {\bf if} statement within $\XorQuery_1(A_i, 5)$ call to be
  satisfied, there must exist $A'' = A_i \oplus R_i \oplus R_j$ such
  that $R_i \ne R_j$ and $\URP^{-1}(S'\|\URF(S') \oplus A'')|_2 \in
  \{R_1, R_2, R_3\}$, where $S' \in \{S_1, S_2, S_3\}$. However, since
  $\URF(S') \oplus A'' \notin \set{L}_2$, as this implies $\Bad_{1}$,
  we need to have $\URP^{-1}(S'\|\URF_6(S') \oplus A'') \in
  \set{L}$. But then, since $\URP^{-1}(S'\|\URF(S') \oplus A'')|_2 \in
  \{R_1, R_2, R_3\}$, we also have $\Bad_{\URP}$.
\end{proof}

\begin{lemma}
  \label{lem:no:xorquery2}
  Assume that $\URF_1 = \{R_1, R_2, R_3\}$ and
  $\URF_6 = \{S_1, S_2, S_3\}$. Then, as long as
  $\Bad_{\URP}$ holds, a call to $\XorQuery_2(x, k)$ for $k \in
  \{3,4\}$ and $x \notin \URF_{7 - k}$ does not set any new
  values and does not make any new recursive \ChainQuery\ calls.
\end{lemma}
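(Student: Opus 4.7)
The plan is to inspect the two triggering blocks of $\XorQuery_2$ separately. For $k\in\{3,4\}$ only one block can fire, since the inner guards restrict the active block to $k\in\{3,6\}$ (first block, which extends $\URF_1$) and to $k\in\{1,4\}$ (second block, which extends $\URF_6$). So for $k=3$ only the first block is at stake, and for $k=4$ only the second. Under the hypothesis I read ``as long as $\Bad_\URP$ holds'' as ``as long as $\neg\Bad_\URP$ holds'', as otherwise the statement would be vacuous.

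The key deterministic consequence of $\neg\Bad_\URP$ together with $\URF_1=\{R_1,R_2,R_3\}$ and $\URF_6=\{S_1,S_2,S_3\}$ that I would exploit is a uniqueness property for the partial permutation: the only entries of $\URP$ whose first output half lies in $\URF_6$ \emph{and} whose second input half lies in $\URF_1$ are the three distinguisher entries $\URP(L_i\|R_i)=S_i\|T_i$. This is because the first three slots of $\set{L}_{\downarrow}$ absorb exactly these three image pairs; $\neg\Bad_\URP$(i) forbids any further collisions of first halves in $\set{L}$, and $\neg\Bad_\URP$(ii) prevents any $R_i$ from appearing as a second half elsewhere in $\set{L}$. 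For $k=3$, I would fix $(L,R,Z,A,S)\in\mathcal M$ (so $R\notin\URF_1$) and use the uniqueness property on the $\URP$ query that the trigger demands: matching an initial entry forces $R=R_i\in\URF_1$, a contradiction. The only other way a forward entry with first half in $\URF_6$ can appear is as a reverse-engineered entry $\URP(L'\|R')=S_i\|T'$ arising from a prior $\URP^{-1}(S_i\|T')$ query with $R'\notin\URF_1$; in that case $R=R'$ and $L=L'$ (each second half appears once by $\neg\Bad_\URP$(i)), so the trigger collapses to $x=Z=\URF_5(A)\oplus S$, which is excluded by the $\Bad_1$ bookkeeping developed earlier in the appendix.

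For $k=4$ a dual argument applies. The trigger forces $\URP^{-1}(S\|T\oplus x\oplus Y)$ to coincide with an initial entry $\URP^{-1}(S_i\|T_i)=L_i\|R_i$ by backward uniqueness, hence $S=S_i$ and $T\oplus x\oplus Y=T_i$; uniqueness applied forward to the entry $\URP(X\oplus\URF_1(R)\|R)=S\|T$ that is part of $\mathcal M$'s definition then pins $R=R_i$, $T=T_i$, and $X\oplus\URF_1(R_i)=L_i$, so $Y=\URF_2(X)\oplus R_i$ and the trigger simplifies to $x=Y$. The final step uses the attack's scheduling: inside $\ChainQuery(X,2)$ the three chains $(S_j,R_j,X)$ are all handed to $\CompleteChain$ as a batch \emph{before} any recursive $\ChainQuery$ runs, so by the time any call $\XorQuery_2(\cdot,4)$ is reachable, all three values $Y_j=\URF_2(X)\oplus R_j$ have already been inserted into $\URF_3$; hence $x=Y_i$ would imply $x\in\URF_3$, contradicting the hypothesis $x\notin\URF_3$.

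The main obstacle is the reverse-engineered branch of the $k=3$ case: closing it forces one to exclude the coincidence $x=\URF_5(A)\oplus S$, which is not purely local to $\XorQuery_2$ and must be discharged through the $\Bad_1$ bad-event machinery (and possibly one additional bad event of the same flavor). Once that exclusion is bundled into the global bad event, the lemma follows from the two structural arguments above.
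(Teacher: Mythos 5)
Your overall skeleton is the same as the one used in the paper's proof: you classify how the permutation entry demanded by the $\XorQuery_2$ trigger could have come into existence (one of the three initial entries $\URP(L_i\|R_i)=S_i\|T_i$, a fresh value drawn from $\set{L}$, or an entry reverse-engineered from a query in the opposite direction) and derive contradictions from the collision-freeness of $\set{L}$ guaranteed by $\lnot\Bad_{\URP}$, with the $\Bad_1$-type events covering the remaining coincidences. Two steps, however, do not go through as written.

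First, the residual coincidences are exactly where the hypothesis $x\notin\URF_{7-k}$ is meant to be used, and your proposal never uses it. For $k=3$ you correctly reduce the reverse-engineered branch to the coincidence $x=Z=\URF_5(A)\oplus S$, but then defer to ``the $\Bad_1$ bookkeeping \dots possibly one additional bad event''; no such event exists among $\Bad_{\URP},\Bad_1,\Bad_2,\Bad_3$, so what you establish is a weaker statement under an undischarged extra exclusion, not the lemma as stated. The paper's proof dispatches precisely this case through the hypothesis ($x\neq Z$, as otherwise $x\in\URF_4$). For $k=4$ you close the analogous case $x=Y$ by claiming that all three values $Y_j=\URF_2(X)\oplus R_j$ are already in $\URF_3$ before any call $\XorQuery_2(\cdot,4)$ is reachable; this is false in the execution being analyzed: the recursive calls $\ChainQuery(X,2)$, $\ChainQuery(Y_1,3)$, $\ChainQuery(Z_1,4)$ triggered by the first completion are processed in an arbitrary, unspecified order (the analysis explicitly refuses to assume one), so $\XorQuery_2(Z_1,4)$ may run before $\ChainQuery(X,2)$, at which point $\URF_3=\{Y_1\}$ only. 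Moreover, importing a scheduling property of the attack is foreign to the lemma, whose own hypotheses ($x\notin\URF_3$ together with the stated bad events) are what the ``fully symmetric'' argument relies on. Second, a smaller omission: in the reverse-engineered branch for $k=3$, deducing $L=L'$ from the uniqueness of second halves within $\set{L}$ requires knowing that $L\|R$ itself lies in $\set{L}$; you must rule out that the answer to $\URP^{-1}(S\|\URF_6(S)\oplus A)$ was a pre-existing forward entry whose input is outside $\set{L}$. The paper's proof does this by first noting $A\neq A_i$ (otherwise $R=R_i\in\URF_1$) and then invoking item (ii) of $\Bad_1$. (Your unease about the $x=Z$ step is not unfounded, since the one-line justification in the paper presumes $Z\in\URF_4$; but a proof of the lemma as stated has to route this case through the hypothesis $x\notin\URF_{7-k}$ rather than through a new bad event or an ordering assumption.)
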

\begin{proof}
  Let $k = 3$ and consider a tuple $(L, R, Z, A, S) \in \set{M}$: This
  means that querying $\URP^{-1}$ on input $S\|\URF_6(S) \oplus A$
  (with $S = S_i$ for some $i \in \{1, 2,3\}$) returns a pair $L\|R$
  with $R \notin \{R_1, R_2, R_3\}$. This in particular means that $A
  \ne A_i$ (as otherwise $R = R_i$), and also that $L\|R \in \set{L}$,
  as otherwise $S_i\|\URF_6(S_i) \oplus A \in \set{L}$, and (ii) for
  $\Bad_1$ would have occurred.  However, if the {\bf if} statement is
  satisfied, this also means that $\URP(L \oplus x \oplus Z\|R)|_1 =
  S' \in \{S_1, S_2, S_3\}$. But since $\Bad_{\URP}$ has not happened,
  this means that there has been a previous $\URP^{-1}$ query with
  input $S'\|T'$ which has returned $L \oplus x \oplus Z\|R \in
  \set{L}$. Yet, since $L \oplus x \oplus Z \ne L$ (due to $x \ne Z$,
  as otherwise $x \in \URF_4$), this also means that there has been a
  collision on the second half, and $\Bad_{\URP}$ has occurred.

  The case for $k = 4$ is fully symmetric.
\end{proof}

Added complexity in the execution of $\sitwo$ stems from the fact that
it tests for so-called ``virtual chains'', and we want to argue that
they do not play a role in the upcoming analysis of the attack. First,
note that as long as $\URF_2$ contains at most one element
(this is the case for most of the attack), $\URF^\ast_6 =
\URF_6$. Additionally, when calling $\ChainQuery(A, 5)$ for
$A \in \{\bar{A}, A_1, A_2, A_3\}$, in order for $\URF_1 \ne
\URF^\ast_1$ to occur, we need that there exist $(A, S), (A',
S') \in \{\bar{A}, A_1, A_2, A_3\} \times \{S_1, S_2, S_2\}$ such that
$A \ne A'$ and $\URP^{-1}(S\|\URF_6(S) \oplus A)|_2 =
\URP^{-1}(S'\|\URF_6(S') \oplus A')$. It is not hard to verify that
any possible case implies $\Bad_{\URP}$ or $\Bad_1$, and hence we can
safely ignore virtual chains in the following. (We will indeed need to
ignore virtual chains only for as long as the conditions needed for
these arguments hold.)
\bigskip

\noindent{\em First phase of the simulator's execution.}  From now on,
we continue the analysis of the execution under the assumption that
$\Bad_1$, and $\Bad_{\URP}$ have not occurred. Upon querying
$\URF_5(\bar{A})$, the simulator sets $\URF_5(\bar{A}) \inu \{0,1\}^n$ 
and then first executes
$\XorQuery_1(\bar{A},5)$: Note that $A_1 = \bar{A} \oplus R_1 \oplus
R_2 \notin \URF_5$ (since, at this point, the history of $\URF_5$ is
empty), and $A_1$ satisfies the {\bf if} statement (with $S' = S_1$
and $R' = R_1$) so that $\URF_5(A_1) \inu \{0,1\}^n$ is set and
$\ChainQuery(A_1,5)$ is called. Also, no other $\ChainQuery$ call
occurs, as if the condition in the {\bf if} statement is true for some
other value, then since (ii) in the definition of $\Bad_1$ does not
occur, this means that, for some input $x$, $\URP(x)$ has been
assigned a value from $\set{L}$ whose second half is in $\{R_1, R_2,
R_3\}$, which implies $\Bad_{\URP}$.

Moreover, in the subsequent execution of $\ChainQuery(A_1, 5)$ the
procedure $\XorQuery_1$ also does not invoke $\ChainQuery$ by
Lemma~\ref{lem:noxorquery1}. Moreover, $\Chain(+, A_1, 5) = \{(S_1,
R_1)\}$, since $(A_1, S_i, R_j)$ for $(i, j) \ne (1, 1)$ cannot
constitute a chain, as this would either imply (i) in the definition
of $\Bad_1$ or the fact that $\Bad_{\URP}$ occurs. Also, $\Chain(-,
A_1, 5) = \emptyset$, since no $\URF_3$ and $\URF_4$ values have been
defined so far. Therefore, $(S_1, R_1) \in \Chain(+,A_1,5)$ is found
and gets completed by $\CompleteChain$ to the tuple $(R_1, X, Y_1,
Z_1, A_1, S_1)$, by defining
\begin{eqnarray*}
	X := \URF_1(R_1) \oplus L_1, & Z_1 := \URF_5(A_1) \oplus S_1, &  \URF_{4}(Z_1) \leftarrow_R \{0,1\}^{n}, \\
	Y_1 := \URF_4(Z_1) \oplus A_1, & \URF_2(X) := R_1 \oplus Y_1,&  \URF_3(Y_1):= Z_1 \oplus X.
\end{eqnarray*}
We consider the following event defined on these new values.

\begin{definition}
  The event $\Bad_2$ occurs if one of the following holds:
  \begin{enumerate}[(i)]
  \item $Y_1 = Z_1$; 
  \item $Y_1 = Z_1 \oplus R_1 \oplus R$ where $R \in\{R_2, R_3\}$;
  \item $Z_1 \oplus \URF_5(\bar{A}) \in \set{L}_1$;
  \item $Z_1 \oplus \URF_5(\bar{A}) = S_1$.
  \end{enumerate}
\end{definition}
\begin{lemma}
  $\Pr[\Bad_2] = \mathcal{O}(B \cdot 2^{-n}) $
\end{lemma}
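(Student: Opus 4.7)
The plan is to bound each of the four cases separately and union-bound them. The key observation is that, in the execution just described, the three values $\URF_5(\bar A)$, $\URF_5(A_1)$, and $\URF_4(Z_1)$ are assigned in this order, each uniformly at random, and each to a \emph{fresh} entry. Indeed, $\URF_5(\bar A)$ is set when $\di$ issues its final query; $\URF_5(A_1)$ is set inside $\XorQuery_1(\bar A,5)$, and since we condition on $\lnot\Bad_1(\text{i})$ we have $R_1 \neq R_2$ and hence $A_1 = \bar A \oplus R_1 \oplus R_2 \neq \bar A$, so this is truly a new entry; $\URF_4(Z_1)$ is set inside $\CompleteChain$ on an index that is fresh since $\URF_4$ was previously empty. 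Thus the three values are mutually independent, and each is independent of everything defined so far, in particular of $X, R_1, R_2, R_3, S_1, A_1$ and the permutation lists $\set L_1, \set L_2$.

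With this independence in hand, I would handle the four items in order. For (i) I would rewrite $Y_1 = Z_1$ as $\URF_4(Z_1) = Z_1 \oplus A_1$; since $\URF_4(Z_1)$ is uniform and independent of the right-hand side (which was determined before $\URF_4(Z_1)$ was drawn), this has probability $2^{-n}$. Item (ii) is analogous: for each $R \in \{R_2,R_3\}$ the condition reads $\URF_4(Z_1) = Z_1 \oplus R_1 \oplus R \oplus A_1$, again of probability $2^{-n}$ by independence of $\URF_4(Z_1)$ from $Z_1, R_1, R, A_1$.

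For item (iii) I would substitute $Z_1 = \URF_5(A_1) \oplus S_1$ to get
\begin{displaymath}
Z_1 \oplus \URF_5(\bar A) \;=\; \URF_5(A_1) \oplus S_1 \oplus \URF_5(\bar A).
\end{displaymath}
Since $\URF_5(A_1)$ is uniform and independent of $S_1$, $\URF_5(\bar A)$, and $\set L_1$, the left-hand side is uniform in $\{0,1\}^n$; a union bound over the $B$ elements of $\set L_1$ yields probability at most $B\cdot 2^{-n}$. Item (iv) simplifies to $\URF_5(A_1) \oplus \URF_5(\bar A) = 0^n$, of probability $2^{-n}$ by independence and uniformity. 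Summing,
\begin{displaymath}
\Pr[\Bad_2] \;\leq\; 2\cdot 2^{-n} + 2\cdot 2^{-n} + B\cdot 2^{-n} + 2^{-n} \;=\; \mathcal{O}(B\cdot 2^{-n}).
\end{displaymath}

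The only delicate point is the independence claim: it requires checking the precise order of assignments performed by $\sitwo$ between the $\ChainQuery(\bar A,5)$ invocation and the completion of the chain $(S_1,R_1,X)$ into $(R_1,X,Y_1,Z_1,A_1,S_1)$. Lemma~\ref{lem:noxorquery1} and the discussion ruling out virtual chains already ensure no other $\XorQuery$ or $\ChainQuery$ recursion intervenes, so the three uniform draws happen exactly as described and nothing else touches the relevant table entries. Once this is established, each item becomes a one-line probability bound.
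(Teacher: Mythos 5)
Your proposal is correct and follows essentially the same route as the paper's proof: items (i), (ii) and (iv) are one-line bounds via uniformity of the freshly drawn $\URF_4(Z_1)$, respectively $\URF_5(A_1)$ and $\URF_5(\bar{A})$ (using $A_1 \neq \bar{A}$ from $\lnot\Bad_1$), and item (iii) is a union bound over $\set{L}_1$ after the same substitution $Z_1 = \URF_5(A_1)\oplus S_1$. The only nitpick is that $\set{L}_1$ consists of the first halves of all $2B$ entries of $\set{L} = \set{L}_{\downarrow}\|\set{L}_{\uparrow}$, so the bound in (iii) should be $2B\cdot 2^{-n}$ rather than $B\cdot 2^{-n}$, which does not affect the stated $\mathcal{O}(B\cdot 2^{-n})$ conclusion.
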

\begin{proof}
  For (i) and (ii), since $Y_1$ is uniform and independent of $Z_1$,
  $R_1$, and $R_2$ by the fact that $\URF_5(Y_1)$ is set uniformly,
  the values on both sides are equal with probability $2^{-n}$. 
  Furthermore, both $\URF_5(\bar{A})$ and $\URF_5(A_1)$ are set
  uniformly (since $A_1 \ne \bar{A}$ by $\Bad_1$ not occurring), and
  thus $Z_1 \oplus \URF_5(\bar{A}) = S_1 \oplus \URF_5(A_1) \oplus
  \URF_5(\bar{A})$ is uniform and independent of $\set{L}_1$, and is
  thus in the set with probability at most $2B \cdot 2^{-n}$ by the
  union bound. Similarly, we show that (iv) occurs with probability
  $2^{-n}$ only.
\end{proof}

\noindent {\em Second phase of the simulator's execution.}
Subsequently, the simulator schedules calls, in arbitrary\footnote{In
  particular, we do not want to make any assumption on the order in
  which they are called. Of course, it is easier to provide a proof if
  a certain processing order is assumed, and this would suffice to
  give a strong argument against $\sitwo$. Still, we opt for showing
  the strongest statement.} order, to $\ChainQuery(X, 2)$,
$\ChainQuery(Y_1, 3)$ and $\ChainQuery(Z_1, 3)$.

  \begin{lemma}
    No $\ChainQuery$ invocation preceding the invocation of
    $\ChainQuery(X, 2)$ issues a recursive $\ChainQuery$
    call. Furthermore, $\XorQuery_1(X, 2)$ within $\ChainQuery(X, 2)$
    also does not trigger a $\ChainQuery$ invocation.
  \end{lemma}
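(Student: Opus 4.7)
The plan is to exhaustively enumerate the $\ChainQuery$ invocations that can possibly precede $\ChainQuery(X,2)$, and for each to inspect the three potential sources of a recursive call: the $\XorQuery$ subroutines invoked at the start of $\ChainQuery$, and the iteration over the chain sets $\Chain(\pm, \cdot, \cdot)$. Given the state of the histories when $\CompleteChain$ returns, namely $\URF_1 = \{R_1, R_2, R_3\}$, $\URF_6 = \{S_1, S_2, S_3\}$, $\URF_5 = \{\bar{A}, A_1\}$, $\URF_4 = \{Z_1\}$, $\URF_3 = \{Y_1\}$, $\URF_2 = \{X\}$, the only new $\ChainQuery$ calls scheduled (besides the target $\ChainQuery(X,2)$) are $\ChainQuery(Y_1, 3)$ and $\ChainQuery(Z_1, 4)$, so only these two need to be analyzed.

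For $\ChainQuery(Y_1, 3)$, only $\XorQuery_2$ and $\XorQuery_3$ are invoked (since $3 \notin \{1,2,5,6\}$). Lemma~\ref{lem:no:xorquery2} disposes of $\XorQuery_2$ once we observe $Y_1 \notin \URF_4$, which holds under $\lnot\Bad_2(\mathrm{i})$. For $\XorQuery_3(Y_1, 3)$ I plan to show the set $\mathcal{R}$ is empty: since $\URF_5 = \{\bar{A}, A_1\}$ and $\URF_4 = \{Z_1\}$, the equation $A^* = \URF_4(Z_1) \oplus Y$ forces either $Y = Y_1$ (contradicting $Y \notin \URF_3$) or $A^* = \bar{A}$, in which case the required $S^* = \URF_5(\bar{A}) \oplus Z_1$ would have to lie in $\URF_6 \subseteq \set{L}_1$, contradicting $\Bad_2(\mathrm{iii})$. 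For $\Chain(+, Y_1, 3)$ the only candidates are $(Z_1, A_1)$ and $(Z_1, \bar{A})$; the first corresponds to the already-completed $6$-tuple $(R_1, X, Y_1, Z_1, A_1, S_1)$, so $\CompleteChain$ returns an empty contribution at its first check, while the second would force $\bar{A} = A_1$, i.e., $R_1 = R_2$, contradicting $\Bad_1(\mathrm{i})$. A fully symmetric argument, using Lemma~\ref{lem:no:xorquery2} with $Z_1 \notin \URF_3$ and analogous applications of $\Bad_1, \Bad_2$, handles $\ChainQuery(Z_1, 4)$; its only candidate in $\Chain(-, Z_1, 4)$ is the already-completed $(Y_1, X)$.

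The second half of the lemma concerns $\XorQuery_1(X, 2)$, which iterates over candidate values $X' = X \oplus S_i \oplus S_j \notin \URF_2$ for distinct $S_i, S_j \in \URF_6$ and checks whether $\URP(\URF_1(R') \oplus X' \| R')|_1 \in \URF_6 = \{S_1, S_2, S_3\}$ for some $R' \in \URF_1$. I expect this to be the main obstacle, and plan to handle it in two steps. First, the query input $(\URF_1(R') \oplus X' \| R') = (L_k \oplus S_i \oplus S_j \| R_k)$ cannot coincide with any previously issued forward $\URP$ query, since the history contains only $(L_1 \| R_1), (L_2 \| R_2), (L_3 \| R_3)$, and matching any of them would force $X' = X$, contradicting $X' \notin \URF_2$; hence the test triggers a fresh $\URP$ query. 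Second, by $\lnot\Bad_{\URP}(\mathrm{i})$ the elements of $\mathcal{L}$ have pairwise distinct first halves, so the only entries with first half in $\{S_1, S_2, S_3\}$ are the already-consumed $(S_i \| T_i)$; the fresh query therefore returns an output whose first half lies outside $\{S_1, S_2, S_3\}$, the guard in $\XorQuery_1$ fails for every $X'$, and no recursive $\ChainQuery$ call is invoked.
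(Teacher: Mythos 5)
Your proof follows essentially the same route as the paper's: enumerate the two scheduled calls $\ChainQuery(Y_1,3)$ and $\ChainQuery(Z_1,4)$, dispose of $\XorQuery_2$ via Lemma~\ref{lem:no:xorquery2} together with $Y_1\neq Z_1$ (i.e.\ $\lnot\Bad_2$(i)), show the $\XorQuery_3$ guards cannot fire, observe that the only $3$-chains found are the already-completed ones, and finally kill the guard in $\XorQuery_1(X,2)$ using $\Bad_{\URP}$. The one stylistic difference is in $\XorQuery_3(Y_1,3)$: you prove $\set{R}=\emptyset$ by ruling out the $\bar{A}$-branch with $\lnot\Bad_2$(iii), whereas the paper instead notes that any triple in $\set{R}$ forces $R=R'$, so the condition $Y=x\oplus R\oplus R'$ would force $Y=Y_1\in\URF_3$, contradicting $Y\notin\URF_3$; both arguments are valid.

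One step is justified with an inaccurate premise, although your conclusion is correct and the fix is already in your toolkit. In the $\XorQuery_1(X,2)$ analysis you assert that the forward-query history consists only of $(L_1\|R_1),(L_2\|R_2),(L_3\|R_3)$ and conclude that $\URP(\URF_1(R_k)\oplus X'\|R_k)$ is a fresh query. By this point the simulator has itself issued many $\URP$/$\URP^{-1}$ queries (inside $\XorQuery_1(\bar{A},5)$ and $\XorQuery_1(A_1,5)$, in the chain-set and $\URF_1^\ast$ computations, and in any $\XorQuery_2$ executed during a preceding $\ChainQuery(Y_1,3)$ or $\ChainQuery(Z_1,4)$), so the permutation table contains more than those three entries; in particular, inverse queries of the form $\URP^{-1}(S_m\|\cdot)$ create forward entries whose outputs have first half $S_m\in\{S_1,S_2,S_3\}$, and if the $\XorQuery_1$ forward query hit such an entry the guard would in fact succeed, so this case cannot simply be ignored. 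Freshness needs two further observations: table entries created by inverse queries have forward inputs taken from the list $\set{L}$, hence second halves in $\set{L}_2$, which is disjoint from $\{R_1,R_2,R_3\}$ by $\lnot\Bad_{\URP}$(ii); and the simulator's own earlier forward queries (those issued while evaluating the $\XorQuery_2$ conditions) have second halves $R\notin\URF_1$ by the definition of the sets $\set{M}$. With these two remarks your two-step argument --- the query is fresh, and then $\lnot\Bad_{\URP}$(i) keeps the fresh answer's first half outside $\{S_1,S_2,S_3\}$ --- closes, and it then matches the paper's (similarly terse) argument, which likewise cites only $\Bad_{\URP}$(i) at this point.
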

  \begin{proof}
    Assume that $\ChainQuery(X, 2)$ has not been invoked yet.  Calls
    to $\XorQuery_2$ cannot invoke $\ChainQuery$ recursively by
    Lemma~\ref{lem:no:xorquery2} and the fact that $Y_1 \ne Z_1$ (using $\Bad_2$). Also
    note that $\XorQuery_3(Y_1, 3)$ cannot produce recursive
    $\ChainQuery$ calls: Note that $A_1 \ne \bar{A}$ (equality implies
    (i) in $\Bad_1$), and therefore any triple $(Y, R, R') \in
    \set{R}$ {\em must} satisfy $R = R'$ and $Y \notin \URF_3$. But
    then, we cannot have $Y_1 = Y \in \URF_3$, and thus the {\bf if}
    statement is never satisfied. Similarly, the fact that
    $\XorQuery_3(Z_1, 4)$ does not invoke \ChainQuery\ follows from
    the fact that both $\URF_3$ and $\URF_2$ only contain one single
    element.

    For both possible \ChainQuery\ calls, it also clear that no
    additional $3$-chains to be completed are found. Namely, within
    $\ChainQuery(Y_1, 3)$, only the chains $(R_1, X, Y_1)$ and $(Y_1,
    Z_1, A_1)$ are possible, and both have been completed. Moreover,
    when running $\ChainQuery(Z_1, 4)$, the $3$-chain $(X, Y_1, Z_1)$
    has also already been completed, whereas no chain $(Z_1, \bar{A},
    S_1)$ exists, as this would yield $\Bad_2$.

    Finally, when $\ChainQuery(X, 2)$ is invoked, we also observe that
    if the {\bf if} statement in the execution of $\XorQuery_1(X, 2)$
    cannot be true: It would imply that there exists $X' \ne X$ such
    that $\URP(X' \oplus \URF_1(R_i)\|R_i)|_1 \in \{S_1, S_2, S_3\}$,
    but since $\URF_2(X') \oplus R_i \ne L_i$, this implies
    $\Bad_{\URP}$ (i).  \end{proof}

  We hence can consider the execution of $\ChainQuery(X, 2)$, as any
  previous $\ChainQuery$ invocation does not affect the history of the
  simulated round functions. First, note that $\Chain(-,X,2) = \{(R_2,
  S_2), (R_3, S_3)\}$ and $\Chain(+, X, 2) = \{(Y_1, Z_1)\}$. The
  latter chain was already completed, therefore both negative chains
  are completed, by defining values
  \begin{eqnarray*}
    Y_2 := \URF_2(X) \oplus R_2, & \URF_3(Y_2) \leftarrow_R \{0,1\}^n,&  Z_2 := X \oplus \URF_3(Y_2),\\
 \URF_4(Z_2) := Y_2 \oplus A_2, &   \URF_5(A_2) := Z_2 \oplus S_2,&  
  \end{eqnarray*}
  as well as
  \begin{eqnarray*}
    Y_3 := \URF_2(X) \oplus R_3, & \URF_3(Y_3) \leftarrow_R \{0,1\}^n,&  Z_3 := X \oplus \URF_3(Y_3),\\
    \URF_4(Z_3) := Y_3 \oplus A_3, &   \URF_5(A_3) := Z_3 \oplus S_3.&  
  \end{eqnarray*}
  In addition, let us introduce the last bad event in this analysis,
  defined on the newly defined values.
 
  \begin{definition}
    The event $\Bad_3$ occurs if one of the following holds:
    \begin{enumerate}[(i)]
    \item $\URF_3 \cap\URF_4 \ne \emptyset $;
    \item $Z \oplus \URF_5(A) \in \set{L}_1$ for $(Z, A) \in
      \URF_4 \times \URF_5 \setminus \{(Z_i, A_i)\,|\, i = 1,2,3\}$.
    \end{enumerate}
  \end{definition}

\begin{lemma}
  $\Pr[\Bad_3] \le \mathcal{O}(2^{-n})$
\end{lemma}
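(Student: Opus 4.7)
The plan is to bound each part of $\Bad_3$ by enumerating finitely many potential collisions, using the structure of the histories that have been fixed at this point in the execution. Conditioning on the previous bad events not occurring, we have $\URF_3 = \{Y_1, Y_2, Y_3\}$, $\URF_4 = \{Z_1, Z_2, Z_3\}$, and $\URF_5 = \{\bar{A}, A_1, A_2, A_3\}$, and we know the order in which uniform random values have been drawn, namely $\URF_5(\bar{A})$, $\URF_5(A_1)$, $\URF_4(Z_1)$, $\URF_3(Y_2)$, $\URF_3(Y_3)$. The remaining values in the histories are functions of these draws and of the distinguisher's chosen values, and each newly drawn value is independent of everything preceding it as well as of $\set{L}_1$.

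For part (i), there are $9$ pairs $(Y_i, Z_j)$ to check. Using the identities $Y_1 = A_1 \oplus \URF_4(Z_1)$, $Y_2 = R_1 \oplus R_2 \oplus Y_1$, $Y_3 = R_1 \oplus R_3 \oplus Y_1$, $Z_1 = S_1 \oplus \URF_5(A_1)$, $Z_2 = X \oplus \URF_3(Y_2)$, $Z_3 = X \oplus \URF_3(Y_3)$, I would argue that the collisions $Y_1 = Z_1$, $Y_2 = Z_1$, $Y_3 = Z_1$ are exactly the three conditions ruled out by $\Bad_2$(i) and $\Bad_2$(ii) (once one rewrites $Y_j = Z_1$ as $Y_1 = Z_1 \oplus R_1 \oplus R_j$). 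For each of the remaining six pairs, the equation involves $\URF_3(Y_2)$ or $\URF_3(Y_3)$ as a freshly drawn uniform value that is independent of the other side, so each collides with probability $2^{-n}$, giving $6 \cdot 2^{-n}$ in total.

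For part (ii), there are $9$ pairs $(Z, A)$ in $\URF_4 \times \URF_5 \setminus \{(Z_i, A_i)\}$. For each, I plan to exhibit an independent uniform summand in $Z \oplus \URF_5(A)$. Concretely: the three pairs with $A = \bar{A}$ contain $\URF_5(\bar{A})$, which is independent of the $\set{L}_1$-values generated by $\URP$ (all drawn when $\URF$ was still empty); the two pairs with $Z = Z_1$ and $A \in \{A_2, A_3\}$ leave $\URF_5(A_1)$ as an independent uniform summand via $Z_1 = S_1 \oplus \URF_5(A_1)$; the pairs $(Z_2, A_1), (Z_3, A_1)$ contain $\URF_3(Y_2)$ or $\URF_3(Y_3)$ independently of $\URF_5(A_1)$; and finally for $(Z_2, A_3)$ and $(Z_3, A_2)$, substituting $\URF_5(A_k) = Z_k \oplus S_k$ yields $\URF_3(Y_2) \oplus \URF_3(Y_3) \oplus S_k$, which is uniform because $Y_2 \ne Y_3$ (this in turn follows from $R_2 \ne R_3$, ruled out by $\Bad_1$(i)). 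Each pair thus lies in $\set{L}_1$ with probability at most $|\set{L}_1|/2^n = B/2^n$, for a total of $9B/2^n$.

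Combining, $\Pr[\Bad_3 \mid \lnot(\Bad_{\URP} \cup \Bad_1 \cup \Bad_2)] \le 6 \cdot 2^{-n} + 9B \cdot 2^{-n} = \mathcal{O}(2^{-n})$ since $B$ is a constant. The one subtle point — and the main place where care is needed — is verifying the independence claims for part (ii) against \emph{all} earlier draws (including those made inside the $\XorQuery$ procedures and the first two $\CompleteChain$ calls). This amounts to checking that $\set{L}_1$ is fixed at the start of the experiment (it is, by the choice of implementation of $\URP$), and that in each of the nine cases the isolated summand ($\URF_5(\bar{A}), \URF_5(A_1), \URF_3(Y_2)$, or $\URF_3(Y_3)$) is genuinely drawn after the other quantities in the expression become fixed; a linear scan over the five draws in the execution order above suffices to confirm this.
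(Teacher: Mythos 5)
Your proposal is correct and follows essentially the same route as the paper's proof: for part (i) you reduce the collisions involving $Z_1$ to $\Bad_2$ and charge each remaining pair to a fresh $\URF_3(Y_2)$ or $\URF_3(Y_3)$ draw, and for part (ii) you isolate in each of the nine pairs a uniform summand ($\URF_5(\bar{A})$, $\URF_5(A_1)$, $\URF_3(Y_2)$, or $\URF_3(Y_3)$) independent of $\set{L}_1$, exactly as in the paper, only spelled out pair by pair (and making explicit the needed fact $Y_2 \neq Y_3$ from $\Bad_1$(i)). The only cosmetic wrinkle is your closing verification recipe (``drawn after the other quantities are fixed''), which does not literally hold for the $\URF_5(\bar{A})$ and $\URF_5(A_1)$ cases; the correct justification there is simply the mutual independence of all fresh table entries and $\set{L}$, which your argument already uses implicitly.
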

\begin{proof}
  For (i), note that the value of $Z_i$ is independent of the value of
  $Y_1, Y_2, Y_3$ for $i = 2, 3$, and thus equality only occurs with
  negligible probability. (Recall that we already assume that $Y_1 \ne
  Z_1$). Moreover, the case that $Z_1$ equals $Y_2$ or $Y_3$ is would
  already imply $\Bad_2$: This is because we would have, for $i =
  2,3$,
  \begin{displaymath}
    Z_1 = Y_i = \URF_2(X) \oplus R_i = R_1 \oplus Y_1 \oplus R_i.
  \end{displaymath}
  For (ii), we claim that $Z \oplus \URF_5(A)$ is always uniform and
  independent of $\set{L}$ when it is defined. If $Z = Z_1$, then note
  that $\URF_5(\bar{A})$ is set uniformly and independently of $Z_1$
  and $\set{L}_1$, whereas $\URF_5(A_i) = X \oplus \URF_2(Y_i) \oplus
  S_i$ for $i = 2,3$, where $\URF_2(Y_i)$ is set independently and
  uniformly. If $Z = Z_i$ for $i = 2, 3$, then note that $\URF_5(A_1)$
  and $\URF_5(\bar{A})$ are set uniformly, whereas by the above
  $\URF_5(A_{5 - i})$ is also independent and uniform.
\end{proof}

\noindent {\em Final phase of the simulator's execution.}  From now
on, $\ChainQuery(Y_i, 3)$, $\ChainQuery(Z_i, 4)$ and $\ChainQuery(A_i,
5)$ for $i = 2,3$ are called, in any order.\footnote{Once again, we
  dispense with any assumption on the execution order adopted by the
  simulator.}  The crucial point is reached as soon as one of
$\ChainQuery(Y_2, 3)$ or $\ChainQuery(A_3, 5)$ is invoked. However, we
need to show that all calls preceding one of these calls do not start
recursions or set additional function values. First, however, we show
that no other chains occur, other than those we expect.

\begin{lemma}
  \label{lem:nochain2}
  Only the $3$-chains $(Y_i, Z_i, A_i)$ for $i=1,2,3$, $(Y_1, Z_2,
  A_3)$ and $(Y_2, Z_1, \bar{A})$ exist in $\URF_3 \times
  \URF_4 \times \URF_5$. Moreover, only the three $3$-chains 
  $(Z_i, A_i, S_i)$ for $i = 1, 2, 3$ exist in $\URF_4
  \times \URF_5 \times \URF_6$. 
\end{lemma}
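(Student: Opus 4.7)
My plan is to enumerate all potential $3$-chains in each of the two product sets and show that only the claimed ones satisfy the Feistel constraints. At this point in the execution we have $\URF_3 = \{Y_1, Y_2, Y_3\}$, $\URF_4 = \{Z_1, Z_2, Z_3\}$, $\URF_5 = \{\bar A, A_1, A_2, A_3\}$, and $\URF_6 = \{S_1, S_2, S_3\}$. The key algebraic identity I would use is $Y_i \oplus Y_j = R_i \oplus R_j$ for all $i, j \in \{1,2,3\}$: this follows because the first adapt set $\URF_2(X) = R_1 \oplus Y_1$, so every subsequently defined $Y_j = \URF_2(X) \oplus R_j$ (for $j = 2, 3$), and $Y_1 = \URF_2(X) \oplus R_1$ as well.

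For the first claim, I would fix $Z_j$ and observe that by construction $\URF_4(Z_j) = Y_j \oplus A_j$ for each $j \in \{1,2,3\}$ (for $j = 1$, $Y_1$ was defined as $\URF_4(Z_1) \oplus A_1$; for $j = 2, 3$ this is exactly how the adapt placed $\URF_4(Z_j)$). Thus $(Y_i, Z_j, A_k)$ is a $3$-chain iff $R_i \oplus R_j \oplus A_j \oplus A_k = 0^n$. The triples corresponding to the claimed chains are $(i, j, A_k) = (j, j, A_j)$ (trivial) together with the two deterministic identities $(1, 2, A_3)$ (giving $(Y_1, Z_2, A_3)$, since $R_1 \oplus R_2 = A_2 \oplus A_3$ by the definition of $R_1$) and $(2, 1, \bar A)$ (giving $(Y_2, Z_1, \bar A)$, since $\bar A = A_1 \oplus R_1 \oplus R_2$). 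Any other solution to $R_i \oplus R_j \oplus A_j \oplus A_k = 0$ is precisely an instance of $\Bad_1(\text{v})$, so by assumption none occurs. The degenerate subcases where the equation collapses to $R_i = R_j$ or $A_j = A_k$ are either excluded by $\Bad_1(\text{i})$ (for the $R$'s) or reduce, after substituting $\bar A = A_1 \oplus R_1 \oplus R_2$ and $A_2 \oplus A_3 = R_1 \oplus R_2$, to further non-excluded instances of $\Bad_1(\text{v})$; I would verify this by a brief case analysis on $(i, j, k)$.

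For the second claim, I would fix $A_k$ and use $\URF_5(A_k) = Z_k \oplus S_k$ for $k \in \{1, 2, 3\}$ (set in the first phase for $k = 1$ and in the second phase for $k = 2, 3$). Then $(Z_j, A_k, S_\ell)$ is a $3$-chain iff $Z_j \oplus S_\ell = Z_k \oplus S_k$. The trivial solution $(j, \ell) = (k, k)$ gives exactly the three claimed chains. For any other $(j, \ell)$, the pair $(Z_j, A_k)$ is not in $\{(Z_i, A_i) \mid i = 1,2,3\}$ (unless $j = k$, in which case we would need $S_\ell = S_k$, forcing $\ell = k$), so the equation $Z_j \oplus \URF_5(A_k) = S_\ell \in \URF_6 \subseteq \set{L}_1$ is precisely an instance of $\Bad_3(\text{ii})$. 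The chains involving $A_k = \bar A$ are handled the same way, since $(Z_j, \bar A)$ never has the form $(Z_i, A_i)$, and $\Bad_3(\text{ii})$ again rules out $Z_j \oplus \URF_5(\bar A) \in \set{L}_1$.

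The main obstacle will be the bookkeeping in the first claim: verifying that every triple $(i, j, A_k)$ with $i \neq j$ giving $R_i \oplus R_j \oplus A_j \oplus A_k = 0$ either coincides with one of the two excluded tautologies of $\Bad_1(\text{v})$ or is a forbidden instance of that bad event, even in the degenerate subcases $i = j$ or $A_k \in \{A_j\} \cup \{\bar A\}$. This amounts to a finite but slightly tedious case analysis over the $3 \cdot 3 \cdot 4 = 36$ potential triples $(i, j, k)$, which I would organize by fixing $j$ and treating the three possibilities $Z_1, Z_2, Z_3$ separately.
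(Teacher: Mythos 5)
Your proposal is correct and follows essentially the same route as the paper: reduce the first claim to the linear condition $R_i \oplus R_j \oplus A_j \oplus A = 0^n$ (via $Y_i = \URF_2(X)\oplus R_i$ and $\URF_4(Z_j)=Y_j\oplus A_j$) and rule out all non-listed solutions by $\Bad_1$, and reduce the second claim to $Z_j \oplus \URF_5(A_k) = S_\ell \in \set{L}_1$, excluded by $\Bad_3$(ii) except for the trivial triples. The paper states this in two lines; your version merely makes explicit the degenerate-case bookkeeping ($i=j$, $A_j=A_k$, distinctness of the $S_i$ via $\Bad_{\URP}$) that the paper leaves implicit.
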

\begin{proof}
  Such a chain $(Y_i, Z_j, A)$ implies
  \begin{displaymath}
    \begin{split}
      Y_i \oplus \URF_4(Z_j) \oplus A & = R_i \oplus \URF_2(X)
      \oplus \URF_4(Z_j) \oplus A = R_i \oplus \URF_2(X) \oplus Y_j
      \oplus A_j \oplus A \\
      & = R_i \oplus R_j \oplus A_j \oplus A = 0^n,
    \end{split}
  \end{displaymath}
  and hence $\Bad_1$. The second part of the statement follows from
  $\Bad_3$ not occurring and the fact that $S_1, S_2, S_3 \in
  \set{L}_1$.
\end{proof}

\begin{lemma}
  \label{lem:nochainquery3}
  No $\ChainQuery$ call preceding the invocation of both
  $\ChainQuery(Y_2, 3)$ and $\ChainQuery(A_3, 5)$ provokes a recursive
  $\ChainQuery$ call. Also, no $\XorQuery$ within the execution of
  $\ChainQuery(Y_2, 3)$ and $\ChainQuery(A_3, 5)$ (whichever is
  executed first) provokes a recursive $\ChainQuery$ call.
\end{lemma}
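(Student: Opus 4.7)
The plan is to handle this lemma by a case analysis on the identity of the preceding $\ChainQuery$ call. Since $\ChainQuery(Y_1,3)$, $\ChainQuery(Z_1,4)$, $\ChainQuery(A_1,5)$ have already been disposed of in the second phase of the argument, the only $\ChainQuery$ calls that can precede both $\ChainQuery(Y_2,3)$ and $\ChainQuery(A_3,5)$ are $\ChainQuery(Z_2,4)$, $\ChainQuery(Z_3,4)$, $\ChainQuery(A_2,5)$, and $\ChainQuery(Y_3,3)$; for the second part of the statement one then also has to treat whichever of $\ChainQuery(Y_2,3)$ or $\ChainQuery(A_3,5)$ runs first. For each, I would verify (a) that none of the $\XorQuery_1$/$\XorQuery_2$/$\XorQuery_3$ invocations at the top of the procedure trigger a recursive $\ChainQuery$, and (b) that any $3$-chain found in the body has already been completed, hence is skipped by the duplicate-completion guard inside $\CompleteChain$.

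The $\XorQuery_1$ call inside $\ChainQuery(A_2,5)$ is handled directly by Lemma~\ref{lem:noxorquery1}: throughout this phase only entries in $\URF_2,\URF_3,\URF_4,\URF_5$ are being created, so $\URF_1=\{R_1,R_2,R_3\}$ and $\URF_6=\{S_1,S_2,S_3\}$ still hold. The $\XorQuery_2$ calls, which appear in the $\ChainQuery$ invocations with $k\in\{3,4\}$, are handled by Lemma~\ref{lem:no:xorquery2}; its side condition $x\notin\URF_{7-k}$ is guaranteed by clause (i) of $\Bad_3$, which forbids collisions between $\URF_3$ and $\URF_4$.

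The main work is the $\XorQuery_3$ analysis. Using Lemma~\ref{lem:nochain2}, I would note that the only $3$-chains in $\URF_4\times\URF_5\times\URF_6$ are $(Z_i,A_i,S_i)$ for $i=1,2,3$, and that symmetrically the only $3$-chains in $\URF_1\times\URF_2\times\URF_3$ are $(R_i,X,Y_i)$ for $i=1,2,3$. Consequently, any candidate $(Y,R',R'')\in\mathcal{R}$ that $\XorQuery_3(\cdot,3)$ could produce must arise from a pair of $(Z_i,A_i,S_i)$ chains, so the triggering equation $Y=x\oplus R'\oplus R''$ becomes one of a finite list of XOR identities among $\bar A,A_1,A_2,A_3,R_1,R_2,R_3$; an analogous reduction applies to $\mathcal{S}$ in the $k=4$ branch. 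Each identity in the resulting list is ruled out by one of the bad events: the identities among the $R_i$ and $A_j$ are precisely those collected in clause (v) of $\Bad_1$; cases that force $R'=R''$ (respectively $S'=S''$) are excluded by the definitions of $\mathcal{R}$ and $\mathcal{S}$; and the remaining cases force the resulting $Y$ (respectively $Z$) into $\URF_3$ (respectively $\URF_4$), contradicting the explicit ``$Y\notin\URF_3$'' (respectively ``$Z\notin\URF_4$'') clause in the same definitions.

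Once the $\XorQuery$ phase is shown inert, step (b) is immediate: the positive and negative chain sets examined by the body of $\ChainQuery$ live in $\URF_3\times\URF_4\times\URF_5$ or $\URF_4\times\URF_5\times\URF_6$, so by Lemma~\ref{lem:nochain2} any triple found is in the short explicit list, and all such triples other than $(Y_1,Z_2,A_3)$ and $(Y_2,Z_1,\bar A)$ have already been completed during the second phase. Those two exceptional chains can only be detected by $\ChainQuery(Y_2,3)$ and $\ChainQuery(A_3,5)$, which is exactly the content the lemma permits. I expect the main obstacle to be the bookkeeping inside the $\XorQuery_3$ case: the definition of $\mathcal{R}$ (respectively $\mathcal{S}$) involves two independently quantified chains together with a $\URP$/$\URP^{-1}$ lookup, so a considerable number of combinations need to be enumerated, and for each one must identify the specific bad event, definitional clause, or history-containment constraint that rules it out.
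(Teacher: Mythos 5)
Your proposal is correct and follows essentially the same route as the paper's proof: dispatch $\XorQuery_1$ via Lemma~\ref{lem:noxorquery1}, $\XorQuery_2$ via Lemma~\ref{lem:no:xorquery2} together with clause (i) of $\Bad_3$, $\XorQuery_3$ via Lemma~\ref{lem:nochain2}, and observe that every $3$-chain detected by the preceding $\ChainQuery$ calls belongs to an already-completed tuple, so the duplicate guard in $\CompleteChain$ blocks any recursion. One simplification worth noting: in the $\XorQuery_3$ step no enumeration of XOR identities or appeal to clause (v) of $\Bad_1$ is needed (and the definitions of $\mathcal{R}$ and $\mathcal{S}$ do not in fact exclude $R'=R''$) --- once Lemma~\ref{lem:nochain2} forces the constituent chains to be the $(Z_i,A_i,S_i)$ (respectively to pass through $X$, the only element of $\URF_2$), the candidate $Y$ is some $Y_i\in\URF_3$ (respectively the candidate $Z$ is some $Z_i\in\URF_4$), so the ``$Y\notin\URF_3$'' and ``$Z\notin\URF_4$'' clauses alone make $\mathcal{R}$ and $\mathcal{S}$ empty, which is exactly the paper's argument.
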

\begin{proof}
  We know, by Lemma~\ref{lem:no:xorquery2} and item (i) in the
  definition of $\Bad_3$ not taking place, that calls to
  $\XorQuery_2(Y_3, 3)$, $\XorQuery_2(Z_i, 4)$ for $i = 1, 2$ do not
  provoke any recursive $\ChainQuery$ calls. Furthermore,
  $\XorQuery_3(Y_3, 3)$ cannot invoke $\ChainQuery$, as by
  Lemma~\ref{lem:nochain2}, the fact that only the three chains $(Z_i,
  A_i, S_i)$ for $i = 1, 2, 3$ exist in $\URF_4 \times \URF_5 \times
  \URF_6$, the set $\set{R}$ must be empty.  Also, $\XorQuery_3(Z_i,
  4)$ cannot invoke $\ChainQuery$ by the fact that $X$ is the only
  element of $\URF_2$.

  In addition, none of these $\ChainQuery$ invokes $\CompleteChain$
  because of Lemma~\ref{lem:nochain2}.

  The fact that $\XorQuery_1(A_2, 5)$ in $\ChainQuery(A_2, 5)$ does
  not make any $\ChainQuery$ calls is implied by
  Lemma~\ref{lem:noxorquery1}, whereas if $\ChainQuery(Y_2, 3)$ is
  invoked first, then $\XorQuery_2(Y_2, 3)$ does not invoke
  $\ChainQuery$ by Lemma~\ref{lem:no:xorquery2}, whereas
  $\XorQuery_3(Y_2, 3)$ does not invoke $\ChainQuery$ because of
  Lemma~\ref{lem:nochain2} as above.
\end{proof}

Finally, we can distinguish two cases:
  \begin{enumerate}[(1)]
  \item $\ChainQuery(Y_2, 3)$ is invoked first. As shown above, no
    $\XorQuery$ call calls $\ChainQuery$ recursively.  Then, $\sitwo$
    now finds and completes the chain $(Z_1, \bar{A}) \in
    \Chain(+,Y_2,3)$. No other chains are found by
    Lemma~\ref{lem:nochain2}.

    The following values are set:
    \begin{eqnarray*}
    	 X' := \URF_3(Y_2) \oplus Z_1, & & S_4 := \URF_5(\bar{A}) \oplus Z_1, \\
       \URF_6(S_4) \inu \{0,1\}^n,&  & L_4\|R_4 := \URP^{-1}(S_4\|\URF_6(S_4) \oplus \bar{A}) \\
      \URF_1(R_4) := L_4 \oplus X', & & \URF_2(X') := Y_2 \oplus R_4. 
    \end{eqnarray*}
    In particular, note that since $S_4 \notin \set{L}_1$ (since $\Bad_3$ (ii) does not occur), then $R_4
    \in \set{L}_2$. But then $R_5 := \URF_2(X') \oplus Y_1 \notin
    \set{L}_2$, as otherwise this would imply that $\Bad_{\URP}$ (iii) occurs, since
    \begin{displaymath}
      R_5 = Y_2 \oplus R_4 \oplus Y_1 = \URF_2(X) \oplus R_2 \oplus R_4 \oplus Y_1 = R_1 \oplus R_2 \oplus R_4.
    \end{displaymath}
   Recall that we assume that $S_5 = \URF_5(A_3) \oplus Z_2 \notin
   \set{L}_1$, as this yields $\Bad_3$ (ii). But then, at some point (the
   latest at the invocation of $\ChainQuery(X', 2)$) the chain going
   through $X'$, $Y_1$, $Z_2$, and $A_3$ needs to be
   completed. However, no completion is possible, because $R_5 \notin
   \set{L}_2$ and $S_5 \notin \set{L}_1$ (by $\Bad_3$ (ii)). In fact, this holds {\em
     regardless} of which strategy the simulator employs to complete
   the chain, and in this concrete case, this is reflected by an
   abort.
  \item $\ChainQuery(A_3, 5)$ is invoked first. The argument is
    symmetric. First, by Lemma~\ref{lem:nochainquery3}, no $\XorQuery$
    calls trigger $\ChainQuery$ invocations, and $\sitwo$ now finds
    and completes the chain $(Y_1, Z_2) \in \Chain(-,A_3,5)$, and no
    other chains are found by Lemma~\ref{lem:nochain2}. This in
    particular means that the following values are set
    \begin{eqnarray*}
    	X' := \URF_3(Y_1) \oplus Z_2, & & S_5 := \URF_5(A_3) \oplus Z_2, \\
      \URF_6(S_5) \inu \{0,1\}^n,&  & L_5\|R_5 := \URP^{-1}(S_5\|\URF_6(S_5) \oplus A_3) \\
      \URF_1(R_4) := L_5 \oplus X', & & \URF_2(X') := Y_1 \oplus R_5. 
    \end{eqnarray*}
    Once again, we have that $R_4 := \URF_2(X') \oplus Y_2 \notin
    \set{L}_2$ because of $\Bad_{\URP}$ not occurring, since $R_5 \in \set{L}_2$, and
    \begin{displaymath}
      R_4 = Y_1 \oplus R_5 \oplus Y_2 = R_1 \oplus \URF_2(X) \oplus R_5 \oplus \URF_2(X) \oplus R_2 = R_1 \oplus R_2 \oplus R_5.
    \end{displaymath}
    Also, we have assumed that $\URF_6(S_4) \oplus \bar{A} \notin
    \set{L}_1$. But now, at some point, the chain going through $X'$,
    $Y_2$, $Z_1$, and $\bar{A}$ has to be completed. However, this is
    not possible since $R_4 \notin \set{L}_2$, and $\URF_6(S_4) \oplus
    \bar{A} \notin \set{L}_1$.
 \end{enumerate}

\section{A Stronger Attack}
\label{app:strongerattack}

We subdivide the distinguisher execution into three phases: chain
preparation, computation of chain values, and consistency check, and
its description is best represented by means of the following tables:
Note that if a value in the column \textit{Queries to $\si$} is named
with the letter $R$ (or $X, Y, Z, A, S$, respectively), it is issued
to $\URF_1$ (or $\URF_2, \URF_3, \URF_4, \URF_5, \URF_6$,
respectively).

\begin{description}
\item[Chain Preparation.] \hfill\ \\[1ex] \begin{tabular}{l|l|l}
                        \textbf{Step} & \textbf{$\di$ computes}                                                                                                                           & \textbf{Queries to $\si$}       \\ \hline
                        1& $X:= X_1=X_2=X_3=X_4$ u.a.r.                                                                                                                                                                 &                                                                                                                       \\
                        2& $R_2, R_3$ arbitrary s.t.~$R_2 \neq R_3$                                                                                                                     & $R_2, R_3$                                                    \\
                        3& $S_2||T_2 := \URP(X \oplus \URF_1(R_2) || R_2)$                                                                              & $S_2$                                                                               \\
                        4& $S_3||T_3 := \URP(X \oplus \URF_1(R_3) || R_3)$                                                                      &       $S_3$                                                                                   \\
                        5& $A_2 := \URF_6(S_2) \oplus T_2, A_3 := \URF_6(S_3) \oplus T_3$               &                                                                                                       \\
                        6& $R_1 := R_2 \oplus A_2 \oplus A_3$                                                                                                                                           &       $R_1$                                                                                           \\
                        7& $S_1||T_1 := \URP(X \oplus \URF_1(R_1) || R_1)$                                                                              &       $S_1$                                                                                   \\
                        8& $A_1 := \URF_6(S_1) \oplus T_1$                                                                                                                                              &                                                                                                         \\
                        9& $A_5 := A_1 \oplus R_1 \oplus R_2$                                                                                                                                           &                                                                                                       \\
                        10& $R_4 := R_3 \oplus A_3 \oplus A_5$                                                                                                                                  &       $R_4$                                                                           \\
                        11& $S_4||T_4 := \URP(X \oplus \URF_1(R_4) || R_4)$                                                                             &       $S_4$                                                                                   \\
                        12& $A_4 := \URF_6(S_4) \oplus T_4$                                                                                                                                             &                                                                                                                                       \\
                        13& $A_8 := A_4 \oplus R_4 \oplus R_3$                                                                                                                                  & $A_8$                                                                                                 \\
\end{tabular}
\item[Computation of Chain Values.] \hfill\ \\[1ex]
\begin{tabular}{l|l|l}
                        \textbf{Step} & \textbf{$\di$ computes}                                                                                                                                 & \textbf{Queries to $\si$}    \\ \hline
                        14&                                                                                                                                                                                                                                             & $X$                                                                                                   \\
                        15&                                                                                                                                                                                                                                                                             & $A_1, A_2, A_3, A_4$                            \\
                        16& $Z_i := \URF_5(A_i) \oplus S_i$     for $i=1,2,3,4$                                                                         & $Z_1, Z_2, Z_3, Z_4$            \\
                        17& $Y_i := \URF_2(X)   \oplus R_i$ for $i=1,2,3,4$                                                                     &       $Y_1, Y_2, Y_3, Y_4$                    \\
                        18& $Y_6:=Y_1, Y_5:=Y_2, Y_8:= Y_3, Y_7:= Y_4$                                                                                                  &                                                                                                               \\
                        19& $Z_5:=Z_1, Z_6:=Z_2, Z_7:= Z_3, Z_8:= Z_4$                                                                                                  &                                                                                                               \\
                        20& $A_6:= \URF_4(Z_6) \oplus Y_6$                                                                                                                                              & $A_6$                                                                                                 \\
                        21& $X_5:=\URF_3(Y_5) \oplus Z_5, X_6:=\URF_3(Y_6) \oplus Z_6$                  & $X_5, X_6$                                                                                    \\
                        22& $R_5:= \URF_2(X_5) \oplus Y_5, R_6:= \URF_2(X_6) \oplus Y_6$                & $R_5, R_6$                                                                            \\
                        23& $S_5||T_5 :=        \URP(X_5 \oplus \URF_1(R_5) || R_5)$                                                    & $S_5$                                                                                         \\
                        24& $S_6||T_6 :=        \URP(X_6 \oplus \URF_1(R_6) || R_6)$                                                            & $S_6$                                                                                         \\
                        25&                                                                                                                                                                                                                                                                             & $A_5$                                                                                 \\
                        26& $X_7:=\URF_3(Y_7) \oplus Z_7, X_8:=\URF_3(Y_8) \oplus Z_8$                  & $X_7,X_8$                                                                                     \\
                        27& $R_7:= \URF_2(X_7) \oplus Y_7, R_8:= \URF_2(X_8) \oplus Y_8$                & $R_7, R_8$                                                                                            \\
                        28& $S_7||T_7 :=        \URP(X_7 \oplus \URF_1(R_7) || R_7)$                                                    & $S_7$                                                                                         \\
                        29& $S_8||T_8 :=        \URP(X_8 \oplus \URF_1(R_8) || R_8)$                                                            & $S_8$                                                                                         \\
                        30& $A_7:= \URF_4(Z_7) \oplus Y_7$                                                                                                                                              & $A_7$                                                                                         \\
\end{tabular}
\item[Consistency Check.] Check if the following equations hold:
\begin{align}                   
        \text{(i) Chain Equalities: } \qquad & \text{for $i=1,2, \ldots, 8$ we have:} \nonumber \\
                               & \URF_1(R_i) = S_i \oplus X_i, \,\,\, \URF_2(X_i) = R_i \oplus Y_i, \,\,\, \URF_3(Y_i) = X_i \oplus Z_i,\nonumber \\
                               & \URF_4(Z_i) = Y_i \oplus A_i, \,\,\, \URF_5(A_i) = Z_i \oplus S_i, \,\,\, \URF_6(S_i) = A_i \oplus T_i \nonumber \\
        \text{(ii) Equalities: } \qquad & X_5 = X_6, \,\,\,  X_7 = X_8, \,\,\, A_7 = A_5, \,\,\,  A_6 = A_3 \nonumber
\end{align}
If all the above equalities hold, output 1, else output 0.
\end{description}

\paragraph{Intuition behind the attack} Figure~\ref{figAttack} provides a picture of the dependencies between chains that have to be defined consistently by the simulator.
\begin{figure}[ht]
  \begin{center}
    \includegraphics[scale=0.5]{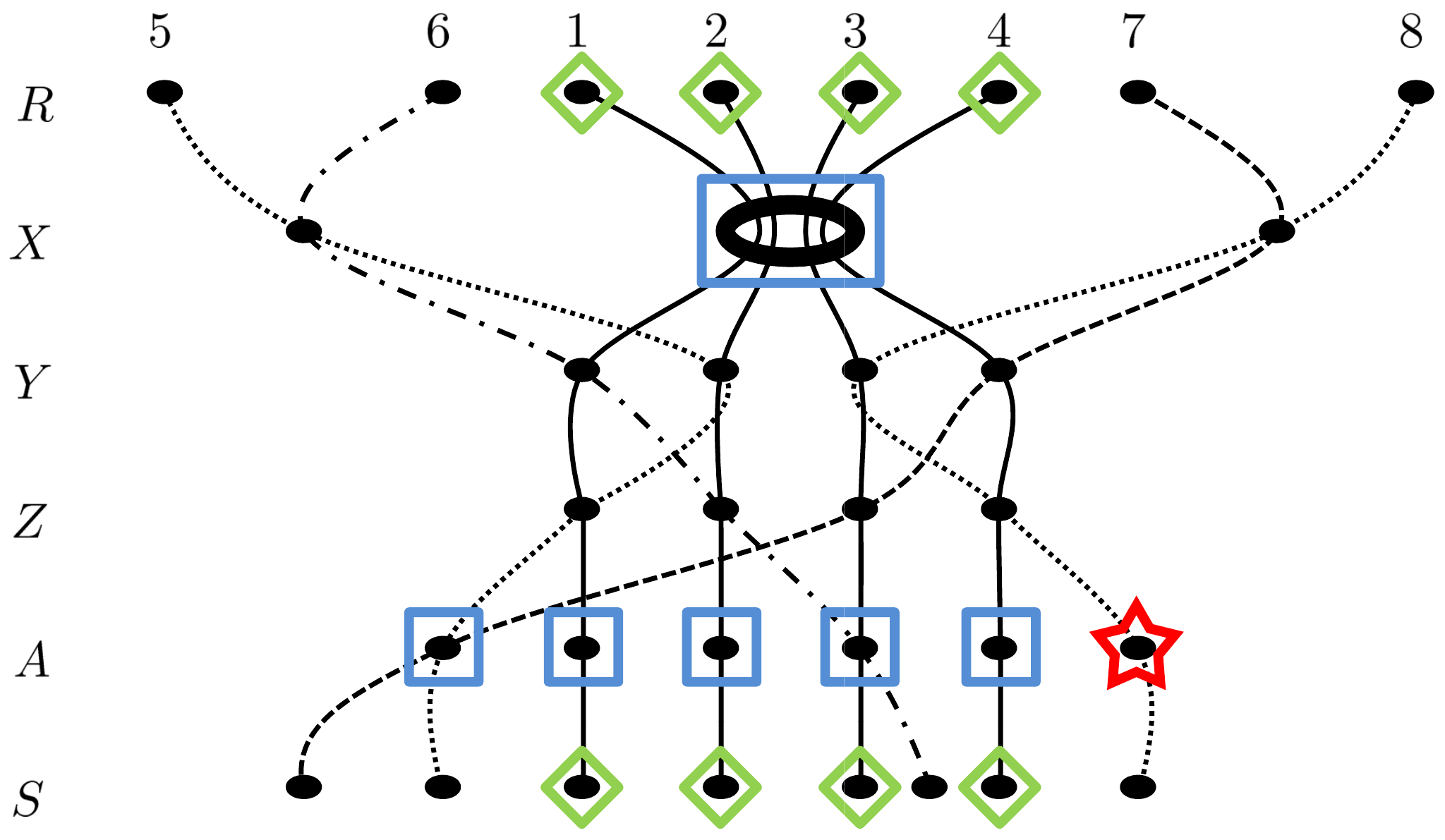}
  \end{center}
  \caption{Illustration of the more general attack.}
  \label{figAttack}
\end{figure}
The intuition behind this attack is as follows: It seems that any simulator that completes chains one after another does not succeed: no matter in which order it completes the chains, it seems to always end up in a situation where some remaining chain cannot be completed consistently.
        
\end{document}